\newcolumntype{C}{>{\centering\arraybackslash}X}
\newtheorem{theorem}{Theorem}
\newtheorem*{theorem*}{Theorem}
\newtheorem{lemma}[theorem]{Lemma}
\begin{document}

\title{Energy-Consumption Advantage of Quantum Computation}

\author{Florian Meier}
\email[]{florianmeier256@gmail.com}
\affiliation{Vienna Center for Quantum Science and Technology, Atominstitut, Technische Universität Wien, 1020 Vienna, Austria}
\author{Hayata Yamasaki}
\email[]{hayata.yamasaki@gmail.com}
\affiliation{Department of Physics, Graduate School of Science, The University of Tokyo, 7--3--1 Hongo, Bunkyo-ku, Tokyo 113--0033, Japan}

\begin{abstract}
Energy consumption in solving computational problems has been gaining growing attention as one of the key performance measures for computers. Quantum computation is known to offer advantages over classical computation in terms of various computational resources; however, proving its energy-consumption advantage has been challenging due to the lack of a theoretical foundation linking the physical concept of energy with the computer-scientific notion of complexity for quantum computation. To bridge this gap, we introduce a general framework for studying the energy consumption of quantum and classical computation, based on a computational model conventionally used for studying query complexity in computational complexity theory. Within this framework, we derive an upper bound for the achievable energy consumption of quantum computation, accounting for imperfections in implementation appearing in practice. As part of this analysis, we construct a protocol for Landauer erasure with finite precision in a finite number of steps, which constitutes a contribution of independent interest. Additionally, we develop techniques for proving a nonzero lower bound of energy consumption of classical computation, based on the energy-conservation law and Landauer's principle. Using these general bounds, we rigorously prove that quantum computation achieves an exponential energy-consumption advantage over classical computation for solving a paradigmatic computational problem---Simon's problem. Furthermore, we propose explicit criteria for experimentally demonstrating this energy-consumption advantage of quantum computation, analogous to the experimental demonstrations of quantum computational supremacy. These results establish a foundational framework and techniques to explore the energy consumption of computation, opening an alternative way to study the advantages of quantum computation.
\end{abstract}

\maketitle

\section{\label{sec:introduction}Introduction}
With growing interest in the sustainability of our society, energy consumption is nowadays considered an important part of performance measures for benchmarking computers.
It is expected that quantum computation will be no exception; its energy efficiency will ultimately be one of the deciding factors as to whether quantum computers will be used on a large scale~\cite{Preskill2018quantumcomputingin,Auffeves2022,Jaschke2023}.
Originally, quantum computers emerged as a promising platform to solve certain computational problems that would otherwise be unfeasible to solve on classical computers.
The advantage of quantum computation is generally examined in terms of computational complexity, which quantifies computational resources required for solving the problems, such as time complexity, communication complexity, and query complexity~\cite{Watrous2009}.
Energy is, however, a different computational resource from the above ones.
A priori, an advantage in some computational resource does not necessarily imply that in another; for example, quantum computation is believed to achieve an exponential advantage in time complexity over classical computation but does not provide such an advantage in the required amount of memory space~\cite{10.1007/s00037-003-0177-8}.
Whether quantum computation can offer a significant energy-consumption advantage over classical computation is a fundamental question that has been asked since the earliest days of quantum computation~\cite{Benioff1982,Feynman1986}, but has not been explored rigorously as of now due to the lack of theoretical foundation to relate the known advantages of quantum computation and the advantage in its energy consumption.
For example, some existing works~\cite{Jaschke2023,FellousAsiani2022} numerically evaluate achievable upper bounds of energy consumption of quantum computation, but it is not theoretically proven in these works how the energy consumption of quantum computation scales on large scales in general; even more problematically, it has been challenging to provide a fundamental lower bound of the energy consumption of any classical computation, which the existing works do not rigorously analyze.
It has been, therefore, a fundamental open problem to formulate the theoretical framework to compare an upper bound of energy consumption of quantum computation and a lower bound of energy consumption of classical computation fairly in a technology-independent way, so as to prove the energy-consumption advantage of quantum computation over classical computation in a commonly studied setting of computation~\cite{Auffeves2022}.

\begin{figure}
    \centering
    \includegraphics[width=3.4in]{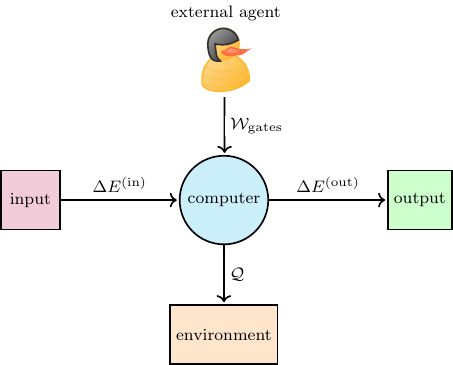}
    \caption{Our thermodynamic model of quantum and classical computation. To solve a computational problem, an external agent manipulates the computer at a work cost $\mathcal{W}_\mathrm{gates}$. In the computation, the input for the problem is provided from an input system to the computer, with energy $\Delta E^{(\mathrm{in})}$ flowing into the computer. At the end of the computation, the computer outputs the result of the computation into the output system, with energy $\Delta E^{(\mathrm{out})}$ flowing out; then, the internal state of the computer must be reinitialized into a fixed state $\approx\ket{00\cdots 0}$, so that the computer can be used for solving the next computational task. Throughout manipulating and initializing the computer, energy in the computer may flow into the environment as a heat $\mathcal{Q}$. As explained in the main text, the energy consumption $\mathcal{W}_\mathrm{gates}+\Delta E^{(\mathrm{in})}-\Delta E^{(\mathrm{out})}$ of the computation is classified into three contributions: the energetic cost, the control cost, and the initialization cost. Our analysis provides a comprehensive account of each contribution to bound the overall energy consumption of quantum and classical computation.}
    \label{fig:thermo_diagram}
\end{figure}

\paragraph*{Main achievements of our work.}
In this work, we carry out a rigorous, in-depth study of the energy consumption of quantum and classical computation.
Through our work, we introduce a general framework to analyze energy consumption in both quantum and classical computation, which incorporates computational and thermodynamic aspects, as illustrated in Fig.~\ref{fig:thermo_diagram}.

In Fig.~\ref{fig:thermo_diagram}, the computation is modeled as a thermodynamic cycle.
This model is composed of five parts: the computer, agent, environment, input, and output.
The computer, either quantum or classical, is the physical platform with (qu)bits, where an algorithm is carried out.
To execute the algorithm, an external agent invests some work $\mathcal W_\mathrm{gates}$ into the computer to perform a sequence of gates on the (qu)bits in the computer.
Part of this work is an \textit{energetic cost} used for changing the energy of the internal state of the computer, while the rest is the \textit{control cost} dissipated into the environment due to energy loss arising from, e.g., friction and electric resistance. 
Moreover, as in a conventional setting of studying query complexity to separate the complexities in quantum and classical computation~\cite{Watrous2009,Nielsen2010,BUHRMAN200221}, we structure the algorithm to process the input coming from an oracle --- a black-box circuit outside the computer that is unknown to the agent, so that the agent should perform the computation to learn a property of this black box through the queries to the oracle.
For this input, we use an additional input system representing the oracle and label the energy of the states transferred from the input system to the computer as $\Delta E^{(\text{in})}$.
The result of the entire computation is to be output and stored in another separate system, the output system.
In our case, we solve a decision problem, so the output system should be a single bit.
The energy of the state output from the computer incurs an energetic cost $\Delta E^{(\mathrm{out})}$.
At the end of the computation, the final state of the computer is reinitialized into its initial state, so that, afterward, the agent should be able to carry out the next computation by the same computer.
Aside from the control cost, irreversible operations such as the erasure of the final state of the computer for the initialization also dissipate heat into the environment, which incurs an \textit{initialization cost}.
All-in-all, we summarize the heat dissipated into the environment as $\mathcal Q$.
The energy consumption of the computation is all the energy used for the above process except for that dissipating into the environment, given by
\begin{equation}
    \mathcal{W}_\mathrm{gates}+\Delta E^{(\mathrm{in})}-\Delta E^{(\mathrm{out})}.
\end{equation}

Our methodology pioneers general techniques to clarify the upper and lower bounds of the energy consumption for the computation in the query-complexity setting.
The upper bound of achievable energy consumption of quantum algorithms is obtained by meticulously considering the work cost of implementing gates to finite precision, erasing the computer's memory state for the initialization based on Landauer’s principle~\cite{Landauer1961}, and reducing the effect of noise by quantum error correction.
Under the scalability assumption that each elementary operation has, at most, a polynomial amount of energy consumption, we derive an upper bound of energy consumption of quantum computation by bounding the required number of elementary operations for solving computational problems, including quantum error correction.
Also notably, we develop a novel technique for deriving a fundamental lower bound of the energy required for classical algorithms to solve computational problems.
In contrast with the upper bound, it is, in general, unknown whether one can find a nonzero lower bound of the energy consumption of each elementary operation; after all, in the idealized limit of zero friction or zero electrical resistance, each elementary operation may consume almost zero energy.
To overcome this challenge, we will develop the technique to argue that a lower bound of the query complexity in solving the computational problems can be turned into a lower bound of the Laudauer cost, i.e., the energy consumption to erase the information obtained by the queries.

Based on these bounds, we rigorously prove that a quantum algorithm can attain an exponential energy-consumption advantage over any classical algorithm for an exemplary computational problem---Simon's problem~\cite{Simon1994, Simon1997}.
Furthermore, for this problem, we clarify how to implement the quantum algorithm using a cryptographic primitive to feasibly implement the input system in Fig.~\ref{fig:thermo_diagram} and provide explicit criteria for demonstrating the energy-consumption advantage of quantum computation in experiments.
In this way, we expand the traditional focus from conventional complexity-theoretic notions in theoretical computer science to energy consumption---a critical aspect of computational resources that was often overlooked in the existing theoretical studies.
These results lay a solid foundation for the exploration of the advantage of quantum computation over classical computation in terms of energy consumption.

\paragraph*{Key physical insights.}
In our analysis of energy consumption, we develop a comprehensive account of the above three types of costs that contribute to energy consumption: the energetic cost, the control cost, and the initialization cost.

The energetic cost to change the energy of the state of the computer would be hard to evaluate at each step of the algorithm, but to avoid this hardness, our analysis evaluates the sum of the energetic cost of the overall computation rather than each step.
The hardness arises from the fact that an energetic cost per gate can be both positive and negative depending on the state of the computer; after all, when performing a gate in the algorithm, the same gate may change a low-energy state to a high-energy state and also a high-energy state to a low-energy state.
Thus, if one wants to know the required energetic cost at each step of the algorithm, the internal state of the computer at the step should also be kept track of, which would be as hard as simulating each step of the algorithm.
By contrast, in our analysis, instead of focusing on each step, we regard the overall computation as a thermodynamic cycle as described in Fig.~\ref{fig:thermo_diagram}.
As a result, the net energetic cost of all the steps sums up to zero due to the energy-conservation law, in both quantum and classical computation,
leading to the total energy consumption of the computation coming solely from heat dissipation due to inefficient control and irreversible erasure.

The control cost, arising from the energy loss in implementing each gate, also needs careful analysis.
On one hand, the quantum computer is often said to consume a large amount of energy due to the complicated experimental setup under the current technology and, in this regard, might be considered to be more energy-consuming than the classical computer.
But this argument overlooks the fact that the energy loss in any scalable implementation of each quantum gate is reasonably upper bounded by, at most, a polynomial factor per physical gate (and usually upper bounded by a constant factor) as the size of problems to be solved by the computer increases; that is, the apparent difference in the energy efficiencies between quantum and classical computers is that of the cost per implementing each physical gate.
With this observation, we derive an achievable upper bound of the control cost of quantum computation from the number of gates used for performing the overall algorithm.
On the other hand, as opposed to the above saying, it is also often said that the quantum computer could save energy compared to the classical computer due to the large quantum speedup in solving some problems.
This logic also overlooks another fact that the control cost per implementing a classical gate can be made extremely small compared to that of a quantum gate.
Due to this fact, the quantum advantage in time complexity does not straightforwardly lead to that of energy consumption over classical computation, especially in the case where the control cost of each classical gate is negligibly small; for example, imagine a limit of idealization where the friction and the electric resistance are negligible, and then, no energy loss occurs regardless of the number of gates to be applied. 
For example, it is indeed proposed to use superconducting materials for classical computation, which may allow for nearly dissipation-free classical computation in the future (see, e.g., Ref.~\cite{Mukhanov2011}).
Thus, unlike the upper bound of the control cost, a lower bound of the control cost can be, in principle, arbitrarily close to zero even if the runtime of the computation is long.

Initialization cost, i.e., the required energy for resetting the state of the computer to a fixed initial state, is the key to our proof of the energy-consumption advantage of quantum computation over classical computation;  this cost can provide a fundamentally \textit{nonzero} lower bound of the energy consumption of classical computation, unlike the above energetic and control costs.
As described in Fig.~\ref{fig:thermo_diagram}, after solving a computational problem, the (quantum or classical) computer needs to reset all the (qu)bits to conduct the next task, by erasing the final state of the (qu)bits into a fixed initial state, i.e., $\ket{00\cdots 0}$.
However, in the thermodynamic analysis of the energy consumption, we need to be careful about the cost of this initialization due to Nernst's unattainability principle~\cite{Nernst1906}; it is known that the ideal initialization of a qubit exactly into a pure state would require divergent resource costs, e.g., infinite energy consumption or infinitely long runtime~\cite{Taranto2021}.
To avoid an unbounded energy consumption in the initialization, we newly construct a finite-step Landauer-erasure protocol for erasing the states of the qubits to a pure state approximately within a desired finite precision, which allows us to achieve a finite energy consumption for all the erasure steps required for the initialization after conducting the computation in Fig.~\ref{fig:thermo_diagram}.

Apart from this explicit protocol construction for showing an achievable upper bound of energy consumption, we discover a way to prove a nonzero lower bound of the initialization cost by applying two fundamental concepts in physics: the energy-conservation law in thermodynamics, and Landauer's principle~\cite{Landauer1961} that relates the irreversible process to energy consumption.
Landauer's principle shows that we need more energy consumption to initialize more information stored as states of (qu)bits, which is measured by the entropy of the states.
In particular, to solve Simon's problem~\cite{Simon1994,Simon1997}, we see that any classical algorithm needs to make an exponentially large number of queries to obtain an exponentially long bit string obtained from the input system.
In this case, our analysis shows that the initialization cost required for erasing this input information, in terms of entropy, also becomes exponentially large, which is one of our key techniques for proving the exponential energy-consumption advantage of quantum computation.

Consequently, with our formulation of the computation as the thermodynamic cycle in Fig.~\ref{fig:thermo_diagram}, we provide an unprecedented detailed analysis of all three types of costs, establishing the novel techniques for showing upper and lower bounds of the energy consumption of quantum and classical computation.
Our analysis shows that the upper bound on the energy consumption of quantum computation is governed by the gate and query complexities and the heat dissipation achievable through finite-step Landauer erasure.
Therefore, to ensure an upper bound of energy consumption within this framework, it suffices to feasibly bound each of these factors.
In contrast, the provably nonzero lower bound on the energy consumption of classical computation arises from the minimum cost of Landauer erasure, quantified by entropy.
Although we analyze this lower bound using techniques originally developed for studying query complexity, the lower bound is ultimately determined by the Landauer cost---the entropic quantity that depends both on the query complexity and the probability distribution of the oracle.
These techniques make it possible to rigorously analyze the energy-consumption advantage of quantum computation with a solid theoretical foundation.

\paragraph*{Impact.}

In today's world, where sustainability is a key concern, energy efficiency has become an essential performance metric for modern computers, along with computational speed and memory consumption. Our theoretical framework provides a new pathway for examining the relevance of quantum computation in terms of energy efficiency.
Our study provides a fundamental framework and techniques for exploring a novel quantum advantage in computation in terms of energy consumption.
The framework is designed in such a way that a quantum advantage in query complexity can be employed to prove the energy-consumption advantage of quantum computation over classical computation.
These results also clarify the physical meaning of the quantum advantage in the query-complexity setting in terms of energy consumption.

The query-complexity setting is a widely studied setting of computation by itself, but also, from a broader perspective, quantum computers may have promising applications in learning properties of physical dynamics described by an unknown quantum-mechanical map~\cite{Cerezo2022,PhysRevLett.126.190505,aharonov2022quantum,chen2022exponential,huang2022quantum, huang2023learning}.
In these learning problems, the physical dynamics can be considered a black-box oracle whose properties are to be learned, as in the query-complexity setting.
In other words, the query-complexity setting can also be regarded as a variant of these learning problems of unknown physical dynamics represented by the oracles.
Furthermore, we also clarify how to implement and demonstrate the query-complexity setting in an experimental setup.
The framework and techniques developed in this work serve as a theoretical foundation to realize the energy-consumption advantage of quantum computation in query-complexity settings with these physically well-motivated applications.

\paragraph*{Organization of this article.}
The rest of this article is organized as follows.
\begin{itemize}
    \item\label{item:contribution_1} In Sec.~\ref{sec:setting}, we formulate a framework within which the energy consumption of quantum and classical computation can be rigorously studied.
    We will describe its core idea based on Fig.~\ref{fig:thermo_diagram}, and further detail of the framework will be illustrated in Fig.~\ref{fig:generic_algo}.
    \item\label{item:contribution_2}
    In Sec.~\ref{sec:results}, we show general upper and lower bounds of quantum and classical computation within the framework of Sec.~\ref{sec:setting}. In Sec.~\ref{sec:quantum_upperbound}, we derive a general upper bound on the energy consumption that is achievable by the quantum computation in our framework. For this analysis, extending upon the existing asymptotic results~\cite{Reeb2014,Proesmans2020,Taranto2021,VanVu2022,Rolandi2022} on Landauer erasure~\cite{Landauer1961}, we show the achievability of finite-fidelity and finite-step Landauer erasure (Theorem~\ref{thm:ThmErasureBound}). Apart from the finite Landauer erasure, we derive the upper bound of achievable energy consumption using complexity-theoretic considerations and also taking into account overheads from quantum error correction, so as to cover all the energy consumption in the framework (Theorem~\ref{thm:ThmQuantum}).
    Moreover, in Sec.~\ref{sec:classical_lowerbound}, we develop a technique for obtaining an implementation-independent lower bound on the energy consumption of the classical computation in our framework. The lower bound is derived using energy conservation and the Landauer-erasure bound (Theorem~\ref{thm:ThmClassical}).
    \item\label{item:contribution_4} 
    In Sec.~\ref{sec:example}, we show an explicit example where the energy consumption of quantum and classical computation for solving a computational problem is exponentially separated. To prove this exponential energy-consumption advantage of quantum computation rigorously, we apply the above upper and lower bounds to Simon's problem~\cite{Simon1994, Simon1997} (Corollaries~\ref{cor:CorQuantum} and~\ref{cor:CorClassical}).
    \item\label{item:contribution_5} In Sec.~\ref{sec:experiment}, we propose a setting to demonstrate the energy-consumption advantage of quantum computation in quantum experiments.
    \item\label{item:contribution_6}
    In Sec.~\ref{sec:discussion}, we provide a discussion and outlook based on our findings.
\end{itemize}

\section{\label{sec:setting}Formulation of computational framework and energy consumption}

In this section, we present the setting of our analysis.
We start with introducing the thermodynamic model of the computation in Sec.~\ref{sec:termodynamic_model_of_computation}.
In Sec.~\ref{sec:computational_setting}, we formulate and elaborate on the framework used for analyzing the energy consumption of the computation (Fig.~\ref{fig:generic_algo}).
In Sec.~\ref{sec:energetic_costs}, we introduce all the relevant costs that contribute to the energy consumption of the computation in this framework.

\subsection{\label{sec:termodynamic_model_of_computation}Thermodynamic model of computation}

Computation is a physical process in a closed physical system that receives a given input and returns the corresponding output of a mathematical function. 
Various models can be considered as the physical processes for the computation.
Conventionally, classical computation uses the Turing machine, or equivalently a classical logic circuit, which works based on classical mechanics; quantum computation can be modeled by a quantum circuit based on the law of quantum mechanics~\cite{Watrous2009,Nielsen2010}.
Under restrictions on some computational resources, the differences between classical and quantum computation may appear.
For example, in terms of running time, quantum computation is considered to be advantageous over classical computation~\cite{doi:10.1126/science.aar3106}.
Another type of quantum advantage can be shown by query complexity~\cite{BUHRMAN200221}, where we have a black-box function, or an \textit{oracle}, that can be called multiple times during executing the computation.
In this query-complexity setting,  the required number of queries to the oracle for achieving the computational task is counted as a computational resource of interest.
This setting may concern the fundamental structure of computation rather than space and time resources.
The advantage of quantum computation over classical computation can also be seen in the required number of queries for solving a computational task.

Progressing beyond the studies of these various computational resources, we here formulate and analyze the fundamental requirements of energy consumption in classical and quantum computation, by introducing the model illustrated in Fig.~\ref{fig:thermo_diagram} for our thermodynamic analysis of computation.
To clarify the motivation for introducing this model, we start this section by going through the main difficulties in proving a potential advantage in the energy consumption of quantum computation over classical computation.
Then, in the next step, we will provide additional details to Fig.~\ref{fig:thermo_diagram} and formally define what we refer to as the energy consumption of a computation.

\paragraph*{Challenges in studying energy-consumption advantage of quantum computation.}

To analyze the energy-consumption advantage of quantum computation over classical computation, there are two inequalities that have to be shown: first, an achievable upper bound for the energy consumption of performing a quantum algorithm, and second, a lower bound on the energy consumption for \textit{all} classical algorithms for solving the same problem.

As for the former, previous works mostly investigate energy consumption in implementing a single quantum gate~\cite{Ozawa2002,GeaBanacloche2006,Ikonen2017,Stevens2022,Chiribella2021,Yang2022,PhysRevLett.121.110403,PhysRevResearch.2.043374}, but the analysis of the quantum advantage requires an upper bound of energy consumption of the overall quantum computation composed of many quantum operations.
Such an analysis has been challenging because, to account for the energy consumption of quantum computation, we need to take into account all the operations included in the computation, e.g., not only the gates but also the cost of initializing qubits and performing quantum error correction.
A challenge here arises since, unlike quantum gates, some other quantum operations, such as measurements and initialization, may consume an infinitely large amount of energy as we increase the accuracy in their implementation~\cite{Guryanova2020,Taranto2021}.
Thus, we need to formulate the framework of quantum computation properly to avoid the operations requiring infinite energy consumption and to establish finite achievability results of energy consumption for all the operations used for the quantum computation within the framework.

The latter is even more challenging since it is, in general, hard to derive a lower bound of energy consumption of overall computation from the lower bound of each individual operation; after all, we may be able to perform multiple operations in the computation collectively to save energy consumption.
In the first place, the energetic cost of performing reversible operations, be it on a quantum or classical computer, can be either positive or negative.
To see this fact, consider the following illustrative example: what is the lower bound of performing a bit-flip operation?
The answer may depend on the implementation and the initial state.
Let us choose, for example, a two-level system with states $0$ and $1$ and energies $E_0<E_1$.
If the bit initially starts in the state $0$, then we need to invest a positive amount of energy $E_1-E_0>0$ to obtain $1$ by the bit-flip gate.
By contrast, at best, one can also gain energy $E_1-E_0$ from the bit-flip gate, if the bit initially starts in the state $1$ and is flipped into $0$; in other words, the energetic cost of performing a reversible operation can be negative in general.
Thus, for the initial input state $1$, the operation could even extract work because energy-efficient gate implementations are possible in principle, especially for macroscopic physical systems.\footnote{Two concrete examples would be the Japanese \textit{Soroban} and the \textit{Abacus}.
Suppose that these devices are standing upright in the earth's gravitational field, and the external agent keeps the locations of the marbles of these devices fixed by some means. The locations of the marbles are used for representing states of bits, and bit-flip gates can be applied by physically moving the marbles. In this case, the information-bearing degrees of freedom carry potential energy. Moving up a marble requires an investment of work. On the other hand, moving down the marble yields a gain of energy, i.e., the extraction of the work.}
Apart from this energetic cost, the implementation of the bit-flip gate may also require an additional control cost arising from heat dissipation, which is caused by, e.g., friction and electrical resistance.
The control cost may be positive in reality, but in the limit of energy-efficient implementation,  it is hard to rule out the possibility that the control cost may be negligibly small; in particular, it is unknown whether the infimum of the control cost over any possible implementation of computation can still be lower bounded by a strictly positive constant gapped away from zero.

As far as the energetic and control costs are concerned, it is challenging to rule out the possibility that an energy-efficient implementation would achieve zero work cost (or arbitrarily small work cost) per gate even in the worst case over all possible input states.
For example, in the limit of realizing $0$ and $1$ with degenerate energy levels $E_0=E_1$, the bit-flip gate does not come at a fundamentally positive energetic cost.
Another example may exist in the limit of ideal implementation where friction, electrical resistance, and other sources of energy loss are negligible, and if this is the case, the control cost also becomes negligibly small.
As long as one considers these costs of an individual gate, it is hard to argue that each gate requires a positive work cost.
Thus, the analysis requires a novel technique for deriving a nonzero lower bound of energy consumption of the overall computation, which needs to be applied independently of the detail of the implementation; in our case, we derive such a nonzero lower bound based on taking into account the initialization cost, energy consumption for erasing the state of the computer at the end of the computation into the fixed initial state $0$.

\paragraph*{Model.}
In particular, to establish the framework for studying energy consumption, we view quantum and classical computation as thermodynamic processes on their respective physical implementation platforms, as shown in Fig.~\ref{fig:thermo_diagram}.
We here describe our model in detail.
The agent uses a computer to carry out the main part of the computation to solve a decision problem.
Performing the operations that make up an algorithm comes at a work cost for this external agent, which we summarize as $\mathcal{W}_\text{gates}$.
This cost includes, on the one hand, energetic costs arising from the change of energy of the internal states of the computer and, on the other hand, control costs caused by energetic losses in implementing these operations due to heat dissipation.

As in a conventional setting of studying query complexity to separate the complexity of quantum and classical computation~\cite{Watrous2009,Nielsen2010,BUHRMAN200221}, we use an oracle as an input model, which provides the input to the computer via oracle queries.
In our model, the oracle is a black box outside the computer whose internals are unknown to the agent performing the computation, as we will define more precisely in Sec.~\ref{sec:computational_setting}.
There is some transfer of energy between the oracle and the computer, namely $\Delta E^{(\text{in})}$, due to the states the oracle generates and inputs into the computer.

On the other hand, working on decision problems means that the output is a two-level system where the decision is stored as either $0$ or $1$.
Generating this single-bit output comes at an energy exchange of $\Delta E^{(\mathrm{out})}=O(1)$ and is practically negligible compared to the growing sizes of the computer, the input, and the thermal environment surrounding around the computer, which may usually scale polynomially (or can scale even exponentially in our framework) as the problem size increases.

Throughout the computation, the energy corresponding to the control cost may flow into the environment, which in parts contributes to the heat $\mathcal Q$ dissipated to the environment.
The other contribution to $\mathcal Q$ arises from reinitializing the internal state of the computer at the end of the computation into the fixed initial state as it was before performing the computation.
The resetting of the memory after the computation is necessary because we demand that the computation is cyclic on the computer so that this computer can be used for solving another task after conducting the current computation.
In contrast to the control cost that may approach zero in the limit of energy-efficient implementation, the energy consumption in reinitializing the computer may be nonzero due to Landauer's principle~\cite{Landauer1961} if part of the computation is irreversible, as is the case for the oracle queries in classical computation.

With this model, we define the \textit{energy consumption} $\mathcal W$ of the computer as the sum of all the contributions of the external agent, the input, and the output in Fig.~\ref{fig:thermo_diagram}, as presented in the following definition.
The energetic contributions to $\mathcal W$ will be detailed further for our framework (Fig.~\ref{fig:generic_algo}) introduced in Sec.~\ref{sec:energetic_costs}.
\begin{restatable}[Energy consumption]{definition}{DefEnergyConsumption}
\label{def:DefEnergyConsumption}
    We define the energy consumption of computation modeled in Fig.~\ref{fig:thermo_diagram} as
    \begin{align}\label{eq:energy_consumption_def}
        \mathcal W \coloneqq \mathcal W_\mathrm{gates} + \Delta E^{(\mathrm{in})} - \Delta E^{(\mathrm{out})}.
    \end{align}
\end{restatable}
The demand of the computation being cyclic is critical to energy conservation; i.e., in the limit of closing this thermodynamic cycle ideally without error\footnote{One might think that with a nonzero error $\varepsilon$ in reinitialization after a round of computation, it would be possible that the final state after the reinitialization might have lower energy than the initial state of the computer's memory since the initial state in our framework is not assumed to be a ground state; thus, one could extract energy using the $\varepsilon$-difference between the initial and final states after one round of the computation.
This reasoning, however, does not capture the fact that this energy is borrowed from the computer's memory only temporarily in this single round.
After all, the initial state of the computer in the second round of the computation is then no longer exactly equal to the initial state of the first round, but also, in this second round, we still need to close the thermodynamic cycle by approximately reinitializing the computer's memory state.
When averaging over many rounds of the computation using the same computer's memory, the energy extraction must perish.
To capture this, our analysis deals with the energy conservation on average in the limit of $\varepsilon\to 0$.}, the energy conservation demands
\begin{align}\label{eq:energy_conservation}
    \mathcal W = \mathcal Q.
\end{align}
The energy consumption $\mathcal W$ is the sum of the non-dissipative energy exchanges between external systems and the computer, which can be understood as the work cost of performing the computation.
On one hand, we will analyze all the contributions to $\mathcal{W}$ to identify an upper bound of energy consumption of quantum computation.
On the other hand, the formulation~\eqref{eq:energy_conservation} of energy conservation will be the basis for proving the implementation-independent lower bound on the energy consumption of classical computation by analyzing $\mathcal Q$.

\subsection{\label{sec:computational_setting}Computational framework}

Based on the model in the query-complexity setting (Fig.~\ref{fig:thermo_diagram}) that we conceptualized in the previous section, we here proceed to describe the explicit computational framework that represents the computation in terms of (quantum and classical) circuits, so that we can analyze all the contributions to the energy consumption therein.
The significance of the computational framework that we will formulate here is that it makes it possible to quantitatively study bounds of energy consumption in the computation, especially, achievable upper bounds for quantum computation and fundamental lower bounds for classical computation.
In the following, we discuss our core idea to obtain such bounds by presenting the definition of classical and quantum oracles as in the conventional query-complexity setting (Fig.~\ref{fig:oracle_QvsC}), followed by formulating the full detail of the computational framework (Fig.~\ref{fig:generic_algo}) for our analysis.

\paragraph*{Oracle in the classical case.}
To derive a fundamental nonzero lower bound of energy consumption of classical computation, our idea here is to formulate the framework in such a way that we can study the energy-consumption requirement for solving the computational task as a whole rather than implementing each gate.
Conventionally, any classical computation can be rewritten as a classical circuit in a reversible way~\cite{bennett1973logical}.
Formulating a computational task as a decision problem reduces the number of bits to output at the end of the computation to one bit.
All other bits are to be reset to their initial state at the end of the computation.
But if the original computation is written as a reversible circuit, it is possible in principle to uncompute the original circuit and get all the energy back that was invested into the computer to obtain the output, up to negligible $O(1)$ energy consumption to output the single bit, as established by Bennett in Refs.~\cite{bennett1973logical,Bennett1982}.
A nonzero lower bound on the energy consumption can, therefore, arise only if uncomputation is impossible due to some additional restriction in the setting.
A setting where this happens is when the computation is structured with oracles as in the query-complexity setting.

\begin{figure}
    \centering
    \includegraphics[width=3.4in]{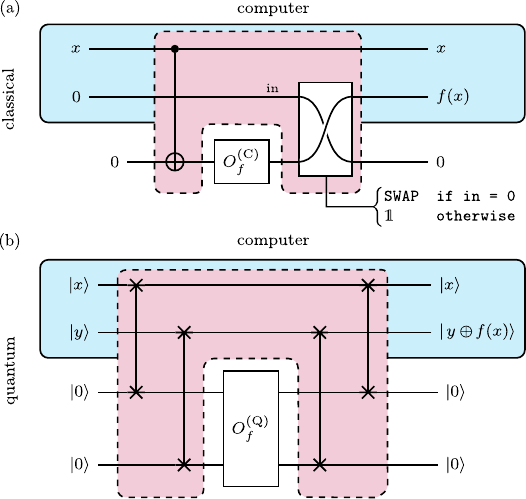}
    \caption{A classical oracle (a) and a quantum oracle (b) in our computational framework of classical and quantum computation, respectively.
    (a) In the classical case, the computer sends a copy of the input bitstring $x$ to the oracle.
    The oracle then allocates an output memory region in the computer (initialized in $00\cdots 0$ by the computer). In the allocated memory, the oracle writes out the corresponding output bitstring $f(x)$.
    As a result, each query to the oracle increases the information about the input-output relation $x\mapsto f(x)$ of the function $f$, which is to be stored in the computer until erased irreversibly.
    The oracle does nothing if there is no more memory in the computer to allocate.
    (b) In the quantum case, by convention, the oracle is implemented as a unitary map as defined in~\eqref{eq:oracle_Q}.
    The computer performs swap gates to set the input state into the input register for the oracle.
    After the oracle transforms the input state into the output state, the computer performs swap gates to receive this output state.}
    \label{fig:oracle_QvsC}
\end{figure}

\begin{figure*}
    \centering
    \includegraphics[width=7.0in]{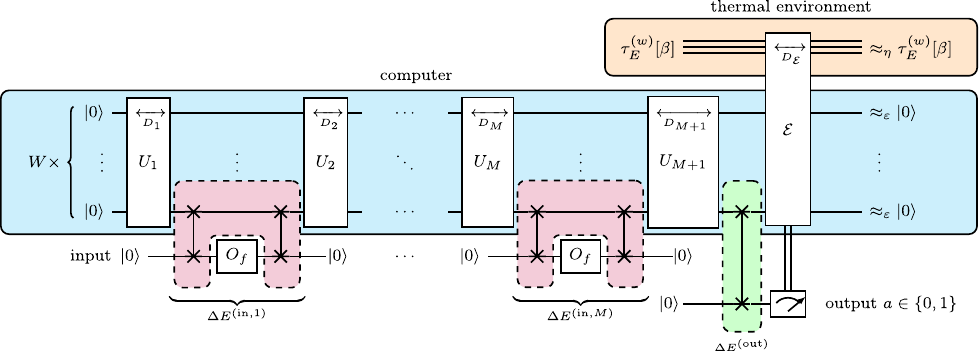}
    \caption{
    A circuit representing our framework for implementing the thermodynamic cycle of computation in Fig.~\ref{fig:thermo_diagram}. The circuit in the figure represents, ordered from top to bottom, the thermal environment, the $W$ (qu)bits of the (quantum or classical) computer, the outsourced oracle operation $O_f$ for input, and finally the output (qu)bit. The computation starts with all $W$ (qu)bits initialized in a fixed state $\ket{0}^{\otimes W}$.
    Then a sequence of $M+1$ operations $U_k$ is performed with oracle operations $O_f$ spaced in-between. Each operation $U_k$ has a depth of $D_k$ with respect to some gate set.
    Interaction between the computer and the oracle (and that between the computer and the output) is performed as described in Fig.~\ref{fig:oracle_QvsC} (the figure describes the quantum oracle, i.e., the case of Fig.~\ref{fig:oracle_QvsC}(b)), while the internal implementation of the oracle is ignored by conversion of the query-complexity setting. A single-(qu)bit measurement in the basis $\{\ket{a}:a=0,1\}$ of the output register is performed outside the computer to obtain the output $a\in\{0,1\}$ of the computation for a decision problem.
    At the end of the computation, the $W$ (qu)bits are reinitialized through the interaction $\mathcal E$ with an environment $E$ at inverse temperature $\beta$.
    In an idealized scenario, this would return each (qu)bit in the computer perfectly back to their initial state $\ket{0}$ so as to make the thermodynamic cycle closed.
    To capture a more realistic scenario, our analysis may allow for some non-zero infidelity $0<\varepsilon\ll 1$, by which each (qu)bit's state after reinitialization can deviate from $\ket{0}$ by $\varepsilon$ at the end of a single round of the computation.
    The required depth for this reinitialization is denoted by $D_\mathcal{E}$.
    We also require that each set of qubits in the environment for initializing the $w$th qubit ($1\leq w\leq W$) should be disturbed at most by $\eta$ from its initial thermal state $\tau_E^{(w)}[\beta]$, as analyzed in Sec.~\ref{sec:quantum_upperbound}.}
    \label{fig:generic_algo}
\end{figure*}

In this setting, the input to the computation is given via queries to an oracle that computes some function
\begin{align}
\label{eq:f_oracle}
    f:\{0,1\}^N\to\{0,1\}^{W_{f}},
\end{align}
where we let $N$ be the number of bits of $f$'s input (which is the size of the computational problems in the query-complexity setting), and
\begin{align}
\label{eq:W_f}
    W_{f}=O(\mathrm{poly}(N))
\end{align}
is the number of bits of $f$'s output.
For example, Simon's problem~\cite{Simon1994,Simon1997} is defined for  $f:\{0,1\}^N\rightarrow \{0,1\}^N$, i.e.,
\begin{align}
    W_{f}=N.
\end{align}

In the classical case, 
for an input string $x\in\{0,1\}^N$, the classical oracle lets the computer know the corresponding output string $f(x)\in\{0,1\}^{W_{f}}$ of the function $f$, where the oracle internally computes a map
\begin{align}
\label{eq:oracle_C}
     x \xrightarrow{O_f^{\mathrm{(C)}}} f(x).
\end{align}
More precisely, as visualized in Fig.~\ref{fig:oracle_QvsC}(a), the classical oracle receives a copy of the $N$-bit input string $x$ from the computer, allocates an output memory region in the computer (which is initialized by the computer as $00\cdots 0$), and writes out the corresponding output bitstring $f(x)$ in the allocated memory region.
The classical algorithm is then structured in such a way that the oracle queries are carried out outside of the computer, and the agent carrying out the algorithm on the computer is ignorant about the internals of the oracle so as to compute and learn a property of $f$ from the information obtained via the queries.
Since the oracle is a black box, the computer's memory initially has no information about $f$.
Each call of the oracle for new $x$ replaces $0$ in an initialized memory region of the computer with $f(x)$ and thus adds the information about the input-output relation of $f$ to the computer's memory.
To remove this information about $f$ from the computer's memory, we have to erase the information in an irreversible way independently of $f$, using the Laudauer erasure.\footnote{Once the computer knows $f(x)$, it is impossible to completely uncompute the information of $f(x)$ unless the computer irreversibly erases $f(x)$ by the Landauer erasure. 
If we were to allow for calling the classical oracle for the same input $x$ twice, then the computer would obtain two copies of $f(x)$ in the computer's memory, and in this case, by taking the bit-wise XOR operation, one of the copies could be uncomputed; however, even in this case, it is still impossible to reversibly uncompute both copies completely as the circuit for uncomputing $f$ requires knowledge about $f$.
For simplicity of analysis, we assume that the different queries to the classical oracle use different inputs $x$ since it is useless to make more than one query for the same $x$.
}
This non-invertibility is what breaks Bennett's uncomputing argument~\cite{Bennett1982,bennett1973logical,Watrous2009,Baumeler2019} in the classical case, which makes it possible to establish a nonzero lower bound on the energy consumption for any classical algorithm to solve a computational problem, as will be shown in Sec.~\ref{sec:classical_lowerbound}.

\paragraph*{Oracle in the quantum case.}
As for the quantum case, as described in Fig.~\ref{fig:oracle_QvsC}(b), the quantum version of the oracle for $f$ in~\eqref{eq:f_oracle} conventionally implements a unitary map that transforms each computational-basis state as~\cite{BUHRMAN200221,Watrous2009,Nielsen2010}
\begin{align}
\label{eq:oracle_Q}
    \ket{x,y} \xrightarrow{O_f^{\mathrm{(Q)}} } \ket{x,y\oplus f(x)},
\end{align}
for all $x\in \{0,1\}^N$ and $y,f(x)\in\{0,1\}^{W_f}$ with $W_f$ in~\eqref{eq:W_f}, where $\oplus$ is the bit-wise XOR operation.
For the superposition states $\sum_{x,y}\alpha_{x,y}\ket{x,y}$, $O_f^{\mathrm{(Q)}}$ acts linearly on each term of $\ket{x,y}$ as~\eqref{eq:oracle_Q}.
Note that we do not need the non-invertibility of the quantum oracle to derive the achievable upper bound of energy consumption for quantum computation, as we will show in Sec.~\ref{sec:quantum_upperbound}.\footnote{
Since we follow the conventional definition~\eqref{eq:oracle_Q} of quantum oracles in the query-complexity setting, unlike the classical case, the input $f(x)$ can be uncomputed by calling the quantum oracle twice;
in particular, since $f(x)\oplus f(x) = 0$,
the quantum oracle is its own inverse (i.e., $O_f^{\mathrm{(Q)}}\circ O_f^{\mathrm{(Q)}}$ is an identity map) and can, in principle, be used to uncompute all the oracle outputs.
Due to this invertibility of the quantum oracle, it is a priori not clear whether there exists a fundamental lower bound on the energy consumption of the quantum computation in this setting.
In practice, however, one may encounter another situation where the action of the quantum oracle except for $y=0$ in~\eqref{eq:oracle_Q} is undefined.
Simon's problem can still be solved with such a quantum oracle queried with $y=0$.
Then, it may be operationally infeasible to implement the inverse $\qty(O_f^{(\mathrm{Q})})^{-1}$ by only calling $O_f^{(\mathrm{Q})}$ with $y=0$ with a reasonable number of queries.
Still, in light of recent findings showing protocols for reversing unknown unitary operations~\cite{PhysRevLett.126.150504,Yoshida2023universal,yoshida2023reversing,Chen2024}, it is currently unknown how to rigorously impose the non-invertibility to quantum oracles used in quantum algorithms.
Finding a provable lower bound for the energy consumption of quantum computation in a variant of the query-complexity setting is therefore left for future work.}

\paragraph*{Formulation of our computational framework based on query-complexity setting.}
Using these oracles in both classical and quantum cases,
we here present the formulation of our computational framework, as shown in Fig.~\ref{fig:generic_algo}.
While it is a standard setting in computational complexity theory~\cite{Watrous2009}, oracle-based computation comes with important assumptions that are worth being pointed out.
For one, even if the algorithm can be written as a polynomial-size circuit with a polynomial number of oracle queries in terms of the problem size $N$, assuming a black-box oracle generally does not guarantee that the internals of the oracle itself can be implemented efficiently in polynomial time in $N$, which may be a usual criticism in computational complexity theory for oracle-based computation~\cite{BUHRMAN200221}.
For another, similar to the time of implementing the oracle, the required energy consumption for implementing the internals of the oracle may not be polynomial in $N$ in general either.
For an actual implementation of the oracle, the bounds on the runtime and the energy consumption eventually depend on its physical details;
however, we will propose how to make this additional cost negligible in practice, which we will discuss along with explaining the relevant energy consumption of the oracles in Sec.~\ref{sec:energetic_costs}.

In the most general form, an oracle-based algorithm based on Fig.~\ref{fig:thermo_diagram} can be structured as the circuit shown in Fig.~\ref{fig:generic_algo}, where the reinitialization of the computer's state is performed right after the output in an irreversible way based on Landauer erasure~\cite{Landauer1961}.
In particular, given a family of decision problems with problem size $N$ for a function
$f:\{0,1\}^N\to\{0,1\}^{W_f}$ in~\eqref{eq:f_oracle},
the algorithm is described by the circuit with the number of oracle queries $M(N)$, the width $W(N)$, and the depth $D_k(N)$ for the $k$th part ($1\leq k\leq M+1$).
The function $f$ is unknown at the beginning of the computation, and the knowledge on $f$ is input to the computer via multiple queries to the oracle.
Usually, the algorithm's width and depth, as well as the number of queries, monotonically increase as the problem size $N$ grows.
For Simon's problem, we will explicitly write out their dependence on $N$ in Sec.~\ref{sec:example}.
In the following, we may omit the $N$ dependency and write $M(N)$, $W(N)$, and $D_k(N)$ as $M$, $W$, and $D_k$, respectively, for simplicity of presentation.

In Fig.~\ref{fig:generic_algo}, the computer is initialized in a fixed pure state $\ket{0}^{\otimes W}$ of $W$ (qu)bits,
where the number of (qu)bits should be at least the number of bits required for storing the input and the output of the $N$-bit function $f$ in~\eqref{eq:f_oracle}, i.e., with $W_f$ in~\eqref{eq:W_f},
\begin{equation}
\label{eq:W_bound}
    W\geq N+W_f.
\end{equation}
Note that in the classical case, storing information obtained from $M$ queries to the classical oracle requires $W\geq (N+W_f)M$, but our analysis of the lower bound of energy consumption of classical computation does not explicitly use such requirements since the lower bound will be derived from the amount of information (i.e. the entropy) rather than the number of qubits, as will be shown in Sec.~\ref{sec:classical_lowerbound}.
In both classical and quantum cases,~\eqref{eq:W_bound} holds true as the lower bound of $W$.

Then, a sequence of $M+1$ unitary circuits $U_k$ of computational depth
\begin{align}
    \label{eq:D_k}
    D_k\geq 0
\end{align}
are performed, with the computational depth defined with respect to the decomposition of those unitaries into a product of gates selected from a finite gate set, such as Clifford and $T$ gates, where each gate in the gate set acts on at most a constant number of (qu)bits.\footnote{The particular choice of the gate set for quantum computation is irrelevant for our main results due to the Solovay-Kitaev Theorem \cite{Kitaev1997,Nielsen2010}, as it would, at worst, contribute with a polylogarithmic correction to the circuit depth. For simplicity of the analysis, we assume that we use a gate set such that the swap gate is implementable within a constant depth; for example, if the gate set includes the CNOT gate, the swap gate is implementable with three CNOT gates~\cite{Nielsen2010}. For universal classical computation, the classical Toffoli gate can be used.}

Between each operation $U_k$, the oracle $O_f^{\mathrm{(C)}}$ in~\eqref{eq:oracle_C} for the classical and $O_f^{\mathrm{(Q)}}$ in~\eqref{eq:oracle_Q} for the quantum case is called in total $M$ times.
For simplicity of notation, we may omit the superscripts of the classical and quantum definitions of the oracles to simplity write $O_f$ when it is clear from the context.
As shown in Fig.~\ref{fig:oracle_QvsC}, upon each query to $O_f$, the input state into $O_f$ is pulled out of the computer to an input register to the oracle; then, this state of the input register goes through the oracle $O_f$, and the state output from $O_f$ is put into the computer.
In our analysis, the runtime of $O_f$ is assumed to be negligible by the convention of the query-complexity setting, so the computer does not have to perform identity gates to wait for the runtime of $O_f$.
We let
\begin{equation}
\label{eq:D_swap}
    D_\mathrm{in}=O(1)
\end{equation}
denote the depth of each part of the circuit to make a query to the oracle for the input (i.e., each of the red parts in Figs.~\ref{fig:oracle_QvsC} and~\ref{fig:generic_algo}).

At the end of the computation, another swap gate (i.e., a green part in Fig.~\ref{fig:generic_algo}) is also performed to pull out an output state from the computer to an output register, which is a single (qu)bit for a decision problem.
The depth of a circuit of this output part (i.e., the green part of Fig.~\ref{fig:generic_algo}) is denoted by
\begin{align}
\label{eq:D_out}
    D_\mathrm{out}=O(1).
\end{align}
Reading the output register by the measurement in basis $\{\ket{a}:a=0,1\}$ yields the single-bit output $a\in\{0,1\}$ of the algorithm.

Finally, the computer's memory is reinitialized in the part we label $\mathcal E$ in Fig.~\ref{fig:generic_algo}, using an environment in a thermal state at inverse temperature
\begin{equation}
\label{eq:beta}
    \beta>0.
\end{equation}
In particular, let $\rho'$ denote the $W$-qubit state in the computer just after the reinitialization; then, for each $w=1,\ldots,W$, we require that each single-qubit reduced state for $\rho'$ of the $w$th qubit should be $\varepsilon$-close to the initial state $\ket{0}$ of the computer, i.e.,
\begin{equation}
\label{eq:fidelity_condition}
    \bra{0}\tr_{\overline{w}}[\rho']\ket{0}\geq 1-\varepsilon,
\end{equation}
where $\tr_{\overline{w}}$ is the partial trace over all the $W$ qubits but the $w$th.
We also require that during reinitializing the state of the $w$th qubit, the environment, initially in the Gibbs state $\tau_E^{(w)}[\beta]$, should also be disturbed at most by a small constant
\begin{equation}
\label{eq:eta_condition}
    \eta>0,
\end{equation}
as will be detailed more in~\eqref{eq:environment_condition} of Sec.~\ref{sec:quantum_upperbound}.
The depth of this reinitialization part of the circuit is denoted by $D_\mathcal{E}$.

As a whole, the circuit depth of the circuit represented in Fig.~\ref{fig:generic_algo} is denoted by
\begin{align}
\label{eq:D}
    D\coloneqq\sum_{k=1}^{M+1}D_k+MD_\mathrm{in}+D_\mathrm{out}+D_\mathcal{E}=D_\mathrm{comp}+D_\mathcal{E},
\end{align}
where we collectively represent the circuit depths~\eqref{eq:D_k},~\eqref{eq:D_swap}, and~\eqref{eq:D_out} for the operations $U_1,\dots,U_{M+1}$, the input, and the output as
\begin{align}
\label{eq:D_U}
    D_\mathrm{comp} \coloneqq \sum_{k=1}^{M+1} D_k+MD_\mathrm{in}+D_\mathrm{out}.
\end{align}
For quantum computation, the operations $U_k$ are unitaries acting on the $W$ qubits in the computer, and the state of the qubits can be in a generic superposition.
For classical computation, the operations $U_k$ may only be reversible classical logic operations, and the states of bits are always dephased in their energy-eigenbasis; in other words, the classical computer's state is in a probabilistic mixture of bit strings, which we may represent as diagonal density operators to use the same notation as those in quantum computation.

\subsection{\label{sec:energetic_costs}Energy consumption}

Accounting for the energy consumption of quantum computation in full depth requires considering both microscopic and macroscopic contributions.
Macroscopic contributions would be the energetic costs coming from refrigerators, readout amplifiers, and other electronics not directly involved in qubit control, as studied in, e.g., Ref.~\cite{FellousAsiani2022}.
Under microscopic contributions, we understand all energetics directly relevant to performing operations on the qubits, which is what we are interested in for the scope of this article.
In Definition~\ref{def:DefEnergyConsumption}, we have already given a high-level definition of what we refer to as the energy consumption of the computation in our framework.
Let us subdivide it further and explain the individual contributions:
\begin{enumerate}[label=(\roman*)]
    \item\label{item:qubit_energy_cost} \textbf{Energetic cost:} Any non-trivial operation changing the state of a memory system with non-degenerate energy levels in the computer may cost some energy due to the change of the energy of the computer's state. Per gate $U$, this cost is given, in expectation, by
    \begin{equation}
        \Delta E^{(U)} = \tr[H(U\rho U^\dagger - \rho)],
    \end{equation}
    where $H$ is the Hamiltonian of the system, and $\rho$ is the state of the system before performing $U$.
    \item\label{item:control_cost} \textbf{Control cost:} Implementing a desired unitary operation $U$ on a target quantum system by some Hamiltonian dynamics generally requires an auxiliary system for the control to make the implementation energy-preserving as a whole. 
    In general, this implementation may change the state of the auxiliary system, and compensating for this change has to require some thermodynamic control cost that we label as
    \begin{equation}
    \label{eq:control_cost_positive}
        E_\mathrm{ctrl.}^{(U)}\geq0,
    \end{equation}
    which can be considered heat dissipation and is by definition a nonnegative quantity. Note that we here may not analyze model-specific control costs but just write the control cost as $E_\mathrm{ctrl.}^{(U)}$ by abstracting the details, to maintain the general applicability of our analysis. See also Appendix~\ref{appendix:model_details} for a discussion on model-specific instantiation of the control cost.
    \item\label{item:info_theoretic_cost} \textbf{Initialization cost: } Landauer erasure ($\mathcal{E}$ in Fig.~\ref{fig:generic_algo}) to reinitialize the computer's memory comes at an additional work cost, bounded by the heat dissipation $\mathcal Q_E$ into the environment necessary for erasure, as will be defined in more detail as~\eqref{eq:heat}. Alternatively, we can think of this initialization cost as the one from type~\ref{item:qubit_energy_cost}, but not on the memory register of the computer but on the environment.
\end{enumerate}

When it comes to deriving an achievability result for quantum computation, we account for the three types of energy costs listed above: \ref{item:qubit_energy_cost} energetic costs on the memory qubits, \ref{item:control_cost} control costs, and \ref{item:info_theoretic_cost} initialization costs that arise as heat dissipation into a thermal environment when states in the computer are irreversibly erased into a fixed pure state.
In the following, we write out the decomposition into these three costs individually for all the gates in the framework of Fig.~\ref{fig:generic_algo}.

The energy $\mathcal{W}_\text{gates}$ that the agent invests into performing the computation in Fig.~\ref{fig:thermo_diagram} can be decomposed into contributions from all the gates listed in Fig.~\ref{fig:generic_algo}; that is, all the operations for $U_k$, the operations for the input from the oracles $O_f$ and the output (the red and green parts of Fig.~\ref{fig:generic_algo}, respectively), and the erasure $\mathcal E$ come at some work cost, which we write as
\begin{equation}
    \mathcal{W}_\text{gates}=\sum_{k=1}^{M+1}\mathcal W_\mathrm{gate}^{(U_k)} + \sum_{k=1}^{M}\mathcal W_\mathrm{gate}^{(\mathrm{in},k)} + \mathcal W_\mathrm{gate}^{(\mathrm{out})}+ \mathcal W_\mathrm{gate}^{(\mathcal{E})}.
\end{equation}

In particular,
for each $k\in\{1,\ldots,M\}$,
the work cost of $U_k$
is given by
\begin{align}
\label{eq:W_gate_U_k_def}
    \mathcal{W}_\text{gate}^{(U_k)} = \Delta E^{(U_k)} + E_\text{ctrl.}^{(U_k)},
\end{align}
where $\Delta E^{(U_k)}=\tr[H\qty(U_k\rho_{k-1}U_k^\dagger - \rho_{k-1})]$ is the change of energy in the memory of the computer, $H$ is the Hamiltonian of the $W$ qubits in the computer, and $\rho_{k-1}$ is the memory state just before application of $U_k$.

As for the energy consumption for the $k$th query to the oracle for the input,
we let $\Delta E^{(\text{in},k)}$ denote the change of energy of states of the computer's memory before and after making the $k$th query,
where the states before and after each query are shown in Fig.~\ref{fig:oracle_QvsC}.
Note that the states (and thus the energy) outside the computer in Fig.~\ref{fig:oracle_QvsC} remain unchanged. 
Apart from this energetic cost,
the rest of the cost for the $k$th query is
\begin{align}
\label{eq:contro_cost_in_k}
    \mathcal W_\text{gate}^{(\text{in},k)} = E_\text{ctrl.}^{(\text{in},k)},
\end{align}
where $E_\text{ctrl.}^{(\text{in},k)}$ is the control cost of performing a red part of Figs.~\ref{fig:oracle_QvsC} and~\ref{fig:generic_algo} for the $k$th query.\footnote{
Due to the conceptual separation of computer and oracle, our theoretical analysis does not count the control cost of performing the oracle's internal operations as part of the computer's energy consumption; however, in practical cases, it is indeed possible to make this additional control cost negligible.
For example, while the computer is a universal machine, the oracle is a single-purpose device for the input and thus may be made up of specialized parts with negligible energy loss compared to the computer.
In Sec.~\ref{sec:experiment}, we also explicitly propose an experimental setup with clarification on how the oracle can be implemented efficiently in the case of Simon's problem~\cite{Simon1994,Simon1997}, using a cryptographic primitive.
In this setup, the non-invertibility of the classical oracle is justified by the computational hardness assumption on the security of the cryptographic primitive in practice, in place of the black-box assumption in our theoretical analysis.}
The total energy transfer from the oracle into the computer's memory amounts to the contribution $\Delta E^{(\text{in})}$ in Fig.~\ref{fig:thermo_diagram}, given by the sum of the energy costs $\Delta E^{(\text{in},k)}$ of all $M$ queries, i.e.,
\begin{equation}
    \Delta E^{(\text{in})}=\sum_{k=1}^{M}\Delta E^{(\text{in},k)}.
\end{equation}

Similarly, 
we let
\begin{align}
\label{eq:E_out_def}
    -\Delta E^{(\text{out})}\in\mathbb{R}
\end{align}
denote the change of energy in the computer's memory before and after the output (which can be negative if the output state has higher energy than the initial state $\ket{0}$).
The negative sign in~\eqref{eq:E_out_def} follows from that in~\eqref{eq:energy_consumption_def}, respecting the direction of the arrow in Fig.~\ref{fig:thermo_diagram}.
Apart from this energetic cost,
the rest of the cost of outputting the result of the algorithm is
\begin{equation}
\label{eq:W_gate_out}
    \mathcal W_\text{gate}^{(\text{out})}=E_\text{ctrl.}^{(\text{out})},
\end{equation}
where $E_\text{ctrl.}^{(\text{out})}$ is the control cost of performing a green part of Fig.~\ref{fig:generic_algo} for the output.\footnote{
In our framework of Fig.~\ref{fig:generic_algo}, where the computer never performs measurements, the measurement to read the output is performed outside the computer.
For decision problems, only a single (qu)bit of information is output.
The energy consumption of performing the measurement of this (qu)bit with finite precision is $O(1)$ and does not grow with the problem size.
Thus, the energy consumption of realizing this single-qubit measurement to a finite precision is negligible, which is not counted in our analysis to simplify the bounds.
}

Reinitializing the memory at the end of the computation also comes at a work cost given by
\begin{align}
\label{eq:W_gate_E_def}
    \mathcal W_\text{gate}^{(\mathcal E)} = \Delta E^{(\mathcal E)} + E_\text{ctrl.}^{(\mathcal E)} + \mathcal Q_E.
\end{align}
The three terms account for the energy change of the qubits $\Delta E^{(\mathcal E)}=\tr[H(\rho'-\rho)]$, the control cost $E_\text{ctrl.}^{(\mathcal E)}$, and the initialization cost $\mathcal Q_E$ of the heat dissipation into the environment, where $\rho'$ is the $W$-qubit state after erasure and should satisfy~\eqref{eq:fidelity_condition}, and $\rho$ the state just before erasure.

Control costs in (quantum) computation arise because performing a unitary operation $U$ on a system usually requires external control.
The system itself generically evolves according to its Hamiltonian, but if one desires to perform some non-trivial operation $U$ on this target system, additional degrees of freedom of an auxiliary system should be needed.
For example, by means of some external control field, an interaction Hamiltonian $H_\text{int}$ with the target system can be turned on for a duration of time $t$, which then generates the unitary $U=e^{-iH_\text{int}t}$.
In a simple model that we give in Appendix~\ref{appendix:model_details} as an illustration, the state of the auxiliary system for the control degrades through the interaction.
Reinitializing the state of the auxiliary system then comes at some positive thermodynamic cost, whose precise form depends on the physical model of the actual system of interest.
This energy is dissipated into some environment $E'$, which we do not specify further for the sake of the generality of our analysis.
Rather than specifying the model, we work with generic positive constants for the control costs that give the overall control dissipation after summing up, i.e.,
\begin{equation}
\label{eq:Q_E_prime}
    \mathcal Q_{E'}=\sum_{k=1}^{M+1}E_\mathrm{ctrl.}^{(U_k)}+\sum_{k=1}^{M}E_\mathrm{ctrl.}^{(\text{in},k)}+E_\text{ctrl.}^\text{(out)}+ E_\mathrm{ctrl.}^{\mathcal E}\geq 0.
\end{equation}
The environment $E'$ into which this energy is dissipated is not explicitly illustrated in our computational framework of Fig.~\ref{fig:generic_algo}; there, only the initialization cost $\mathcal Q_E$ appears.
Adding these two terms together gives
\begin{equation}
\label{eq:Q}
    \mathcal Q = \mathcal Q_E + \mathcal Q_{E'},
\end{equation}
as in the thermodynamic diagram of Fig.~\ref{fig:thermo_diagram}.

With these definitions at hand, we can make the sanity check to verify that the energy conservation~\eqref{eq:energy_conservation} is true in the limit where the error $\varepsilon$ in~\eqref{eq:fidelity_condition} is small enough to close the thermodynamic cycle of the computation.
In particular, writing out~\eqref{eq:energy_consumption_def} explicitly gives
\begin{widetext}
\begin{align}
\label{eq:proof_energy_conservation}
    \mathcal W &= \left(\sum_{k=1}^{M+1}\mathcal W_\mathrm{gate}^{(U_k)} + \sum_{k=1}^{M}\mathcal W_\mathrm{gate}^{(\mathrm{in},k)} + \mathcal W_\mathrm{gate}^{(\mathrm{out})}+ \mathcal W_\mathrm{gate}^{(\mathcal{E})}\right)+ \left(\sum_{k=1}^{M}\Delta E^{(\text{in},k)}\right)-\Delta E^{(\text{out})} \\ 
    \label{eq:proof_energy_conservation_2}
    &= \left(\sum_{k=1}^{M+1}\Delta E^{(U_k)} + \sum_{k=1}^{M}\Delta E^{(\text{in},k)}- \Delta E^{(\text{out})} + \Delta E^{(\mathcal E)}\right)+\left(\sum_{k=1}^{M+1}E_\mathrm{ctrl.}^{(U_k)}+\sum_{k=1}^{M}E_\mathrm{ctrl.}^{(\text{in},k)}+E_\text{ctrl.}^\text{(out)}+ E_\mathrm{ctrl.}^{(\mathcal E)}\right) + \mathcal Q_E\\
    &\to 0 + \mathcal Q_{E'} + \mathcal Q_E = \mathcal Q,
\end{align}
\end{widetext}
in the limit of $\varepsilon\to 0$.
In~\eqref{eq:proof_energy_conservation}, following the definition~\eqref{eq:energy_consumption_def} of the energy consumption in Definition~\ref{def:DefEnergyConsumption}, we have represented the computer's energy consumption $\mathcal W$ as the sum of all work costs $\mathcal W_\mathrm{gates}^{(\cdots)}$ and the energy exchanges with the input $\Delta E^{(\text{in})}$ and output $-\Delta E^{(\mathrm{out})}$.
Then, in the first term of~\eqref{eq:proof_energy_conservation_2}, given $\varepsilon\rightarrow 0$, the qubit energy changes of type~\ref{item:qubit_energy_cost} together sum to zero because the computation together with reinitialization is a cyclic process on the computer.
What is left is the second term, i.e., a sum of all control costs~\ref{item:control_cost}, and the third term, i.e., the initialization cost~\ref{item:info_theoretic_cost}. Together, they yield energy conservation $\mathcal W = \mathcal Q$.
This energy conservation is crucial for our analysis of the upper and lower bounds of energy consumption of computation in Sec.~\ref{sec:classical_lowerbound}.

We also remark that, along the way of the analysis of the achievable upper bound of the energy consumption of quantum computation in Sec.~\ref{sec:quantum_upperbound}, it is also important to check that the work costs (which may include both the energetic cost and the control cost as in~\eqref{eq:W_gate_U_k_def} and~\eqref{eq:W_gate_E_def}) are feasibly bounded in every step of quantum computation, whereas the energetic costs cancel out in evaluating the net energy consumption $\mathcal W$ as shown in~\eqref{eq:proof_energy_conservation_2}. 
To see the importance, we recall, for example, that the work cost $\mathcal W_\mathrm{gate}^{(U_k)}$ is the cost that the agent needs to invest at each step $U_k$ of the computation.
For the implementability of each step of the computation, the work cost for each step should be bounded; otherwise, the agent would need to invest too much energy to run an intermediate step of the quantum algorithm.
For this reason, in Sec.~\ref{sec:quantum_upperbound}, we will provide achievable upper bounds of both the work costs and the control costs along the way of our analysis while using the control costs to present theorems on the upper bounds of the net energy consumption.

\section{\label{sec:results}General bounds on energy consumption of computation}

In this section, we derive general upper and lower bounds of energy consumption of conducting quantum and classical computation, respectively, in the framework formulated in Sec.~\ref{sec:setting}.
First, in Sec.~\ref{sec:quantum_upperbound}, we provide a detailed analysis of the achievable energy consumption, accounting in particular for finite-fidelity and finite-step Landauer erasure for reinitialization and quantum error correction.
Second, in Sec.~\ref{sec:classical_lowerbound}, we derive a fundamental lower bound on the energy consumption.

\subsection{\label{sec:quantum_upperbound}Upper bound on energy consumption for quantum computation}

In this section, we derive a general upper bound
\begin{equation}
\mathcal W\leq  \mathcal W^{(\text{Q})}
\end{equation}
of energy consumption that is achievable by quantum computation within the framework of Fig.~\ref{fig:generic_algo}.
The general idea for estimating $\mathcal W^{(\text{Q})}$ is to break down the budget~\eqref{eq:proof_energy_conservation} of energy consumption into the contributions from the individual elementary gates and bound each contribution from above.
The achievable upper bound of the energy consumption of quantum computation is determined not only by the query complexity; the total volume (width times depth) of the quantum circuit for implementing the quantum algorithm is also crucial.\footnote{
Indeed, there are edge cases where even if the query complexity scales polynomially, the quantum algorithm may require an exponential runtime to process the polynomial amount of information obtained from the queries.
In such a case, the quantum algorithm may require exponentially large energy consumption due to the exponential runtime.}

In the following analysis, we begin with introducing the maximum work cost $\mathcal W_\text{max,el.}$ per qubit for implementing an elementary gate in the gate set.
Using $\mathcal W_\text{max,el.}$, we will bound the work costs and the energy transfers between the oracle and the computer, i.e., each term in~\eqref{eq:proof_energy_conservation}, as $\mathcal W_\text{max,el.}$ times the volume (i.e., width times depth) of the part of the circuit in Fig.~\ref{fig:generic_algo} corresponding to each term.
At the same time, to estimate the overall energy consumption $\mathcal W$,
we will also bound the control costs and the initialization cost appearing in~\eqref{eq:proof_energy_conservation_2}.
Our analysis of these bounds consisted of the following three steps,
where we do a preliminary idealized estimate without considering gate errors in the first two steps, and then we account \textit{a posteriori} for the overheads of quantum error correction in the third step.
\begin{enumerate}[label=(\arabic*)]
    \item\label{item:qstep_1} \textit{Contributions from gates, input, and output.} We estimate the contributions $\mathcal W_\text{gate}^{(U_k)}$, $\mathcal W_\text{gate}^{(\text{in},k)}$, $\mathcal W_\text{gate}^{(\text{out})}$, $\Delta E^{(\text{in},k)}$, and $\Delta E^{(\text{out})}$ for generic input sizes, widths, and depths.
    \item\label{item:qstep_2} \textit{Contribution from reinitialization by finite Laudauer erasure.} We bound the cost $\mathcal W_\text{gate}^{\mathcal E}$ of reinitializing the qubits, for which we derive a bound for finite Landauer erasure.
    \item\label{item:qstep_3} \textit{Contribution from quantum error correction.} Lastly, we correct for the space and time overhead coming from quantum error correction.
\end{enumerate}

\paragraph*{Maximum work cost for implementing an elementary gate.}
We here analyze the work cost per qubit for performing each elementary quantum gate in the gate set, which is to be bounded by a constant independent of the size of the problem the gates are used to solve.

The work cost of each gate is determined as follows.
In degenerate quantum computing, that is, when all computational basis states are energy-degenerate, unitary operations would be free in terms of the energetic cost~\cite{Brandao2013,Lostaglio2019,Liu2019}.
Any implementation of a quantum computer on a physical platform inherently requires some energy splitting of the qubits in order to control their state.
Several works have been investigating the fundamental lower and achievable energy cost of performing unitary operations~\cite{Misra2016,Bakhshinezhad2019,Chiribella2021}, often with the goal to estimate \textit{control costs} to achieve a certain target fidelity in implementing a given unitary gate.
Physical realizations of the quantum gate are never perfect, but the requirement of the threshold theorem for fault-tolerant quantum computation (FTQC) is that the physical error rate of the elementary gates should be below a certain threshold constant~\cite{10.1145/258533.258579,10.1137/S0097539799359385,548464,Kitaev_1997,doi:10.1098/rspa.1998.0166,10.5555/2011665.2011666,10.1007/11786986_6,PhysRevA.71.012336,doi:10.1063/1.1499754,PhysRevLett.109.180502,gottesman2014faulttolerant,Yamasaki2022}.
The existence of a finite threshold for FTQC implies that the work cost of performing a unitary operation within a better fidelity than the threshold can also be bounded.
We can, therefore, take the maximum work cost per qubit over the gates in the universal gate set as
\begin{align}
\label{eq:max_elementary_cost}
    \mathcal W_\text{max,el.} \coloneqq E_\mathrm{qubit} + E_\text{ctrl.},
\end{align}
which splits into a contribution from the single-qubit energy change $E_\mathrm{qubit}$ and the remaining contribution $E_\text{ctrl.}$ representing the control cost per qubit.
The universal gate set includes multiqubit gates such as CNOT gates, and for a multiqubit gate, the per-qubit work cost in~\eqref{eq:max_elementary_cost} represents the total work cost divided by the number of qubits involved in the gate.
We will explain $E_\mathrm{qubit}$ and $E_\text{ctrl.}$ in the following.

In~\eqref{eq:max_elementary_cost}, $E_\mathrm{qubit}$ is the maximum single-(qu)bit energy over all the (qu)bits in the computer, where
\begin{equation}
\label{eq:E_qubit}
    E_\mathrm{qubit}\geq0.
\end{equation}
To be more specific, for the $w$th (qu)bit out of the $W$ (qu)bits ($w\in\{1,2,\ldots, W\}$), let $H_\mathrm{qubit}^{(w)}> 0$ be the single-(qu)bit Hamiltonian characterizing the energy of the $w$th (qu)bit, scaled appropriately so that its ground-state energy should be non-negative; in this case, we can take
\begin{equation}
    E_\mathrm{qubit}=\max_{w}\qty{\norm{H_\mathrm{qubit}^{(w)}}_\infty}, 
\end{equation}
where $\|\cdots\|_\infty$ is the operator norm.
To ensure the generality of our analysis, we will keep $E_\mathrm{qubit}$ as a general variable, which may depend on the circuit size and scale, at most, polynomially with $N$.
As for the control cost $E_\text{ctrl.}$ in~\eqref{eq:max_elementary_cost}, our analysis also keeps $E_\text{ctrl.}$ as a general variable, which may also grow, at most, polynomially with $N$.
To summarize, the assumption on $E_\mathrm{qubit}$ and $E_\mathrm{ctrl.}$ for our analysis is
\begin{align}
    E_\mathrm{qubit}&=O(\mathrm{poly}(N)),\label{eq:E_qubit_polyN}\\
    E_\mathrm{ctrl.}&=O(\mathrm{poly}(N)),\label{eq:E_ctrl_polyN}
\end{align}
while these may have much smaller, constant order $O(1)$ in conventional settings.
This general setting allows us to account for the possibility that as the problem size grows, qubit crosstalk in a large quantum device may lead to a higher control cost per qubit.
As shown in Ref.~\cite{Chiribella2021} (also see Appendix~\ref{appendix:model_details}), the variable $E_{\rm ctrl.}$ is conventionally related to $E_\mathrm{qubit}$ as
\begin{equation}
\label{eq:E_ctrl}
     E_\text{ctrl.}=O(E_\mathrm{qubit}).
\end{equation}

As a whole, we have a maximal work cost $\mathcal W_\text{max., el.}$ per qubit in~\eqref{eq:max_elementary_cost} to perform any elementary gate in the gate set, regardless of the initial state of the qubits before applying the gate.
This upper bound $\mathcal W_\text{max,el.}$ may be a loose worst-case estimate to cover the case where an elementary gate rotates the qubits it acts on from the ground state to the excited state with the energy change of $E_\mathrm{qubit}$.
Note that some operations, e.g., a gate transforming the excited state to the ground state, may allow us to extract energy; also, the overall energetic cost sums up to zero as in~\eqref{eq:proof_energy_conservation_2}.
Still, it is important to bound the worst-case contributions of the work cost in terms of $\mathcal{W}_\text{max., el.}$, rather than only bounding the contributions of $E_\mathrm{ctrl.}$ in~\eqref{eq:max_elementary_cost}, for implementability of every step of the computation.

\paragraph*{Step \ref{item:qstep_1}: Contributions from gates, input, and output.}
Using $\mathcal W_\text{max,el.}$ in~\eqref{eq:max_elementary_cost},  we estimate the energy consumption for each part of the circuit in Fig.~\ref{fig:generic_algo}.

To estimate the cost of the unitaries $U_k$, we use the fact that the unitary acts on $W$ qubits, and also each of these unitaries has depth $D_k$.
Thus, the cost of $\mathcal W_\text{gate}^{(U_k)}$ applying $U_k$ in~\eqref{eq:W_gate_U_k_def} is bounded by
\begin{align}
     \mathcal W_\text{gate}^{(U_k)}&\leq \mathcal W_\text{max., el.} \times W D_k\\
     &=O((E_\mathrm{qubit}+E_\mathrm{ctrl.})\times WD_k),\label{eq:W_gate_U_k}
\end{align}
where $W D_k$ is the volume of this part of the circuit, and the second line follows from~\eqref{eq:max_elementary_cost}.
Due to~\eqref{eq:W_gate_U_k_def}, $\mathcal W_\text{gate}^{(U_k)}$ in~\eqref{eq:W_gate_U_k} is decomposed into
\begin{align}
\label{eq:E_U_k}
    \Delta E^{(U_k)}&=O(E_\mathrm{qubit}\times WD_k),\\
\label{eq:E_ctrl_U_k}
    E_\text{ctrl.}^{(U_k)}&=O(E_\mathrm{ctrl.}\times WD_k).
\end{align}

In each red part of calling the oracle between $U_k$ and $U_{k+1}$ ($k=1,\ldots,M$) in Fig.~\ref{fig:generic_algo},
we account for the control cost in swapping the states between the computer and the input register for the quantum oracle back and forth as in Fig.~\ref{fig:oracle_QvsC}(b).
In this part, i.e., the $k$th query to the oracle with input size $N+W_f$ as in~\eqref{eq:oracle_Q},
we have $N+W_f$ auxiliary qubits for the input register in addition to the $W$ qubits in the computer; i.e., the width of this part of the circuit in Fig.~\ref{fig:generic_algo} is $W+N+W_f$.
Therefore, the control cost $W_\text{gate}^{(\text{in},k)}$ of the $k$th-query part for the input in~\eqref{eq:contro_cost_in_k} is given by
\begin{align}
    \mathcal W_\text{gate}^{(\text{in},k)} &\leq E_\mathrm{ctrl.}\times (W+N+W_f) D_\text{in}\\
    &=O(E_\mathrm{ctrl.}\times WD_\text{in}), \label{eq:W_gate_in_k}
\end{align}
where $D_\text{in}$ is given by~\eqref{eq:D_swap}, and we use~\eqref{eq:W_bound} in the last line.
Also, using $E_\mathrm{qubit}$ in~\eqref{eq:E_qubit}, we bound the energetic cost for the $k$th query by
\begin{equation}
\label{eq:E_in_k}
    \Delta E^{(\mathrm{in},k)}\leq E_\mathrm{qubit}(N+W_f) =O(E_\mathrm{qubit} \times W),
\end{equation}
where we use~\eqref{eq:W_bound}.

In the same way, in the green part of Fig.~\ref{fig:generic_algo}, the control cost $\mathcal W_\text{gate}^{(\text{out})}$ of outputting the result of the computation in~\eqref{eq:W_gate_out} is at most
\begin{align}
    \mathcal{W}_\text{gate}^{(\text{out})} &\leq E_\mathrm{ctrl.}\times (W+1)D_\text{out}\\
    &= O(E_\mathrm{ctrl.}\times WD_\text{out}), \label{eq:W_gate_out_k}
\end{align}%
where $D_\mathrm{out}$ is given by~\eqref{eq:D_out}, the factor $W+1$ is composed of the $W$ qubits in the computer and another single qubit for the output register.
Moreover, using the upper bound $E_\mathrm{qubit}$ of the single-qubit energy in~\eqref{eq:E_qubit}, we bound the energetic cost of the output by
\begin{align}
\label{eq:E_out}
    |\Delta E^{(\text{out})}|&\leq E_\mathrm{qubit}.
\end{align}

\begin{widetext}
As a whole, due to~\eqref{eq:W_gate_U_k},~\eqref{eq:W_gate_in_k},~\eqref{eq:E_in_k},~\eqref{eq:W_gate_out_k}, and~\eqref{eq:E_out}, the work cost and the energy transfer between the computer and the oracle in the part of the circuit in Fig.~\ref{fig:generic_algo} for the computation (i.e., the parts except for $\mathcal{E}$ in Fig.~\ref{fig:generic_algo}) is at most
\begin{align}
    &\sum_{k=1}^{M+1}\mathcal W_\text{gate}^{(U_k)}+\sum_{k=1}^M \mathcal W_\text{gate}^{(\text{in},k)}+\mathcal{W}_\text{gate}^{(\text{out})}+\sum_{k=1}^{M}\Delta E^{(\mathrm{in},k)}-\Delta E^{(\mathrm{out},k)}\nonumber\\
    &=O\qty((E_\mathrm{qubit}+E_\mathrm{ctrl.})\times W\qty(\sum_{k=1}^{M+1}D_k+MD_\mathrm{in}+D_\mathrm{out}))\\
    &=O(\mathcal{W}_\text{max,el.}\times WD_\mathrm{comp}),
\end{align}
where $D_\mathrm{comp}$ is given by~\eqref{eq:D_U}, and $\mathcal{W}_\text{max,el.}$ by~\eqref{eq:max_elementary_cost}.
In words, the work cost scales at most with the \textit{volume} of the circuit for implementing the algorithm, i.e., the product of total width and depth of the circuit for implementing the algorithm (complexity-theoretic contribution) times the parameter $\mathcal W_\text{max., el.}$ in~\eqref{eq:max_elementary_cost} representing the maximum work cost per qubit of gates (energetic contribution).
Also, as shown in~\eqref{eq:proof_energy_conservation_2}, 
the relevant contribution to the energy consumption $\mathcal{W}$ is the control cost, which is a part of the work cost and bounded, due to~\eqref{eq:E_ctrl_U_k},~\eqref{eq:W_gate_in_k}, and~\eqref{eq:W_gate_out_k}, by
\begin{align}
    \sum_{k=1}^{M+1}E_\text{ctrl.}^{(U_k)}+\sum_{k=1}^M  E_\text{ctrl.}^{(\text{in},k)}+E_\text{ctrl.}^{(\text{out})} &=O\qty(E_\mathrm{ctrl.}\times W\qty(\sum_{k=1}^{M+1}D_k+MD_\mathrm{in}+D_\mathrm{out}))\\
    \label{eq:work_cost_total}
    &=O(E_\mathrm{ctrl.}\times WD_\mathrm{comp}).
\end{align}
\end{widetext}

\paragraph*{Step \ref{item:qstep_2}: Contribution from reinitialization by finite Laudauer erasure.}
To close the thermodynamic cycle on the computer's memory, the qubits in the computer are reinitialized by the end of the computation as shown by the part $\mathcal{E}$ of the circuit in Fig.~\ref{fig:generic_algo}.
The research field of algorithmic cooling~\cite{Park2016,Alhambra2019,Serafini2020,Soldati2021} studies algorithms that cool down a mixed state of qubits in a quantum computer into a pure state to reset the qubits' state, but these results do not straightforwardly apply to our framework due to the difference in settings from ours.
Protocols for Landauer erasure~\cite{Landauer1961} can be used for the reinitialization in our framework. 
Existing results on Laudauer erasure use either a sequence of unitary operations as in Refs.~\cite{Reeb2014,Taranto2021} or finite-time relaxation~\cite{Proesmans2020,VanVu2022,Rolandi2022}, but these works mostly provide lower bounds of the heat dissipation.
Reference~\cite{Reeb2014} also provides an upper bound of heat dissipation, but the protocol to achieve this upper bound requires an infinite number of steps.
By contrast, what we need is an achievable upper bound of the energy consumption for erasing a given mixed state to a pure state up to finite precision $\varepsilon$ within a finite number of steps.
To obtain such a bound, we work on protocols using a sequence of unitary operations as in Ref.~\cite{Reeb2014} (also called a collision model~\cite{Taranto2024}), which are more suitable to obtain a finite bound of steps than those using finite-time relaxation.

Our goal here is to find an upper bound of the achievable cost $\mathcal W_\mathrm{gate}^{\mathcal E}$ in this finite setting of reinitialization for our framework.
We note that in Sec.~\ref{sec:classical_lowerbound}, we will also clarify a lower bound of the required energy for this initialization by a Landauer-erasure protocol, which will be given by the required amount of heat $\mathcal Q_E$ dissipated into the environment in the initialization (i.e., the initialization cost as defined in Sec.~\ref{sec:energetic_costs}). 
By contrast, we here bound $\mathcal{W}_\mathrm{gate}^{\mathcal E}$ from above, not from below, especially in the setting where we take into account all possible contributions including the control cost as well as the energetic cost.
In principle, a part of the achievable cost of the Landauer-erasure protocol may be made as close to its lower bound $\mathcal Q_E$ as possible~\cite{Reeb2014}, which is in part true, but only if we focus on the energetic cost and ignore the control cost.
Problematically, exactly achieving $\mathcal Q_E$ may require infinite time steps, and the control cost required for these infinite time steps may also diverge; indeed, based on the third law of thermodynamics in the formulation of Nernst's unattainability principle~\cite{Nernst1906}, it is generally agreed upon that cooling a (quantum) state to absolute zero and thereby erasing its previous state inevitably comes at divergent resource costs in some form~\cite{Wilming2017,Masanes2017,Freitas2018,Taranto2021}.

To obtain the upper bound including the control cost in addition to the energetic cost, we need a finite upper bound on the number of steps (depths) for achieving finite infidelity $\varepsilon$ in the Laundauer-erasure protocol, which is where Theorem~\ref{thm:ThmErasureBound} below steps in. 
The setting of the Landauer-erasure protocol starts with the initial state $\rho_S$ of the target system $S$ of interest and an environment available in a thermal state at inverse temperature $\beta>0$ 
\begin{equation}
\label{eq:tau_E_beta}
    \tau_E[\beta]\coloneqq\frac{e^{-\beta H_E}}{\tr[e^{-\beta H_E}]},
\end{equation}
where $H_E$ is the Hamiltonian of the environment.
Note that $S$ will be each qubit in the computer in our analysis, but we here present a general result on a $d$-dimensional target system.
The goal of the Landauer-erasure protocol is to transform the state
\begin{equation}
    \rho_S\mapsto \rho'_S = \tr_E[\rho^\prime_{SE}]= \tr_E[U(\rho_S\otimes\tau_E[\beta]) U^\dagger]
\end{equation}
by some unitary $U$,
so that it should hold that
\begin{equation}
\label{eq:fidelity_condition_erasure}
    \braket{0}{\rho'_S|0}\geq 1-\varepsilon,
\end{equation}
where $\ket{0}$ can be any pure state of the target system in the same way as the requirement~\eqref{eq:fidelity_condition} in our framework.
In the Landauer-erasure protocol, energy is dissipated into the environment $E$ as heat
\begin{equation}
\label{eq:heat}
    \mathcal Q_E\coloneqq\tr[H_E(\rho'_E-\rho_E)].
\end{equation}
where we write
\begin{equation}
     \rho^\prime_E=\tr_S[\rho^\prime_{SE}].
\end{equation}
One can always write this increase in the environment's energy as~\cite{Reeb2014}
\begin{equation}
\label{eq:reeb_wolf}
    \beta\mathcal Q_E = \Delta S + D(E'\|E)+I(E':S'),
\end{equation}
where 
$E, E'$ are the environment before and after the Landauer erasure, $S'$ is the target system after the erasure, 
\begin{equation}
\label{eq:Landauer_bound}
    \Delta S \coloneqq S(\rho_S)-S(\rho'_S)
\end{equation}
is called Landauer's bound~\cite{Landauer1961},
\begin{equation}
  S(\rho)\coloneqq-\tr[\rho\ln\rho]  
\end{equation}
is the von Neumann entropy, 
\begin{equation}
  D(E'\|E)\coloneqq\tr[\rho^\prime_{E}(\ln\rho^\prime_{E}-\ln\tau_{E}[\beta])]  
\end{equation}
is the quantum relative entropy, and
\begin{equation}
    I(E^\prime: S^\prime)\coloneqq S(\rho^\prime_E)+S(\rho^\prime_S)-S(\rho^\prime_{SE})
\end{equation}
is the quantum mutual information.
Using~\eqref{eq:reeb_wolf} together with $I(E^\prime: S^\prime)\geq 0$, we can bound the relative entropy for the states of the environment after the erasure versus before as
\begin{equation}
\label{eq:D_Q_bound}
    D(E'\|E)\leq \beta\mathcal Q_E -\Delta S.
\end{equation}
Due to $D(E'\|E)\geq 0$, it always holds that
\begin{equation}
    \mathcal{Q}_E\geq\frac{\Delta S}{\beta}.
\end{equation}
The quantity $\beta\mathcal Q_E -\Delta S\geq 0$ on the right-hand side of~\eqref{eq:D_Q_bound} represents the excess of the protocol's heat $\mathcal Q_E$ dissipated to the environment over that given by the Laudauer's bound $\Delta S$,
which is also an upper bound of how different the environment's initial state $\tau_E[\beta]$ is from its final state $\rho_E'$ after the erasure; as the condition on the environment to be specified by the constant $\eta$ in~\eqref{eq:eta_condition}, we require
\begin{equation}
\label{eq:D_Q_bound_2}
    \beta\mathcal Q_E -\Delta S\leq \eta.
\end{equation}
With these notations, we obtain the following finite bounds.

\begin{restatable}[Finite Landauer-erasure bound]{theorem}{ThmErasureBound}\label{thm:ThmErasureBound}
Given a quantum state $\rho_S$ of a  $d$-dimensional target system for any $d\geq 2$ and any constants $\varepsilon\in(0,1/2]$ and $\eta\in(0,1]$,
let $T$ be
\begin{align}
\label{eq:T_def_general}
    T = \left\lceil \frac{(e+1)}{e\eta}\ln\qty(\frac{(e+1)(d-1)^2}{\varepsilon\eta})\right\rceil=O\qty(\frac{1}{\eta}\log\frac{d^2}{\varepsilon\eta}),
\end{align}
where $\lceil{}\cdots{}\rceil$ is the ceiling function, and $e$ is the Euler constant.
Then, a $T$-step Landauer-erasure protocol can transform $\rho_S$ into a final state $\rho'_S$ with infidelity to a pure state $\ket{0}$ satisfying $\braket{0|\rho'_S}{0}\geq 1-\varepsilon$ in~\eqref{eq:fidelity_condition_erasure}
by using an environment at inverse temperature $\beta>0$, in such a way that the heat $\mathcal Q_E$ in~\eqref{eq:heat} dissipated into the environment differs from Laundauer's bound $\Delta S$ in~\eqref{eq:Landauer_bound} at most by
\begin{align}\label{eq:finite_erasure_bound}
    \beta\mathcal Q_E - \Delta S \leq \eta,
\end{align}
as required in~\eqref{eq:D_Q_bound_2}.
\end{restatable}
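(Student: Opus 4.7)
The plan is to construct an explicit $T$-step Landauer-erasure protocol in which $S$ is gradually cooled by interacting sequentially with $T$ fresh environment blocks $E_1,\ldots,E_T$, each initially in its own thermal state at inverse temperature $\beta$. I would take each $E_t$ to be a $d$-dimensional system whose Hamiltonian has a nondegenerate ground state $\ket{0}$ and a $(d-1)$-fold degenerate excited sector at a tuned energy $\Delta_t$, so that its ground-state thermal weight is $p_t = 1/(1+(d-1)e^{-\beta\Delta_t})$. At step $t$ an energy-conserving partial-swap-type unitary on $S\otimes E_t$ transfers the excited-sector population of $S$ into $E_t$, after which $E_t$ is discarded and step $t+1$ acts on a fresh, uncorrelated block. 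Using fresh blocks will be critical below because it lets the joint environmental deviation split additively, $D(E'\|E) \leq \sum_{t=1}^T D(\rho_{E_t}'\|\tau_{E_t}[\beta])$.

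Given this structure, I would combine the Reeb--Wolf identity~\eqref{eq:reeb_wolf} with $I(E':S')\geq 0$ to obtain
\begin{equation}
    \beta\mathcal{Q}_E - \Delta S \;\leq\; \sum_{t=1}^{T} D(\rho_{E_t}'\,\|\,\tau_{E_t}[\beta]),
\end{equation}
which reduces the theorem to controlling the per-step relative entropies. The key per-step estimate, obtained by a second-order expansion of the binary-entropy-like expression $(1-q)\ln\tfrac{1-q}{1-(1-\lambda)q} + q\ln\tfrac{1}{1-\lambda}$ around $\lambda = 0$, is that if the $\Delta_t$ are chosen so the partial swap contracts the excited-state population as $q_t \leq (1-\lambda)q_{t-1}$ for a fixed $\lambda\in(0,1)$, then each $D(\rho_{E_t}'\|\tau_{E_t}[\beta]) = O(\lambda^2 q_{t-1})$, with an explicit $(d-1)$-dependent prefactor arising from the degeneracy of the excited sector.

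The theorem then follows in three quantitative steps. First, iterating the contraction from $q_0 \leq 1$ gives $q_T \leq (1-\lambda)^T \leq e^{-\lambda T}$, so the fidelity demand translates, after absorbing the $(d-1)^2$ factor needed to convert excited-sector weight into ground-state fidelity, into $T \gtrsim \lambda^{-1}\ln((d-1)^2/\varepsilon)$. Second, summing the per-step heat bound over the geometric sequence $q_{t-1}\leq (1-\lambda)^{t-1}$ gives $\sum_t D(\rho_{E_t}'\|\tau_{E_t}[\beta]) = O(\lambda)$, so the heat budget $\sum_t D_t \leq \eta$ forces $\lambda = \Theta(\eta)$. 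Combining both constraints and sharpening via the exact Taylor remainder produces the stated $T = \lceil((e+1)/(e\eta))\ln((e+1)(d-1)^2/(\varepsilon\eta))\rceil$, with the $(e+1)/e$ factors originating in the remainder analysis of the binary-entropy expansion.

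The main obstacle is the simultaneous control of fidelity and heat: shrinking $\lambda$ makes each step nearly reversible (small $D_t$) but forces many steps to reach the target fidelity, while enlarging $\lambda$ cheapens $T$ but violates the $\eta$-budget. The delicate point is (i) tracking the $d$-dependence through the excited-sector degeneracy, which enters both the thermal weight $p_t$ and the design of the energy-conserving partial swap, and (ii) obtaining the quadratic bound $D_t = O(\lambda^2 q_{t-1})$ rather than a naive linear one --- this quadratic cancellation, which comes from the fact that $\rho_{E_t}'$ and $\tau_{E_t}[\beta]$ differ only at order $\lambda$ on a sector of weight $O(q_{t-1})$, is precisely what yields the $1/\eta$ (rather than $1/\eta^2$) prefactor of $T$ and is the single calculation I would be most careful with.
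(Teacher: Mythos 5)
Your protocol is genuinely different from the paper's, and as written it does not establish the theorem. For contrast: the paper fixes the explicit full-rank target $\rho_S'=(1-\varepsilon)\ket{0}\bra{0}+\frac{\varepsilon}{d-1}(\mathds{1}-\ket{0}\bra{0})$, takes the straight-line path $\rho[u]=\rho_S+u(\rho_S'-\rho_S)$ with equally spaced points $u_t=t/T$, gives the $t$th environment block the Hamiltonian $H_E^{(t)}=-\beta^{-1}\ln\rho[u_t]$ (so its thermal state is exactly $\rho[u_t]$), and performs full swaps. Then $\beta\mathcal{Q}_E=\sum_t(-\dot S[u_t])/T$ is a Riemann sum of a function that is monotone because $S[u]$ is concave along the line; sandwiching it between integrals gives $\beta\mathcal{Q}_E-\Delta S\leq|S[0]-S[1/T]|-\dot S[1]/T$, which is controlled by the entropy continuity bound and the explicit estimate $-\dot S[1]\leq\ln((d-1)/\varepsilon)$. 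The constant $(e+1)/e$ comes from inverting $\ln(cT)/T\leq\eta$ via $\ln x/x\leq 1/e$, not from a Taylor remainder of a binary-entropy expansion, so your claim to recover it from your per-step analysis is not substantiated.

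The concrete gaps in your route are the following. (i) The estimate $D_t=O(\lambda^2 q_{t-1})$ hides a factor $1/(1-q_{t-1})$: one finds $D(q\Vert(1-\lambda)q)\approx\lambda^2 q/(2(1-q))$, which degrades to $\Theta(\lambda)$ per step when $q$ is near $1$ (e.g.\ $D(1\Vert 1-\lambda)=-\ln(1-\lambda)$). For a worst-case $\rho_S$ with $q_0$ close to $1$, the first $\sim 1/\lambda$ steps contribute $\Theta(\lambda\log(1/\lambda))$ in total, so your heat budget forces $\lambda=\Theta(\eta/\log(1/\eta))$ and yields $T=\Theta(\eta^{-1}\log(1/\eta)\log(d^2/\varepsilon))$ --- a multiplicative rather than additive logarithm, strictly weaker than the stated $T$. (ii) For $d>2$, an environment with a uniform $(d-1)$-fold degenerate excited sector does not match a $\rho_S$ whose excited-sector conditional state is non-uniform; the first swap then incurs $D_1\gtrsim q_0\ln(d-1)$, which can exceed $\eta$ by an unbounded amount. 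The paper's state-adapted choice $H_E^{(t)}=-\beta^{-1}\ln\rho[u_t]$ avoids this. (iii) Your additivity inequality points the wrong way: for a product reference state one has $D(E'\Vert E)\geq\sum_t D(\rho'_{E_t}\Vert\tau_{E_t})$, not $\leq$; the correct route is to apply the Reeb--Wolf identity step by step and note that a full swap leaves $S$ and $E_t$ uncorrelated, so $I(E_t':S_t')=0$ and $\beta\mathcal{Q}_E-\Delta S=\sum_t D_t$ exactly. Items (ii) and (iii) are repairable; item (i) means the fixed-ratio schedule must be replaced by something closer to the paper's linear interpolation to recover the claimed $T$.
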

\begin{proof}
    The proof can be found in Appendix~\ref{appendix:proof_cooling}.
\end{proof}

For our analysis of the framework in Fig.~\ref{fig:generic_algo}, this result on the finite erasure is now applied qubit-wise, i.e., $d=2$ in Theorem~\ref{thm:ThmErasureBound}, to the computer's state after the agent has obtained output from the computer.
In particular, for each qubit of the computer, we run the erasure protocol in Theorem~\ref{thm:ThmErasureBound} in total $W$ times in parallel to erase all the $W$ qubits, using $W$ sets of $T$ auxiliary qubits of the environment $E$ for an appropriate choice of $H_E$ as shown in~\eqref{eq:H_E} of Appendix~\ref{appendix:proof_cooling}, where each of the $W$ sets is used for erasing one of the $W$ qubits in the computer.
In this case, as described in Appendix~\ref{appendix:proof_cooling}, the erasure protocol is composed of $T$ steps (labeled $t\in\{1,\ldots, T\}$) of swap gates, where the $t$th swap gate is applied between the qubit to be erased in the computer and the $t$th auxiliary qubit in the set of $T$ auxiliary qubits of the environment to erase this qubit of the computer.
Note that applying the qubit-wise erasure to each of the $W$ qubits in total $W$ times may be more costly as a whole than erasing the state of all $W$ qubits in the computer simultaneously~\cite{Buffoni2023,Rolandi2023a} but provides a simpler upper bound of the work cost; remarkably, it suffices to use this potentially suboptimal Landauer-erasure protocol to prove the exponential quantum advantage in Sec.~\ref{sec:example}.
On the other hand, for the analysis of the lower bound of the work cost in Sec.~\ref{sec:classical_lowerbound}, we will analyze the efficient erasure protocol applied to all qubits simultaneously at the end of the computation to obtain a general lower bound.
We also remark that since the analysis here deals with the qubit-wise erasure, we indeed do not have to apply the erasure collectively at the end of the computation but can also initialize each qubit in the middle of the computation as long as the initialized qubit is no longer used for the rest of the computation.
Later in our analysis of step~\ref{item:qstep_3}, we will make a minor modification to the protocol in Fig.~\ref{fig:generic_algo}, so that we may initialize each qubit in the middle for quantum error correction rather than erasing all the qubits at the end.
The following analysis of the upper bound of the work cost for the qubit-wise erasure is still applicable with this modification.

For the erasure protocol in Theorem~\ref{thm:ThmErasureBound} applied to our framework, the constant $\varepsilon>0$ is to be determined by the threshold for FTQC, and $\eta>0$ by the amount of thermal fluctuation of the environment.
In particular, FTQC can be performed if every qubit is reinitialized to a fixed pure state $\ket{0}$ within constant infidelity $\varepsilon$ satisfying the threshold theorem, i.e.,~\eqref{eq:fidelity_condition}.
Moreover, we demand that the environment changes little in this process of reinitializing each qubit in the computer compared to a constant amount of inherent thermal fluctuation of the environment.
To state this littleness quantitatively, we require for each single-qubit erasure that
\begin{equation}
\label{eq:Q_condition}
    \beta \mathcal Q_E^{(w)} -\Delta S^{(w)} \leq \eta,
\end{equation}
where $\mathcal{Q}_E^{(w)}$ and $\Delta S^{(w)}$ are $\mathcal{Q}_E$ and $\Delta S$ in Theorem~\ref{thm:ThmErasureBound} applied to erase the $w$th qubit in the computer ($1\leq w\leq W$).
Due to~\eqref{eq:D_Q_bound}, this inequality implies that, for each qubit of the computer that is erased separately, the quantum relative entropy between the states of the corresponding environment $E'^{(w)}$ after the erasure protocol and $E^{(w)}$ before (in the thermal state $\tau_E^{(w)}[\beta]$ as shown in Fig.~\ref{fig:generic_algo}) is bounded by
\begin{equation}
\label{eq:environment_condition}
    D(E'^{(w)}\|E^{(w)}) \leq \eta
\end{equation}
for some small constant $\eta>0$.
The choice of $\eta$ is independent of that of $\varepsilon$.

For each qubit, both requirements for $\varepsilon$ and $\eta$, i.e.,~\eqref{eq:fidelity_condition} and~\eqref{eq:Q_condition}, can be satisfied simultaneously if we choose the number of steps $T$ large enough, but how large $T$ should be was unclear in the existing asymptotic results~\cite{Reeb2014,Proesmans2020,Taranto2021,VanVu2022,Rolandi2022}.
By contrast, Theorem~\ref{thm:ThmErasureBound} shows in more detail that if we choose (for fixed dimension $d=2$)
\begin{equation}
\label{eq:T}
    T=O\qty(\frac{1}{\eta} \log\frac{1}{\varepsilon\eta}),
\end{equation}
we can satisfy the above requirements at the same time for any $\varepsilon\in(0,1/2]$ and $\eta\in(0,1]$.
Using a gate set that can implement a swap gate within a constant depth, the depth of the erasure part $\mathcal{E}$ in Fig.~\ref{fig:generic_algo} is bounded by
\begin{equation}
\label{eq:D_E}
    D_\mathcal{E}\leq O(T).
\end{equation}
Also, due to~\eqref{eq:Q_condition}, if we sum together the heat dissipation for all $W$ qubits, the protocol achieves
\begin{equation}
\label{eq:Q_E}
    \mathcal Q_E\leq \sum_{w=1}^{W}\frac{\Delta S^{(w)}+\eta}{\beta}\leq \frac{W}{\beta}\left(\ln 2 + \eta\right),
\end{equation}
where the last inequality follows from the upper-bound estimate $\Delta S^{(w)}\leq \ln2$ that holds for each qubit.

Apart from the energetic cost $\mathcal Q_E$ in the environment, 
for each of the $W$ qubits in the computer, the energetic cost per qubit per single-depth part of the circuit for the erasure is bounded by the maximum energetic cost $E_\mathrm{qubit}$ in~\eqref{eq:E_qubit}.
Thus, similar to step~\ref{item:qstep_1}, the energy change in the computer's memory due to the erasure is upper bounded by
\begin{equation}
\label{eq:delta_E_E}
    \Delta E^{(\mathcal{E})}\leq E_\mathrm{qubit}\times W D_\mathcal{E}.
\end{equation}

As for the control cost,
we can divide the control cost of erasure into a product of the maximum control cost per single-depth part of the circuit for the erasure and the depth $D_\mathcal{E}$ of the circuit in~\eqref{eq:D_E}, in the same way as step~\ref{item:qstep_1}.
To bound the maximum control cost per single-depth part of the erasure,
as in~\eqref{eq:tau_E_beta},
let $H_E$ denote the overall Hamiltonian $H_E$ of the environment used for the single-qubit erasure protocol, which decomposes into the sum over the single-qubit Hamiltonians $H_E^{(t)}$ of the $t$th auxiliary qubit of the environment used at the $t$th time step in the protocol ($1\leq t\leq T$).
The control cost per single-depth part of the circuit depends on the energy scale of the qubit to be controlled, i.e., $E_\mathrm{qubit}$ for a qubit in the computer and $\|H_E\|_\infty$ for the environment to reinitialize the qubit by the Landauer erasure (note that $H_E\geq 0$ holds for the Landauer-erasure protocol here, as shown in~\eqref{eq:H_E} of Appendix~\ref{appendix:proof_cooling}). 
To avoid fixing the specifics of the control system, we here recall the conventional scaling of the control cost $E_\text{ctrl.}$ of the qubit in the computer in~\eqref{eq:E_ctrl}; following this conventional scaling,
we here assume that the control cost in the environment per single-depth part of the circuit depends linearly on the energy scale of the environment Hamiltonian, i.e.,
\begin{equation}
    E_\text{ctrl.}^{(E)} = O(\|H_E\|_\infty),
\end{equation}
as shown for the conventional models in, e.g., Ref.~\cite{Chiribella2021} (see also Appendix~\ref{appendix:model_details}).
We calculate in Proposition~\ref{prop:MaxErasureEnergy} (Appendix~\ref{appendix:proof_cooling}) how the largest energy eigenvalue $\|H_E\|_\infty$ of the environment Hamiltonian scales as a function of $T\rightarrow \infty$ and $\varepsilon\rightarrow 0$, for fixed inverse temperature $\beta$.
This calculation allows us to estimate the per-depth control cost of all the $T$ auxiliary qubits in the environment for erasing each of the $W$ qubits in the computer (for fixed dimension $d=2$) by
\begin{equation}
\label{eq:environment_control_cost}
    E_\text{ctrl.}^{(E)}=O\qty(\frac{T}{\beta}\log\frac{1}{\varepsilon}).
\end{equation}
Consequently, the control cost of erasure for the $W$ qubits in the computer and the corresponding $W$ sets of the auxiliary qubits in the environments for the Laudauer erasure protocol is bounded by
\begin{align}
\label{eq:E_ctrl_E}
    E_\text{ctrl.}^{(\mathcal{E})} \leq \left(E_\text{ctrl.}+E_\text{ctrl.}^{(E)}\right)\times WD_\mathcal{E}.
\end{align}
If $E_\mathrm{ctrl.}$ is independent of $N$, then $E_\mathrm{ctrl.}$ may be negligible compared to the control cost $E_\text{ctrl.}^{(E)}$ of the environment, in the limit of $\eta,\varepsilon\rightarrow 0$ due to~\eqref{eq:T} and~\eqref{eq:environment_control_cost};
however, in the general case where the control cost $E_\mathrm{ctrl.}$ of the computer may grow in the problem size $N$ and is not negligible, we keep both $E_\mathrm{ctrl.}$ and $E_\text{ctrl.}^{(E)}$ separately in~\eqref{eq:E_ctrl_E}.

\begin{widetext}
As a whole, by combining~\eqref{eq:T} -- \eqref{eq:E_ctrl_E} together to bound the work cost for the erasure, we obtain
\begin{align}
\label{eq:W_gate_E}
    \mathcal W_\mathrm{gate}^{(\mathcal{E})}&= \Delta E^{(\mathcal{E})}+E_\text{ctrl.}^{(\mathcal{E})}+\mathcal Q_E \\
    &\leq\qty(E_\mathrm{qubit}+E_\mathrm{ctrl.}+E_\mathrm{ctrl.}^{(E)})\times WD_\mathcal{E}+\frac{W}{\beta}\qty(\ln 2+\eta)\\
        &= O\left(\left(E_\mathrm{qubit}+ E_\mathrm{ctrl.} + \frac{1}{\beta \eta}\right)\times\frac{W}{\eta}\mathrm{polylog}\frac{1}{\varepsilon\eta}\right)\\
        &=O\left(\left(\mathcal{W}_\mathrm{max,el.} + \frac{1}{\beta \eta}\right)\times\frac{W}{\eta}\mathrm{polylog}\frac{1}{\varepsilon\eta}\right),
\end{align}
where $\mathcal{W}_\mathrm{max,el.}$ is given by~\eqref{eq:max_elementary_cost}.
Also, as shown in~\eqref{eq:proof_energy_conservation_2}, 
the relevant contribution to the energy consumption $\mathcal{W}$ is the control cost, which is bounded, due to~\eqref{eq:T},~\eqref{eq:Q_E},~\eqref{eq:environment_control_cost}, and~\eqref{eq:E_ctrl_E}, by
\begin{align}
    E_\text{ctrl.}^{(\mathcal{E})}+\mathcal Q_E
        &\leq\qty(E_\mathrm{ctrl.}+E_\mathrm{ctrl.}^{(E)})\times WD_\mathcal{E}+\frac{W}{\beta}\qty(\ln 2+\eta)\\
\label{eq:W_gate_E_bound}
        &= O\left(\left(E_\mathrm{ctrl.} + \frac{1}{\beta \eta}\right)\times\frac{W}{\eta}\mathrm{polylog}\frac{1}{\varepsilon\eta}\right).
\end{align}

To summarize the argument on the achievable upper bound of the energy consumption of the quantum computation in Fig.~\ref{fig:generic_algo}, in the case without gate errors, the energy consumption can be given by Theorem~\ref{thm:ThmQuantumIdeal} shown in the following.
The derivation is based on the expressions in~\eqref{eq:work_cost_total} and~\eqref{eq:W_gate_E_bound};
in particular, we use~\eqref{eq:proof_energy_conservation_2} to make all the energetic costs (i.e., the terms involving $E_\mathrm{qubit}$ in~\eqref{eq:E_qubit}) cancel out, and thus, only the control and initialization costs remain.

\begin{restatable}[An upper bound of energy consumption in ideal case without gate errors]{theorem}{ThmQuantumIdeal}
\label{thm:ThmQuantumIdeal}    
Given constants $E_\mathrm{ctrl.}$ in~\eqref{eq:E_ctrl} and $\beta$ in~\eqref{eq:beta},
working in an idealized case without gate errors, computation in the framework of Fig.~\ref{fig:generic_algo} can be performed with an energy consumption $\mathcal W$ (Definition~\ref{def:DefEnergyConsumption}) bounded,
as the problem size $N$ goes to infinity, and the infidelities $\varepsilon,\eta>0$ vanish,
by
\begin{align}\label{eq:idealized_quantum_bound}
    \mathcal W\leq \mathcal W^{(\mathrm{Q})} &=
    O\left(E_\mathrm{ctrl.}\times W  D_\mathrm{comp}  +\left(E_\mathrm{ctrl.} + \frac{1}{\beta \eta}\right)\times\frac{W}{\eta}\mathrm{polylog}\qty(\frac{1}{\varepsilon\eta})\right),
\end{align}
where the width $W$ of the circuit in~\eqref{eq:W_bound} and the depth $D_\mathrm{comp}$ given by~\eqref{eq:D_U} are the parameters depending on the problem size $N$.
\end{restatable}
\end{widetext}
\begin{proof} The proof can be found in Appendix~\ref{appendix:detailed_energy_consumption}.
\end{proof}

\paragraph*{Step \ref{item:qstep_3}: Contribution from quantum error correction.}
The discussion so far has not taken into account imperfections in carrying out the unitary operations $U_k$ and swap gates for input and output, which are necessary for realizing quantum computation.
Any unitary gate carried out on an actual physical platform will inherently be imperfect, be it through unavoidable noise~\cite{G}, non-ideal control parameters~\cite{Jiang2022}, or timing errors~\cite{Ball2016,Xuereb2023}.
Quantum error correction is a way to suppress the effect of these errors given that the rate at which these errors occur is below a certain threshold value, a result that is known as the threshold theorem~\cite{10.1145/258533.258579,10.1137/S0097539799359385,548464,Kitaev_1997,doi:10.1098/rspa.1998.0166,10.5555/2011665.2011666,10.1007/11786986_6,PhysRevA.71.012336,doi:10.1063/1.1499754,PhysRevLett.109.180502,gottesman2014faulttolerant,Yamasaki2022}.
The main principle of quantum error correction is to use a quantum error-correcting code of multiple physical qubits to redundantly represent a logical qubit on which the unitary operations are carried out.
In this way, we, in principle, have protocols for FTQC\@.
However, such fault-tolerant protocols to simulate the given original circuit by running the corresponding fault-tolerant circuit may incur space and time overheads, i.e., the overheads in terms of the number of qubits and the circuit depth, respectively, of the fault-tolerant circuit compared to the original circuit.

In the computational model herein presented, the space is quantified through the number of qubits $W$ in~\eqref{eq:W_bound}, and the time through the circuit depth $D$ in~\eqref{eq:D_bound}, as shown in Fig.~\ref{fig:generic_algo}.
With a fault-tolerant protocol using measurements and classical post-processing, one could convert the original circuit in Fig.~\ref{fig:generic_algo} into a fault-tolerant circuit with only polylogarithmic overheads in width and depth; in particular, to simulate a given $W$-width $D$-depth circuit, the fault-tolerant circuit can have~\cite{G}
\begin{align}
    W_\text{FT} &= W \times \mathrm{polylog} (WD), \label{eq:polylog_w_overhead} \\
    D_\text{FT} &= D \times \mathrm{polylog} (WD), \label{eq:polylog_d_overhead}
\end{align}
where $W_\text{FT}$ and $D_\text{FT}$ are the width and depth of the fault-tolerant circuit, respectively.
This choice is not unique but depends on the construction of the fault-tolerant protocol to achieve FTQC\@.
For example, advances have been made to achieve a constant space overhead $W_\text{FT}=W\times O(1)$ and a quasi-polylogarithmic time overhead $D_\text{FT} = D \times \exp(\mathrm{polylog}(\log(WD)))$~\cite{Yamasaki2022};
more recently, Ref.~\cite{tamiya2024polylogtimeconstantspaceoverheadfaulttolerantquantum} has proven that a fault-tolerant protocol can achieve a constant space overhead $W_\text{FT}=W\times O(1)$ and a polylogarithmic time overhead $D_\text{FT} = D \times \mathrm{polylog}(WD)$.

To analyze the upper bound of the achievable quantum cost $\mathcal W^{(\text{Q})}$ of energy consumption with the fault-tolerant protocol in~\eqref{eq:polylog_w_overhead} and \eqref{eq:polylog_d_overhead}, one has to modify the analysis in steps~\ref{item:qstep_1} and~\ref{item:qstep_2}, so as to allow initializing qubits in the middle of the computation rather than at the end and to include costs of measurements and classical post-processing present in the fault-tolerant protocol.
The qubit initialization in the middle of computation is indispensable for performing quantum error correction to achieve FTQC; in the presence of noise, if the initialization were allowed only at the beginning or at the end, the class of problems that noisy quantum computation can solve would be significantly limited~\cite{aharonov1996limitations}.
Regarding the initialization, as discussed in step~\ref{item:qstep_2}, the same upper bound on the work cost of each qubit-wise erasure remains to hold even if we initialize each qubit in the middle rather than at the end.
As for the measurement, an ideal projective measurement would come at a divergent resource cost as explored in Ref.~\cite{Guryanova2020}, but for the quantum error correction, measurements with a finite fidelity above a certain constant threshold are sufficient.
The costs of these finite-fidelity measurements may still be non-negligible, and their exact value is still subject to research~\cite{Deffner2016,Debarba2019,Stevens2022}; however, in principle, this cost can also be bounded by a constant that is independent of the algorithm's other parameters such as the problem size $N$, similar to the cost of each gate in~\eqref{eq:max_elementary_cost}.
The cost of classical post-processing can also be evaluated by writing the classical computation in terms of reversible classical logic circuits and bounding the cost of each classical logic gate by a constant, similar to~\eqref{eq:max_elementary_cost}.
In this way, the increase of the work cost arising from quantum error correction can be taken into account in principle by setting $\varepsilon$ in step~\ref{item:qstep_2} as a constant below the threshold and multiplying polylogarithmic factors to $W$, $D_k$, and $D_\mathcal{E}$ as shown in~\eqref{eq:polylog_w_overhead} and~\eqref{eq:polylog_d_overhead}.

On top of this, to solidify our analysis, we further employ the fact that measurements and classical post-processing are not fundamentally necessary for FTQC\@;
in particular, instead of performing the syndrome measurement and classical post-processing for decoding in FTQC, it is possible to implement the decoding by quantum circuits and perform the correction by the controlled Pauli gates in place of applying Pauli gates conditioned on the output of the classical post-processing~\cite{Aharonov1997,Aharonov2008}.
In the fault-tolerant protocol with measurements, each measured qubit can be reinitialized and reused in the rest of the computation, where a single qubit may be initialized multiple times in implementing the computation.
By contrast, we here allow initializing each qubit only once in the middle of the computation, so as to apply the same upper bound of the work cost as that obtained in step~\ref{item:qstep_2}; that is, the fault-tolerant protocol without measurements here never traces out each of these qubits, and furthermore, instead of reinitializing the same qubit multiple times, another auxiliary qubit is initialized by the finite erasure protocol in the middle of the computation.   
The fault-tolerant circuit with measurements has the depth of $D \times \mathrm{polylog} (WD)$, and thus, each of the $W \times \mathrm{polylog} (WD)$ qubits in this fault-tolerant circuit may have at most $D \times \mathrm{polylog} (WD)$ measurements.
To simulate these measurements and classical post-processing in the above way, it suffices to use at most $D \times \mathrm{polylog} (WD)$ auxiliary qubits for each of the $W \times \mathrm{polylog} (WD)$ qubits in the protocol without the measurements.
Consequently, sacrificing the efficiency in the circuit width $W$ (i.e., introducing a polynomial factor in $D$ to $W$), we implement the original circuit in Fig.~\ref{fig:generic_algo} by a fault-tolerant protocol without measurements, which achieves
\begin{align}
\label{eq:width_FTQC}
    W_\text{FT} &= W D\times\mathrm{polylog}(WD),\\
\label{eq:depth_FTQC}
    D_\text{FT} &= D \times \mathrm{polylog} (WD),
\end{align}
in simulating a $W$-width $D$-depth original circuit.
While whether we should completely avoid the measurements in implementing quantum computation may be of questionable practical relevance, these results show that the measurement and post-processing costs are not a fundamental restriction toward bounding the energy consumption of quantum computing.

As a whole, we conclude this section by summarizing all the costs in one theorem including those of quantum error correction.
Recall that the depth of the circuit in Fig.~\ref{fig:generic_algo} is, according to~\eqref{eq:D},
\begin{align}
    D&=D_\mathrm{comp}+D_\mathcal{E}\\
    &=O\left(D_\mathrm{comp}+\frac{1}{\eta}\mathrm{log}\left(\frac{1}{\varepsilon\eta}\right)\right)\\
\label{eq:D_bound}
    &=\widetilde{O}\left(D_\mathrm{comp} + \frac{1}{\eta}\right),
\end{align}
where the second line follows from~\eqref{eq:T} and~\eqref{eq:D_E}, and with the big-$\widetilde{O}$ notation, we may ignore polylogarithmic factors.
Thus, with $D$ in~\eqref{eq:D_bound}, adding $D$ auxiliary qubits for each of the $W$ qubits in Fig.~\ref{fig:generic_algo}, up to polylogarithmic factors, we can use the fault-tolerant protocol without measurements in~\eqref{eq:width_FTQC} and~\eqref{eq:depth_FTQC} to obtain the following theorem.

\begin{widetext}
\begin{restatable}[An upper bound of energy consumption including the cost of quantum error correction]{theorem}{ThmQuantum}
\label{thm:ThmQuantum}
Given constants $E_\mathrm{ctrl.}$ in~\eqref{eq:E_ctrl} and $\beta$ in~\eqref{eq:beta},
quantum computation in the framework of Fig.~\ref{fig:generic_algo} can be performed in a fault-tolerant way with an energy consumption $\mathcal W^{(\mathrm{Q})}$ (Definition~\ref{def:DefEnergyConsumption}) bounded,
as the problem size $N$ goes to infinity, and the infidelities $\varepsilon,\eta>0$ vanish,
by
\begin{align}
    \mathcal W\leq \mathcal W^{(\mathrm{Q})} &=  \widetilde{O}\qty(E_\mathrm{ctrl.}\times W\left(D_\mathrm{comp}+\frac{1}{\eta}\right)D_\mathrm{comp}+\left(E_\mathrm{ctrl.}+\frac{1}{\beta\eta}\right)\times \frac{W\left(D_\mathrm{comp}+\frac{1}{\eta}\right)}{\eta}),
    \label{eq:quantum_bound_erasure}
\end{align}
where the width $W$ of the circuit in~\eqref{eq:W_bound} and the depth $D_\mathrm{comp}$ given by~\eqref{eq:D_U} are the parameters depending on the problem size $N$, and with $\widetilde{O}$, we may ignore polylogarithmic factors in $W$, $D$, $1/\varepsilon$, and $1/\eta$.
\end{restatable}
\end{widetext}

\subsection{\label{sec:classical_lowerbound}Lower bound on energy consumption of classical computation}
In this section, we derive a general lower bound
\begin{equation}
\mathcal W\geq \mathcal W^{(\text{C})}    
\end{equation}
of energy consumption required for classical computation in the framework of Fig.~\ref{fig:generic_algo}.
The techniques for finding an achievable upper bound in Sec.~\ref{sec:quantum_upperbound} are not applicable to finding a fundamental lower bound of energy consumption.
After all, in the limit of energy-efficient implementation, a single elementary gate may be performed at as close to zero control cost as possible, as we have pointed out in Sec.~\ref{sec:termodynamic_model_of_computation}.
Rather, the essential technique here for deriving the nonzero fundamental lower bound is to use the non-invertibility of the classical oracle.
Due to the non-invertibility of the classical oracle, to reset the computer's memory at the end of the computation, we need to perform the Landauer erasure of information obtained from the queries to the oracle, which inevitably consumes energy.

Using energy conservation in~\eqref{eq:energy_conservation} together with the heat dissipation in the Landauer erasure, we obtain a general lower bound on the energy consumption of classical computation in the framework of Fig.~\ref{fig:generic_algo}, as shown by Theorem~\ref{thm:ThmClassical} below.
In this theorem, we identify an entropic quantity $\frac{1}{\beta}S(C|a)$ as a lower bound of energy consumption, which falls into the category of the initialization cost that is derived in an information-theoretic way and is thus implementation-independent.\footnote{
In our analysis of the lower bound of energy consumption of classical computation, we do not only analyze the scaling but provide the lower bound $\frac{1}{\beta}S(C|a)$ explicitly, i.e., including the constant factors, which is important for explicitly providing the goals of the experimental demonstrations shown by Table~\ref{table:1} of Sec.~\ref{sec:experiment}.
The agent's knowledge of the $1$-bit information $a$ of the result of the computation may be asymptotically negligible, but our analysis explicitly takes $a$ into account to provide the lower bound including the constant factors.
}
Even more notably, since we work on the query-complexity setting with the classical oracle, we can employ the techniques for deriving the lower bound of query complexity to prove rigorously that the required energy consumption for any classical algorithm to solve the problem can also be strictly gapped away from zero,  even without complexity-theoretical assumptions on the hardness of the problems, such as the hardness of integer factoring for classical computation.
Yet we remark that, as we will see in Corollary~\ref{cor:CorClassical} of Sec.~\ref{sec:classical_lowerbound}, the query complexity and the entropic lower bound of the energy consumption do not necessarily coincide in general up to constant factors. 

\begin{restatable}[A lower bound of energy consumption]{theorem}{ThmClassical}
\label{thm:ThmClassical}
If the classical computer in Fig.~\ref{fig:generic_algo} is allowed to interact with an environment at inverse temperature $\beta>0$ for memory reinitialization,
any computation in Fig.~\ref{fig:generic_algo} requires an average energy consumption of at least
\begin{align}
    \label{eq:work_cost_classical}
    \mathcal W \geq \mathcal W^{(\mathrm{C})} = \frac{S(C|a)}{\beta} ,
\end{align}
in the limit of closing the thermodynamic cycle $\varepsilon\rightarrow 0$,
where $S(C|a)\coloneqq\sum_{a=0,1}p(a)S(\rho_a^C)$ is the conditional entropy of the classical-quantum state for the computer's state $\rho_a^C$ before the erasure $\mathcal{E}$ conditioned on the measurement outcome $a\in\{0,1\}$ of the output register, $p(a)$ is the probability distribution of the measurement outcome, and $S(\rho)\coloneqq-\tr[\rho\ln\rho]$. 
\end{restatable}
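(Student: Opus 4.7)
The plan is to combine the energy-conservation identity derived in Sec.~\ref{sec:energetic_costs} with Landauer's bound applied branch-wise to the reinitialization step $\mathcal{E}$. As a first step, I would invoke Eq.~\eqref{eq:proof_energy_conservation}, which in the cycle-closing limit $\varepsilon\to 0$ gives $\mathcal{W}=\mathcal{Q}=\mathcal{Q}_E+\mathcal{Q}_{E'}$. Since every control-cost summand in~\eqref{eq:Q_E_prime} is non-negative by~\eqref{eq:control_cost_positive}, we have $\mathcal{Q}_{E'}\geq 0$, and therefore $\mathcal{W}\geq \mathcal{Q}_E$. This reduces the task to lower-bounding the heat flowing into the thermal environment $E$ during the Landauer erasure, entirely sidestepping the problematic fact that the energetic and control costs of individual gates can be arbitrarily small or even negative.

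The second step is to apply the Reeb--Wolf refinement of Landauer's bound, Eq.~\eqref{eq:reeb_wolf}, to $\mathcal{E}$. Because the output register is measured outside the computer (Fig.~\ref{fig:generic_algo}) before the reinitialization, the classical outcome $a\in\{0,1\}$ is available with probability $p(a)$, and on branch $a$ the pre-erasure computer state is $\rho_a^C$. In the most general implementation, the erasure channel may depend on $a$; applying~\eqref{eq:reeb_wolf} branch-wise and averaging gives
\begin{equation*}
\beta\,\mathcal{Q}_E \;\geq\; \sum_{a}p(a)\bigl(S(\rho_a^C)-S(\rho_a^{C\prime})\bigr),
\end{equation*}
where $\rho_a^{C\prime}$ is the post-erasure computer state on branch $a$. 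If instead the erasure is a single channel that does not use $a$, the same inequality follows by concavity of the von Neumann entropy applied to $\rho^C=\sum_a p(a)\rho_a^C$; either way the bound is implementation-independent. Taking $\varepsilon\to 0$, the fidelity constraint~\eqref{eq:fidelity_condition} forces every single-qubit marginal of $\rho_a^{C\prime}$ to converge to $\ket{0}\!\bra{0}$; since purity of every marginal pins the joint state to the product $\ket{0}\!\bra{0}^{\otimes W}$, continuity of entropy (e.g., Fannes--Audenaert) yields $S(\rho_a^{C\prime})\to 0$ for each $a$. Plugging this in gives $\beta\mathcal{W}\geq\beta\mathcal{Q}_E\geq \sum_a p(a)S(\rho_a^C)=S(C|a)$, which is precisely~\eqref{eq:work_cost_classical}.

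The main obstacle I anticipate is in the second step: one must argue rigorously that the branch-wise Landauer bound applies irrespective of whether the erasure protocol is explicitly conditioned on $a$, and that $\mathcal{Q}_E$ as appearing in~\eqref{eq:heat} is exactly the quantity bounded by the Reeb--Wolf identity (with no subtle cross-terms leaking between $\mathcal{Q}_E$ and $\mathcal{Q}_{E'}$). The clean decomposition $\mathcal{Q}=\mathcal{Q}_E+\mathcal{Q}_{E'}$ established in Sec.~\ref{sec:energetic_costs} is what permits this, but one has to double-check that the model does not smuggle heat from the control environment $E'$ into the accounting for $E$. A secondary technical point is the $\varepsilon\to 0$ argument: Eq.~\eqref{eq:fidelity_condition} only constrains single-qubit marginals, so a brief continuity argument is needed to conclude that the joint entropy $S(\rho_a^{C\prime})$ vanishes in this limit. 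Both of these points are expected to be handled in the appendix with standard tools.
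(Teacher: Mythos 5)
Your proposal is correct and follows essentially the same route as the paper's proof: energy conservation $\mathcal{W}=\mathcal{Q}_E+\mathcal{Q}_{E'}$ plus non-negativity of the control dissipation to get $\mathcal{W}\geq\mathcal{Q}_E$, then the Reeb--Wolf form of Landauer's bound applied to the erasure conditioned on the output $a$, and finally $S(C'|a)\to 0$ as $\varepsilon\to 0$. Your explicit remarks on the branch-wise versus unconditioned erasure channel and on passing from single-qubit marginal fidelities to vanishing joint entropy are slightly more careful than the paper's terse treatment of those points, but they do not change the argument.
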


To prove the lower bound in Theorem~\ref{thm:ThmClassical}, we have already introduced most of the essential techniques in formulating the framework of computation in Sec.~\ref{sec:setting}.
What remains for completing the proof is to assemble these techniques, which yields the following concise proof.

\begin{proof}[Proof of Theorem~\ref{thm:ThmClassical}]
We prove Theorem~\ref{thm:ThmClassical} by starting with some general energetic considerations along the lines of the thermodynamic diagram in Fig.~\ref{fig:thermo_diagram}.
As introduced in Definition~\ref{def:DefEnergyConsumption},
$\mathcal{W}$ is the energy consumption of the computer including reinitialization at the end.
Together with the heat dissipation into the environment $\mathcal Q$, we have energy conservation in~\eqref{eq:energy_conservation}, i.e.,
\begin{align}
    \mathcal W =\mathcal Q=\mathcal Q_E +\mathcal Q_{E'}, 
\end{align}
where the second equality is given by~\eqref{eq:Q}.
Recall that the heat dissipation $Q_{E'}$ due to control costs is by definition non-negative as shown in~\eqref{eq:Q_E_prime}, and thus, the initialization cost $\mathcal{Q}_E$ provides a lower bound
\begin{equation}
\label{eq:energy_conservation_inequality}
\mathcal W\geq\mathcal{Q}_E.
\end{equation}
As shown in~\eqref{eq:reeb_wolf}, Landauer's principle~\cite{Landauer1961,Reeb2014,rio2011thermodynamic} guarantees that, for reinitialization of the computer's memory using a heat bath at inverse temperature $\beta>0$, the heat dissipation $\mathcal Q_E$ into the environment is bounded by
\begin{equation}
\beta\mathcal{Q}_E= \Delta S+D(E'||E)+I(E':S')\geq \Delta S,
\end{equation}
where we use $D(E'||E)\geq 0$ and $I(E':S')\geq 0$.
Note that $\Delta S$ is the change in entropy of the state of all the $W$ bits in the computer to be erased; as discussed in Sec.~\ref{sec:quantum_upperbound}, repeating the single-bit Landauer-erasure protocol $W$ times in total (analyzed for deriving the upper bound) may be more costly as a whole than the erasure of all the $W$ bits collectively~\cite{Buffoni2023,Rolandi2023a}.
Thus, the simultaneous erasure of all $W$ bits is analyzed here to derive the general lower bound based on Landauer's principle.

To write down $\Delta S$ more explicitly,
we recall the fact that the agent has access to the output $a\in\{0,1\}$ stored in the output register before performing the erasure $\mathcal{E}$ in Fig.~\ref{fig:generic_algo}, where $a$ may, in general, be correlated with the computer's state $\rho_a^C$ before the erasure conditioned on $a$.
Even though the single-bit information in the output may not affect the asymptotic scaling, it is still necessary to take it into account to ensure that we bound the required energy consumption from below.
Thus, $\Delta S$ on average is written as the conditional entropy
\begin{equation}
    \Delta S\geq S(C|a)-S(C'|a)\to S(C|a)\quad\text{as $\varepsilon\to 0$,}
\end{equation}
where $S(C'|a)$ is the conditional entropy of the classical-quantum state for the computer's state after the erasure $\mathcal{E}$ conditioned on $a$, and we have $S(C'|a)\to 0$ as $\varepsilon\to 0$ since the computer's state after the erasure in~\eqref{eq:fidelity_condition_erasure} is a fixed pure state in the limit.
These bounds yield
\begin{align}\label{eq:work_entropy_ineq}
    \mathcal Q_E \geq \frac{\Delta S}{\beta} \geq \frac{S(C|a)}{\beta},
\end{align}
which, together with~\eqref{eq:energy_conservation_inequality},  implies the energy-consumption inequality
\begin{equation}
    \mathcal W \geq \frac{S(C|a)}{\beta}
\end{equation}
to prove Theorem~\ref{thm:ThmClassical}.
\end{proof}

One may wonder under which circumstances the bound provided by Theorem~\ref{thm:ThmClassical} on the energy consumption is strictly gapped away from zero, i.e., where the non-zero entropy comes from.
The origin lies in the query-complexity setting, where usually, the oracle $O_f$ is drawn randomly from some set of functions $f$ specified by the problem to be solved.
Then, the agent performs computation to learn some property of $f$ via the queries to $O_f$, but it is not necessarily required to specify the function $f$ itself in full detail.
Conditioned on the algorithm's output $a$, we have a probability distribution $p(f|a)$ according to which the function $f$ (i.e., the oracle $O_f$) is chosen.
A nonzero entropy of this probability distribution $p(f|a)$ leaves a nonzero entropy $S(C|a)$ in the computer's memory.
Then, to erase this information obtained from the (randomly chosen) oracle, nonzero energy consumption is inevitably required due to Laudauer's principle.
This analysis indicates that the entropic lower bound $S(C|a)/\beta$ of energy consumption in Theorem~\ref{thm:ThmClassical} is determined not only by the lower bound of query complexity but also by the probability distribution $p(f|a)$ for the oracle.

In Sec.~\ref{sec:example}, we will derive how the entropy arises and where the connection to $p(f|a)$ lies. Then, we employ the bound in Theorem~\ref{thm:ThmClassical} to prove the exponential energy-consumption advantage of quantum computation over classical computation in solving a computational task, in particular, Simon's problem~\cite{Simon1994, Simon1997}.

\section{\label{sec:example}Exponential quantum advantage in energy consumption}
In this section, using the upper and lower bounds on energy consumption shown in Sec.~\ref{sec:results}, we prove that the energy consumption of quantum computation can be exponentially smaller than that of classical computation within our framework.
In Sec.~\ref{sec:simon}, we recall the standard formulation for Simon's problem given in Refs.~\cite{Simon1994,Simon1997}.
In Sec.~\ref{sec:quantum_simon}, we apply Theorems~\ref{thm:ThmQuantumIdeal} and~\ref{thm:ThmQuantum} to show that the energy consumption of quantum computation to solve Simon's problem scales polynomially with the problem size,
and in Sec.~\ref{sec:classical_simon}, we use Theorem~\ref{thm:ThmClassical} to prove that any classical algorithm for solving the same problem requires an energy consumption that scales exponentially in the problem size.

\subsection{\label{sec:simon}Definition of Simon's problem}

Simon's problem~\cite{Simon1994, Simon1997} is an exemplary case where the query complexity in quantum computation is exponentially separated from any classical computation.
The structure of the quantum algorithm to solve Simon's problem is closely related to the ones to solve the non-oracle-based (i.e., non-relativized) problems of the discrete logarithm and the integer factorization~\cite{Shor1997}.
The issue with the latter non-relativized problems is that lower bounds on the complexity of classical algorithms to solve them are notoriously hard to show in general because they mainly boil down to the open question of whether there exists a classical polynomial-time algorithm to solve the integer-factorization problem~\cite{Arora2009,Boudot2022}.
Studying Simon's problem allows for a robust analysis, where lower bounds on the complexity of classical algorithms do not depend on the conjectured difficulty of a computational task.
Our analysis, therefore, will lay the groundwork for studying the other problems on a case-by-case basis.

Simon's problem \cite{Simon1994, Simon1997} may be posed as the computational task to search for an unknown bit string or, in the way we will define it in Problem~\ref{problem:SimonsProblem}, as the task of deciding whether an unknown function $f:\{0,1\}^N\rightarrow\{0,1\}^N$ with $N$-bit input and $N$-bit output is 1-to-1 or 2-to-1 mapping.
The computer solving the problem has access to an oracle $O_f$, to obtain the output of $f$ for a given input $x$.
For the \textit{classical algorithm}, the oracle is defined as a map in~\eqref{eq:oracle_C} and Fig.~\ref{fig:oracle_QvsC}(a), i.e.,
\begin{align}
    \label{eq:oracle_C_simon}
    O_f^{(\mathrm{C})}(x,0) = (x,f(x)),
\end{align}
for all inputs $x\in\{0,1\}^N$ on the first $N$ bits, with the second $N$ bits initialized in $0$.
When it comes to the \textit{quantum oracle}, on the other hand,
the conventional prescription of the oracle is a unitary map that transforms each computational-basis state $\ket{x,y}$ as
\begin{align}
    \label{eq:oracle_Q_simon}
    O_f^{(\mathrm{Q})}(\ket{x,y}) = \ket{x,y\oplus f(x)},
\end{align}
as shown in~\eqref{eq:oracle_Q} and Fig.~\ref{fig:oracle_QvsC}(b).
For superposition states of $\sum_{x,y}\alpha_{x,y}\ket{x,y}$, the quantum oracle acts linearly on each term of $\ket{x,y}$.

Without further ado, we present the definition of Simon's problem as follows.

\begin{restatable}[Simon's problem \cite{Simon1994, Simon1997}]{problem}{SimonsProblem}
\label{problem:SimonsProblem}
Let $N>0$ be the problem size and $O_f$ the oracle for a function $f:\{0,1\}^N\to\{0,1\}^N$ constructed in the following way: choose an $N$-bit string $s\in\{0,1\}^N\setminus\{0\}$ and a single-bit state $b\in\{0,1\}$ uniformly at random.
If $b=0$, then $f$ is a uniformly randomly chosen 1-to-1 function.
If $b=1$, then $f$ is a uniformly randomly chosen 2-to-1 function satisfying, for all $x\in\{0,1\}^N$,  
\begin{align}
\label{eq:simons_promise}
    f(x)=f(x\oplus s),
\end{align}
where $\oplus$ is the bitwise exclusive OR\@.
The goal is to output $a\in\{0,1\}$ that estimates $b$ correctly, i.e., $a=b$, with a high probability greater than $2/3$ on average, using as few calls to the oracle $O_f$ as possible, where the average success probability is taken over the choices of $s$, $b$, and $f$.
\end{restatable}

In the following, we will apply the techniques developed in Sec.~\ref{sec:results} to show that there exists a quantum algorithm for solving Problem~\ref{problem:SimonsProblem} whose energy consumption is exponentially smaller than that required for any classical algorithm.
In particular, we use Theorem~\ref{thm:ThmQuantum} to show that a quantum computer can solve Simon's problem at an energy consumption $\mathcal{W}\leq\mathcal{W}^{(\text{Q})} = O(\mathrm{poly}(N))$ as the problem size $N$ grows (Corollary~\ref{cor:CorQuantum}), and we apply Theorem~\ref{thm:ThmClassical} to show that all classical algorithms solving the same problem consume the energy of at least
$\mathcal{W}\geq\mathcal{W}^{(\text{C})}=\Omega (2^{N/2} N)$
(Corollary~\ref{cor:CorClassical}).

\subsection{\label{sec:quantum_simon}Quantum upper bound on energy consumption for solving Simon's problem}

In this section, we analyze the energy consumption of a quantum algorithm that solves Simon's problem as defined in Problem~\ref{problem:SimonsProblem} in Sec.~\ref{sec:simon}.
In particular, as a corollary of Theorem~\ref{thm:ThmQuantum}, we show in Corollary~\ref{cor:CorQuantum} by summing up all the contributions of the upper bound of energy consumption in Theorem~\ref{thm:ThmQuantum}, that a fault-tolerant quantum computer can solve Simon's problem with only a polynomial amount of energy consumption in the problem size $N$.
We also remark that the result shows that the polynomial amount of energy consumption in $N$ is achievable for quantum computation solving Simon's problem even if the parameter $E_\mathrm{ctrl.}$ for the control cost in~\eqref{eq:E_ctrl} grows polynomially in $N$.

\begin{restatable}[Upper bound of energy consumption for quantum algorithm to solve Simon's problem]{corollary}{CorQuantum}
\label{cor:CorQuantum}
Let $N>0$ be the problem size and the oracle $O_f$ be chosen according to the probability distribution for Simon's problem in Problem~\ref{problem:SimonsProblem}.
Then, there exists a fault-tolerant quantum algorithm for solving Simon's problem with energy consumption bounded by
\begin{align}
    \label{eq:work_cost_quantum_asymptotic}
    \mathcal{W}\leq \mathcal W^{(\mathrm{Q})} = O\left(E_\mathrm{ctrl.}\times \mathrm{poly}(N)\right).
\end{align}
\end{restatable}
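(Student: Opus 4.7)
The plan is to instantiate Simon's quantum algorithm within the framework of Fig.~\ref{fig:generic_algo} and then plug the resulting parameters into Theorem~\ref{thm:ThmQuantum}. Since Theorem~\ref{thm:ThmQuantum} already accounts for all contributions (gates, swaps, finite Landauer erasure, fault-tolerant overheads), the corollary reduces to verifying that the number of oracle queries $M(N)$, the width $W(N)$, and the per-segment depths $D_k(N)$ are all polynomial in $N$, so that the right-hand side of~\eqref{eq:quantum_bound_erasure} evaluates to $O(\mathrm{poly}(N))$ when $\varepsilon,\eta$ are taken as constants determined by the fault-tolerance threshold and the allowed environmental disturbance.

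First, I would recall Simon's quantum subroutine: initialize a $2N$-qubit state $\ket{0}^{\otimes N}\otimes\ket{0}^{\otimes N}$, apply $H^{\otimes N}$ to the first register (depth $O(1)$), query $O_f$ once, then apply $H^{\otimes N}$ again to the first register. Reading out the first register (in our framework, swapping it into the output register or keeping it for the next stage) yields a random bit string $y$ orthogonal to the hidden $s$ in the $b=1$ case. Repeating this subroutine a linear number $M=O(N)$ of times produces, with high probability, a set of $N-1$ linearly independent vectors $\{y_i\}$ from which $s$ is recovered by Gaussian elimination; deciding $b$ then amounts to checking $f(0)\stackrel{?}{=}f(s)$ via one additional oracle query. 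Overall this gives $W=O(N)$, $M=O(N)$, and between oracle calls each $U_k$ consists of $O(N)$ Hadamards plus at most polynomial-depth classical-reversible post-processing, so $\sum_{k} D_k=O(\mathrm{poly}(N))$.

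Next, I would substitute $W,M,\sum_k D_k=O(\mathrm{poly}(N))$ into the bound~\eqref{eq:quantum_bound_erasure}. Choosing $\varepsilon$ below the FTQC threshold and $\eta$ as a fixed constant, the factor $1/\eta$ and the $\mathrm{polylog}(1/(\varepsilon\eta))$ terms contribute only constants (up to the $\widetilde{O}$ notation), while $E_{\mathrm{qubit}}$ and $1/(\beta\eta)$ are constants independent of $N$. Each factor in the product is polynomial in $N$, and their product remains polynomial, yielding $\mathcal{W}\leq\mathcal{W}^{(\mathrm{Q})}=O(\mathrm{poly}(N))$. The correctness of the decision, i.e., $\Pr[a=b]\geq 2/3$, follows from standard analysis of Simon's algorithm after $O(N)$ samples, and is preserved by FTQC provided each logical gate's effective error is below the fault-tolerance threshold.

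The main point requiring care will not be any single calculation but rather bookkeeping: ensuring that the representation of Simon's algorithm respects the non-invertible oracle convention of our framework (in particular, each oracle call is made on an ancilla register in state $\ket{x,0}$, not $\ket{x,y}$ with $y\neq 0$), and that Gaussian elimination is expressed as a reversible classical computation whose garbage is either uncomputed before the final erasure or simply absorbed into the single $\mathcal{E}$ step at the end of the cycle. A minor technical point is checking that the polylogarithmic overheads from the fault-tolerant protocol in~\eqref{eq:width_FTQC}--\eqref{eq:depth_FTQC} do not spoil the polynomial scaling, which is immediate since $\mathrm{polylog}(WD)=\mathrm{polylog}(N)$ under our parameter choices.
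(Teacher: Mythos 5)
Your proposal is correct and follows essentially the same route as the paper: instantiate Simon's algorithm in the framework of Fig.~\ref{fig:generic_algo}, read off $M$, $W$, and $D_k$, and substitute into Theorem~\ref{thm:ThmQuantum} with $\varepsilon,\eta,E_{\mathrm{qubit}},\beta$ treated as constants. The one bookkeeping discrepancy is your claim $W=O(N)$ via reuse of a single $2N$-qubit register across the $M$ subroutine repetitions: the framework only permits mid-circuit initialization of a qubit that is \emph{not used again} afterward, and each oracle call must receive a fresh $\ket{x,0}$ input, so the paper instead allocates $M$ disjoint $2N$-qubit registers, giving $W=2NM=O(N^2)$ (and explicit bounds $O(N^5)$ ideal, $\widetilde{O}(N^8)$ fault-tolerant); since both choices are polynomial, the conclusion $\mathcal{W}\leq O(\mathrm{poly}(N))$ is unaffected.
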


\begin{proof}
In deriving the general upper bound of energy consumption of quantum computation in Theorem~\ref{thm:ThmQuantum} of Sec.~\ref{sec:quantum_upperbound},  we have accounted for all the contributions to the work cost of performing a general quantum computation according to the framework in Fig.~\ref{fig:generic_algo}.
For the proof here, the remaining point to be done is to translate the algorithm presented in Refs.~\cite{Simon1994, Simon1997} (for introductory explanations, see also Refs.~\cite{lecture_note_cleve,Wolf2019}) to our computational framework, so as to calculate the bound in $W$ and
$D_\mathrm{comp}$ (i.e., $D_k$, and $M$ in the definition~\eqref{eq:D_U} of $D_\mathrm{comp}$) accordingly. 

In the quantum algorithm for solving Simon's problem in Refs.~\cite{Simon1994, Simon1997},
a quantum subroutine \textit{Fourier-twice} is repeatedly used.
This quantum subroutine acts on $2N$ qubits initialized as $\ket{0}^{\otimes 2N}$ with the operations in the following order: (i) $N$ parallel Hadamard gates act on the first $N$ bits, i.e., $V=H^{\otimes N}\otimes\mathds{1}^{\otimes N}$ with circuit depth $O(1)$, then, (ii) the oracle $O_f$ acts on all $2N$ qubits simultaneously and finally, (iii) $N$ parallel Hadamard gates $V$ are again applied on the first $N$ qubits.
In the overall quantum algorithm, this subroutine is carried out $O(N)$ times; that is, the oracle $O_f$ (called once in (ii) per performing this subroutine) is called in total $M$ times with
\begin{equation}
\label{eq:M_upper}
    M=O(N).
\end{equation}
In fact, it has been shown that this is also the best a quantum algorithm can do; i.e., any quantum algorithm solving Simon's problem has to query $O_f$ at least $O(N)$ times~\cite{Koiran2005}.

In the algorithm as structured in Fig.~\ref{fig:generic_algo}, $M$ sets of $2N$ qubits are initialized at first, i.e.,
\begin{equation}
\label{eq:W_upper}
    W=2N\times M=O(N^2), 
\end{equation}
and the subroutine is sequentially applied to the $m$th set of the $2N$ qubits for $1\leq m\leq M$.
In this case, $U_1$ is of the form
\begin{equation}
    U_1=V\otimes \mathds{1}^{\otimes 2N(M-1)},
\end{equation}
and all remaining $U_k$ for $2\leq k\leq M$ are of the form 
\begin{align}
    U_k=\mathds{1}^{\otimes 2N(k-2)}\otimes V \otimes V \otimes \mathds{1}^{\otimes 2N (M-k)},
\end{align}
where $\mathds{1}^{\otimes 0}=1$.
The last unitary $U_{M+1}$ equals the operation $\mathds{1}^{\otimes 2N(M-1)}\otimes V$ followed by a quantum implementation $U_\mathrm{post-processing}$ of the classical post-processing of the resulting state obtained from $M$ repetitions of the subroutine, i.e.,
\begin{equation}
    U_{M+1}=U_\mathrm{post-processing}(\mathds{1}^{\otimes 2N(M-1)}\otimes V),
\end{equation}
where $U_{M+1}$ can be performed within time complexity $O(N^3)$ \cite{Wolf2019}.
Therefore, using, e.g., the Clifford+$T$ universal gate set, each of these unitaries can be implemented efficiently with circuit depths, for all $1\leq k \leq M$,
\begin{equation}
\label{eq:D_k_upper}
    D_k=O(1),
\end{equation}
and
\begin{equation}
\label{eq:D_M_upper}
    D_{M+1}=O(N^3).
\end{equation}
Note that the classical post-processing in $U_{M+1}$ can be represented by a reversible classical logic circuit, e.g., written in terms of the classical Toffoli gate, and by replacing each classical Toffoli gate with the quantum Toffoli gate (and further rewrite each quantum Toffoli gate in Clifford and $T$ gates if we use the Clifford+$T$ universal gate set), we obtain the quantum circuit of $U_\mathrm{post-processing}$ of depth $O(N^3)$.

Therefore, in the ideal case without gate error, inserting the explicit numbers in~\eqref{eq:M_upper},~\eqref{eq:W_upper},~\eqref{eq:D_k_upper}, and~\eqref{eq:D_M_upper} into the upper bound of the energy consumption in Theorem~\ref{thm:ThmQuantumIdeal}, we find that the energy consumption in the quantum algorithm in Fig.~\ref{fig:generic_algo} to solve Simon's problem is bounded by
\begin{equation}
    \mathcal{W}\leq\mathcal{W}^{(\mathrm{Q})}=O\left(E_\mathrm{ctrl.} N^5\right),
\end{equation}
where the dominant part is the one for implementing $U_{M+1}$ by a circuit of volume $WD_{M+1}=O(N^5)$, and $\beta$, $\eta$, and $\varepsilon$ are set as constants.
Even in the case where we implement this quantum algorithm by a fault-tolerant protocol without measurement in Theorem~\ref{thm:ThmQuantum}, the energy consumption in implementing the quantum algorithm to solve Simon's problem in a fault-tolerant way is bounded by
\begin{equation}
    \mathcal{W}\leq\mathcal{W}^{(\mathrm{Q})}=\widetilde{O}\left( E_\mathrm{ctrl.}N^8\right),
\end{equation}
where the dominant part is the one for implementing $U_{M+1}$ by a circuit of volume $\widetilde{O}(WD_{M+1}D_{M+1})=\widetilde{O}(N^8)$, and $\widetilde{O}$ may ignore polylogarithmic factors.
In any of these cases,  we obtain
\begin{align}
    \mathcal{W}\leq\mathcal{W}^{(\text{Q})} = O( E_\mathrm{ctrl.}\times \mathrm{poly}(N)),
\end{align}
which scales polynomially in $N$ so long as the parameter $E_\mathrm{ctrl.}$ for the control cost also scales polynomially in $N$.
\end{proof}

\subsection{\label{sec:classical_simon}Classical lower bound on energy consumption for solving Simon's problem}

In this section, we use the lower bound of energy consumption in Theorem~\ref{thm:ThmClassical} to prove that any classical computation that can solve Simon's problem (Problem~\ref{problem:SimonsProblem}) should require an exponentially large energy consumption in the problem size $N$, no matter how energy-efficiently the algorithm is implemented, as shown in Corollary~\ref{cor:CorClassical} below.

The essential idea for deriving this lower bound of the energy consumption is to establish a connection between the conventional argument for a lower bound of query complexity of classical algorithms for solving Simon's problem~\cite{Simon1994, Simon1997} and the entropic lower bound of the energy consumption that we derived in Theorem~\ref{thm:ThmClassical}.
In Problem~\ref{problem:SimonsProblem}, $s\in\{0,1\}^N\setminus\{0\}$, $b\in\{0,1\}$, and $f$ are chosen uniformly at random, and the classical algorithm makes $M$ queries to the oracle $O_f$ using $M$ different classical inputs $x_1,\ldots,x_M\in\{0,1\}^N$ to estimate $b$, i.e., to output $a\in\{0,1\}$ satisfying $a=b$, with a high probability greater than $2/3$.
We write these $M$ inputs to $O_f$ collectively as
\begin{equation}
\label{eq:vec_x}
    \vec x\coloneqq(x_1,\ldots,x_M).
\end{equation}
Note that a random guess of $b$ would only achieve the average success probability of $1/2$, but Problem~\ref{problem:SimonsProblem} requires a strictly better average success probability $2/3=1/2+\Delta$ for $\Delta=1/6$.

We here provide a necessary condition on $M$ for achieving the average success probability $2/3$ in Simon's problem as required in Problem~\ref{problem:SimonsProblem}.
In Refs.~\cite{Simon1994, Simon1997}, it has been shown that, in order to have the average success probability $1/2+2^{-N/2}$, it is necessary to make $M$ queries with at least $M\geq2^{N/4}$;
later in Refs.~\cite{lecture_note_cleve,Wolf2019}, this lower bound has been improved by proving that a larger lower bound $M=\Omega(2^{N/2})$ should be necessary for achieving the average success probability $1/2+\Omega(1)$, but the constant factors of the lower bounds were not explicitly presented.
Refining these analyses, we here show that, for any problem size $N$ and required average success probability $1/2+\Delta$, we have a lower bound of the query complexity of Simon's problem with an explicit constant factor given by $\left\lceil\sqrt{\frac{2\Delta}{1+\Delta}}\times2^{N/2}\right\rceil$, where $\Delta\in(0,1/2)$ is an arbitrary constant in this parameter region and is suitably tunable for our analysis.
In particular, we prove the following proposition, where the proof is given in Appendix~\ref{appendix:proofs_classical_bound}.

\begin{restatable}[Refined analysis of lower bound of query complexity of  Simon's problem for classical algorithms]{proposition}{SimonBound}
\label{prp:simon}
    For any parameter $\Delta\in(0,1/2)$,
    if we have
    \begin{equation}
    \label{eq:M_N_delta}
        M<M_{N,\Delta}\coloneqq\left\lceil\sqrt{\frac{2\Delta}{1+\Delta}}\times2^{N/2}\right\rceil,
    \end{equation}
    the average success probability of any probabilistic classical algorithm with at most $M$ queries for Simon's problem in Problem~\ref{problem:SimonsProblem} is upper bounded by
    \begin{equation}
    \label{eq:7_12}
        \frac{1}{2}+\Delta.
    \end{equation}
\end{restatable}

Due to this proposition, the required number $M$ of the oracle queries for the classical algorithms to solve Simon's problem is lower bounded, for any $N$ and any choice of $\Delta\in(0,1/2)$ in Proposition~\ref{prp:simon}, by
\begin{equation}
\label{eq:M_geq_2N4}
    M\geq M_{N,\Delta}=\Omega(2^{N/2})
\end{equation}
with a high probability, where we quantify this probability more precisely in Lemma~\ref{lemma:QueryNumber} of Appendix~\ref{appendix:proofs_classical_bound}.

In the following corollary on the lower bound of energy consumption, we argue that the above query-complexity bound can be used for evaluating a dominant contribution to the entropy of the state of the computer.
Thus, as the corollary of Theorem~\ref{thm:ThmClassical}, we have the lower bound of the energy consumption of classical computation.
Our lower bound of the energy consumption for any classical algorithm for solving Simon's problem scales $\Omega(2^{N/2}N)$ (with explicit constant factors derived in~\eqref{eq:lower_bound} and~\eqref{eq:tight_lower_bound} in the proof), which is different from that of the query complexity $\Omega(2^{N/2})$ for the same problem (i.e., those in Refs.~\cite{lecture_note_cleve,Wolf2019} and Proposition~\ref{prp:simon}).
Indeed, as we will show later in Proposition~\ref{prp:simon_achievability}, the upper bound of query complexity of Simon's problem for classical computation is also $O(2^{N/2})$ and thus is strictly smaller than the lower bound of energy consumption $\Omega(2^{N/2}N)$.
As we have argued in Sec.~\ref{sec:classical_lowerbound}, this difference arises because the query complexity only counts the number of queries, but the lower bound of the energy consumption of classical computation is determined by the amount of information (i.e., entropy) obtained from the queries, depending also on the probability distribution of the oracle.\footnote{ 
If one divides the entropy by the number of queries, one may be able to define the entropy per query, but it still remains unclear whether one could directly evaluate a lower bound of the entropy per query without analyzing the overall entropy for all the queries.
In the case of Simon's problem, the entropy per query scales linearly with the problem size $N$; however, this scaling depends on the probability distributions of the oracles used in the computational problems and thus may differ in general.
Therefore, we need to evaluate a lower bound of the overall entropy, not just the query-complexity lower bound, to derive a lower bound of energy consumption.}
In this regard, the difference in the scaling of the energy consumption and the query complexity of classical computation solving Simon's problem quantitatively illustrates the fact that the energy consumption and the number of queries are different computational resources despite the resemblance in the proof techniques.

\begin{restatable}[Lower bound of energy consumption for all classical algorithms to solve Simon's problem]{corollary}{CorClassical}
\label{cor:CorClassical}
Let $N>0$ be the problem size of Simon's problem in Problem~\ref{problem:SimonsProblem}.
Any probabilistic and adaptive classical algorithm to solve Simon's problem has an energy consumption of at least
\begin{align}
    \label{eq:simons_work_cost_classical}
    \mathcal W\geq \mathcal{W}^{(\mathrm{C})} = \Omega\qty(\frac{2^{N/2}N}{\beta}).
\end{align}
\end{restatable}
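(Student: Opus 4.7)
The plan is to combine the query-complexity lower bound (Proposition~\ref{prp:simon}) with the thermodynamic lower bound (Theorem~\ref{thm:ThmClassical}), by identifying the conditional entropy $S(C|a)$ appearing in Theorem~\ref{thm:ThmClassical} with the entropy of the oracle-query transcript that a classical algorithm must retain in its memory just before the erasure step $\mathcal{E}$ of Fig.~\ref{fig:generic_algo}.

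First, I would apply Proposition~\ref{prp:simon} with $\Delta=1/6$, matching the $2/3$ success-probability requirement of Problem~\ref{problem:SimonsProblem}. This forces any qualifying classical algorithm to make $M\geq M_{N,1/6}=\Omega(2^{N/2})$ oracle queries on a $1-o(1)$ fraction of oracle instances drawn from the Simon distribution. Writing $\vec{x}=(x_1,\ldots,x_M)$ for the adaptively chosen query inputs, I would then argue that the memory state $C$ just before $\mathcal{E}$ still contains the transcript $(x_i,f(x_i))_{i=1}^{M}$, possibly in a reversibly encoded form: any classical algorithm may be rewritten as a logically reversible circuit, but the only non-invertible primitive available prior to $\mathcal{E}$ is the oracle itself, which by construction cannot be uncomputed, so the information generated by the queries cannot be removed for free.

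Second, I would lower bound $S(C|a)$ by the entropy of the transcript conditioned on $a$. For the $1$-to-$1$ case ($a=0$), with $f$ uniformly random and $\vec{x}$ fixed by the algorithm, the outputs form a uniformly random injection on $\{x_1,\ldots,x_M\}$, and the chain rule yields
\begin{align}
    H(f(x_1),\ldots,f(x_M)\mid a=0)=\sum_{i=0}^{M-1}\ln(2^{N}-i)\geq M(N-1)\ln 2,
\end{align}
using $M\ll 2^{N}$; the $2$-to-$1$ case contributes entropy of the same order. Combined with $M=\Omega(2^{N/2})$ this produces $S(C|a)=\Omega(N\,2^{N/2})$, and Theorem~\ref{thm:ThmClassical} then delivers the qualitative bound $\mathcal{W}\geq S(C|a)/\beta=e^{\Omega(N)}$. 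To reach the sharper $e^{\Omega(N\log N)}$ announced in the corollary, I would additionally fold in the permutational entropy $\log(M!)=\Theta(N\,2^{N/2})$ carried by the identity and ordering of the adaptively chosen inputs $\vec{x}$ and track the precise constants of the refined query-complexity analysis underlying Proposition~\ref{prp:simon}.

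The principal obstacle lies in the identification of the transcript entropy with $S(C|a)$ in the second step: a clever adaptive algorithm might in principle compress its transcript below $\Omega(MN)$ bits by exploiting the structure of $f$, in which case the pre-$\mathcal{E}$ memory entropy could be smaller than the raw transcript entropy. I plan to defuse this by averaging over the oracle distribution of Problem~\ref{problem:SimonsProblem}: ex ante the transcript is incompressible beyond the single-bit $1$-to-$1$ versus $2$-to-$1$ distinction already captured by $a$, so conditioning on $a$ leaves the transcript near-uniform on its allowed support and preserves the entropy estimate. Correlations between the algorithm's internal randomness and the transcript are absorbed by passing to the joint average over the oracle and the randomness.
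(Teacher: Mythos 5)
Your overall strategy is the paper's: force $M\geq M_{N,\Delta}=\Omega(2^{N/2})$ queries via the query-complexity lower bound, lower-bound $S(C|a)$ by the entropy of the retained oracle transcript (for the $1$-to-$1$ branch the outputs form a uniformly random injection, giving $\sum_{i=0}^{M-1}\ln(2^N-i)=\ln\frac{2^N!}{(2^N-M)!}$), and feed the result into Theorem~\ref{thm:ThmClassical}. Two steps would fail as written, however. First, setting $\Delta=1/6$ makes the probabilistic version of the query bound vacuous: for an adaptive algorithm that stops after a random number of queries, the relevant statement (Lemma~\ref{lemma:QueryNumber}) gives $\sum_{M\geq M_{N,\Delta}}p(M)\geq\frac{1-6\Delta}{3-6\Delta}$, which equals $0$ at $\Delta=1/6$. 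You must take $\Delta$ strictly inside $(0,1/6)$ (the paper optimizes at $\Delta=2-\sqrt{15}/2$), and even then you obtain only a constant fraction of the probability mass, not the $1-o(1)$ fraction you assert; a constant suffices, but the claim as stated is not what the argument yields. Second, your final ``upgrade'' is a non-step: the transcript entropy you already computed is $\Omega(N\,2^{N/2})$, the permutational term $\ln(M!)=\Theta(N\,2^{N/2})$ is of the same order, and the paper's explicit final bound~\eqref{eq:tight_lower_bound} is precisely the $\Omega(N\,2^{N/2}/\beta)$ quantity you had before the upgrade --- nothing further is needed to match~\eqref{eq:simons_work_cost_classical} as the paper writes it.

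On the obstacle you correctly flag (an adaptive algorithm compressing its transcript), the paper's resolution is Proposition~\ref{prop:ClassicalEntropy}: the joint distribution of the transcript $(\vec x,\vec y)$ factors as $q(\vec x\,|\,\vec y)\,p(\vec y)$, its entropy is at least that of the marginal on the outputs $\vec y$ alone, and for a uniformly random $1$-to-$1$ function that marginal is exactly uniform on ordered sequences of $M$ distinct values \emph{regardless} of the adaptive input strategy. This is the precise form of your ``averaging over the oracle distribution'' remark, and it is where the actual work lies; as a proposal your defusal is a plausibility argument, not a proof. One further slip: you condition on the output $a=0$ to select the $1$-to-$1$ branch, but $a$ is the algorithm's (possibly erroneous) guess, not the true label $b$. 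The clean route, which the paper takes, is to bound $S(C|a)\geq S(\rho^C)-\ln 2$ (conditioning on one output bit costs at most $\ln 2$) and then split over the true $b$ by concavity, keeping only the $b=0$ term at the price of a factor $1/2$.
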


\begin{proof}
As formulated in Sec.~\ref{sec:setting}, the crucial assumption for our analysis is that the knowledge on the input-output relations for $f$ obtained from the oracle queries in the classical computation remains in the computer, up to possible reversible transformations $U_k$, until erased with the thermodynamical work cost.
As for the choice of the input sequence $\vec x$ in~\eqref{eq:vec_x}, the classical algorithm can be probabilistic and adaptive in general.
For our analysis to cover these general cases,
we consider the classical algorithms where the classical input $x_{k+1}$ to the oracle $O_f$ is sampled probabilistically and can depend on the previous inputs $x_1,\dots,x_k$ and the oracle outputs $f(x_1),\dots,f(x_k)$.
Moreover, after each query to the oracle, the algorithm may probabilistically decide to stop or to continue further querying the oracle.
In particular, let $q_1(x_1)$ denote the probability distribution on the choice of $x_1$, and for each $k>1$,  conditioned on the previous oracle calls with input $x_\ell$ and output function value $f(x_\ell)$ for $1\leq \ell <k$,
we write the conditional probability distribution on the next choice of $x_k$ as $q_{k}(x_k | \{(x_\ell,f(x_\ell)): 1\leq\ell< k\})$. 
The algorithm uses $M$ different inputs, and thus,
if $x_k$ has already appeared previously in $x_1,\ldots,x_{k-1}$, i.e., $x_k \in\{x_\ell : 1\leq \ell < k\}$,
then we have $q_k\big(x_k | \{(x_\ell,f(x_\ell)) : 1\leq\ell< k\}\big) = 0$.
For a given sequence $((x_1,f(x_1),\dots,(x_k,f(x_k))$ of $k$ queries, the algorithm stops with probability
\begin{align}\label{eq:q_stop_def}
    q(\mathrm{stop}|\{(x_\ell,f(x_\ell)) : 1\leq \ell \leq k\}),
\end{align}
or continues for the $(k+1)$th query with probability
\begin{align}\label{eq:q_continue_def}
    q(&\mathrm{continue}|\{(x_\ell,f(x_\ell)) : 1\leq \ell \leq k\})\nonumber \\
     &\coloneqq 1-q(\mathrm{stop}|\{(x_\ell,f(x_\ell)) : 1\leq \ell \leq k\}).
\end{align}
Since we require that $x_1,\ldots,x_k\in\{0,1\}^N$ should be $k$ different inputs among $2^N$ elements, it holds by construction that
\begin{equation}
\label{eq:q_continue_0}
    q(\mathrm{continue}|\{(x_\ell,f(x_\ell)):1\leq\ell\leq 2^N\})=0,
\end{equation}
which is the requirement that the algorithm must eventually terminate with at most $2^N$ queries.
The overall strategy of the classical algorithm is determined by the strategy for choosing $\vec x$, i.e., $q_1,q_2,\ldots$, and that for stopping, i.e., $q(\mathrm{stop}| \dots)$ and  $q(\mathrm{continue}| \dots)$.
Given this general formulation of strategies in choosing the input sequence $\vec x$ to the oracle, we can write the computer's memory state before outputting $a$ in Fig.~\ref{fig:generic_algo}, conditioned on having queried the function oracle exactly $M$ times, up to reversible transformation, as
\begin{widetext}
    \begin{align}
     \rho_{f,M}=\frac{1}{p_f(M)}\sum_{\vec x}&\bigg(q\left(\mathrm{stop}|\{(x_\ell,f(x_\ell))\}_{1\leq\ell\leq M}\right)q_M(x_{M}|\{(x_\ell,f(x_\ell))\}_{1\leq\ell\leq M-1}) \nonumber \\
     &\qquad \prod_{k=1}^{M-1}q(\mathrm{continue}|\{(x_\ell,f(x_\ell))\}_{1\leq\ell\leq k})q_k(x_{k}|\{(x_\ell,f(x_\ell))\}_{1\leq\ell\leq k-1})\bigg)\rho_{f,\vec x}.
    \label{eq:rho_f_M}
    \end{align}
\end{widetext}
where
$p_f(M)$ is the probability of the strategy yielding exactly $M$ queries defined via normalization $\tr[\rho_{f,M}]=1$,
we may write $q_1(x_1)$ as $q_{k}(x_k | \ldots)$ with $k=1$ for simplicity of notation,
\begin{equation}
\label{eq:rho_f_x}
    \rho_{f,\vec x}\coloneqq\bigotimes_{k=1}^{M}\ket{x_k,f(x_k)}\bra{x_k,f(x_k)} 
\end{equation}
is a $(2N\times M)$-bit classical state, i.e., a diagonal density operator, representing the $M$ different input-output relations, and we here write only the relevant $(2N\times M)$-bit part of the state depending on $f$ and $\vec x$ out of the $W$ bits in the computer.

In general, for fixed $s$, $b$, and $f$ in Problem~\ref{problem:SimonsProblem}, the required number of queries $M$ for the algorithm to succeed in estimating $b$ may change probabilistically, depending on the strategy of the algorithm, i.e., $q_1,q_2,\ldots$, $q(\mathrm{stop}| \dots)$, and $q(\mathrm{continue}| \dots)$.
We write the average state before the output, in expectation taken over all possible number of queries $M\geq 1$, as
\begin{align}
\label{eq:rho_f}
    \rho_f=\sum_{M\geq 1}p_f(M)\rho_{f,M}.
\end{align}
For the agent before seeing any output from the computer, this state is further averaged over the uniform probability distributions $p(s)$, $p(b)$, $p(f|s,b)$ as described in Problem~\ref{problem:SimonsProblem}, i.e., up to reversible transformation,
\begin{align}
\label{eq:p}
    \rho^C&=\sum_{s,b}\sum_{f}p(s)p(b)p(f|s,b)\rho_{f}\\
    &=\sum_{M\geq 1}\sum_{s,b}\sum_{f}p(s)p(b)p(f|s,b)p_f(M)\rho_{f,M}\\
    &=\sum_{M\geq 1}p(M)\rho_{M},
\label{eq:rho_a}
\end{align}
where we write in the last line
\begin{align}
    \label{eq:p_M}
    p(M)&\coloneqq\sum_{s,b}\sum_{f}p(s)p(b)p(f|s,b)p_f(M),\\
    \label{eq:rho_M}
    \rho_{M}&\coloneqq\frac{1}{p(M)}\sum_{s,b}\sum_{f}p(s)p(b)p(f|s,b)p_f(M)\rho_{f,M}.
\end{align}
In this notation, $p(M)$ represents the probability of the algorithm stopping exactly after $M$ queries on average over the choices of $s$, $b$, and $f$, and $\rho_M$ is the state of the computer's memory before the output, conditioned on the algorithm stopping exactly after $M$ queries, on average over $s$, $b$, and $f$.

To output $a$ in Fig.~\ref{fig:generic_algo}, the agent uses the output register prepared in $\ket{0}$ and performs a swap gate to pull out one of the $W$ bits from the computer to the output register, where the state of the bit in the computer on which the swap gate acts becomes $\ket{0}$ after these output operations.
After the swap gate, the $(W+1)$-bit state of the computer and the output register is equivalent to $\rho^C\otimes\ket{0}\bra{0}$ up to reversible transformation. 
A measurement of the output register is then performed to obtain the single-bit output $a\in\{0,1\}$ with probability $p_\mathrm{out}(a)$.
For the agent who knows the output $a$ of the computation, 
before performing the erasure $\mathcal{E}$ conditioned on $a$, 
the computer's state is in a $W$-qubit mixed state $\rho_a^C$, which is a classical state represented by a diagonal density operator obtained by measuring one of the $W+1$ qubits in the basis $\{\ket{a}:a=0,1\}$.
As a result, for any agent erasing the computer's state, the probabilistic state of the computer's memory has the conditional entropy $S(C|a)$ of the state
\begin{align}
    \rho^{Ca}\coloneqq\sum_{a}p_\mathrm{out}(a)\ket{a}\bra{a}\otimes\rho_a^C.
\end{align}
Since the entropy is invariant under the reversible transformation, we have
\begin{align}
     S\qty(\rho^C\otimes\ket{0}\bra{0})=S\qty(\rho^{Ca}).
\end{align}
Thus, by definition of the conditional entropy, it holds that
\begin{align}
    S(C|a)&=S\qty(\rho^C\otimes\ket{0}\bra{0})-S(a)\\
    &\geq S\qty(\rho^C)-\ln 2\\
\label{eq:SCa_geq_SC}
    &\geq \sum_{M\geq 1}p(M)S(\rho_M)-\ln 2,
\end{align}
where the $\ln2$ comes from $S(a)\leq\ln 2$ for a single-bit state, and the last line is due to~\eqref{eq:rho_a} and the concavity of the entropy~\cite{Nielsen2010}.

With the bound~\eqref{eq:SCa_geq_SC}, we evaluate the entropic lower bound in Theorem~\ref{thm:ThmClassical},
by taking into account the assumption that the algorithm should stop successfully.
We are proving the entropic lower bound for \textit{all} classical algorithms with success probability greater than $2/3$,
and importantly, our analysis takes into account all possible classical algorithms that try to minimize energy consumption at their best.
For example, if the input sequence $\vec x=(x_1,\ldots,x_M)$ obtained from $M$ queries includes some pair $(x_k,x_m)$ satisfying $f(x_k)=f(x_m)$, then the algorithm can know that $f$ is a $2$-to-$1$ function with $s=x_k\oplus x_m$.
But even in such a case, the algorithm does not have to stop immediately but is still allowed to continue queries and computation, e.g., for trying to uncompute some part of the computer's memory so as to reduce the overall energy consumption at best.
On the other hand, as long as the overall success probability is greater than $2/3$, the algorithm may stop even before making the $M_{N,\Delta}$ queries with some small probabilities to save the energy consumption (we allow such stops even if the input sequence $\vec x$ with $M < M_{N,\Delta}$ does not contain any pair satisfying $f(x_k)=f(x_m)$).
Our proof covers all these cases, to find the fundamental lower bound on energy consumption for all possible classical algorithms to solve Simon's problem~\ref{problem:SimonsProblem}.

To arrive at a lower bound on the entropy of the computer's state, we have to constrain the possible probability distributions $p(M)$ and the entropy $S(\rho_M)$ in~\eqref{eq:SCa_geq_SC}.
To bound $p(M)$ in~\eqref{eq:SCa_geq_SC}, we use the requirement in Problem~\ref{problem:SimonsProblem} that the overall success probability of the algorithm should be greater than $2/3$.
Indeed, as we show in Lemma~\ref{lemma:QueryNumber} in Appendix~\ref{appendix:proofs_classical_bound}, if the success probability is greater than or equal to $1/2 + \Delta$ for any parameter $\Delta\in(0,1/2)$,
the probability of making at least $M_{N,\Delta}$ queries is bounded from Proposition~\ref{prp:simon} by
\begin{align}
\label{eq:sum_pM_bound}
    \sum_{M\geq M_{N,\Delta}} p(M) \geq \frac{1-6\Delta}{3-6\Delta}.
\end{align}
The lower bound on the right-hand side depends on the choice of $\Delta$.
In Problem~\ref{problem:SimonsProblem}, we demand that the algorithm for solving this problem must have a success probability greater than $2/3$.
Any such algorithm will also have a success probability greater than $1/2+\Delta$ if the choice of $\Delta$ is in the range
\begin{equation}
\label{eq:Delta_range}
    \Delta\in\left(0,\frac{1}{6}\right),
\end{equation}
i.e.,
\begin{equation}
    \frac{1}{2}+\Delta<\frac{2}{3}.
\end{equation}
With $\Delta$ in this range, the lower bound on the right-hand side of~\eqref{eq:sum_pM_bound} applies to all the algorithms solving Problem~\ref{problem:SimonsProblem} with a success probability greater than $2/3$; also, in this range, the right-hand side of~\eqref{eq:sum_pM_bound} indeed becomes positive.

As for the bound of $S(\rho_M)$ in~\eqref{eq:SCa_geq_SC},
in Proposition~\ref{prop:ClassicalEntropy} of Appendix~\ref{appendix:proofs_classical_bound}, we show explicitly that, for any $M$, the entropy of $\rho_M$ is bounded by
\begin{align}
\label{eq:S_rho_M_bound}
    S(\rho_M) \geq \frac{1}{2}\ln \frac{2^N!}{(2^N-M)!},
\end{align}
regardless of whether $\rho_M$ consists of good or bad input sequences $\vec x$;
in particular, since the right-hand side of~\eqref{eq:S_rho_M_bound} is monotonously increasing in $M=1,2,\ldots$, we can write, for all
$M\geq M_{N,\Delta}$,
\begin{align}
    S(\rho_M) \geq  \frac{1}{2}\ln\frac{2^N!}{\left(2^N-M_{N,\Delta}\right)!}.
\end{align}

\begin{widetext}
Inserting~\eqref{eq:sum_pM_bound} and~\eqref{eq:S_rho_M_bound} into~\eqref{eq:SCa_geq_SC}, we obtain
\begin{align}
\sum_{M\geq 1}p(M) S(\rho_M)
&= \sum_{M<M_{N,\Delta}}p(M)S(\rho_M)+\sum_{M\geq M_{N,\Delta}}p(M)S(\rho_M)\\
    &\geq 0 + \frac{1}{2}\ln\frac{2^N!}{\left(2^N-M_{N,\Delta}\right)!}\times\sum_{M\geq M_{N,\Delta}}p(M)\\
    &\geq \frac{1}{2}\ln\frac{2^N!}{\left(2^N-M_{N,\Delta}\right)!}\times\frac{1-6\Delta}{3-6\Delta}\\
    &= \frac{1-6\Delta}{6-12\Delta}\ln\frac{2^N!}{\left(2^N-M_{N,\Delta}\right)!}.
    \label{eq:SrhoC_bound_factorial}
\end{align}
Furthermore, as we show in Lemma~\ref{prp:fraction_evaluation} of Appendix~\ref{appendix:proofs_classical_bound}, for $M_{N,\Delta}$ in~\eqref{eq:M_N_delta}, a variant of Stirling's approximation~\cite{10.2307/2308012} yields
\begin{equation}
\label{eq:tighter_stirling_03_main}
    \ln\frac{2^N!}{\left(2^N-M_{N,\Delta}\right)!}\geq\sqrt{\frac{2\Delta}{1+\Delta}} 2^{N/2}N \ln 2 - 3.
\end{equation}
Combining the results from~\eqref{eq:SCa_geq_SC}, \eqref{eq:SrhoC_bound_factorial} and~\eqref{eq:tighter_stirling_03_main}, for all classical algorithms that solve Simon's problem in our framework, we have
\begin{align}
     S(C|a)\geq\frac{1-6\Delta}{6-12\Delta}\left(\sqrt{\frac{2\Delta}{1+\Delta}} 2^{N/2}N \ln 2 - 3\right)-\ln 2=\Omega(2^{N/2}N),
\end{align}
which is gapped away from zero independently of how energy-efficiently the classical algorithm is implemented, as long as the algorithm solves Simon's problem.

Therefore, for any choice of the constant parameter $\Delta\in(0,1/6)$ in~\eqref{eq:Delta_range}, applying Theorem~\ref{thm:ThmClassical} yields a lower bound of the energy consumption
\begin{align}
\label{eq:lower_bound}
     \mathcal{W}\geq\frac{1}{\beta}S(C|a)\geq\frac{1}{\beta}\times\left(\frac{1-6\Delta}{6-12\Delta}\left(\sqrt{\frac{2\Delta}{1+\Delta}} 2^{N/2}N \ln 2 - 3\right)-\ln 2\right)=\Omega\qty(\frac{2^{N/2}N}{\beta}).
\end{align}
In particular, the maximum of the constant factor
\begin{equation}
    \frac{1-6\Delta}{6-12\Delta}\sqrt{\frac{2\Delta}{1+\Delta}}
\end{equation}
of the leading order in~\eqref{eq:lower_bound} is achieved with the choice of
\begin{equation}
    \Delta=2-\frac{\sqrt{15}}{2}=0.0635\cdots\in\left(0,\frac{1}{6}\right),
\end{equation}
and thus, we can explicitly obtain the tightest lower bound among the choices of $\Delta\in(0,1/6)$ as
\begin{equation}
\label{eq:tight_lower_bound}
    \mathcal{W}\geq\frac{1}{\beta}\times\left(\frac{3\sqrt{15}-11}{6\sqrt{15}-18}\left(\sqrt{\frac{8-2\sqrt{15}}{6-\sqrt{15}}}\times 2^{N/2}N \ln 2 - 3\right)-\ln 2\right).
\end{equation}
\end{widetext}
These lower bounds conclude the proof.
\end{proof}

\section{\label{sec:experiment}Proposal on experimental demonstration}
In this section, we propose an experimental scenario for demonstrating the energy-consumption advantage of quantum computation over classical computation in solving Simon's problem analyzed in Sec.~\ref{sec:example}.
In the following, we begin by presenting a schematic setup for measuring the energy consumption of quantum computation in experiments.
Then, we clarify the explicit polynomial-time way of instantiating the oracle for Simon's problem to implement this setup.
Finally, we also show approaches to estimate the energy consumption for classical computation to solve the same problem, which yields criteria to be compared with the results of the quantum experiments.

\begin{figure}
    \centering
    \includegraphics[width=3.4in]{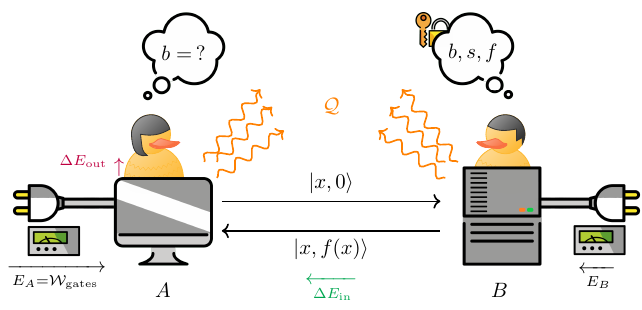}
    \caption{
    A schematic depiction of the setup to experimentally demonstrate the exponential energy-consumption advantage of quantum computation over classical computation based on solving Simon's problem.
    The setup is comprised of two parties: $B$ who keeps a bit $b\in\{0,1\}$ and an $N$-bit string $s\in\{0,1\}^{N}\setminus\{0\}$ as $B$'s secret, and $A$ who tries to estimate $B$'s secret $b$, as in Simon's problem in Problem~\ref{problem:SimonsProblem}.
    With $b$ and $s$ chosen uniformly at random, $B$ initially constructs a function $f$ in such a way that $B$ can implement the oracle $O_f$ feasibly within a polynomial time for quantum algorithms to solve Simon's problem, yet $A$ cannot distinguish $f$ from the uniformly random choice as required for Simon's problem in Problem~\ref{problem:SimonsProblem}, using a cryptographic primitive as described in the main text.
    To solve Simon's problem, $A$ runs the quantum algorithm for solving Simon's problem on $A$'s quantum computer, and $B$ acts as the oracle $O_f$.
    Throughout the computation, $B$'s internal implementation of $O_f$ is never directly revealed to $A$; to estimate $B$'s secret $b$, $A$ can query $B$ by sending a superposition of states $\ket{x,0}$ to $B$, and $B$ internally performs $O_f$ and sends back the corresponding superposition of $\ket{x,f(x)}$ to $A$.
    These queries provide the only information $A$ can access for guessing $b$, as in Simon's problem.
    What we care about in this experiment is the energy consumption.
    By measuring $A$'s energy consumption $E_A$ and $B$'s energy consumption $E_B$ in solving Simon's problem, we obtain an achievable energy consumption for the implemented quantum computation from their sum $E_A+E_B$, as shown in~\eqref{eq:measured_upper_bound_energy_consumption}.
    In the main text, we also clarify how to estimate the energy consumption for classical computation to solve the same problem, using extrapolation or the fundamental lower bound (Table~\ref{table:1}).
    The experimental demonstration of the energy-consumption advantage of quantum computation will be successful if $E_A+E_B$ is smaller than the estimate of energy consumption for classical computation.
    }
    \label{fig:exp_sketch}
\end{figure}

\paragraph*{Setup for measuring the energy consumption of quantum computation in experiments.}

We propose a schematic setup depicted in Fig.~\ref{fig:exp_sketch} to capture our framework in Fig.~\ref{fig:generic_algo} and measure the energy consumption of quantum computation therein.
A challenge in the experimental realization of the framework in Fig.~\ref{fig:generic_algo} is the implementation of the oracle as required in Sec.~\ref{sec:computational_setting}.
To address this challenge, the setup in Fig.~\ref{fig:exp_sketch} consists of two parties $A$ and $B$ with their quantum computers, in a setup similar to that of a client and a server: $A$ is where the main computation is carried out, and $B$ is where the oracle is instantiated without revealing its internal to $A$.
Quantum communication between $A$ and $B$ in Fig.~\ref{fig:exp_sketch} implements the query to the oracle in Fig.~\ref{fig:generic_algo}.
For Simon's problem, i.e., Problem~\ref{problem:SimonsProblem}, $B$ selects a bit $b\in\{0,1\}$ and a bit string $s\in\{0,1\}^N\setminus\{0\}$ of length $N$ uniformly at random.
Based on $b$ and $s$, $B$ constructs an $N$-bit function $f:\{0,1\}^N\rightarrow\{0,1\}^N$ in such a way that $B$ can internally implement the oracle $O_f$ for $f$.
The particular choice of $b$, $s$, and $f$
is kept as $B$'s secret so that $A$ should not be able to distinguish $B$'s construction of $f$ from the uniformly random choice of $f$ as in Problem~\ref{problem:SimonsProblem}.

In this two-party setup, the task is for $A$ to estimate the bit $b$ that $B$ has internally.
To achieve this task, $A$ performs the quantum algorithm for Simon's algorithm, i.e., that analyzed in Sec.~\ref{sec:quantum_simon}.
One way to implement the initialization $\mathcal{E}$ in Fig.~\ref{fig:generic_algo} is to use our finite-step Laudauer-erasure protocol for Theorem~\ref{thm:ThmErasureBound}, but one can also use other initialization techniques available in the experiments.
Without the finite-step Laudauer-erasure protocol, the upper bound of the achievable initialization and control costs may not be theoretically guaranteed, but the experiment will still be successful as long as the energy consumption of the implementation of the quantum algorithm is below an estimate of that of the classical algorithms.
Each query to the oracle in the quantum algorithm, implemented by a swap gate in Figs.~\ref{fig:oracle_QvsC}(b) and~\ref{fig:generic_algo}, is replaced with that implemented by (quantum) communication between $A$ and $B$, as shown in Fig.~\ref{fig:exp_sketch}.
Throughout the experiment, $B$ never directly reveals $b$ to $A$, but $B$ only receives the input state for the query from $A$ and returns the corresponding output state of the oracle $O_f$ to $A$ by using $B$'s internal implementation of $O_f$.
In other words, in implementing this experiment in a lab, imagine that the party $A$ has some program to run its computation, and also the party $B$ has some program to compute the function $f$ of the oracle; then, in the lab, we require that $A$'s program should not directly read $b$ in $B$'s internal program but should estimate $b$ through the queries.
Thus, in the setup of Fig.~\ref{fig:exp_sketch}, $A$ can estimate $B$'s secret $b$ only from the input-output relations of $O_f$ obtained from the queries implemented by communication, in the same way as estimating $b$ from those obtained from the oracle queries in Fig.~\ref{fig:generic_algo}.

To measure the achievable energy consumption of quantum computation in this setup, i.e., $\mathcal{W}_\mathrm{gates}+\Delta E^{(\mathrm{in})}-\Delta E^{(\mathrm{out})}$ in Definition~\ref{def:DefEnergyConsumption}, it suffices to measure upper bounds of $\mathcal{W}_\mathrm{gates}-\Delta E^{(\mathrm{out})}$ and $\Delta E^{(\mathrm{in})}$, so as to sum them up.
For this purpose, we keep track of the amount of energy supply that is used for the implementation of $A$'s computation, and that for the implementation on $B$'s side.
For example, these amounts can be obtained by reading the difference of the electricity meters before and after conducting the computation, which shows the electricity consumption.
Also, on the output system in the experiment, we set the state $0$ as a low-energy state and $1$ as a high-energy state, so that it should hold that
\begin{equation}
    \Delta E^{(\mathrm{out})}\geq 0
\end{equation}
in the experiment.
With this setup, the energy consumption $E_A$ on $A$'s side is, by construction,
\begin{equation}
    E_A=\mathcal{W}_\mathrm{gates}\geq\mathcal{W}_\mathrm{gates}-\Delta E^{(\mathrm{out})}.
\end{equation}
Similarly, the energy consumption $E_B$ on $B$'s side is the sum of the energetic cost $\Delta E^{(\mathrm{in})}$ and the control cost of implementing the oracle $O_f$.
As explained in Sec.~\ref{sec:computational_setting}, our theoretical analysis has ignored the control cost of implementing the oracle by convention, but the control cost is nonnegative by definition.
Therefore, we have
\begin{equation}
    E_B\geq\Delta E^{(\mathrm{in})}.
\end{equation}
Note that the implementation of communication between $A$ and $B$ may also incur a non-negative control cost, but we can include this control cost in $E_A$ or $E_B$ (e.g., by using the energy supply from $A$ and $B$ as the power source for the communication devices).
As a whole, the sum of the measured energy consumption
\begin{equation}
\label{eq:measured_upper_bound_energy_consumption}
    E_A+E_B
\end{equation}
serves as an upper bound of the energy consumption $\mathcal{W}_\mathrm{gates}+\Delta E^{(\mathrm{in})}-\Delta E^{(\mathrm{out})}$.

The experimental demonstration of the advantage of quantum computation will be successful if the measured upper bound of the energy consumption $E_A+E_B$ in~\eqref{eq:measured_upper_bound_energy_consumption} is smaller than an estimate of the energy consumption for classical computation.
Based on our results in Corollaries~\ref{cor:CorQuantum} and~\ref{cor:CorClassical}, a separation of $\mathcal W^{\text(Q)}$ and $\mathcal W^{\text(C)}$ will emerge.
If the experiment achieves
\begin{equation}
\label{eq:E_A_E_B_condition}
    E_A+E_B=O(\mathrm{poly}(N)),
\end{equation}
then the advantage can be demonstrated for large $N$ due to the exponentially large energy consumption for the classical computation to solve the same problem.
In the following, we will propose an explicit scheme according to which $B$ can construct $O_f$ as required in our framework yet still within a polynomial energy consumption to achieve~\eqref{eq:E_A_E_B_condition}, followed by also clarifying how to explicitly evaluate the lower bound of the energy consumption for the classical computation.

\paragraph*{Construction of oracle for Simon's problem.}
To realize~\eqref{eq:E_A_E_B_condition}, it is necessary that the control cost of $B$'s implementation of $O_f$ should be bounded by $O(\mathrm{poly}(N))$ so that we should have $E_B=O(\mathrm{poly}(N))$.
Then, it is essential that $B$ can compute $f$ within a polynomial time (i.e., by a polynomial-size circuit) so that the oracle $O_f$ can also be implemented in a polynomial time (and a polynomial control cost).
The formulation of Simon's problem in Problem~\ref{problem:SimonsProblem}, however, allows for random $1$-to-$1$ and $2$-to-$1$ functions $f$ on $N$ bits, which may not be computable within a polynomial time but take an exponential time in general.
Note that in Simon's problem, the choice of $f$ is random, but once $f$ is chosen, the oracle needs to output the same $f(x)$ for fixed input $x$, which is different from producing random outputs conditioned on $x$.
The challenge here arises from the fact that, in general, implementation of randomly chosen Boolean functions almost always requires an exponential-size circuit~\cite{6771698}.
In constructing $f$ in the experiment, we, therefore, have to accept an approximation to the uniformly random choice of $f$ required for Simon's problem.

To overcome the challenge,
we propose a construction of $f$ that can be implemented within a polynomial time by $B$ yet cannot be distinguished from the uniformly random choice of $f$ for $A$.
To instantiate the random functions $f:\{0,1\}^N\to\{0,1\}^N$ in Simon's problem, we use a pseudorandom permutation, a well-established primitive of cryptography~\cite{goldwasser1996lecture}.
For some key bit string $k\in\{0,1\}^L$ of length $L$, a pseudorandom permutation $f_k$ can be computed by a block-cipher algorithm such as Advanced Encryption Standard (AES)~\cite{Dworkin2023,Daemen:2002:DRA}, using a classical circuit of a polynomial size in $N$ and $L$.
For a sufficiently large fixed $L$, the key $k\in\{0,1\}^L$ is selected uniformly at random by $B$ and kept as $B$'s secret in the same way as $b$ and $s$, so that $k$ should be kept unknown to $A$.
A crucial property of the pseudorandom permutation is that whatever polynomial-time algorithm $A$ performs in the experiment, $A$ cannot distinguish $f_k$ from a uniformly randomly drawn $1$-to-$1$ function unless $A$ knows $k$.
Conventionally, this computational hardness assumption is justified for a large key length $L$ because no known polynomial-time algorithm can distinguish $f_k$ from a truly random choice of $1$-to-$1$ functions.
Note that multiple block-cipher algorithms such as AES are believed to be secure against attacks even by quantum computers as well as classical computers as we increase the key length $L$~\cite{Bonnetain2019,Jaques2020,Jang2022},
while the detailed security analysis of the choice of $L$ for the computational hardness assumption is beyond the scope of this paper and is left for future work.

In place of the uniformly random $1$-to-$1$ function $f_\text{$1$-to-$1$}$, we propose to use the one-way permutation $f_k$ as its replacement $\tilde{f}_\text{$1$-to-$1$}$ in the experiment, i.e.,
\begin{equation}
\label{eq:instantiation_one_to_one}
    \tilde{f}_\text{$1$-to-$1$}(x)\coloneqq f_k(x).
\end{equation}
To construct a replacement $\tilde{f}_\text{$2$-to-$1$}$ for the uniformly random $2$-to-$1$ function $f_\text{$2$-to-$1$}$ satisfying $f_\text{$2$-to-$1$}(x)=f_\text{$2$-to-$1$}(x\oplus s)$,
we use the fact that for any $a<b$, the probability of having
\begin{equation}
    f_\text{$1$-to-$1$}(x)=a,\,f_\text{$1$-to-$1$}(x\oplus s)=b
\end{equation}
and that of
\begin{equation}
    f_\text{$1$-to-$1$}(x\oplus s)=a,\,f_\text{$1$-to-$1$}(x)=b
\end{equation}
are exactly the same for the uniformly random choice of $f_\text{$1$-to-$1$}$.
Thus, using a uniformly random choice of $f_\text{$1$-to-$1$}$, we can also construct a uniformly random $2$-to-$1$ function $f_\text{$2$-to-$1$}$ by
\begin{equation}
    f_\text{$2$-to-$1$}(x)\coloneqq f_\text{$1$-to-$1$}(\min\{x,x\oplus s\}),
\end{equation}
where the order between $x$ and $x\oplus s$ is defined as that in terms of the binary integers.
By replacing $f_\text{$1$-to-$1$}$ with $\tilde{f}_\text{$1$-to-$1$}=f_k$ in~\eqref{eq:instantiation_one_to_one}, we propose to use
\begin{equation}
\label{eq:instantiation_two_to_one}
    \tilde{f}_\text{$2$-to-$1$}(x)\coloneqq f_k(\min\{x,x\oplus s\}).
\end{equation}
With the definitions in~\eqref{eq:instantiation_one_to_one} and~\eqref{eq:instantiation_two_to_one}, we have an instantiation of the function $f$ for Simon's problem that is indistinguishable from the uniformly random choice for $A$ under the computational hardness assumption on the block cipher and is implementable within a polynomial time for $B$.
Note that a polynomial-size quantum circuit for implementing the oracle $O_f$ for the quantum computation can be obtained from the reversible classical logic circuit for the instantiation of $f$, e.g., written in terms of the classical Toffoli gate, by replacing each classical Toffoli gate with the quantum Toffoli gate (and further rewrite each quantum Toffoli gate into Clifford and $T$ gates if we use the Clifford+$T$ universal gate set).

Therefore, combining this instantiation of $f$ with the quantum algorithm analyzed in Corollary~\ref{cor:CorQuantum}, we have an explicit construction of the setup to demonstrate the polynomial amount of energy consumption in quantum computation for solving Simon's problem, as described in~\eqref{eq:E_A_E_B_condition}.

\paragraph*{Estimation of energy consumption of classical computation.}

Finally, we clarify how to estimate the energy consumption of classical computation for solving Simon's problem to be compared with the results of the quantum experiments.
With the above implementation, the quantum experiment can demonstrate the energy-consumption advantage of quantum computation if the measured upper bound~\eqref{eq:measured_upper_bound_energy_consumption} of the energy consumption for the quantum computation is smaller than an estimate for the classical computation.  
However, classical algorithms for solving the same problem cannot be realized due to the exponential runtime and the exponential energy consumption, and thus, the energy consumption of the classical computation cannot be measured directly unlike that of quantum computation; after all, the main point of our analysis is that it is fundamentally infeasible for any classical algorithm to solve this problem while the quantum algorithm can.
To overcome this difficulty, we here propose two approaches for the estimation: one is an extrapolation-based approach, which is more suitable for the near-term demonstration in comparison with the existing best-effort classical algorithm, and the other is a fundamental approach based on our fundamental lower bound in Corollary~\ref{cor:CorClassical}, which may be more demanding yet can demonstrate the advantage for any possible classical algorithm.

The extrapolation-based approach works for a particular choice of classical algorithm.
For example, for fixed $N$, we consider a classical algorithm with querying $M$ different inputs $\vec x=(x_1,\ldots,x_M)$ uniformly at random to the oracle for Simon's problem.
The algorithm works as follows:
\begin{itemize}
    \item if at least one pair $(x_k,x_m)$ satisfying $f(x_k)=f(x_m)$ for some $x_k\neq x_m$ is found in the $M$ queries, the algorithm has an evidence that $f$ is a $2$-to-$1$ function and thus outputs $a=1$; 
    \item  otherwise, the algorithm considers $f$ to be $1$-to-$1$ and outputs $a=0$.
\end{itemize}
For fixed $N$, we can conduct this classical algorithm for any $M\in\{1,\ldots,2^N\}$, but the success probability of finding $(x_k,x_m)$ satisyfing $f(x_k)=f(x_m)$ may be low if $M$ is set to be too small.
By contrast, Refs.~\cite{lecture_note_cleve,Wolf2019} have shown that there exists $M=O(2^{N/2})$ such that this algorithm solves Simon's problem from the uniformly random $M$ queries, i.e., achieves $a=b$ with a high probability as required in Problem~\ref{problem:SimonsProblem}.
However, these existing analyses only show the existence of $M=O(2^{N/2})$ without explicitly clarifying the constant factors therein, and thus, it was unknown how large we should set $M$ in this classical algorithm explicitly.
To establish an explicit criterion on the quantum advantage, we here provide the following proposition with the explicit constant factor.

\begin{restatable}[Refined analysis of upper bound of query complexity of  Simon's problem for a classical algorithm]{proposition}{SimonUpperBound}
\label{prp:simon_achievability}
Fix any $N$.
For any $\delta>0$,
if we have
\begin{equation}
\label{eq:M_choice}
    M=\left\lceil\frac{1}{2}\sqrt{\left(8\ln \frac{1}{\delta}\right)2^{N}+1}+\frac{1}{2}\right\rceil=O(2^{N/2}),
\end{equation}
the classical algorithm that makes the uniformly random $M$ (or more) queries for solving Simon's problem in Problem~\ref{problem:SimonsProblem} achieves the following with a high probability greater than $1-\delta$:
\begin{itemize}
    \item if there exists any $s\in\{0,1\}^N\setminus\{0\}$ such that $f$ is a $2$-to-$1$ function satisfying $f(x)=f(x\oplus s)$, then this $M$-query algorithm outputs $a=1$; 
    \item  if $f$ is a $1$-to-$1$ function, this $M$-query algorithm outputs $a=0$.
\end{itemize}
\end{restatable}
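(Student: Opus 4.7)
The plan is to split into the two cases of the problem and argue each separately. First, suppose $f$ is $1$-to-$1$. Then for any two distinct inputs $x_k\neq x_m$ we have $f(x_k)\neq f(x_m)$, so no pair of queries can exhibit a collision and the algorithm deterministically outputs $a=0$. Thus the $1$-to-$1$ case succeeds with probability exactly $1$ regardless of $M$, and all the work is in the $2$-to-$1$ case.

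For the $2$-to-$1$ case with hidden shift $s\in\{0,1\}^N\setminus\{0\}$, the algorithm outputs the incorrect answer $a=0$ only when the $M$ distinct inputs $x_1,\ldots,x_M$ fail to contain any pair $(x_k,x_m)$ with $x_k\oplus x_m=s$. I will bound the probability $P_{\text{no-col}}$ of this bad event. Drawing the $M$ distinct inputs sequentially and uniformly from the remaining elements of $\{0,1\}^N$, I note that after $k-1$ collision-free queries, the set of "bad" next choices $\{x_j\oplus s:1\leq j\leq k-1\}$ consists of $k-1$ elements all distinct from $x_1,\ldots,x_{k-1}$ (otherwise a collision would already have occurred, contradicting the conditioning). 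Hence
\begin{equation}
    P_{\text{no-col}}=\prod_{k=0}^{M-1}\left(1-\frac{k}{2^N-k}\right).
\end{equation}

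Next, I apply $1-x\leq e^{-x}$ and $\sum_{k=0}^{M-1}\frac{k}{2^N-k}\geq\sum_{k=0}^{M-1}\frac{k}{2^N}=\frac{M(M-1)}{2\cdot 2^N}$ to obtain
\begin{equation}
    P_{\text{no-col}}\leq\exp\left(-\frac{M(M-1)}{2\cdot 2^N}\right).
\end{equation}
Demanding that the right-hand side be at most $\delta$ yields the quadratic inequality $M^2-M-2\cdot 2^N\ln(1/\delta)\geq 0$, whose smallest nonnegative integer solution is exactly the ceiling expression displayed in~\eqref{eq:M_choice}. Since $P_{\text{no-col}}$ is monotonically non-increasing in $M$, the same bound holds for any larger number of queries, which handles the "or more" clause.

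The only mildly nontrivial step is the elementary combinatorial identity for $P_{\text{no-col}}$ and the care needed to invert the resulting quadratic to match the ceiling formula in~\eqref{eq:M_choice}; I do not anticipate any conceptual obstacle. The details, including the verification that $M$ is indeed $O(2^{N/2})$ with the explicit constant $\sqrt{8\ln(1/\delta)}$ coming from the leading term of the square root, are left to Appendix~\ref{appendix:proofs_classical_achievability}.
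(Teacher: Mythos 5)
Your proof is correct and follows essentially the same route as the paper: both reduce to the exact no-collision probability $\prod_{k=1}^{M-1}\bigl(1-\tfrac{k}{2^N-k}\bigr)$, bound it by $e^{-M(M-1)/2^{N+1}}$, and invert the resulting quadratic in $M$ to arrive at the ceiling expression in~\eqref{eq:M_choice}. The only cosmetic difference is that you obtain the product by sequential conditioning on the collision-free history, whereas the paper derives the same product as a ratio of binomial coefficients counting collision-free unordered query sets.
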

\begin{proof}
The proof is presented in Appendix~\ref{appendix:proofs_classical_achievability}.
\end{proof}

For this classical algorithm, the extrapolation for estimating the energy consumption can be performed in the following procedure.
In the setup of Fig.~\ref{fig:exp_sketch}, we replace the quantum computer of $A$ with a classical computer to perform the classical algorithm and the communication between $A$ and $B$ with classical communication.
For a fixed choice of $N$, conduct the classical algorithm for several different choices of $M$ in the order of
\begin{equation}
\label{eq:M_poly}
    M=O(\mathrm{poly}(N)),
\end{equation}
and measure the energy consumption as in~\eqref{eq:measured_upper_bound_energy_consumption}.
From the measured energy consumption for several $M$, we estimate the leading order of the energy consumption and its constant factor as $M$ increases.
Using this estimation of the function of $M$ representing the energy consumption, perform an extrapolation to obtain an estimate of the energy consumption for $M$ in~\eqref{eq:M_choice} of Proposition~\ref{prp:simon_achievability} with setting $\delta=1/3$ as in Problem~\ref{problem:SimonsProblem}.
For each fixed choice of $N$, use this estimate as the energy consumption of this classical algorithm for solving Simon's problem, so we can compare it with the quantum case.

We note that the goal of this approach is to estimate the scaling and the constant factor of the energy consumption of the classical algorithm as $M$ grows.
In the proposed experiment with polynomial large $M$ in~\eqref{eq:M_poly}, the classical algorithm may not return the correct output; still, one should not try to conduct the classical algorithm for exponentially large $M$ in~\eqref{eq:M_choice} to estimate the energy consumption directly, which is infeasible.
After all, Proposition~\ref{prp:simon_achievability} already guarantees that the algorithm works for $M$ in~\eqref{eq:M_choice}, and the aim of the experiment is not the reproduction of this theoretical result on query complexity.
Rather, our proposal here is to perform only polynomial-time experiments to estimate the energy consumption for $M$ in~\eqref{eq:M_choice} by extrapolation, to obtain a criterion on the advantage in the energy consumption.

Next, the fundamental approach is based on the fundamental lower bound of energy consumption for any classical algorithm shown in Corollary~\ref{cor:CorClassical}.
Problematically, the above extrapolation-based approach can be used for demonstration of the advantage of quantum computation only over the existing classical algorithm used in the experiment, rather than any possible classical algorithm.
For example, apart from the probabilistic classical algorithm analyzed here, Ref.~\cite{CAI201883} provides a deterministic classical algorithm using $M=O(2^{N/2})$ queries to solve Simon's problem, but the bound that we have shown in Proposition~\ref{prp:simon_achievability} may not directly be applicable to this different algorithm; thus, as long as the extrapolation-based approach is used, one may need a case-by-case analysis to determine the constant factor for the extrapolation depending on the choice of classical algorithms.
By contrast, in the fundamental approach here, we will address this issue by using the fundamental lower bound applicable to any classical algorithm.

In particular, for any classical algorithm to solve Simon's problem, we have the explicit lower bound in~\eqref{eq:tight_lower_bound} in the proof of Corollary~\ref{cor:CorClassical}, i.e.,
\begin{align}
\label{eq:explicit_lower_bound}
    \mathcal W &\geq \mathcal W^{(\mathrm{C})} = \frac{1}{\beta} S(C|a)\nonumber\\
    &\geq\frac{1}{\beta}\times\left(\frac{3\sqrt{15}-11}{6\sqrt{15}-18}\times\right.\nonumber\\
    &\left.\left(\sqrt{\frac{8-2\sqrt{15}}{6-\sqrt{15}}}\times 2^{N/2}N \ln 2 - 3\right)-\ln 2\right),
\end{align}
where we have chosen the Boltzmann constant to be $k_B=1$.
With the room temperature in mind, we set
\begin{equation}
    \frac{1}{\beta}=300~\mathrm{K},
\end{equation}
and substituting $1/\beta$ in~\eqref{eq:explicit_lower_bound} with $k_B/\beta$ for $k_B\approx 1\times 10^{-23}$ J/K, we obtain the energy consumption in joule.
With this calculation, typical values of the fundamental lower bounds in~\eqref{eq:explicit_lower_bound} are summarized in Table~\ref{table:1}.
For comparison, we have the following facts~\cite{wiki}.
\begin{enumerate}
    \item Energy of using a $10$-watt flashlight for $1$ minute is $6\times 10^2$~J\@.
    \item Magnetic stored energy in the world's largest toroidal superconducting magnet for the ATLAS experiment at CERN, Geneva, is $1\times 10^{9}$~J\@.
    \item Yearly electricity consumption in the U.S.\ as of 2009 is $1\times 10^{19}$~J\@.
    \item Total energy from the Sun that strikes the face of the Earth each year is $5\times 10^{25}$~J\@.
\end{enumerate}
With an appropriate choice of $N$, if the quantum experiment demonstrates an energy consumption below the corresponding value in Table~\ref{table:1}, our analysis shows that no classical algorithm can outperform such an implementation of the quantum algorithm in terms of energy consumption.

\begin{table}[t]
\begin{tabularx}{\linewidth}{CC}
 \hline\hline
$N$ & $\mathcal W^{(C)}$~(J) \\
\hline
50 & $2\times 10^{-13}$ \\
100 & $1\times 10^{-5}$ \\
150 & $7\times 10^{2}$ \\
200 & $3\times 10^{10}$\\
250 & $1\times 10^{18}$\\
300 & $5\times 10^{25}$\\
 \hline\hline
\end{tabularx}
\caption{Fundamental lower bounds on the energy consumption of any classical algorithm for solving Simon's problem in Problem~\ref{problem:SimonsProblem} with various problem sizes $N$. For small values of $N\lesssim 100$ and room temperature $1/\beta=300\,\mathrm K$, the Boltzmann constant $k_B \approx 1\times 10^{-23}\, \mathrm{J/K}$ is dominant, keeping the lower bound of the energy consumption below $1\,\mathrm J$. For larger $N$, the lower bound of energy consumption rapidly becomes too large to realize in practice for any classical algorithm. Note that apart from this fundamental approach, in the main text, we also discuss an extrapolation-based approach to estimate the lower bound of energy consumption more tightly for some specific choice of the classical algorithm, which is expected to be more suitable for near-term experiments.}\label{table:1}
\end{table}

Although the values in the table may be demanding at least under the current technology, the fundamental approach here has a merit instead.
The conventional experiments for demonstrating quantum computational supremacy in terms of time complexity compare the quantum algorithm with the existing classical algorithm; in this case, even if a quantum algorithm in the experiments outperforms a classical algorithm, another faster classical algorithm may again outperform the implemented quantum algorithm as technology develops further.
In our case of energy-consumption advantage, the extrapolation-based approach also makes a similar type of comparison between the quantum algorithm and the existing classical algorithm.
By contrast, the fundamental approach here is applicable to any possible classical algorithm, which was hard to claim for the conventional analysis of quantum advantage in time complexity.
Therefore, our analysis establishes an ultimate goal of quantum technologies from a new perspective: outperform the fundamental classical lower bound of energy consumption by an experimental realization of a quantum algorithm.

Consequently, our proposal for these experimental demonstrations clarifies how to realize the query-complexity setting (formulated as a learning problem, where a party $A$ is to learn a property of the function $f$ through queries realized as quantum communication with another party $B$, and $B$ is to implement the oracle of $f$). 
Our proposal and analysis also set quantitative goals of quantum technologies for the demonstrations (e.g., the values in Table~\ref{table:1}).
On top of these theoretical developments, accomplishing a demonstration of the energy-consumption advantage of quantum computation over classical computation would be a major milestone in the experiments.

\section{\label{sec:discussion}Discussion and outlook}

In this work, we have formulated a framework for studying the energy consumption of quantum and classical computation.
With this framework, we have derived general upper and lower bounds of the energy consumption of quantum and classical computation, respectively, in the query-complexity setting.

To derive the upper bound of the energy consumption of quantum computation,
our analysis provides a comprehensive account of all the contributions arising from the cost of each operation used for implementing the computation, of initializing the state of the computer at the end of the computer into the initial state, and of the overhead of quantum error correction,
with imperfections of the implementation taken into account.
In particular, to obtain an upper bound of the achievable initialization cost, we have developed a finite-step Landauer-erasure protocol to initialize a given state into a pure state within a finite number of time steps up to finite infidelity, progressing over the existing infinite-step protocols~\cite{Reeb2014,Taranto2021}.
Our analysis indicates that an upper bound of energy consumption of quantum computation is determined by the gate and query complexities and the heat dissipation achievable by the finite-step Landauer erasure in our framework.
These upper bounds rigorously clarify that, even though quantum computation might have larger per-gate energy consumption than classical computation, it is indeed true that polynomial-time quantum computation can be physically realized with a polynomial amount of energy consumption including all the above costs, serving as a widely applicable (yet potentially nontight) estimate of the achievable energy consumption of quantum computation.

As for the lower bound, we developed new techniques to prove a fundamentally nonzero lower bound that works for any classical algorithm regardless of how energy-efficiently the classical computation is implemented.
To establish this fundamental lower bound, we have employed two fundamental physical insights: the energy-conservation law and Landauer's principle.
In particular, working on a query-complexity setting, we develop a technique to argue that a lower bound on the query complexity of any classical algorithm solving a computational problem can be used to establish a lower bound on the Landauer cost---that is, the minimum heat dissipation required to erase the information obtained from the queries, measured in terms of entropy.
Consequently, due to the energy-conservation law, the energy consumption of any classical algorithm solving such a computational problem must be at least the Landauer cost.
Our analysis suggests that the lower bound on energy consumption is governed by this entropic quantity, which depends not only on the query complexity but also on the probability distribution of the oracle in our framework.
Owing to the generality of these techniques, our lower bound holds regardless of technological advances in the energy efficiency of the physical implementation of classical computation.

Using these upper and lower bounds, we rigorously prove that quantum computation can be exponentially advantageous in terms of energy consumption over classical computation for solving a computational problem, in particular, Simon's problem, as the problem size increases.
Since our formulation and analysis are based on the query-complexity setting of Simon's problem, our proof of this exponential quantum advantage is free from the complexity-theoretic assumptions on the hardness of the problem, such as the assumption on the conjectured hardness of integer factoring for classical computation.
We have also proposed a schematic setup of experiments to demonstrate this energy-consumption advantage of quantum computation, using a cryptographic primitive to instantiate and implement the oracle for Simon's problem feasibly.

These results bridge the gap between the complexity-theoretical techniques for studying query complexity and the physical notion of energy that has been studied previously in the field of quantum thermodynamics, opening an alternative way of studying the advantages of quantum computation.
Our framework and the general bounds on energy consumption can be used for studying a potentially broader class of problems; for example, similar to Simon's problem, it is interesting to extend our analysis to other problems in the query-complexity setting, such as the Abelian hidden subgroup problem~\cite{Ettinger2004}, the Bernstein-Vasirani problem~\cite{doi:10.1137/S0097539796300921}, and the problem with verifiable quantum advantage without structure~\cite{9996892}.
Another relevant question would be whether such an energy-consumption advantage of quantum computation could still be proven even if the assumption of a black-box oracle is relaxed and whether the energetic contributions from the oracle can be included in the overall bounds, unlike the conventional analysis in the query-complexity setting.
Also, while our analysis provides an achievable upper bound of energy consumption for quantum computation and a fundamental lower bound for classical computation, it is interesting to develop techniques for proving a lower bound of energy consumption of quantum computation.

Finally, through our analysis, we have proposed a new type of experiment to demonstrate the energy-consumption advantages of quantum computation for Simon's problem, offering an explicit criterion for implementing a quantum algorithm that outperforms any possible classical algorithm in energy consumption (Table~\ref{table:1}).
As a result, our work elevates the significance of Simon’s problem---and potentially a broader class of query-complexity problems---from both theoretical and experimental perspectives; this aligns with the pioneering theoretical studies on quantum computational supremacy~\cite{10.1145/1993636.1993682,10.5555/3135595.3135617,boixo2018characterizing,bouland2019complexity,doi:10.1098/rspa.2008.0443,doi:10.1098/rspa.2010.0301,PhysRevLett.117.080501}, which emphasized the importance of special sampling problems that now receive considerable attention in quantum experiments~\cite{arute2019quantum,doi:10.1126/science.abe8770,bluvstein2024logical}.
Remarkably, unlike the existing experiments on quantum computational supremacy that focus on time complexity, once the proposed quantum experiment satisfies our energy-consumption advantage criterion in Table~\ref{table:1}, our analysis guarantees that no classical algorithm will be able to surpass the quantum algorithm, regardless of technological advances in energy-efficient computation. 
To make this demonstration possible in principle, we have explicitly evaluated the lower bounds of query complexity and energy consumption for any classical algorithm to solve Simon's problem, providing clarification on the constant factors involved.
On the theoretical side, refining these constant factors in our analysis would further reduce the technological requirements for experimental demonstration.
Experimentally demonstrating this fundamental energy-consumption advantage of quantum computation over any classical computation would represent a significant milestone, and our developments have opened a route toward further studies in this direction with a solid theoretical foundation.

\begin{acknowledgments}
The authors thank Jake Xuereb, Parnam (Faraj) Bakhshinezhad, Philip Taranto, Satoshi Yoshida, Natsuto Isogai, Mio Murao, Pauli Erker, and Marcus Huber for the discussion.
Quantum circuits have been generated using the \texttt{quantikz} package~\cite{Kay2018}.
Figure~\ref{fig:exp_sketch} has been created with emojis designed by OpenMoji under the license CC BY-SA 4.0.
F.M. acknowledges funding by the European flagship on quantum technologies (`ASPECTS' consortium 101080167) and funding from the European Research Council (Consolidator grant `Cocoquest’ 101043705).
H.Y. acknowledges JST PRESTO Grant Number JPMJPR201A, JPMJPR23FC, JSPS KAKENHI Grant Number JP23K19970, and MEXT Quantum Leap Flagship Program (MEXT QLEAP) JPMXS0118069605, JPMXS0120351339\@.
\end{acknowledgments}
\clearpage
\newpage

\appendix

\section*{Appendices}

Appendices of the article ``Energy-Consumption Advantage of Quantum Computation'' are organized as follows.
In Appendix~\ref{appendix:proofs}, we provide technical details and proofs for the results in the main text.
In Appendix~\ref{appendix:model_details}, we illustrate the physics of control cost for a specific example and highlight further references on this topic.

\section{\label{appendix:proofs}Proofs}

In this appendix, we provide the proofs of the theorems, propositions, and lemmas present in the main text.
In Appendix~\ref{appendix:proof_cooling}, we prove the finite-step and finite-fidelity Landauer-erasure bound (Theorem~\ref{thm:ThmErasureBound} in the main text) together with an estimate of the relevant energy scales (Proposition~\ref{prop:MaxErasureEnergy}).
Then, in Appendix~\ref{appendix:detailed_energy_consumption}, we provide a line-by-line decomposition and proof of the energy-consumption upper bound (Theorem~\ref{thm:ThmQuantumIdeal} of the main text).
In Appendix~\ref{appendix:proofs_classical_bound}, we provide technical details for the entropic lower bound in Simon's problem (Proposition~\ref{prp:simon}, Lemma~\ref{lemma:QueryNumber}, Proposition~\ref{prop:ClassicalEntropy}, and Lemma~\ref{prp:fraction_evaluation}) used in the proof of Corollary~\ref{cor:CorClassical} in the main text.
Lastly, in Appendix~\ref{appendix:proofs_classical_achievability}, we provide a proof of the achievable upper bound of query complexity of Simon's problem for a classical algorithm that we use in our analysis (Proposition~\ref{prp:simon_achievability}).

\subsection{\label{appendix:proof_cooling}Proof on finite Landauer-erasure bound}

In this appendix, we present the details of the Laudauer-erasure protocol and prove a finite version of the work cost of erasing states of a finite-dimensional target quantum system for this protocol.
In particular, we will prove Theorem~\ref{thm:ThmErasureBound} in the main text and Proposition~\ref{prop:MaxErasureEnergy} shown below.

Regarding the protocol for the Landaure erasure, we consider a protocol based on the one in Ref.~\cite{Reeb2014}.
Given a quantum state $\rho_S$ of the  $d$-dimensional target system to be erased as in Theorem~\ref{thm:ThmErasureBound} in the main text, the Landauer-erasure protocol transforms $\rho_S$ into a final state $\rho_S'$ satisfying the pure-state fidelity $1-\varepsilon = \braket{0|\rho_S'}{0}$;
in particular, the protocol here uses
\begin{align}
\label{eq:final_state_cooling}
    \rho_S' = (1-\varepsilon)\ket{0}\bra{0}+\frac{\varepsilon}{d-1}(\mathds{1}-\ket{0}\bra{0}),
\end{align}
where $\mathds{1}$ is the identity operator acting on the $d$-dimenional space.
For the sake of analysis, we require $\varepsilon\leq1/2$ (to be used in~\eqref{eq:epsilon_bound_required_1},~\eqref{eq:epsilon_bound_required_2}, and~\eqref{eq:epsilon_bound_required_3}).
Importantly, this final state has full rank and therefore satisfies $\mathrm{supp}(\rho_S) \subseteq \mathrm{supp} (\rho_S')$ as required also in Ref.~\cite{Reeb2014}, where $\mathrm{supp}(\rho)$ represents the support of the operator $\rho$.
To define the protocol, one can consider any continuously differentiable path $\rho[u]$ through the space of density operators parameterized by $u\in[0,1]$ with endpoints $\rho(0)=\rho_S$ and $\rho(1)=\rho_S'$.
For any choice of $T$ points $0<u_1<u_2<\cdots <u_T=1$ in $[0,1]$, 
the environment is composed of $T$ $d$-dimensional subsystems and has the total Hamiltonian
\begin{align}
\label{eq:H_E}
    H_E \coloneqq \sum_{t=1}^T H_E^{(t)}\otimes\mathds{1}_{\text{rest}\neq t}\geq 0,
\end{align}
where the Hamiltonian of the $t$th subsystem ($t=1,\ldots,T$) is given by
\begin{equation}
\label{eq:H_E_k}
    H_E^{(t)}\coloneqq-\frac{1}{\beta}\ln \rho[u_t]\geq 0,
\end{equation}
and $\mathds{1}_{\text{rest}\neq t}$ is the identity operator acting on all the subsystems but the $t$th.
For each $t$,
we write the thermal state of the $t$th subsystem of the environment as
\begin{equation}
    \tau^{(t)}_E[\beta]\coloneqq \frac{e^{-\beta H_E^{(t)}}}{\tr[e^{-\beta H_E^{(t)}}]}=\rho[u_t],
\end{equation}
where the last equality holds by definition~\eqref{eq:H_E_k}.
Then, the thermal state of the environment $\tau_E[\beta]$ in~\eqref{eq:tau_E_beta} of the main text becomes
\begin{equation}
    \tau_E[\beta]=\bigotimes_{t=1}^T \tau^{(t)}_E[\beta]=\bigotimes_{t=1}^T \rho[u_t].
\end{equation}
The protocol is composed of $T$ steps; in the $t$th step for $t=1,\ldots, T$, the protocol swaps the target system's state with the thermal state $\tau_E^{(t)}[\beta]$ of the $t$th subsystem of the environment, so that the target system's state after the $t$th step should be $\rho[u_t]$ along the path $\rho[u]$.
For this protocol, Ref.~\cite{Reeb2014} shows that the heat dissipation into the environment is given by
\begin{align}\label{eq:heat_sum}
    \beta \mathcal Q_E = \sum_{t=1}^T \tr\Big[\big(\rho[u_t]-\rho[u_{t-1}]\big)\ln\left(\rho[u_t]\right)\Big].
\end{align}
For the sake of analysis, our protocol chooses a straight path $\rho[u]$ through the state space and equally spaced points on the path, i.e.,
\begin{align}
    \label{eq:rho_u}
    \rho[u] &= \rho_S + u(\rho_S'-\rho_S),\\
    \label{eq:u_t}
    u_t &= \frac{t}{T}.
\end{align}

With this protocol, we prove Theorem~\ref{thm:ThmErasureBound} in the main text, which is repeated again below.
In particular, we progress beyond the asymptotic bound shown in Ref.~\cite{Taranto2021} and analyze in detail how the number of steps an erasure protocol requires scales as a function of the final pure-state infidelity $\varepsilon$ and the difference of the actual heat dissipation $\eta$ to Landauer's ideal bound.

\ThmErasureBound*\begin{proof}
We will prove that, for a generic number of steps $T$, the protocol described above to transform $\rho_S$ into $\rho_S'$ in~\eqref{eq:final_state_cooling}, which is $\varepsilon$-close to the pure state $\ket{0}$, has a heat dissipation $\mathcal Q_E$ to the environment bounded by
\begin{align}
\label{eq:generic_bound}
    \beta\mathcal Q_E-\Delta S
    &\leq\frac{\ln\qty(\frac{e(d-1)^2 T}{\varepsilon})}{T},
\end{align}
so that setting $T$ as in the theorem statement should achieve $\beta\mathcal Q_E - \Delta S\leq \eta$.

\begin{figure}
    \centering
    \includegraphics[width=3.4in]{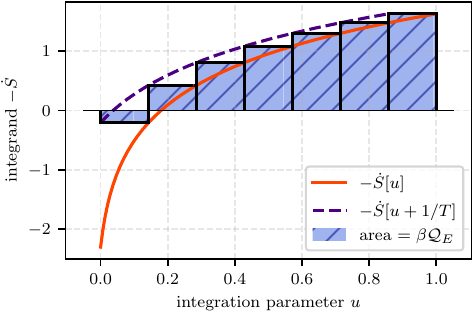}
    \caption{Illustration visualizing the arguments used for deriving the upper and lower bound on $\beta\mathcal Q_E$ in~\eqref{eq:QE_geq_S} and~\eqref{eq:QE_leq_Splus} for $T=7$. The sum on the right-hand side of~\eqref{eq:heat_sum} making up $\beta\mathcal Q_E$ is represented through the area below the step function in blue.
    This step function is lower bounded by the integral $\int_0^1 du(- \dot S[u])$ due to the monotonicity of the integrand (solid red curve), which yields to~\eqref{eq:QE_geq_S}.
    An upper bound is given by the shifted integral given by the dashed violet curve, together with a correction of the last step on the right of the interval of the integral, leading to~\eqref{eq:QE_leq_Splus}.}
    \label{fig:integral_sketch}
\end{figure}

To bound $\mathcal{Q}_E$ in~\eqref{eq:heat_sum}, we write,
for simplicity of notation,
\begin{equation}
    S[u] \coloneqq S(\rho[u])=-\tr[\rho[u]\ln(\rho[u])].
\end{equation}
By using the shorthand $\dot S[u] \coloneqq \frac{\partial}{\partial u} S[u]$, it holds that
\begin{align}
\dot S[u] &= - \tr[\dot \rho[u]\times (\ln(\rho[u])+\mathds{1})]\\
&=- \tr[(\rho_S'-\rho_S)\times (\ln(\rho[u])+\mathds{1})]\\
&=- \tr[(\rho_S'-\rho_S)\ln(\rho[u])].
\end{align}
Using this notation, each term on the right-hand side of~\eqref{eq:heat_sum} simplifies into
\begin{align}
    \tr[(\rho[u_t]-\rho[u_{t-1}])\ln(\rho[u_t])]&=\tr[\frac{\rho'_S-\rho_S}{T}\ln(\rho[u_t])]\\
    &=\frac{-\dot S[u_t]}{T}.
\end{align}
Moreover, $S[u]$ is a concave function; i.e., for any $u_1,u_2$, and $\lambda\in[0,1]$, we have
\begin{align}
    S[\lambda u_1 + (1-\lambda) u_2]
    &= S(\rho[\lambda u_1 + (1-\lambda) u_2]) \\
    &\hspace{-0.15cm}\stackrel{\text{\eqref{eq:rho_u}}}{=}\hspace{-0.15cm} S(\lambda \rho[u_1] + (1-\lambda)\rho[u_2]) \\
    &\geq \lambda S(\rho[u_1]) + (1-\lambda)S(\rho[u_2]) \\
    &= \lambda S[u_1] + (1-\lambda)S[u_2].
\end{align}
Thus, $-\dot S[u]$ is a monotonically non-decreasing function as $u$ increases.
In Fig.~\ref{fig:integral_sketch}, we illustrate an example of a monotonically non-decreasing function and how we use this property to derive upper and lower bounds of $\beta \mathcal Q_E$.
By evaluating the integral according to this figure,~\eqref{eq:heat_sum} can be bounded from below by
\begin{align}
    \beta\mathcal{Q}_E&=\sum_{t=1}^T\frac{-\dot S[u_t]}{T}\\
    \label{eq:upper_S}
    &\geq\int_0^1 du\,(-\dot S[u])\\
    &=S[0]-S[1]\\
    &=\Delta S\label{eq:QE_geq_S}
\end{align}
and from above by
\begin{align}
    \beta\mathcal{Q}_E&=\sum_{t=1}^T\frac{-\dot S[u_t]}{T}\\
    \label{eq:lower_S}
    &\leq\int_{1/T}^1 du\,\qty(-\dot S\qty[u])+\frac{-\dot S[1]}{T}\\
    &=\int_{0}^1 du\,\qty(-\dot S\qty[u])-\int_0^{1/T} du\,\qty(-\dot S\qty[u])+\frac{-\dot S[1]}{T}\\
    &\leq\Delta S+\qty|S[0]-S\qty[\frac{1}{T}]|+\frac{-\dot S[1]}{T}.\label{eq:QE_leq_Splus}
\end{align}
Therefore, we can rewrite~\eqref{eq:heat_sum} as
\begin{align}\label{eq:heat_dissipation_integral_estimate}
    0\leq\beta\mathcal Q_E - \Delta S \leq \qty|S[0]-S\qty[\frac{1}{T}]| - \frac{1}{T}\dot S[1].
\end{align}

We bound the right-hand side of~\eqref{eq:heat_dissipation_integral_estimate}.
The first term on the right-hand side can be estimated using the continuity bound of the quantum entropy in Refs.~\cite{Zhang2007,Audenaert2007,Petz2008,Winter2016}.
In particular, 
let
\begin{equation}
    \delta\coloneqq\frac{1}{2}\norm{\rho[0]-\rho\qty[\frac{1}{T}]}_1 
\end{equation}
denote the trace distance, and
\begin{equation}
    h(x)\coloneqq\begin{cases}-x\ln(x)-(1-x)\ln(1-x),&\text{$x\in(0,1)$},\\
    0&\text{$x=0,1$}
    \end{cases}
\end{equation}
the binary entropy.
Since we have
\begin{equation}
    \rho[0]-\rho\qty[\frac{1}{T}]=\frac{\rho_S'-\rho_S}{T},
\end{equation}
it holds that
\begin{align}
\label{eq:delta_bound_T}
    \delta=\frac{1}{2T}\|\rho_S'-\rho_S\|_1\leq \frac{1}{T}, 
\end{align}
where the last inequality follows from $\frac{1}{2}\|\rho'-\rho\|_1\in[0,1]$ for all density operators $\rho$ and $\rho'$.
Thus, for any $T\geq2$, it holds that
\begin{equation}
\label{eq:delta_bound}
    \delta\leq\frac{1}{2}\leq1-\frac{1}{d},
\end{equation}
where the last inequality holds due to $d\geq 2$.
For $\delta$ satisfying~\eqref{eq:delta_bound},
the continuity bound in Refs.~\cite{Zhang2007,Audenaert2007,Petz2008,Winter2016} leads to
\begin{align}
    \qty|S[0]-S\qty[\frac{1}{T}]|&=\qty|S(\rho[0])-S\qty(\rho\qty[\frac{1}{T}])|\\
\label{eq:continuity_bound}
    &\leq \delta \ln(d-1)+h(\delta).
\end{align}
For any $T\geq 2$, it also follows from~\eqref{eq:delta_bound_T} that
\begin{align}
    h(\delta)&\leq h\qty(\frac{1}{T})\\
\label{eq:h_bound}
    &\leq\frac{1+\ln T}{T},
\end{align}
where the last inequality follows from
\begin{equation}
    h(x)\leq x\left(1+\ln \frac{1}{x}\right)
\end{equation}
for all $x>0$.
Therefore, for any $T\geq 2$, we obtain from~\eqref{eq:delta_bound_T},~\eqref{eq:continuity_bound}, and~\eqref{eq:h_bound}
\begin{align}
\label{eq:entropy_difference_T}
    &\qty|S[0]-S\qty[\frac{1}{T}]|\leq\frac{\ln (d-1)}{T}+\frac{1+\ln T}{T}.
\end{align}
To bound the second term of the right-hand side of~\eqref{eq:heat_dissipation_integral_estimate}, we use $\rho_S'$ in~\eqref{eq:final_state_cooling} to evaluate
\begin{align}
   -\dot S[1]&=\tr[(\rho_S'-\rho_S)\ln\rho_S']\\
   &=-S(\rho_S')-\tr[\rho_S\ln\rho_S']\\
   &=-S(\rho_S')-\bra{0}\rho_S\ket{0}\ln(1-\varepsilon)\nonumber\\
   &\quad-\sum_{j=1}^{d-1}\bra{j}\rho_S\ket{j}\ln\qty(\frac{\varepsilon}{d-1})\\
   &\leq 0-\bra{0}\rho_S\ket{0}\ln(1-\varepsilon)-\sum_{j=1}^{d-1}\bra{j}\rho_S\ket{j}\ln\qty(\frac{\varepsilon}{d-1})\\
   &\leq-\sum_{j=0}^{d-1}\bra{j}\rho_S\ket{j}\ln\qty(\frac{\varepsilon}{d-1})\label{eq:bound_log_trick}\\
   &=\ln\qty(\frac{d-1}{\varepsilon}),
   \label{eq:bound_S_1}
\end{align}
where~\eqref{eq:bound_log_trick} follows from
\begin{equation}
\label{eq:epsilon_bound_required_1}
    -\ln(1-\varepsilon)\leq-\ln\qty(\frac{\varepsilon}{d-1})
\end{equation}
for all $\varepsilon\leq 1/2$ and $d\geq 2$, and~\eqref{eq:bound_S_1} uses normalization of $\tr[\rho_S]=1$ and the identity $-\log(x)=\log(1/x)$ for $x>0$.
As a whole, applying~\eqref{eq:entropy_difference_T} and~\eqref{eq:bound_S_1} to~\eqref{eq:heat_dissipation_integral_estimate} yields
\begin{align}
    \beta Q_E - \Delta S &\leq \frac{\ln(d-1)}{T}+\frac{1+\ln T}{T}+\frac{\ln\qty(\frac{d-1}{\varepsilon})}{T}\\
    &=\frac{\ln\qty(\frac{e(d-1)^2 T}{\varepsilon})}{T},
    \label{eq:Q_E_S_bound}
\end{align}
which provides the bound~\eqref{eq:generic_bound} on the heat dissipation $\mathcal{Q}_E$ for a generic number of steps $T\geq 2$.

To complete the proof, we set $T$ in~\eqref{eq:Q_E_S_bound} in such a way that we should obtain $\beta\mathcal{Q}_E - \Delta S\leq \eta$.
We write~\eqref{eq:Q_E_S_bound} as
\begin{equation}
    f(T)\coloneqq\frac{\ln\qty(\frac{e(d-1)^2 T}{\varepsilon})}{T}.
\end{equation}
Given the fact that $\ln x/x$ does not increase for all $x\geq e$,
we see that $f(T)$ is a monotonically non-increasing function for all $T\geq \varepsilon/(d-1)^2$, which always holds here due to $T\geq 1$, $\varepsilon\leq 1/2$, and $d\geq 2$.
An exact solution of
\begin{equation}
    f(T)=\eta
\end{equation}
with respect to $T$ would require the Lambert $W$ function, which cannot be expressed in a closed form through elementary functions~\cite{Chow1999,Valluri2000}.
But for our analysis, finding $T$ satisfying
\begin{equation}
    f(T)\leq\eta
\end{equation}
within the order of
\begin{equation}
    T=O\qty(\frac{1}{\eta}\log\frac{d^2}{\varepsilon\eta})
\end{equation}
is sufficient; to provide such $T$, let us choose $T$ as
\begin{align}\label{eq:number_T2}
    T = \left\lceil \frac{(e+1)}{e\eta}\ln\qty(\frac{(e+1)(d-1)^2}{\varepsilon\eta})\right\rceil.
\end{align}
Then, we have
\begin{align}
    &f(T)\\
    &\leq f\qty(\frac{e+1}{e\eta}\ln\qty(\frac{(e+1)(d-1)^2}{\varepsilon\eta}))\\
    &=\frac{e\eta}{e+1}\qty(\frac{\ln\qty(\frac{(e+1)(d-1)^2}{\varepsilon\eta})}{\ln\qty(\frac{(e+1)(d-1)^2}{\varepsilon\eta})}+\frac{\ln\qty(\ln\qty(\frac{(e+1)(d-1)^2}{\varepsilon\eta}))}{\ln\qty(\frac{(e+1)(d-1)^2}{\varepsilon\eta})})\\
    &\leq\frac{e\eta}{e+1}\qty(1+\frac{1}{e})\\
    &=\eta,
    \label{eq:eta}
\end{align}
where the first inequality follows from the monotonicity of $f$, the second inequality follows from
\begin{equation}
    \frac{\ln x}{x}\leq \frac{1}{e}
\end{equation}
for all $x>0$.
For all $\varepsilon\leq 1/2$, $\eta\leq 1$, and $d\geq 2$,
the choice of $T$ in~\eqref{eq:number_T2} indeed satisfies
\begin{align}
    T&\geq \frac{e+1}{e\times 1}\ln\qty(\frac{(e+1)(2-1)^2}{\frac{1}{2}\times 1})\\
    &=\frac{e+1}{e}\ln(2(e+1))\\
\label{eq:epsilon_bound_required_2}
    &\geq 2.
\end{align}
Note that $\eta\leq 1$ is not necessarily the tightest upper bound to derive the last line and thus to show the theorem here. 
Therefore, we obtain from~\eqref{eq:Q_E_S_bound} and~\eqref{eq:eta}
\begin{equation}
    \beta\mathcal{Q}_E - \Delta S\leq \eta,
\end{equation}
which yields the conclusion.
\end{proof}

In Proposition~\ref{prop:MaxErasureEnergy}, we analyze the energetics involved in the finite Landauer-erasure protocol in Theorem~\ref{thm:ThmErasureBound}. In particular, we are interested how the energy scale of the Hamiltonian $H_E$ of the environment in~\eqref{eq:H_E} behaves as a function of the required infidelity $\varepsilon$ between the final state and the pure state and the required steps $T$ of the protocol in Theorem~\ref{thm:ThmErasureBound}.
Note that we have $T\geq 2$ in Theorem~\ref{thm:ThmErasureBound} due to~\eqref{eq:epsilon_bound_required_2}.

\begin{restatable}[Maximum energy of environment in finite Laudauer-eraser protocol]{proposition}{MaxErasureEnergy}\label{prop:MaxErasureEnergy}
    Let $H_E$ be the Hamiltonian  in~\eqref{eq:H_E}. The maximum energy eigenvalue of $H_E$ is bounded by
    \begin{align}
        \|H_E\|_\infty &\leq\frac{T}{\beta}\qty(\ln\frac{d-1}{\varepsilon}+1)-\frac{1}{\beta}\qty(\frac{\ln 2\pi T}{2}+\frac{1}{12T+1})\\
    &=O\qty(\frac{T}{\beta}\ln\frac{d}{\varepsilon}),
    \end{align}
    where $T\geq 2$ is the number of steps of the Landauer-erasure protocol in Theorem~\ref{thm:ThmErasureBound}, $d\geq 2$ is the dimension of the target system for the Landauer-erasure protocol, and $\varepsilon\in(0,1/2]$ is the infidelity between the final state and the pure state in~\eqref{eq:final_state_cooling}.
\end{restatable}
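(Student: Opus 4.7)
\textbf{Proof proposal for Proposition~\ref{prop:MaxErasureEnergy}.}

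The plan is to exploit the tensor-product structure of $H_E$ in~\eqref{eq:H_E}, which gives
\begin{equation}
    \|H_E\|_\infty = \sum_{t=1}^T \|H_E^{(t)}\|_\infty = -\frac{1}{\beta}\sum_{t=1}^T \ln \lambda_{\min}(\rho[u_t]),
\end{equation}
since $H_E^{(t)} = -\tfrac{1}{\beta}\ln\rho[u_t]$ has its largest eigenvalue equal to $-\tfrac{1}{\beta}$ times the logarithm of the smallest eigenvalue of $\rho[u_t]$. So everything reduces to a good lower bound on $\lambda_{\min}(\rho[u_t])$ along the straight-line path $\rho[u]=\rho_S+u(\rho_S'-\rho_S)$ at the nodes $u_t=t/T$.

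First I would exploit positivity. Writing $\rho[u_t] = (1-u_t)\rho_S + u_t\rho_S'$ and using $\rho_S\geq 0$, I get the operator inequality $\rho[u_t] \geq u_t\rho_S'$. Since $\rho_S'$ in~\eqref{eq:final_state_cooling} has spectrum $\{1-\varepsilon, \tfrac{\varepsilon}{d-1},\dots,\tfrac{\varepsilon}{d-1}\}$ and the hypothesis $\varepsilon\leq 1/2$, $d\geq 2$ makes $\tfrac{\varepsilon}{d-1}$ the smaller eigenvalue, this yields
\begin{equation}
    \lambda_{\min}(\rho[u_t]) \geq u_t\cdot \frac{\varepsilon}{d-1} = \frac{t\,\varepsilon}{T(d-1)}.
\end{equation}
Substituting back gives the finite sum
\begin{align}
    \|H_E\|_\infty
    &\leq \frac{1}{\beta}\sum_{t=1}^T \ln\frac{T(d-1)}{t\,\varepsilon} \\
    &= \frac{T}{\beta}\ln\frac{T(d-1)}{\varepsilon} - \frac{1}{\beta}\ln(T!).
\end{align}

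The last step is to invoke the sharp (non-asymptotic) form of Stirling's approximation, $\ln(T!)\geq T\ln T - T + \tfrac{1}{2}\ln(2\pi T) + \tfrac{1}{12T+1}$, which turns the $\ln(T!)$ subtraction into exactly the two correction terms appearing in the claim. After the $T\ln T$ pieces cancel, rearranging gives
\begin{equation}
    \|H_E\|_\infty \leq \frac{T}{\beta}\left(\ln\frac{d-1}{\varepsilon}+1\right) - \frac{1}{\beta}\left(\frac{\ln 2\pi T}{2} + \frac{1}{12T+1}\right),
\end{equation}
and the asymptotic $O\bigl(\tfrac{T}{\beta}\ln\tfrac{d}{\varepsilon}\bigr)$ statement follows immediately.

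The only step where one could lose tightness is the lower bound $\rho[u_t]\geq u_t\rho_S'$, which discards the contribution of $\rho_S$ entirely; this is the main (minor) obstacle. However, since the bound must hold uniformly over all admissible initial states $\rho_S$ and the worst case is essentially $\rho_S$ supported on the minimal-eigenvalue subspace of $\rho_S'$, this relaxation is tight up to constant factors and is sufficient to produce the constants shown in the claim. No delicate eigenvalue analysis is required beyond this single operator inequality and the Stirling bound.
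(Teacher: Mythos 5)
Your proposal is correct and follows essentially the same route as the paper's proof: reduce $\|H_E\|_\infty$ to $-\tfrac{1}{\beta}\sum_t \ln\lambda_{\min}(\rho[u_t])$, lower-bound $\lambda_{\min}(\rho[u_t])$ by $\tfrac{t}{T}\tfrac{\varepsilon}{d-1}$ by discarding the $\rho_S$ contribution (the paper phrases this via the Weyl-type eigenvalue inequality for sums of positive semi-definite operators, you via the operator inequality $\rho[u_t]\geq u_t\rho_S'$ — both valid), and then apply the sharp Stirling lower bound on $\ln T!$. The only cosmetic difference is that you correctly note the first step is an equality due to the tensor-product structure, where the paper only needs and states the inequality.
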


\begin{proof}
Since $H_E$ is a sum over terms $H_E^{(t)}\otimes\mathds{1}_{\text{rest}\neq t}$ as shown in~\eqref{eq:H_E}, the maximum eigenvalue of $H_E$ is bounded by the sum of the maxima over the Hamiltonian $H_E^{(t)}$ of  each individual subsystem, i.e.,
\begin{equation}
    \|H_E\|_\infty \leq \sum_{t=1}^T\|H_E^{(t)}\otimes\mathds{1}_{\text{rest}\neq t}\|_\infty=\sum_{t=1}^T\|H_E^{(t)}\|_\infty.
\end{equation}
Due to the definition of $H_E^{(t)}$ in~\eqref{eq:H_E_k}, we have
\begin{equation}
    \|H_E^{(t)}\|_\infty=\norm{-\frac{1}{\beta}\ln\rho[u_t]}_\infty=-\frac{1}{\beta}\ln\lambda_{\min} (\rho[u_t]),
\end{equation}
where we write the minimum eigenvalue of a density operator $\rho$ as $\lambda_{\min}(\rho)$.
Thus, we have 
\begin{align}
    \|H_E\|_\infty\leq-\frac{1}{\beta}\sum_{t=1}^T \ln \lambda_{\min} (\rho[u_t]).
\end{align}

The problem reduces to bounding $\lambda_{\min} (\rho[u_t])$.
As shown in~\eqref{eq:rho_u} and~\eqref{eq:u_t},
we have
\begin{equation}
    \rho[u_t]=\qty(1-\frac{t}{T})\rho_S+\frac{t}{T}\rho_S'.
\end{equation}
It generally holds that the minimum eigenvalue of the sum of two positive semi-definite Hermitian matrices is bounded from below by the sum of the smallest two eigenvalues of the summands \cite{Weyl1912,Johnson1989,Fulton2000}. Since $\rho\geq 0$ is not further specified, we here bound the minimum eigenvalue of $\rho[u_t]$ from below by that of $\rho_S'$, i.e.,
\begin{align}
    \lambda_{\min} (\rho[u_t])&\geq\qty(1-\frac{t}{T})\lambda_{\min} (\rho_S)+\frac{t}{T}\lambda_{\min} (\rho_S')\\
    &\geq0+\frac{t}{T}\lambda_{\min} (\rho_S')\\
    &=\frac{t}{T}\frac{\varepsilon}{d-1},
\end{align}
where the last line follows from the fact that the eigenvalues of $\rho_S'$ in~\eqref{eq:final_state_cooling} is $(1-\varepsilon)$ and $\varepsilon/(d-1)$, and we have
\begin{equation}
    \label{eq:epsilon_bound_required_3}
    1-\varepsilon\geq\frac{\varepsilon}{d-1}
\end{equation}
for all $\varepsilon\in(0,1/2]$ and $d\geq 2$.
As a result, summing over all the contributions, we arrive at the following upper bound for the maximum eigenvalue of $H_E$
\begin{align}
    \|H_E\|_\infty &\leq-\frac{1}{\beta} \sum_{t=1}^T\ln\frac{t\varepsilon}{T (d-1)}\\
    &=\frac{T}{\beta}\ln\frac{T (d-1)}{\varepsilon}-\frac{1}{\beta}\ln T!\\
    &\leq\frac{T}{\beta}\ln\frac{T (d-1)}{\varepsilon}\nonumber\\
    &\quad-\frac{1}{\beta}\qty(T\ln T-T +\frac{\ln 2\pi T}{2}+\frac{1}{12T+1})\\
    &=\frac{T}{\beta}\qty(\ln\frac{d-1}{\varepsilon}+1)-\frac{1}{\beta}\qty(\frac{\ln 2\pi T}{2}+\frac{1}{12T+1})\\
    &=O\qty(\frac{T}{\beta}\ln\frac{d}{\varepsilon}),
\end{align}
where the second inequality follows from a variant of Stirling's approximation~\cite{10.2307/2308012}
\begin{equation}
    T!\geq \sqrt{2\pi T}\left(\frac{T}{e}\right)^Te^{\frac{1}{12T+1}}.
\end{equation}
This concludes the proof.
\end{proof}

\subsection{\label{appendix:detailed_energy_consumption}Detail of proof of general upper bound on energy consumption}
In this appendix, we expand upon the expressions for the energy consumption of quantum algorithms derived in Sec.~\ref{sec:quantum_upperbound}. In particular, we prove Theorem~\ref{thm:ThmQuantumIdeal} repeated again below.
\begin{widetext}
\ThmQuantumIdeal*    
\end{widetext}

\begin{proof}
Proving the statement essentially boils down to summing up all the terms in~\eqref{eq:proof_energy_conservation_2}.
To this end, we use the expressions derived in steps~\ref{item:qstep_1} and \ref{item:qstep_2} for the gate and erasure costs and replace them line-by-line.
We also note that the contributions from energetic costs sum up to zero, and hence, in the overall energy consumption, the terms that matter are the control cost and the initialization cost.

As a prerequisite, we give some more detailed expressions for the cost of the erasure in~\eqref{eq:W_gate_E}.
We start by recalling that the number of steps $T$ in the Landauer-erasure protocol for each qubit (i.e., for dimension $d=2$) is given, due to Theorem~\ref{thm:ThmErasureBound}, by
\begin{align}\label{eq:T_asymptotic}
    T = O\left(\frac{1}{\eta}\log\frac{1}{\varepsilon\eta}\right),
\end{align}
as $\eta,\varepsilon\rightarrow 0$.
This expression will be used for expressing all the erasure-related costs in terms of $\varepsilon$ and $\eta$.
First of all, the circuit depth~\eqref{eq:D_E} of the erasure part $\mathcal{E}$ in Fig.~\ref{fig:generic_algo} can be expressed as
\begin{align}
    D_\mathcal{E} \leq O(T) = O\left(\frac{1}{\eta}\log\frac{1}{\varepsilon\eta}\right),
    \label{eq:D_E_asymptotic}
\end{align}
where the swap gate is implemented within a constant depth in our setting.
In addition, the environment control cost in~\eqref{eq:environment_control_cost} can be expressed as
\begin{align}
    E_\text{ctrl.}^{(E)} &\stackrel{\text{\eqref{eq:environment_control_cost}}}{=} O\qty(\frac{T}{\beta}\log\frac{1}{\varepsilon}) \stackrel{\text{\eqref{eq:T_asymptotic}}}{=} O\left(\frac{1}{\beta\eta}\mathrm{polylog}\frac{1}{\varepsilon\eta}\right).\label{eq:E_ctrl_E_asymptotic}
\end{align}
All these expressions together can then be inserted into the control cost for the erasure, $E_\mathrm{ctrl.}^{(\mathcal E)}$ to give
\begin{widetext}
\begin{align}
    \mathcal W_\text{gate}^{(\mathcal E)}-\Delta E^{(\mathcal E)}&=E_\mathrm{ctrl.}^{(\mathcal E)} + \mathcal Q_E \\
    &\hspace{-0.4cm}\stackrel{\text{\eqref{eq:Q_E},\eqref{eq:E_ctrl_E}}}{\leq} \left(E_\text{ctrl.}+E_\text{ctrl.}^{(E)}\right)\times WD_\mathcal{E} + \frac{W}{\beta}(\ln 2 + \eta) \\
    &\hspace{-0.62cm}\stackrel{\text{\eqref{eq:D_E_asymptotic},\eqref{eq:E_ctrl_E_asymptotic}}}{\leq} \left(E_\mathrm{ctrl.}+ O\left(\frac{1}{\beta\eta}\mathrm{polylog}\frac{1}{\varepsilon\eta}\right)\right)\times W\cdot O\left(\frac{1}{\eta}\log\frac{1}{\varepsilon\eta}\right)
    + \frac{W}{\beta}(\ln 2 + \eta) \\
    &= O\left(\left(E_\mathrm{ctrl.} + \frac{1}{\beta \eta}\right)\times\frac{W}{\eta}\mathrm{polylog}\frac{1}{\varepsilon\eta}\right),\label{eq:W_gate_E_derivation_final}
\end{align}
which leads to the bound in~\eqref{eq:W_gate_E_bound}.
Finally,
using~\eqref{eq:proof_energy_conservation_2}, we obtain
    \begin{align}
        \mathcal W &= \left(\sum_{k=1}^{M+1}E_\mathrm{ctrl.}^{(U_k)}+\sum_{k=1}^{M}E_\mathrm{ctrl.}^{(\text{in},k)}+E_\text{ctrl.}^\text{(out)}+ E_\mathrm{ctrl.}^{(\mathcal E)}\right) + \mathcal Q_E \\
        &\hspace{-0.7cm}\stackrel{\text{\eqref{eq:W_gate_U_k},\eqref{eq:W_gate_in_k},\eqref{eq:W_gate_out_k}}}{=}\sum_{k=1}^{M+1}O\left(E_\mathrm{ctrl.} \times W D_k\right) + \sum_{k=1}^{M}O\left(E_\mathrm{ctrl.} \times W  D_{\rm in}\right) + O\left(E_\mathrm{ctrl.} \times W D_{\rm out}\right) + E_\mathrm{ctrl.}^{(\mathcal E)} + \mathcal Q^{(\mathcal{E})} \\
        &=\underbrace{O\left(E_\mathrm{ctrl.} \times W \left(\sum_{k=1}^{M+1}D_k + MD_{\rm in} + D_{\rm out} \right)\right)}_{\text{all costs for all $U_k$, input, and output}}+E_\mathrm{ctrl.}^{(\mathcal E)} + \mathcal Q^{(\mathcal{E})}\\
        &\hspace{-0.2cm}\stackrel{\text{\eqref{eq:W_gate_E_derivation_final}}}{=} O\left(E_\mathrm{ctrl.} \times W D_{\rm comp} +\left(E_\mathrm{ctrl.} + \frac{1}{\beta \eta}\right)\times\frac{W}{\eta}\mathrm{polylog}\frac{1}{\varepsilon\eta}\right),
        \label{eq:WQ_derivation}
    \end{align}
\end{widetext}
where the last line follows from the definition of $D_{\rm comp}$ in~\eqref{eq:D_U}.
This concludes the proof.
\end{proof}

\subsection{\label{appendix:proofs_classical_bound}Proof on the classical lower bound on energy consumption for solving Simon's problem}

In this appendix, we prove Proposition~\ref{prp:simon}, Lemma~\ref{lemma:QueryNumber}, Proposition~\ref{prop:ClassicalEntropy}, and Lemma~\ref{prp:fraction_evaluation} used for establishing a lower bound on the energy consumption of all classical algorithms for solving Simon's problem analyzed in Sec.~\ref{sec:classical_simon}.

We start with providing the proof of Proposition~\ref{prp:simon} repeated again below for readability.
The proposition shows that a lower bound of the query complexity of Simon's problem in Problem~\ref{problem:SimonsProblem} is given for problem size $N$ and average success probability $1/2+\Delta$ by~\eqref{eq:M_N_delta}, i.e.,
\begin{equation}
    M_{N,\Delta}\coloneqq\left\lceil\sqrt{\frac{2\Delta}{1+\Delta}}\times2^{N/2}\right\rceil,
\end{equation}
where $\Delta\in(0,1/2)$ is any constant in this parameter region and is suitably tunable for our analysis.
Note that Ref.~\cite{lecture_note_cleve} have also shown a bound on $M$ to achieve the average success probability $3/4$ in Simon's problem, but for our analysis of Problem~\ref{problem:SimonsProblem}, especially for the proof of Lemma~\ref{lemma:QueryNumber}, it is critical to obtain the bound on $M$ for an average success probability between $1/2$ and $2/3$, as with Proposition~\ref{prp:simon} with parameter chosen as $\Delta<1/6$.

\SimonBound*

\begin{proof}
As in Refs.~\cite{Simon1994, Simon1997}, given some fixed input sequence $\vec x=(x_1,\ldots,x_M)$ of length $M$ in~\eqref{eq:vec_x},
we call $\vec x$ \textit{good} if $x_k=x_m\oplus s$ for some $1\leq k<m\leq M$; otherwise, we call $\vec x$ \textit{bad}.
For a fixed input sequence $\vec x$, let
\begin{equation}
\label{eq:p_good}
    p_\mathrm{good}
\end{equation}
denote the probability of $\vec x$ being good; i.e., that being bad is $1-p_\mathrm{good}$.
To succeed in estimating $b$ with probability greater than $2/3$ as required in Problem~\ref{problem:SimonsProblem},
any classical algorithm in our framework of Fig.~\ref{fig:generic_algo} needs to query the function oracle $O_f$ repeatedly at least until the input sequence $\vec x$ becomes good with some probability sufficiently larger than zero.
When $\vec x$ is good and $b=1$, i.e., $f$ is 2-to-1, the classical algorithm can find elements of $\vec x$ satisfying $x_k=x_m\oplus s$ for some $1\leq k<m\leq M$ to conclude that $f$ is 2-to-1\@.
But we note that even if $\vec x$ is good, in the case of $b=0$, i.e., the case where $f$ is 1-to-1\@, the $M$ different inputs $x\in\{0,1\}^N$ still yield $M$ distinct outputs $y=f(x)$.
Conditioned on having a good input sequence, the success probability of the classical algorithm estimating $b$ correctly has a trivial upper bound $1$.
For a bad input sequence $\vec x$, the $M$ input-output relations $(x_1,f(x_1)),\ldots,(x_M,f(x_M))$ are identical regardless of $b=0$ or $b=1$ (i.e., regardless of $f$ being 1-to-1 or 2-to-1).
Thus, conditioned on having a bad input sequence, the success probability of the classical algorithm estimating $b$ correctly cannot exceed $1/2$, i.e., a completely random guess.
Therefore, the overall success probability of the classical algorithm estimating $b$ correctly is upper bounded by
\begin{equation}
\label{eq:success_prob_simon}
    p_\mathrm{good}\times 1+(1-p_\mathrm{good})\times \frac{1}{2}=\frac{1}{2}+\frac{p_\mathrm{good}}{2}.
\end{equation}

As shown in Ref.~\cite{lecture_note_cleve} (see also Refs.~\cite{Simon1994,Simon1997}), the probability of having a good input sequence $\vec x=(x_1,\ldots,x_M)$ at the $M$th query, conditioned on the case where the input sequence at the $(M-1)$th query is bad, is at most
\begin{equation}
    \frac{M-1}{2^N-1-\frac{(M-2)(M-1)}{2}}.
\end{equation}
Following Ref.~\cite{lecture_note_cleve},  for any $M$, we bound
\begin{equation}
    \frac{M-1}{2^N-1-\frac{(M-2)(M-1)}{2}}\leq\frac{2M}{2^{N+1}-M^2}.
\end{equation}
Thus, for any sequence $\vec x$ obtained from $M$ queries, the probability of $\vec x$ being good is at most
\begin{align}
    p_\mathrm{good}&\leq\sum_{m=1}^{M}
    \frac{m-1}{2^N-1-\frac{(m-2)(m-1)}{2}}\\
    &\leq\sum_{m=1}^{M}\frac{2M}{2^{N+1}-M^2}\\
    &=\frac{2M^2}{2^{N+1}-M^2}.
\end{align}
Since $M$ is an integer, the bound~\eqref{eq:M_N_delta}, i.e.,
\begin{equation}
    M<\left\lceil\sqrt{\frac{2\Delta}{1+\Delta}}\times2^{N/2}\right\rceil
\end{equation}
holds if and only if we have
\begin{equation}
    M<\sqrt{\frac{2\Delta}{1+\Delta}}\times2^{N/2}.
\end{equation}
Therefore, for any $M$ satisfying~\eqref{eq:M_N_delta}, we have
\begin{equation}
    p_\mathrm{good}<\frac{2{\left(\sqrt{\frac{2\Delta}{1+\Delta}}\times2^{N/2}\right)}^2}{2^{N+1}-{\left(\sqrt{\frac{2\Delta}{1+\Delta}}\times2^{N/2}\right)}^2}=2\Delta.
\end{equation}
Due to~\eqref{eq:success_prob_simon}, the average success probability of the classical algorithm is at most
\begin{equation}
    \frac{1}{2}+\frac{p_\mathrm{good}}{2}<\frac{1}{2}+\Delta.
\end{equation}
\end{proof}

Lemma~\ref{lemma:QueryNumber} below aims to provide a high-probability lower bound on the required number of queries to the oracle for all classical algorithms to solve Simon's problem.
We have shown in Proposition~\ref{prp:simon} that, for any classical algorithm with the number of queries smaller than $M_{N,\Delta}$, its success probability in solving Simon's problem in Problem~\ref{problem:SimonsProblem} is upper bounded by $1/2+\Delta$.
For our purpose, however, we need to invert this statement; in words, what we need for our analysis is that to achieve a fixed success probability $2/3$, the probability that the oracle is queried at least $M\geq M_{N,\Delta}$ times should be reasonably high.
Intuitively, if the algorithm stops with $M<M_{N,\Delta}$ too often, the algorithm cannot gain enough information from the queries to successfully solve Simon's problem with probability $2/3$.
In particular, let us write the probability of the algorithm performing strictly less than $M_{N,\Delta}$ queries as
\begin{align}
    \label{eq:p_M_leq_2N4}
    p\left(M<M_{N,\Delta}\right) = \sum_{M<M_{N,\Delta}}p(M),
\end{align}
and that of more than or equal to $M_{N,\Delta}$ queries
\begin{align}
    \label{eq:p_M_geq_2N4}
    p\left(M\geq M_{N,\Delta}\right)
     = \sum_{M\geq M_{N,\Delta}}p(M),
\end{align}
where $p(M)$ is defined in~\eqref{eq:p_M} as the probability distribution of the algorithm querying the oracle exactly $M$ times.
Then, we need to show that a fair fraction of the volume of the probability distribution $p(M)$ lies in the region of $M\geq M_{N,\Delta}$; that is, $p(M\geq M_{N,\Delta})=\Omega(1)$ for large $N$.
We prove such a probabilistic bound quantitatively in the following lemma.

\begin{restatable}[High-probability lower bound on the required number of queries for all classical algorithms to solve Simon's problem]{lemma}{QueryNumber}
\label{lemma:QueryNumber}
    Let $N>0$ be a problem size and the oracle $O_f$ be chosen according to Simon's problem as described in Problem~\ref{problem:SimonsProblem}. For any probabilistic and adaptive classical algorithm for solving Simon's problem with probability greater than or equal to $2/3$, the probability of the algorithm querying the oracle $O_f$ greater than or equal to $\sqrt{2\Delta/(1+\Delta)}\times2^{N/2}$ times is bounded by
    \begin{align}
    p\left(M\geq M_{N,\Delta}\right)\geq\frac{1-6\Delta}{3-6\Delta},
    \end{align}
    where $M_{N,\Delta}$ is given by~\eqref{eq:M_N_delta}, $\Delta\in(0,1/2)$ is the parameter in Proposition~\ref{prp:simon}, and $p(M\geq M_{N,\Delta})$ is defined as~\eqref{eq:p_M_geq_2N4}.
\end{restatable}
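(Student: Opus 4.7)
\textbf{Plan for proving Lemma~\ref{lemma:QueryNumber}.} The plan is to refine the proof of Proposition~\ref{prp:simon} into a conditional-on-$M$ version of its success bound and then combine it with the hypothesis $P_{\mathrm{success}}\geq 2/3$ via the law of total probability. First, for each fixed $M^{*}<M_{N,\Delta}$, I would decompose $P(\mathrm{success},M=M^{*})$ according to whether the realized input sequence $\vec{x}=(x_{1},\dots,x_{M^{*}})$ is \emph{good} (containing indices with $x_{i}\oplus x_{j}=s$) or \emph{bad}. On the bad event, for both $b=0$ (uniformly random $1$-to-$1$ function $f$) and $b=1$ (uniformly random $2$-to-$1$ function $f$ with secret $s$), the induced distribution of the function values $(f(x_{1}),\dots,f(x_{M^{*}}))$ is a uniformly random injection from $\vec{x}$ into $\{0,1\}^{N}$. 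Coupling the two worlds with the same random tape for the algorithm then makes the joint law of $(\vec{x},\vec{y},M,a)$ restricted to bad trajectories identical under $b=0$ and $b=1$, so that the uniform prior $p(b)=1/2$ yields $P(a=b\mid M=M^{*},\mathrm{bad})=1/2$.

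Next, reusing the per-step collision bound $P(\mathrm{good}\mid M=M^{*})\leq 2(M^{*})^{2}/(2^{N+1}-(M^{*})^{2})<2\Delta$ that is already established inside the proof of Proposition~\ref{prp:simon} for every $M^{*}<M_{N,\Delta}$, I would obtain
\begin{equation}
    P(\mathrm{success},M=M^{*})\leq\left[\tfrac{1}{2}P(\mathrm{bad}\mid M=M^{*})+P(\mathrm{good}\mid M=M^{*})\right]p(M^{*})\leq\left(\tfrac{1}{2}+\Delta\right)p(M^{*}).
\end{equation}
Summing over $M^{*}<M_{N,\Delta}$ together with the trivial bound $P(\mathrm{success}\wedge M\geq M_{N,\Delta})\leq p(M\geq M_{N,\Delta})$, and writing $q\coloneqq p(M\geq M_{N,\Delta})$, the hypothesis $P_{\mathrm{success}}\geq 2/3$ gives
\begin{equation}
    \tfrac{2}{3}\leq\left(\tfrac{1}{2}+\Delta\right)(1-q)+q=\left(\tfrac{1}{2}+\Delta\right)+\left(\tfrac{1}{2}-\Delta\right)q,
\end{equation}
which rearranges to $q\geq(1/6-\Delta)/(1/2-\Delta)=(1-6\Delta)/(3-6\Delta)$, the claimed bound.

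The main obstacle I anticipate is making the ``bad sequence carries no information about $b$'' step watertight for fully adaptive, probabilistic algorithms whose next-query and stopping decisions depend on every previously observed function value. The cleanest route is a coupling argument: sample $s$ once, use a single uniformly random injection on the pair-quotient $\{\{x,x\oplus s\}:x\in\{0,1\}^{N}\}$ to generate $f$ in the $b=1$ world, independently extend it to a uniformly random $1$-to-$1$ function for $b=0$, and feed the algorithm the same random tape in both worlds. Up to the first query step at which $\vec{x}$ becomes good, the two transcripts are pathwise identical; intersecting with $\{\vec{x}\text{ bad}\}$ yields the required identity of joint laws under $b=0$ and $b=1$, and the rest is the algebra displayed above.
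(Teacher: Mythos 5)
Your proposal is correct and follows essentially the same route as the paper: decompose the success probability via the law of total probability over the events $\{M<M_{N,\Delta}\}$ and $\{M\geq M_{N,\Delta}\}$, bound the conditional success probability on the first event by $\tfrac{1}{2}+\Delta$ (the paper simply invokes Proposition~\ref{prp:simon} for this, whereas you re-derive it per fixed $M^{*}$ with the good/bad decomposition and a coupling argument), and rearrange $\left(\tfrac{1}{2}+\Delta\right)(1-q)+q\geq\tfrac{2}{3}$ to obtain $q\geq(1-6\Delta)/(3-6\Delta)$. The extra care you take in justifying the conditional bound for adaptive, variable-stopping-time algorithms is a reasonable elaboration of the step the paper delegates to Proposition~\ref{prp:simon}, but it does not change the structure of the argument.
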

\begin{proof}
We here more rigorously express the ideas sketched above.
To this end, let us write the algorithm's success probability as
\begin{widetext}
    \begin{align}
        \label{eq:p_succ}
        &p\left(\mathrm{success}\right)=
        p\left(\text{success} \middle| M<M_{N,\Delta}\right) p\left(M<M_{N,\Delta}\right)
        + p\left(\text{success} \middle| M\geq M_{N,\Delta} \right) p\left(M\geq M_{N,\Delta} \right)\geq \frac{2}{3},
    \end{align}
where the $p(\text{success} | M<M_{N,\Delta})$ is the success probability of the algorithm conditioned on the cases where the algorithm has queried the oracle strictly less than $M_{N,\Delta}$ times, and $p(\text{success} | M\geq M_{N,\Delta} )$ is that conditioned on the cases where the oracle has queried more than or equal to $M_{N,\Delta}$ times.
To~\eqref{eq:p_succ}, we apply the result in Proposition~\ref{prp:simon} showing that the success probability conditioned on the case where the oracle has queried strictly less than $M_{N,\Delta}$ times is at most $1/2+\Delta$; i.e., it holds that
\begin{align}
\label{eq:p_succ_smallM_bound}
    p\left(\text{success} \middle| M<M_{N,\Delta}\right) \leq \frac{1}{2}+\Delta.
\end{align}
Moreover, due to the normalization of probability distributions, we have
\begin{align}
\label{eq:p_M_small}
    &p\left(M<M_{N,\Delta}\right)= 1-p\left(M\geq M_{N,\Delta}\right),\\
\label{eq:p_succ_M_large}
    &p\left(\text{success} \middle| M\geq M_{N,\Delta}\right)\leq 1.
\end{align}
Combining~\eqref{eq:p_succ},~\eqref{eq:p_succ_smallM_bound}~\eqref{eq:p_M_small}, and~\eqref{eq:p_succ_M_large}, we find
\begin{align}
    &\left(\frac{1}{2}+\Delta\right)\left(1-p\left(M\geq M_{N,\Delta}\right)\right)+ p\left(M\geq M_{N,\Delta}\right)\geq \frac{2}{3}.
\end{align}
\end{widetext}
Therefore, we have
\begin{equation}
    p\left(M\geq M_{N,\Delta}\right) \geq \frac{1-6\Delta}{3-6\Delta},
\end{equation}
and thereby conclude the proof.
\end{proof}

The next ingredient for proving the entropic lower bound on the energy consumption for classical algorithms to solve Simon's problem in Sec.~\ref{sec:classical_simon} is Proposition~\ref{prop:ClassicalEntropy} below.
For a given number of queries to the oracle, this proposition gives a lower bound on how much entropy the algorithm produces.

\begin{restatable}[Lower bound on entropy produced by all classical algorithms with a fixed number of queries for Simon's problem]{proposition}{ClassicalEntropy}\label{prop:ClassicalEntropy}
    For fixed $M\geq0$, any classical probabilistic and adaptive algorithm structured as in Fig.~\ref{fig:generic_algo} calling the oracle $O_f$ for Simon's problem in Problem~\ref{problem:SimonsProblem} on $M$ different inputs $x\in\{0,1\}^N$ generates the computer's state with entropy
    \begin{align}
        S(\rho_M) \geq \frac{1}{2}\ln\frac{2^N !}{(2^N-M)!},
    \end{align}
    where $\rho_M$ is given by~\eqref{eq:rho_M} in the main text, and $S$ is the entropy defined according to $S(\rho)=-\tr[\rho\ln\rho]$.
\end{restatable}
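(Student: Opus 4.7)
My plan is to realise $S(\rho_M)$ as the Shannon entropy of a classical distribution over transcripts $R=(\vec x,\vec y)$, isolate the injective-function branch $b=0$ where $f$ is uniformly drawn from the $(2^N)!$ injections, and use a query-by-query chain rule to produce the $\ln\tfrac{(2^N)!}{(2^N-M)!}$ factor, letting the prior $p(b=0)=\tfrac12$ provide the prefactor $\tfrac12$.

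First, since $\rho_M$ is diagonal in the computational basis that labels the transcript, I would identify $S(\rho_M)=H(\vec x,\vec y\mid M_q=M)$. A chain-rule decomposition $H(\vec x,\vec y\mid M_q=M)=H(\vec x\mid M_q=M)+H(\vec y\mid\vec x,M_q=M)$ together with conditioning (which reduces Shannon entropy on average) yields
\begin{equation*}
    S(\rho_M)\;\geq\;H(\vec y\mid\vec x,b,M_q=M)\;=\;\sum_{b\in\{0,1\}}p(b\mid M_q=M)\,H(\vec y\mid\vec x,b,M_q=M),
\end{equation*}
which isolates the $1$-to-$1$ and $2$-to-$1$ contributions.

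Second, in the $b=0$ branch I would exploit the uniformity of $f$ over injective functions and the fact that the framework of Fig.~\ref{fig:generic_algo} enforces that the $M$ queried inputs $x_1,\dots,x_M$ are all distinct. By exchangeability of a uniform injection on its unqueried inputs, the output $y_k=f(x_k)$, conditional on the history $(x_1,y_1,\dots,x_{k-1},y_{k-1},x_k)$, is uniformly distributed over $\{0,1\}^N\setminus\{y_1,\dots,y_{k-1}\}$. Applying the Shannon chain rule query by query yields
\begin{equation*}
    H(\vec y\mid\vec x,b=0)\;=\;\sum_{k=1}^{M}\ln(2^N-k+1)\;=\;\ln\frac{(2^N)!}{(2^N-M)!}.
\end{equation*}
To absorb the factor $\tfrac12$, I would work with the sub-normalised decomposition $\rho_M\,p(M)=\tfrac12\,\sigma_M^{(0)}+\tfrac12\,\sigma_M^{(1)}$, where $\sigma_M^{(b)}\coloneqq\sum_{s,f}p(s)p(f\mid s,b)\,p_f(M)\,\rho_{f,M}$, and combine concavity of $S$ with the step-two bound on the $b=0$ piece; the $b=1$ branch is lower-bounded by zero, giving $S(\rho_M)\geq\tfrac12\ln\tfrac{(2^N)!}{(2^N-M)!}$.

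\textbf{Main obstacle.} The delicate step is propagating the chain-rule bound from the unconditional $b=0$ distribution to the one conditioned on the adaptive-stopping event $\{M_q=M\}$. This event can Bayes-update the posterior $p(b\mid M_q=M)$ away from $\tfrac12$ and simultaneously carve out only those injections $f$ whose transcript triggers the stopping rule at step $M$, introducing a correction of size $\ln p(M_q=M\mid b=0,\theta)$ per realisation of the strategy's internal randomness $\theta$. The technical work is to show that this correction is absorbed either by the entropy of the strategy register or by the symmetric $b=1$ contribution, so that no more than a factor of $2$ is lost overall; the cleanest route I foresee is to condition jointly on $(b,\theta)$, compute the conditional distribution of $R$ exactly as a uniform distribution over injective $M$-tuples consistent with the stopping rule, and then average while tracking the cancellation between the $\ln p(M_q=M\mid b=0,\theta)$ defect and the Bayes reweighting that favours strategies with large $p(M_q=M\mid b=0,\theta)$.
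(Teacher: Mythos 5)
Your skeleton matches the paper's: both arguments reduce $S(\rho_M)$ to the Shannon entropy of the classical transcript $(\vec x,\vec y)$, split off the $b=0$ branch by concavity to produce the prefactor $\tfrac12$, discard the $b=1$ branch, and extract $\ln\frac{2^N!}{(2^N-M)!}$ from the uniformity of the random injection. The divergence is in how that injection's entropy is harvested, and this is exactly where your proposal has a genuine gap---the step you yourself flag as the main obstacle but do not close. You lower-bound the joint entropy by the conditional entropy $H(\vec y\mid\vec x,b=0,M_q=M)$ and evaluate it by a per-query chain rule, asserting that $y_k$ is uniform on the unseen outputs given the history. That is true for the unconditioned process but can fail badly after conditioning on the adaptive stopping event $\{M_q=M\}$: since $q(\mathrm{stop}\mid\cdot)$ in~\eqref{eq:q_stop_def} may depend on the observed output values, the conditioning can collapse the law of the $y_k$'s (e.g., a rule that stops at step $M$ only when $y_1$ equals a fixed string makes $y_1$ deterministic, and rules keyed to longer prefixes of $\vec y$ can remove a constant fraction of the entire target entropy). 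The repair you sketch---that the conditional law of the transcript is exactly uniform over injective $M$-tuples consistent with the stopping rule---is not correct for probabilistic stopping rules, which reweight rather than restrict, so the cancellation you hope for between the $\ln p(M_q=M\mid b=0,\theta)$ defect and the Bayes reweighting has not been exhibited and does not hold in general for the quantity you chose to bound.

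The paper runs the chain rule in the other direction at precisely this point: it bounds the joint entropy by the marginal entropy of the outputs alone, $H(\vec x,\vec y)\geq H(\vec y)$ as in~\eqref{eq:S_rho_M_sum_py}, after factorizing $p(\vec x,\vec y)=q(\vec x\mid\vec y)\,p(\vec y)$ in~\eqref{eq:p_xyM}. All of the strategy's adaptivity and stopping behaviour is packed into $q(\vec x\mid\vec y)$, while the problem-dependent factor $p(\vec y)$ equals the constant $(2^N-M)!/2^N!$ on every distinct output tuple, independently of $\vec x$, by the exchangeability of the uniform injection~\eqref{eq:p_distinct}; marginalizing over $\vec x$ then removes the strategy-dependent factor via the normalization~\eqref{eq:q_normalization} and leaves a uniform distribution on $Y_{\mathrm{distinct}}$ with entropy exactly $\ln\frac{2^N!}{(2^N-M)!}$. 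To complete your argument you would need either to restrict to stopping rules that do not depend on $\vec y$, or to replace your conditional chain rule with this marginalization step.
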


\begin{proof} 

To start the proof,
we clarify an explicit form of $\rho_M$ in~\eqref{eq:rho_M}.
Problem~\ref{problem:SimonsProblem} is defined using $s$, $b$, and $f$ sampled according to the uniform distributions $p(s)$, $p(b)$, and $p(f|s,b)$, respectively, as shown by~\eqref{eq:p} in the main text and detailed in the following.
The single-bit state $b\in\{0,1\}$ is given by a fair coin toss, i.e., chosen uniformly at random from $\{0,1\}$, giving us, for each $b$,
\begin{equation}
\label{eq:p_b}
    p(b)=\frac{1}{2}.
\end{equation}
According to the prescription for Simon's problem,
$s\in\{0,1\}^N\setminus\{0\}$ is chosen uniformly at random, giving us \begin{equation}
    p(s)=\frac{1}{2^N-1},
\end{equation}
because there exist $2^N-1$ elements in $\{0,1\}^N\setminus\{0\}$.
Furthermore, for the sake of our analysis, we here write
\begin{equation}
    p(f|b)\coloneqq\sum_{s\in\{0,1\}^N\setminus\{0\}}p(s)p(f|s,b).
\end{equation}
Then, we have
\begin{align}\label{eq:distribution_for_f}
    p(f|b) = \begin{cases}
\frac{1}{2^N!} & \text{if $b=0$}\\ 
\frac{1}{2^N-1}\frac{1}{\binom{2^N}{2^{N-1}} 2^{N-1}!} &\text{if $b=1$}.
\end{cases}
\end{align}
The probabilities on the right-hand side can be derived from normalization.
In the case of $b=0$, we choose 1-to-1 functions as $f$, and $p(f|b)$ comes from $|S_{2^N}|=2^N!$, where $S_{2^N}$ is the set of permutations, i.e., 1-to-1 functions, over $2^N$ elements in $\{0,1\}^N$.
In the case of $b=1$, we choose 2-to-1 functions as $f$, and we can see $p(f|b)$ as follows.
The factor $1/(2^N-1)$ comes from $p(s)$.
Moreover, to fully define the function $f$ satisfying $f(x)=f(x\oplus s)$, half of all inputs of $f$ can be assigned the unique corresponding outputs; that is, we have to choose $2^N / 2$ ordered elements from the set $\{0,1\}^N$ of $2^N$ elements. There are $\binom{2^N}{2^{N-1}}2^{N-1}!$ such choices.
The product of these numbers constitutes $p(f|b)$ in~\eqref{eq:distribution_for_f}.

The probability distribution $p(f|b)$ allows us to describe the computer's state at the end of the computation before the output in Fig.~\ref{fig:generic_algo}, up to the reversible transformation, as
\begin{align}\label{eq:prob_adaptive_state}
    \rho_M&=\frac{1}{p(M)}\sum_{b\in\{0,1\}}p(b)\sum_{f}p(f|b) p_f(M) \rho_{f,M},
\end{align}
as shown in~\eqref{eq:rho_M} of the main text.
We can furthermore split the state $\rho_M$ into a convex sum of the case where $f$ is 1-to-1, $\rho_M^{(b=0)}$, and the case where $f$ is 2-to-1, $\rho_M^{(b=1)}$,
\begin{align}
\label{eq:rhoM_rhoMb0_plus_rhoMb1}
    \rho_M = \frac{1}{2}\rho_M^{(b=0)} + \frac{1}{2}\rho_M^{(b=1)},
\end{align}
where the prefactor $1/2$ comes from $p(b)$ in~\eqref{eq:p_b}, and the two contributions for $b=0,1$ are defined as
\begin{align}
\label{eq:rho_b}
    \rho_M^{(b)} \coloneqq \frac{1}{p(M)}\sum_fp(f|b)p_f(M)\rho_{f,M}.
\end{align}
We use the concavity of the entropy together with the decomposition~\eqref{eq:rhoM_rhoMb0_plus_rhoMb1} to give a lower bound of $S(\rho_M)$ by
\begin{align}
    S(\rho_M)&\geq \frac{1}{2}S\left(\rho_M^{(b=0)}\right) + \frac{1}{2}S\left(\rho_M^{(b=1)}\right)\\
    \label{eq:S_rho_M_sum_b0_b1}
    &\geq\frac{1}{2}S\left(\rho_M^{(b=0)}\right),
\end{align}
where the last line follows from $S\left(\rho_M^{(b=1)}\right)\geq 0$.
The strategy for the remainder of the proof is to find a lower bound on $S\qty(\rho_M^{(b=0)})$ corresponding to the 1-to-1 functions.
Although this strategy may ignore the contribution from the 2-to-1 functions, it will turn out that the contribution from the 1-to-1 functions is already sufficient for our analysis.

To bound~\eqref{eq:S_rho_M_sum_b0_b1} further, we cast the quantum-state picture in~\eqref{eq:prob_adaptive_state} (i.e., the diagonal density operator) into the corresponding probability distribution to calculate its entropy. 
To describe the input-output relation of $y=f(x)$,
let us introduce a rank-$1$ computational-basis projector $\Pi_{\vec x, \vec y}$ acting on the space of the $2N\times M$ bits for $\rho_M$ in~\eqref{eq:prob_adaptive_state} as
\begin{align}
    \Pi_{\vec x, \vec y} = \bigotimes_{k=1}^M \ket{x_k,y_k}\bra{x_k,y_k},
\end{align}
where $\vec x$ and $\vec y$ are any ordered choices of $M$ elements in the set $\{0,1\}^N$.
In particular, we have $\vec x = (x_1,\dots,x_M)$ and $\vec y = (y_1,\dots,y_M)$, where $x_k,y_k \in \{0,1\}^N$ for all $1\leq k \leq M$.
Note that the algorithm uses $M$ different inputs as $\vec x$ by construction, but we do not assume the outputs $\vec y$ to be distinct from each other in general.
To calculate the entropy $S\left(\rho_M^{(b=0)}\right)$, we consider a probability distribution obtained from the diagonal operator $\rho_M^{(b=0)}$ by
\begin{align}\label{eq:def_pxy}
    p(\vec x, \vec y) \coloneqq \tr\left[\Pi_{\vec x, \vec y}\rho_M^{(b=0)}\right],
\end{align}
so that it holds that
\begin{align}
\label{eq:S_p}
    S\left(\rho_M^{(b=0)}\right) = -\sum_{\vec x, \vec y} p(\vec x, \vec y)\ln p(\vec x, \vec y).
\end{align}
Substituting $\rho_{f,M}$ in~\eqref{eq:rho_b} with~\eqref{eq:rho_f_M} in the main text, we can express the probability $p(\vec x,\vec y)$ of the $2N\times M$ bits in~\eqref{eq:def_pxy} as
\begin{widetext}
    \begin{align}
        p(\vec x,\vec y) &= \sum_{f}
        \bigg(
            \frac{1}{p(M)} q\left(\mathrm{stop}|\{(x_\ell,f(x_\ell))\}_{1\leq\ell\leq M}\right) q(x_M|\{(x_\ell,f(x_\ell))\}_{1\leq\ell\leq M-1})     \label{eq:p_xyM} \\
            &\qquad\prod_{k=1}^{M-1}q(\mathrm{continue}|\{(x_\ell,f(x_\ell))\}_{1\leq\ell\leq k})q(x_{k}|\{(x_\ell,f(x_\ell))\}_{1\leq\ell\leq k-1}) \times p(f | b=0) \delta_{f(x_1),y_1}\cdots \delta_{f(x_M),y_M} \bigg)\nonumber \\
        &= \overbrace{\bigg(\frac{1}{p(M)} q\left(\mathrm{stop}|\{(x_\ell,y_\ell)\}_{1\leq\ell\leq M}\right)q(x_M|\{(x_\ell,f(x_\ell))\}_{1\leq\ell\leq M-1})\hspace{4cm}}^{\eqqcolon q(\vec x | \vec y)} \nonumber \\
        &\hspace{5cm}\times \prod_{k=1}^{M-1}q(\mathrm{continue}|\{(x_\ell,y_\ell\}_{1\leq\ell\leq k})q(x_{k}|\{(x_\ell,y_\ell)\}_{1\leq\ell\leq k-1})\bigg) \nonumber \\
        &\qquad\times \underbrace{\sum_{f}p(f | b= 0) \delta_{f(x_1),y_1}\cdots \delta_{f(x_M),y_M}}_{\eqqcolon p(\vec y)} \\
        &={q(\vec x |\vec y)} \times p(\vec y),
    \end{align}
\end{widetext}
where $q(\vec x|\vec y)$ and $p(\vec y)$ are defined as shown in the second line, and $\delta_{a,b}$ is a delta function, i.e.,
\begin{align}
    \delta_{a,b}=\begin{cases}
        1,&\text{if $a=b$},\\
        0,&\text{otherwise}.
    \end{cases}
\end{align}
The notations of $q(\vec x|\vec y)$ and $p(\vec y)$ are chosen suggestively to distinguish between probabilities arising from the algorithm's probabilistic strategy (written in $q$) and the setting of Simon's problem (written in $p$), as explained in more detail below.
Note that in general, $p(\vec y)$ may depend on $\vec y$, which can be any ordered sequence of $M$ elements from $\{0,1\}^N$ and may include the same elements $y_k=y_\ell$ for $1\leq k<\ell\leq M$; however, $p(\vec y)$ does not depend on the choice of $M$ different inputs $x_k$ due to uniform randomness of the choice of 1-to-1 functions $f:\{0,1\}^N\to\{0,1\}^N$ in Problem~\ref{problem:SimonsProblem} in the case of $b=0$.
By construction, the probability distribution $p(\vec x,\vec y)$ is normalized with respect to a sum over all $\vec x$ and $\vec y$, i.e.,
\begin{align}
    \label{eq:p_normalization}
    \sum_{\vec x, \vec y}p(\vec x,\vec y)=1.
\end{align}

The quantity $p(\vec y)$ can be interpreted as the probability of measuring the output sequence $\vec y$, given the computer's state $\rho^C$.
Summing over all values of $\vec y$ lets us recover the normalization, i.e.,
\begin{align}
    \sum_{\vec y}p(\vec y) &= \sum_f p(f|b=0) \sum_{\vec y} \delta_{f(x_1),y_1}\cdots \delta_{f(x_M),y_M} \\
    &= \sum_f p(f|b=0) \times 1 = 1.\label{eq:py_normalization}
\end{align}

On the other hand, $q(\vec x|\vec y)$ can be interpreted as the probability of querying the input sequence $\vec x$ conditioned on having exactly $M$ queries and on measuring the output sequence $\vec y$.
For any choice of $M\geq 1$ and $\vec y$, the function $q(\vec x|\vec y)$ is a probability distribution over $\vec x$ by construction (as shown in~\eqref{eq:rho_f_M} and~\eqref{eq:p_M} in the main text) and normalized by
\begin{equation}
\label{eq:q_normalization}
    \sum_{\vec x}q(\vec x|\vec y) = 1.
\end{equation}

These interpretations also show that we can understand $p(\vec y)$ as the distribution obtained by marginalizing $p(\vec x,\vec y)$ over $\vec x$,
\begin{align}
    \sum_{\vec x} p(\vec x,\vec y) = \left(\sum_{\vec x} q(\vec x |\vec y)\right)p(\vec y)=p(\vec y).
\end{align}
Using the fact that entropy does not increase under marginalization~\cite{Nielsen2010}, we can bound the entropy of $p(\vec x,\vec y)$ in~\eqref{eq:S_p} by that of the marginal distribution $p(\vec y)$, i.e.,
\begin{align}
    -\sum_{\vec x,\vec y} p(\vec x,\vec y)\ln p(\vec x,\vec y)\geq -\sum_{\vec y} p(\vec y) \ln p(\vec y).\label{eq:S_rho_M_sum_py}
\end{align}
This inequality allows us to analyze a lower bound of the entropy of $\rho_M$ solely through the contribution coming from properties of Simon's problem (captured by $p(\vec y)$), irrespectively of the contribution from the algorithm's strategy to solve it (captured by $q(\vec x|\vec y)$).

In the following, we estimate the entropy $-\sum_{\vec y}p(\vec y)\ln p(\vec y)$ in~\eqref{eq:S_rho_M_sum_py}.
To this end, we further split the sum over all possible output sequences $\vec y$ in~\eqref{eq:S_rho_M_sum_py} into one for $M$ distinct outputs in $\{0,1\}^N$ and the rest.
In particular, we write the set of $M$ distinct elements in $\{0,1\}^N$ as
\begin{align}
    Y_{\text{distinct}} &= \{\vec y = (y_1,\dots,y_M) \in{(\{0,1\}^N)}^M :\nonumber\\
    &\qquad y_k\neq y_\ell\text{ for all $1\leq k<\ell\leq M$}\}.
\end{align}
and its complement as
\begin{align}
    Y_{\text{indistinct}}=\{\vec y = (y_1,\dots,y_M) : y_i \in \{0,1\}^N\} \setminus Y_\text{distinct}.
\end{align}
For any $\vec y\in Y_\text{indistinct}$ that includes indistinct pairs of $y_k=y_\ell$, we have
\begin{equation}
\label{eq:p_y_indistinct}
    p(\vec y)=0.
\end{equation}
To see this, recall that $p(\vec y)$ in~\eqref{eq:p_xyM} is defined as a sum over all 1-to-1 functions, but the input $\vec x$ to each 1-to-1 function is a sequence of distinct values, i.e., $x_k\neq x_\ell \in \{0,1\}^N$ for all $1\leq k<\ell\leq M$.
Thus, the sequence of outputs $f(x_1),\dots,f(x_M)$ of any 1-to-1 function is also a sequence of pairwise distinct values in $\{0,1\}^N$.
As a consequence, we have~\eqref{eq:p_y_indistinct} if $\vec y\notin Y_\text{distinct}$.
On the other hand, for any choice of $M$ distinct values $\vec y\in Y_\text{distinct}$, the probability $p(\vec y)$ does not depend on $\vec y$; i.e., it holds for some constant $p_\mathrm{distinct}>0$ that
\begin{equation}
    p(\vec y)=p_\mathrm{distinct}.
\end{equation}
This holds due to the uniform randomness of the choice of 1-to-1 functions $f:\{0,1\}^N\to\{0,1\}^N$ in Problem~\ref{problem:SimonsProblem}, i.e.,
\begin{align}
    p(\vec y) = p(\pi(\vec y)),
\end{align}
for any permuted vector $\pi(\vec y) = (\pi(y_1),\dots,\pi(y_M))$.
Using the normalization~\eqref{eq:py_normalization}, we further have
\begin{align}
    1 &= \sum_{\vec y\in Y_\text{distinct}} p(\vec y) + \sum_{\vec y\in Y_\text{indistinct}} p(\vec y) \\
    &=\sum_{\vec y\in Y_\text{distinct}} p(\vec y) + 0 \\
    &= \left|Y_\text{distinct}\right|  p_\text{distinct}.
\end{align}
A combinatorial argument yields
\begin{equation}
    |Y_\text{distinct}| = \binom{2^N}{M}M!=\frac{2^N!}{(2^N-M)!},
\end{equation}
which is the number of ordered choices of $M$ distinct elements in the set $\{0,1\}^N$ of $2^N$ elements.
Thus, we have
\begin{align}\label{eq:p_distinct}
p_\text{distinct}=\frac{(2^N-M)!}{2^N!}.
\end{align}
Therefore, it holds that
\begin{align}
     &-\sum_{\vec y} p(\vec y)\ln p(\vec y)\\
     &= -\sum_{\vec y \in Y_\text{distinct}} p(\vec y)\ln p(\vec y)-\sum_{\vec y \in Y_\text{indistinct}} p(\vec y)\ln p(\vec y)\\
     &= -\sum_{\vec y \in Y_\text{distinct}} p(\vec y)\ln p(\vec y)+0\\
     &=\ln\frac{2^N!}{(2^N-M)!}.
     \label{eq:S_py}
\end{align}

Consequently, it follows from~\eqref{eq:S_rho_M_sum_b0_b1}, \eqref{eq:S_p}, \eqref{eq:S_rho_M_sum_py}, and~\eqref{eq:S_py} that
\begin{equation}
    S(\rho_M)\geq\frac{1}{2}\ln\frac{2^N!}{(2^N-M)!},
\end{equation}
which leads to the conclusion.
\end{proof}

Finally, we show Lemma~\ref{prp:fraction_evaluation} below to evaluate the fraction of factorials appearing in the lower bound of the entropy in the main text.

\begin{lemma}[\label{prp:fraction_evaluation}Estimation of fraction of factorials]
For any $N\in\{1,2,\dots\}$ and $\Delta\in(0,1/6)$, it holds that
\begin{equation}
    \ln\frac{2^N!}{\left(2^N-M_{N,\Delta}\right)!}\geq\sqrt{\frac{2\Delta}{1+\Delta}} 2^{N/2}N \ln 2-3,
\end{equation}
where $M_{N,\Delta}$ is defined as~\eqref{eq:M_N_delta}.
\end{lemma}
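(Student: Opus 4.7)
The plan is to expand $\ln\frac{2^N!}{(2^N-M_{N,\Delta})!}=\sum_{k=0}^{M_{N,\Delta}-1}\ln(2^N-k)$ and control the sum through Robbins' refined form of Stirling's approximation $\sqrt{2\pi n}(n/e)^n e^{1/(12n+1)}\leq n!\leq\sqrt{2\pi n}(n/e)^n e^{1/(12n)}$ already invoked elsewhere in the paper. Write $c\coloneqq\sqrt{2\Delta/(1+\Delta)}$; since $\Delta\in(0,1/6)$, we have $c<\sqrt{2/7}<3/4$, and the ceiling in~\eqref{eq:M_N_delta} gives the two-sided control $c\cdot 2^{N/2}\leq M_{N,\Delta}\leq c\cdot 2^{N/2}+1$. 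This two-sided bound is the only quantitative input about $M_{N,\Delta}$ that the rest of the argument will need.

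First, applying the Stirling bounds separately to $2^N!$ and $(2^N-m)!$, with $m\coloneqq M_{N,\Delta}$, taking logarithms, and cancelling yields
\begin{equation*}
\ln\tfrac{2^N!}{(2^N-m)!}\geq 2^N\ln 2^N-(2^N-m)\ln(2^N-m)-m+\tfrac{1}{2}\ln\tfrac{2^N}{2^N-m}-\tfrac{1}{12(2^N-m)}.
\end{equation*}
Rewriting the dominant block as $mN\ln 2-(2^N-m)\ln(1-m/2^N)-m$ and using the elementary inequality $-\ln(1-x)\geq x$ for $x\in[0,1)$ collapses it to $mN\ln 2-m^2/2^N$. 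Dropping the non-negative logarithm and bounding the Stirling tail by $1/12$ (which holds whenever $m<2^N$, true throughout the parameter range of interest) produces the clean intermediate inequality
\begin{equation*}
\ln\tfrac{2^N!}{(2^N-m)!}\geq mN\ln 2-m^2/2^N-1/12.
\end{equation*}

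Second, for $N\geq 2$ (where $N\ln 2-1\geq 0$) I use $m\geq c\cdot 2^{N/2}$ to write $mN\ln 2\geq c\cdot 2^{N/2}(N\ln 2-1)+c\cdot 2^{N/2}$, and $m\leq c\cdot 2^{N/2}+1$ to bound $m^2/2^N\leq(c+2^{-N/2})^2$. The lemma then reduces to the elementary algebraic inequality $c\cdot 2^{N/2}+11/12\geq(c+2^{-N/2})^2$, which is tightest at $N=2$, where it becomes $c+2/3\geq c^2$ and is immediate from $c\leq 3/4$; it only becomes more slack for larger $N$. The remaining edge case $N=1$ is handled by direct evaluation: the ceiling forces $M_{1,\Delta}=1$ (because $c\sqrt{2}<1$), so the left-hand side equals $\ln 2>0$ while the right-hand side equals $c\sqrt{2}(\ln 2-1)-1<-1$, and the bound holds with wide margin.

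The main obstacle is the bookkeeping in step two: the trailing $-1$ on the target right-hand side must absorb three distinct error contributions at once---the $c\cdot 2^{N/2}$ shift from replacing $m$ by $c\cdot 2^{N/2}$ in the $mN\ln 2$ term, the quadratic tail $m^2/2^N$ from the $-\ln(1-x)\geq x$ step, and the $1/12$ from the Stirling correction---and all three must fit for every admissible $c$ and every $N\geq 2$. Compounding this is the sign flip of $N\ln 2-1$ at $N=1$, which breaks the monotone-in-$m$ argument used for $N\geq 2$; I peel that case off at the very start and dispose of it by direct computation, which also explains the otherwise mysterious $-1$ slack appearing in the target inequality.
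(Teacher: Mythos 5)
Your proof is correct and follows essentially the same route as the paper's: both start from the Robbins form of Stirling's approximation and control the logarithmic terms with the elementary bound $-\ln(1-x)\geq x$, differing only in how the error terms are packaged (your explicit $m^2/2^N$ plus a $1/12$ Stirling tail, versus the paper's use of $\ln(1-M_{N,\Delta}/2^N)\geq -M_{N,\Delta}/2^N/(1-M_{N,\Delta}/2^N)\geq -1$) before everything is absorbed into the trailing $-1$. One genuine merit of your write-up is that you peel off $N=1$ and check it directly; the paper's final step $M_{N,\Delta}(N\ln 2-1)\geq\sqrt{2\Delta/(1+\Delta)}\,2^{N/2}(N\ln 2-1)$ silently requires $N\ln 2-1\geq 0$ and so does not literally go through at $N=1$, where the lemma nevertheless holds (as you verify) because the right-hand side is below $-1$ while the left-hand side is $\ln 2>0$.
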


\begin{proof}
We bound the factorials using a variant of Stirling's approximation~\cite{10.2307/2308012}
\begin{align}
\label{eq:stirling_inequality}
    \sqrt{2\pi n}\left(\frac{n}{e}\right)^n e^{\frac{1}{12n+1}}\leq n! \leq \sqrt{2\pi n}\left(\frac{n}{e}\right)^n e^{\frac{1}{12n}},
\end{align}
which holds for arbitrary positive integers $n$. 
In particular, with $M_{N,\Delta}$ denoting $\left\lceil\sqrt{\frac{2\Delta}{1+\Delta}}\times2^{N/2}\right\rceil$ as in~\eqref{eq:M_N_delta}, we have
\begin{widetext}
\begin{align}
    &\ln\frac{2^N!}{\left(2^N-M_{N,\Delta}\right)!}\nonumber\\
    &\geq \ln \frac{\sqrt{2\pi 2^N}\left(\frac{2^N}{e}\right)^{2^N} e^{\frac{1}{12\times2^N+1}}}{\sqrt{2\pi (2^N-M_{N,\Delta})}\left(\frac{2^N-M_{N,\Delta}}{e}\right)^{2^N-M_{N,\Delta}} e^{\frac{1}{12(2^N-M_{N,\Delta})}}} \label{eq:tighter_stirling_00} \\
    &= \ln \left( \left(\frac{2^N}{2^N-M_{N,\Delta}}\right)^{2^N+\frac{1}{2}} (2^N-M_{N,\Delta})^{M_{N,\Delta}} e^{-M_{N,\Delta} + \frac{1}{12\times 2^N+1}-\frac{1}{12(2^N-M_{N,\Delta})}} \right) \\
    &=-\left(2^N + \frac{1}{2}\right)\ln\left(1-\frac{M_{N,\Delta}}{2^N}\right) + M_{N,\Delta}\left(N \ln 2-1\right) -\left(-M_{N,\Delta}\ln\left(1-\frac{M_{N,\Delta}}{2^N}\right)
    + \frac{1}{12}\left(\frac{1}{2^N-M_{N,\Delta}}-\frac{1}{2^N+\frac{1}{12}}\right)\right)
    \\
    &\geq \left(2^N+\frac{1}{2}\right)\frac{M_{N,\Delta}}{2^N} + M_{N,\Delta} \left( N\ln 2  - 1\right)- \left(\frac{\frac{M_{N,\Delta}^2}{2^N}}{1-\frac{M_{N,\Delta}}{2^N}} 
    + \frac{1}{12}\left(\frac{1}{2^N-M_{N,\Delta}}-\frac{1}{2^N+\frac{1}{12}}\right)\right)
    \label{eq:tighter_stirling_01_main}\\
    &\geq M_{N,\Delta} + M_{N,\Delta} \left( N\ln 2 - 1\right)
    - 3
    \label{eq:tighter_stirling_02_main}\\
    &= M_{N,\Delta} N\ln 2 - 3\\
    &\geq \sqrt{\frac{2\Delta}{1+\Delta}} 2^{N/2}N \ln 2 - 3,
\end{align}
\end{widetext}
where~\eqref{eq:tighter_stirling_01_main} follows from
\begin{equation}
    \frac{x}{1+x}\leq\ln(1+x)\leq x
\end{equation}
for all $x>-1$ with $x=-M_{N,\Delta}/2^N$,
and~\eqref{eq:tighter_stirling_02_main} holds due to
\begin{align}
&2^N+\frac{1}{2}\geq 2^N,\\
\label{eq:constant_order}
&\frac{\frac{M_{N,\Delta}^2}{2^N}}{1-\frac{M_{N,\Delta}}{2^N}} 
+ \frac{1}{12}\left(\frac{1}{2^N-M_{N,\Delta}}-\frac{1}{2^N+\frac{1}{12}}\right)\leq 3
\end{align}
for all $N\in\{1,2,\ldots\}$ and $\Delta\in(0,1/6)$.
To see that~\eqref{eq:constant_order} holds true, we define a function representing the left-hand side of~\eqref{eq:constant_order}, i.e.,
\begin{align}
\label{eq:F}
F(N,\Delta)\coloneqq\frac{\frac{M_{N,\Delta}^2}{2^N}}{1-\frac{M_{N,\Delta}}{2^N}} 
+ \frac{1}{12}\left(\frac{1}{2^N-M_{N,\Delta}}-\frac{1}{2^N+\frac{1}{12}}\right),
\end{align}
and observe that this function has a constant order $F(N,\Delta)=O(1)$ as $N\to\infty$, in particular,
\begin{align}
    \lim_{N\to\infty}F(N,\Delta)=\lim_{N\to\infty}\frac{M_{N,\Delta}^2}{2^N}=\frac{2\Delta}{1+\Delta};
\end{align}
then, for every $N$, due to the monotonicity of $F(N,\Delta)$ in terms of $\Delta$, we have
\begin{align}
    F(N,\Delta)\leq F(N,1/6)\in\qty[\frac{2}{7},3],
\end{align}
where the range can be confirmed also from Fig.~\ref{fig:F}.
\end{proof}

\begin{figure}
    \centering
    \includegraphics[width=3.4in]{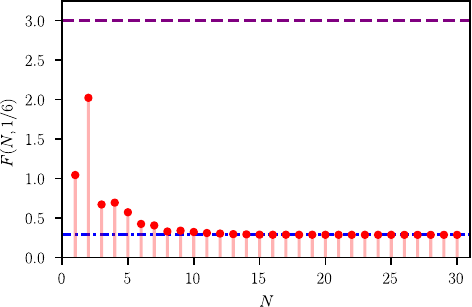}
    \caption{
    A plot of $F(N,1/6)$ in~\eqref{eq:F} for $N\in\{1,2,\ldots,30\}$ (red dots). As discussed in the main text, we have $\lim_{N\to\infty}F(N,1/6)=2\Delta/(1+\Delta)|_{\Delta=1/6}=2/7$ (blue dashed-dotted line). As the plot shows, the function $F(N,1/6)$ is upper bounded by $3$  (purple dashed line) since it takes the maximum at $N=2$ with $F(2,1/6)=2.02\cdots\leq 3$.
    }
    \label{fig:F}
\end{figure}

\subsection{\label{appendix:proofs_classical_achievability}Proof on classical upper bound for solving Simon's problem}

In this appendix, we present a proof of the achievable upper bound of query complexity of Simon's problem in Problem~\ref{problem:SimonsProblem} by the classical algorithm with $M$ uniformly random queries, as described in Proposition~\ref{prp:simon_achievability} of Sec.~\ref{sec:experiment}.
The statement of Proposition~\ref{prp:simon_achievability} is repeated again below for readability.

\SimonUpperBound*

\begin{proof}
    Recall that, as described in Sec.~\ref{sec:experiment}, with the $M$ different inputs $\vec x=(x_1,\ldots,x_M)$ chosen uniformly at random, the algorithm works as follows:
    \begin{itemize}
        \item if at least one pair $(x_k,x_m)$ satisfying $f(x_k)=f(x_m)$ for some $x_k\neq x_m$ is found in the $M$ queries, the algorithm has an evidence that $f$ is a $2$-to-$1$ function and thus outputs $a=1$; 
        \item  otherwise, the algorithm considers $f$ to be $1$-to-$1$ and outputs $a=0$.
    \end{itemize}
    If $f$ is a $1$-to-$1$ function, the algorithm always returns the correct output $a=0$; thus, it suffices to analyze the probability of outputting $a=0$ in the case of $f$ being $2$-to-$1$.
    For any $s\in\{0,1\}^N\setminus\{0\}$, 
    the set $\{0,1\}^N$ is divided into $2^{N}/2$ equivalence classes with an equivalence relation $x\sim x\oplus s$.
    If the uniformly random $M$ queries of the algorithm include a pair of $x$ and $x\oplus s$ in the same equivalence class for some $x$,
    then the algorithm finds $f(x)=f(x\oplus x)$ and thus outputs $a=1$ correctly.
    The false output, i.e., $a=0$ for the $2$-to-$1$ function $f$, arises when all the $M$ queries $x_1,\ldots,x_M$ are in different equivalence classes.

    The failure probability is given by a fraction of the number of the sequences that give a false output, divided by the number of possible sequences.
    The number of sequences that return the false output is given by $\binom{2^N/2}{M}2^M$, where $M$ unordered inputs are chosen among the $2^N/2$ equivalence classes, which amounts to the factor $\binom{2^N/2}{M}$, but for each of the $M$ equivalence classes, there are $2$ possible choices $x$ or $x\oplus s$, giving the other factor $2^M$.
    On the other hand, the number of possible sequences is given by $\binom{2^N}{M}$, the number of unordered choices of $M$ elements in $2^N$.
    All-in-all, the failure probability is given by
    \begin{equation}
        \frac{\binom{2^N/2}{M}2^M}{\binom{2^N}{M}},
    \end{equation}
and we have a bound on this failure probability by
\begin{widetext}
    \begin{align}
        \frac{\binom{2^N/2}{M}2^M}{\binom{2^N}{M}}&=\frac{2^{N-1}(2^{N-1}-1)(2^{N-1}-2)\cdots(2^{N-1}-(M-1))\times 2^M}{2^{N}(2^{N}-1)(2^{N}-2)\cdots(2^{N}-(M-1))}\\
        &=\frac{2^{N}(2^{N}-2)(2^{N}-4)\cdots(2^{N}-2(M-1))}{2^{N}(2^{N}-1)(2^{N}-2)\cdots(2^{N}-(M-1))}\\
        &=1\left(1-\frac{1}{2^N-1}\right)\left(1-\frac{2}{2^N-2}\right)\cdots\left(1-\frac{M-1}{2^N-(M-1)}\right)\\
        &\leq 1\left(1-\frac{1}{2^N}\right)\left(1-\frac{2}{2^N}\right)\cdots\left(1-\frac{M-1}{2^N}\right)\\
        &\leq e^{-0}e^{-1/2^N}e^{-2/2^N}\cdots e^{-(M-1)/2^N}\\
        &= e^{-M(M-1)/{2^{N+1}}},
    \end{align}
\end{widetext}
where the last inequality follows from $e^{x}\geq 1+x$ for all $x$.

Therefore, for any $M$ satisfying
\begin{equation}
    M>\frac{1}{2}\sqrt{\left(8\ln \frac{1}{\delta}\right)2^{N}+1}+\frac{1}{2},
\end{equation}
we have
\begin{equation}
    e^{-M(M-1)/{2^{N+1}}}<\delta.
\end{equation}
Since $M$ is an integer, with $M$ chosen as~\eqref{eq:M_choice}, the classical algorithm with the $M$ or more uniformly random queries achieves the success probability in solving Simon's problem greater than $1-\delta$.
\end{proof}

\section{\label{appendix:model_details}Details of control-cost model}

In this appendix, we aim to provide a clearer explanation of the concept of control cost discussed in Sec.~\ref{sec:energetic_costs} of the main text.
In the main text, we have argued that when performing a unitary operation on a computer, it can require more energy than just the change of energy in the memory alone.
We call this cost the control cost.
We have introduced a generic constant to account for the control cost so that our analysis should not depend on a specific architecture of the computer, whereas the exact value of control costs may depend on specific models being used.
Crucially, we assume that the control cost is non-negative yet at most in the order of the energy scale of the Hamiltonian, as shown in~\eqref{eq:control_cost_positive} and~\eqref{eq:E_ctrl}.
Here, we present a toy example that exhibits these properties.
To this end, we sketch a model that can implement arbitrary unitaries to a single-qubit target system assisted by an auxiliary system but may incur the control cost, based on the model shown, e.g., in Ref.~\cite{Aaberg2014}.

Interactions between systems are described by interaction Hamiltonians, and the time evolution of the quantum states, according to the Hamiltonians, conserves energy.
If we work with non-energy-degenerate qubits as our computational platforms, such as trapped ions, superconducting qubits, and optics, then the energy conservation restricts the set of possible operations implementable by the Hamiltonian dynamics.
In particular, if the quantum computer's memory is described by the Hamiltonian $H_S$, then unitary operations $U$ implementable within the energy-conservation law would satisfy a condition of
\begin{align}
    \label{eq:commutator_HS_U}
    [H_S,U] = 0.
\end{align}
In more general cases, a unitary operation to be applied to a target system may not satisfy this energy-conservation condition; however, the implementation can be made to conserve energy by enlarging the system it acts on to include an auxiliary system.
Such an auxiliary system for the control may also be called a catalyst or battery.
For a selection of references where this setting is discussed in further detail, see Refs.~\cite{Skrzypczyk2013,Skrzypczyk2014,Navascues2014,Aaberg2014}.
The local operation on the target system is then only the effective operation that arises by tracing out the auxiliary system.

\paragraph*{Toy example based on Ref.~\cite{Aaberg2014}.} Let us model the target system $\mathcal H_S=\mathbb C^2$ as a qubit and the auxiliary system $\mathcal H_C$ for the control as a harmonic oscillator, i.e., a half-infinite ladder. 
The two systems are governed by the Hamiltonian
\begin{align}
\label{eq:H_model}
    H = H_S + H_C = \omega \ket{1}\bra{1}_S \otimes \mathds 1_C + \mathds 1_S\otimes \sum_{n\geq 0}n\omega \ket{n}\bra{n}_C,
\end{align}
where $\{\ket{0}_S,\ket{1}_S\}$ and $\{\ket{n}_C:n=0,1,\ldots\}$ are orthonormal bases of $\mathcal H_S$ and $\mathcal{H}_C$, and $\mathds{1}_C$ and $\mathds{1}_S$ are the identity operators acting on $\mathcal H_S$ and $\mathcal{H}_C$, respectively.
If we desire to perform a unitary $U$ on our target qubit, we can do so in an energy-preserving way by defining the dilation on the composite system $\mathcal H_S \otimes \mathcal H_C$ as
\begin{align}\label{eq:VU_def}
    V[U] &\coloneqq \ket{0}\bra{0}_S\otimes\ket{0}\bra{0}_C+\nonumber\\
    &\quad\sum_{n\geq 1}\sum_{i,j=0}^1U_{i,j}\ket{i}\bra{j}_S\otimes\ket{n-i}\bra{n-j}_C,
\end{align}
where we write $U_{i,j}\coloneqq\bra{i}U\ket{j}$.
The operator $V[U]$ commutes with $H=H_S+H_C$ in~\eqref{eq:H_model} because it commutes termwise.
In particular, the term in the first line of~\eqref{eq:VU_def} commutes with $H$ because it is diagonal in terms of the same basis.
The same is true for all terms in the second line with $i=j$.
All other terms are of the form
\begin{equation}
    \text{(constant)}\times\ket{0}\bra{1}_S\otimes \ket{n}\bra{n-1}_C + \text{(h.c.)}
\end{equation}
where \text{(h.c.)}\ represents the hermitian conjugate of the first term, and $n>0$ is some integer.
These terms also commute since they preserve the degenerate subspaces $\text{span}_\mathbb{C}\{\ket{0}_S\otimes \ket{n}_C,\ket{1}_S\otimes\ket{n-1}_C\}$ of $H$ (recall, two diagonalizable matrices commute if and only if they are simultaneously diagonalizable, i.e., if and only if they preserve each other's eigenspace).

The question we now have is: for a given input state $\ket{\psi}_S=\sum_{j}\psi_j\ket{j}$ of the target system and an initially pure state $\ket{\phi}_C$ of the auxiliary system given by
\begin{align}
\label{eq:phi_C}
    \ket{\phi}_C=\sum_{n}\phi_n\ket{n}=\frac{1}{\sqrt{L}}\sum_{n=0}^{L-1}\ket{n+\ell_0}_C
\end{align}
with some constant $\ell_0\geq 1$, how close are $U\ket{\psi}_S$ and $\rho_S'=\tr_C\{V[U] \ket{\psi}\bra{\psi}_S\otimes \ket{\phi}\bra{\phi}_C V[U]^\dagger\}$ in general? 
We write the approximate implementation (in~\eqref{eq:VU_def}) of $U$ acting on the subsystem $S$ as a channel
\begin{align}
    \Lambda_U({\psi}_S) = \tr_C\big[V[U] \big(\ket{\psi}\bra{\psi}_S \otimes \ket{\phi}\bra{\phi}_C \big) V[U]^\dagger\big].
\end{align}
To estimate the quality of $\Lambda_U$ in approximating $U$, we calculate how $V[U]$ acts on the two spaces $\mathcal H_S\otimes \mathcal H_C$.
For this purpose, we define another state of the auxiliary system
\begin{align}
    \ket{\phi'}_C=\frac{1}{ \sqrt{L-2}}\sum_{n=1}^{L-2} \ket{n+\ell_0}_C,
\end{align}
which overlaps with $\ket{\phi}_C$ up to the two boundary terms $\ket{\ell_0}_C$ and $\ket{\ell_0+L-1}_C.$
With the help of this state, we can concisely write how $V[U]$ acts on $\ket{\psi}_S\otimes\ket{\phi}_C,$
i.e.,
\begin{align}
    V[U] (\ket{\psi}_S\otimes\ket{\phi}_C) &=  \sqrt{\frac{L-2}{L}}\qty(U\ket{\psi}_S)\otimes\ket{\phi'}_C \nonumber \\
    &+\sqrt{\frac{2}{L}}\ket{\chi}_{SC},\label{eq:VU_psi_phi}
\end{align}
where $\ket{\chi}_{SC}$ is some state orthogonal to $\qty(U\ket{\psi}_S)\otimes\ket{\phi'}_C.$ This state moreover satisfies
\begin{align}
\label{eq:chi_orthogonality}
    \bra{\chi}_{SC}\qty(\mathds{1}_S\otimes\ket{\phi'}_C)= 0.
\end{align}
Calculating the reduced density matrix $\rho_S'=\Lambda_U(\psi_S)$ yields, for any initial state $\ket{\psi}$,
\begin{align}
    \rho_S' &= \qty(1-\frac{2}{L})U\ket{\psi}\bra{\psi}_S U^\dagger + \frac{2}{L}\chi_S,\label{eq:rhoCprime_approx}
\end{align}
where $\chi_S = \tr_C\qty[\ket{\chi}\bra{\chi}_{SC}]$ is the remainder from the partial trace, and no cross terms from $\qty(U\ket{\psi}_S)\otimes\ket{\phi'}_C \bra{\chi}_{SC}+(\mathrm{h.c.})$ appear because of orthogonality shown in~\eqref{eq:chi_orthogonality}.

Now, we can prove that in the limit of $L\rightarrow\infty$, the channel $\Lambda_U$ converges to the original unitary channel for $U$.
In terms of channel fidelity $f(\Lambda_U,U)$ (see, e.g., Refs.~\cite{Horodecki1999,Nielsen2002,Magesan2011}), we can quantify how well $\Lambda_U$ approximates the unitary channel $U$ on average over the input states.
In our case, we have
\begin{align}
    f(\Lambda_U,U) &=\int d\mu(\psi) \braket{\psi}{U^\dagger \Lambda_U (\psi) U|\psi}_S \\
    &\hspace{-0.21cm}\stackrel{\text{\eqref{eq:rhoCprime_approx}}}{=}\hspace{-0.21cm}\int d\mu(\psi)\left(\braket{\psi}{U^\dagger U \ket{\psi}\bra{\psi} U^\dagger U|\psi} - O\qty(\frac{1}{L})\right) \nonumber \\
    &= \int d\mu(\psi) \left(1 - O\qty(\frac{1}{L})\right) \\
    &=1-O\left( \frac{1}{L} \right),\label{eq:f_LambdaU_U}
\end{align}
where the integral is over the normalized and uniform measure $d\mu(\psi)$ of the projective space $P(\mathcal H_S) \cong \mathbb CP^1$ of the single-qubit Hilbert space $\mathcal H_S\cong \mathbb C^2$.
The bound~\eqref{eq:f_LambdaU_U} follows from the approximation using~\eqref{eq:rhoCprime_approx} and the negative sign in front of the big-$O$ term comes from $|\braket{\psi}{\rho_S'|\psi}_S|< 1$ for all states $\rho_S' \neq \ket{\psi}\bra{\psi}_S$.
This shows that in the limit of large $L$, the two unitary channels corresponding to $U$ and $\Lambda_U$ coincide.

\paragraph*{Emergence of the control cost.}
In Ref.~\cite{Aaberg2014}, it is argued that the channel on the auxiliary system for the control, defined by $\rho_C \mapsto \rho_C' = \tr_S\{V(U)\rho_S\otimes\rho_C V(U)^\dagger\}$, always increases the entropy of the auxiliary system.
As a consequence, we can argue that the cost of resetting the state of the auxiliary system for the control will always be positive within this framework since the Landauer-erasure cost of reinitializing the auxiliary system is bounded by the positive entropy difference $S(\rho_C')-S(\rho_C)\geq 0$~\cite{Reeb2014}.
This fact also justifies the assumption that $E_\text{ctrl.}\geq 0$ in~\eqref{eq:control_cost_positive}.

To show this fact quantitatively, we here bound the increase of the entropy of the auxiliary system on average over the initial states $\ket{\psi}_S$ of the target system in performing $V[U]$.
To this end, we compute the average entropy change for the auxiliary system as
\begin{align}
\label{eq:DeltaS_def}
    \langle\Delta S_C\rangle &\coloneqq \int d\mu(\psi) \left|S(\rho_C')-S(\ket{\phi}\bra{\phi}_C)\right|\\
    &=\int d\mu(\psi)S(\rho_C'),
\end{align}
where the integral is over all possible initial states on the target system.
A lower bound of $S(\rho_C')$ can be established by using concavity of the entropy as
\begin{align}
    S(\rho_C') &= S(\rho_S')\\
    &= S\qty(\qty(1-\frac{2}{L})U\ket{\psi}\bra{\psi}_S U^\dagger + \frac{2}{L}\chi_S) \\
    &\geq \frac{2}{L} S(\chi_S),
\end{align}
where the first equality follows from the fact that the state $V[U](\ket{\psi}_S\otimes \ket{\phi}_C)$ in~\eqref{eq:VU_psi_phi} is pure, the second equality from~\eqref{eq:rhoCprime_approx}, and we write $\chi_S = \tr_C[\ket{\chi}\bra{\chi}_{SC}]$.
The entropy $S(\chi_S)\geq 0$ may depend on $U$ and $\ket{\psi}_S$ but is independent of the value of $L\geq 2$, which can be seen by explicitly calculating~\eqref{eq:VU_psi_phi} and observing that no factor of $L$ appears in $\ket{\chi}_{SC}$.
For some particular choices of unitary $U$, it may hold that $S(\chi_S)=0$ or even $S(\rho_C')=0$ (for example, for $U=\mathds 1$).
However, for general unitaries, the entropy does not vanish.
Under the assumption that the entropy $S(\chi_S)>0$ is strictly positive for at least a constant fraction of the initial state $\ket{\psi}_S$ among all possible choices, we asymptotically have
\begin{align}
\label{eq:SC_lower}
    \langle\Delta S_C\rangle = \Omega\qty(\frac{1}{L}),
\end{align}
which provides a lower bound of the control cost arising from reinitializing the auxiliary system by the Laudauer-erasure protocol per implementing $U$ with $V[U]$.

Apart from this control cost of reinitialization, in practice, one may also need to account for the energetic cost required for preparing the state~\eqref{eq:phi_C} of the auxiliary system.
We also note that when connecting these results to actual energy requirements in an experimental implementation of a quantum computer, other control costs like the ones to control the timing of the operations in implementing $V[U]$ should also be considered~\cite{Xuereb2023}.
As for the energetic cost,
by definition of $\ket{\phi}_C$ and $H_C$,
we have
\begin{align}
\label{eq:HC_O_L}
    \bra{\phi}H_C \ket{\phi} =\Omega(\omega L).
\end{align}
This scaling shows that preparing the state~\eqref{eq:phi_C} of the auxiliary system comes at an energetic cost that grows with the target channel fidelity for the implementation $\Lambda_U$ to approximate the gate $U$.
Letting $\varepsilon=1-f(\Lambda_U,U)$ denote the infidelity,
we can combine~\eqref{eq:f_LambdaU_U} and~\eqref{eq:HC_O_L} to qualitatively recover the scaling derived in Ref.~\cite{Chiribella2021}, i.e.,
\begin{align}
    \label{eq:HC_O_eps}
    \bra{\phi} H_C \ket{\phi} = O\left(\frac{\|H_S\|_\infty}{\varepsilon}\right).
\end{align}
A detailed analysis and the conception of a more optimal auxiliary system for the control are also given in Ref.~\cite{Chiribella2021}, where the linear dependency of the control cost $E_\text{ctrl.}=O(E_\mathrm{qubit})$ is also derived for a certain model, as in~\eqref{eq:E_ctrl} of the main text.
Note that a similar bound can also be obtained from the results in Refs.~\cite{PhysRevLett.121.110403,PhysRevResearch.2.043374,tajima2022universal_limitation,tajima2022universal_tradeoff,tajima}.
These bounds show that the assumptions on the control cost in our analysis indeed hold for the conventional theoretical models that we believe capture the essence of physics in practice.

\bibliography{bibliography}

\begin{thebibliography}{131}%
\makeatletter
\providecommand \@ifxundefined [1]{%
 \@ifx{#1\undefined}
}%
\providecommand \@ifnum [1]{%
 \ifnum #1\expandafter \@firstoftwo
 \else \expandafter \@secondoftwo
 \fi
}%
\providecommand \@ifx [1]{%
 \ifx #1\expandafter \@firstoftwo
 \else \expandafter \@secondoftwo
 \fi
}%
\providecommand \natexlab [1]{#1}%
\providecommand \enquote  [1]{``#1''}%
\providecommand \bibnamefont  [1]{#1}%
\providecommand \bibfnamefont [1]{#1}%
\providecommand \citenamefont [1]{#1}%
\providecommand \href@noop [0]{\@secondoftwo}%
\providecommand \href [0]{\begingroup \@sanitize@url \@href}%
\providecommand \@href[1]{\@@startlink{#1}\@@href}%
\providecommand \@@href[1]{\endgroup#1\@@endlink}%
\providecommand \@sanitize@url [0]{\catcode `\\12\catcode `\$12\catcode
  `\&12\catcode `\#12\catcode `\^12\catcode `\_12\catcode `\%12\relax}%
\providecommand \@@startlink[1]{}%
\providecommand \@@endlink[0]{}%
\providecommand \url  [0]{\begingroup\@sanitize@url \@url }%
\providecommand \@url [1]{\endgroup\@href {#1}{\urlprefix }}%
\providecommand \urlprefix  [0]{URL }%
\providecommand \Eprint [0]{\href }%
\providecommand \doibase [0]{https://doi.org/}%
\providecommand \selectlanguage [0]{\@gobble}%
\providecommand \bibinfo  [0]{\@secondoftwo}%
\providecommand \bibfield  [0]{\@secondoftwo}%
\providecommand \translation [1]{[#1]}%
\providecommand \BibitemOpen [0]{}%
\providecommand \bibitemStop [0]{}%
\providecommand \bibitemNoStop [0]{.\EOS\space}%
\providecommand \EOS [0]{\spacefactor3000\relax}%
\providecommand \BibitemShut  [1]{\csname bibitem#1\endcsname}%
\let\auto@bib@innerbib\@empty
\bibitem [{\citenamefont {Preskill}(2018)}]{Preskill2018quantumcomputingin}%
  \BibitemOpen
  \bibfield  {author} {\bibinfo {author} {\bibfnamefont {J.}~\bibnamefont
  {Preskill}},\ }\bibfield  {title} {\bibinfo {title} {Quantum {C}omputing in
  the {NISQ} era and beyond},\ }\href
  {https://doi.org/10.22331/q-2018-08-06-79} {\bibfield  {journal} {\bibinfo
  {journal} {{Quantum}}\ }\textbf {\bibinfo {volume} {2}},\ \bibinfo {pages}
  {79} (\bibinfo {year} {2018})}\BibitemShut {NoStop}%
\bibitem [{\citenamefont {Auff{\`{e}}ves}(2022)}]{Auffeves2022}%
  \BibitemOpen
  \bibfield  {author} {\bibinfo {author} {\bibfnamefont {A.}~\bibnamefont
  {Auff{\`{e}}ves}},\ }\bibfield  {title} {\bibinfo {title} {{Q}uantum
  {T}echnologies {N}eed a {Q}uantum {E}nergy {I}nitiative},\ }\href
  {https://doi.org/10.1103/prxquantum.3.020101} {\bibfield  {journal} {\bibinfo
   {journal} {{PRX} Quantum}\ }\textbf {\bibinfo {volume} {3}},\ \bibinfo
  {pages} {020101} (\bibinfo {year} {2022})}\BibitemShut {NoStop}%
\bibitem [{\citenamefont {Jaschke}\ and\ \citenamefont
  {Montangero}(2023)}]{Jaschke2023}%
  \BibitemOpen
  \bibfield  {author} {\bibinfo {author} {\bibfnamefont {D.}~\bibnamefont
  {Jaschke}}\ and\ \bibinfo {author} {\bibfnamefont {S.}~\bibnamefont
  {Montangero}},\ }\bibfield  {title} {\bibinfo {title} {Is quantum computing
  green? an estimate for an energy-efficiency quantum advantage},\ }\href
  {https://doi.org/10.1088/2058-9565/acae3e} {\bibfield  {journal} {\bibinfo
  {journal} {Quantum Science and Technology}\ }\textbf {\bibinfo {volume}
  {8}},\ \bibinfo {pages} {025001} (\bibinfo {year} {2023})}\BibitemShut
  {NoStop}%
\bibitem [{\citenamefont {Watrous}(2009)}]{Watrous2009}%
  \BibitemOpen
  \bibfield  {author} {\bibinfo {author} {\bibfnamefont {J.}~\bibnamefont
  {Watrous}},\ }\bibfield  {title} {\bibinfo {title} {{Q}uantum {C}omputational
  {C}omplexity},\ }in\ \href {https://doi.org/10.1007/978-0-387-30440-3_428}
  {\emph {\bibinfo {booktitle} {Encyclopedia of Complexity and Systems
  Science}}},\ \bibinfo {editor} {edited by\ \bibinfo {editor} {\bibfnamefont
  {R.~A.}\ \bibnamefont {Meyers}}}\ (\bibinfo  {publisher} {Springer New
  York},\ \bibinfo {address} {New York, NY},\ \bibinfo {year} {2009})\ pp.\
  \bibinfo {pages} {7174--7201}\BibitemShut {NoStop}%
\bibitem [{\citenamefont {Watrous}(2004)}]{10.1007/s00037-003-0177-8}%
  \BibitemOpen
  \bibfield  {author} {\bibinfo {author} {\bibfnamefont {J.}~\bibnamefont
  {Watrous}},\ }\bibfield  {title} {\bibinfo {title} {On the complexity of
  simulating space-bounded quantum computations},\ }\href
  {https://doi.org/10.1007/s00037-003-0177-8} {\bibfield  {journal} {\bibinfo
  {journal} {Comput. Complex.}\ }\textbf {\bibinfo {volume} {12}},\ \bibinfo
  {pages} {48–84} (\bibinfo {year} {2004})}\BibitemShut {NoStop}%
\bibitem [{\citenamefont {Benioff}(1982)}]{Benioff1982}%
  \BibitemOpen
  \bibfield  {author} {\bibinfo {author} {\bibfnamefont {P.}~\bibnamefont
  {Benioff}},\ }\bibfield  {title} {\bibinfo {title} {Quantum mechanical models
  of turing machines that dissipate no energy},\ }\href
  {https://doi.org/10.1103/PhysRevLett.48.1581} {\bibfield  {journal} {\bibinfo
   {journal} {Phys. Rev. Lett.}\ }\textbf {\bibinfo {volume} {48}},\ \bibinfo
  {pages} {1581} (\bibinfo {year} {1982})}\BibitemShut {NoStop}%
\bibitem [{\citenamefont {Feynman}(1986)}]{Feynman1986}%
  \BibitemOpen
  \bibfield  {author} {\bibinfo {author} {\bibfnamefont {R.~P.}\ \bibnamefont
  {Feynman}},\ }\bibfield  {title} {\bibinfo {title} {Quantum mechanical
  computers},\ }\href {https://doi.org/10.1007/BF01886518} {\bibfield
  {journal} {\bibinfo  {journal} {Foundations of Physics}\ }\textbf {\bibinfo
  {volume} {16}},\ \bibinfo {pages} {507} (\bibinfo {year} {1986})}\BibitemShut
  {NoStop}%
\bibitem [{\citenamefont {Fellous-Asiani}\ \emph {et~al.}(2023)\citenamefont
  {Fellous-Asiani}, \citenamefont {Chai}, \citenamefont {Thonnart},
  \citenamefont {Ng}, \citenamefont {Whitney},\ and\ \citenamefont
  {Auff\`eves}}]{FellousAsiani2022}%
  \BibitemOpen
  \bibfield  {author} {\bibinfo {author} {\bibfnamefont {M.}~\bibnamefont
  {Fellous-Asiani}}, \bibinfo {author} {\bibfnamefont {J.~H.}\ \bibnamefont
  {Chai}}, \bibinfo {author} {\bibfnamefont {Y.}~\bibnamefont {Thonnart}},
  \bibinfo {author} {\bibfnamefont {H.~K.}\ \bibnamefont {Ng}}, \bibinfo
  {author} {\bibfnamefont {R.~S.}\ \bibnamefont {Whitney}},\ and\ \bibinfo
  {author} {\bibfnamefont {A.}~\bibnamefont {Auff\`eves}},\ }\bibfield  {title}
  {\bibinfo {title} {Optimizing resource efficiencies for scalable full-stack
  quantum computers},\ }\href {https://doi.org/10.1103/PRXQuantum.4.040319}
  {\bibfield  {journal} {\bibinfo  {journal} {PRX Quantum}\ }\textbf {\bibinfo
  {volume} {4}},\ \bibinfo {pages} {040319} (\bibinfo {year}
  {2023})}\BibitemShut {NoStop}%
\bibitem [{\citenamefont {Nielsen}\ and\ \citenamefont
  {Chuang}(2010)}]{Nielsen2010}%
  \BibitemOpen
  \bibfield  {author} {\bibinfo {author} {\bibfnamefont {M.~A.}\ \bibnamefont
  {Nielsen}}\ and\ \bibinfo {author} {\bibfnamefont {I.~L.}\ \bibnamefont
  {Chuang}},\ }\href {https://doi.org/10.1017/CBO9780511976667} {\emph
  {\bibinfo {title} {Quantum {C}omputation and {Q}uantum {I}nformation}}},\
  \bibinfo {edition} {10th}\ ed.\ (\bibinfo  {publisher} {Cambridge University
  Press},\ \bibinfo {year} {2010})\BibitemShut {NoStop}%
\bibitem [{\citenamefont {Buhrman}\ and\ \citenamefont {{de
  Wolf}}(2002)}]{BUHRMAN200221}%
  \BibitemOpen
  \bibfield  {author} {\bibinfo {author} {\bibfnamefont {H.}~\bibnamefont
  {Buhrman}}\ and\ \bibinfo {author} {\bibfnamefont {R.}~\bibnamefont {{de
  Wolf}}},\ }\bibfield  {title} {\bibinfo {title} {Complexity measures and
  decision tree complexity: a survey},\ }\href
  {https://doi.org/https://doi.org/10.1016/S0304-3975(01)00144-X} {\bibfield
  {journal} {\bibinfo  {journal} {Theoretical Computer Science}\ }\textbf
  {\bibinfo {volume} {288}},\ \bibinfo {pages} {21} (\bibinfo {year} {2002})},\
  \bibinfo {note} {complexity and Logic}\BibitemShut {NoStop}%
\bibitem [{\citenamefont {{Landauer}}(1961)}]{Landauer1961}%
  \BibitemOpen
  \bibfield  {author} {\bibinfo {author} {\bibfnamefont {R.}~\bibnamefont
  {{Landauer}}},\ }\bibfield  {title} {\bibinfo {title} {{I}rreversibility and
  {H}eat {G}eneration in the {C}omputing {P}rocess},\ }\href
  {https://doi.org/10.1147/rd.53.0183} {\bibfield  {journal} {\bibinfo
  {journal} {IBM Journal of Research and Development}\ }\textbf {\bibinfo
  {volume} {5}},\ \bibinfo {pages} {183} (\bibinfo {year} {1961})}\BibitemShut
  {NoStop}%
\bibitem [{\citenamefont {Simon}(1994)}]{Simon1994}%
  \BibitemOpen
  \bibfield  {author} {\bibinfo {author} {\bibfnamefont {D.}~\bibnamefont
  {Simon}},\ }\bibfield  {title} {\bibinfo {title} {On the power of quantum
  computation},\ }in\ \href {https://doi.org/10.1109/SFCS.1994.365701} {\emph
  {\bibinfo {booktitle} {Proceedings 35th Annual Symposium on Foundations of
  Computer Science}}}\ (\bibinfo {year} {1994})\ pp.\ \bibinfo {pages}
  {116--123}\BibitemShut {NoStop}%
\bibitem [{\citenamefont {Simon}(1997)}]{Simon1997}%
  \BibitemOpen
  \bibfield  {author} {\bibinfo {author} {\bibfnamefont {D.~R.}\ \bibnamefont
  {Simon}},\ }\bibfield  {title} {\bibinfo {title} {On the power of quantum
  computation},\ }\href {https://doi.org/10.1137/S0097539796298637} {\bibfield
  {journal} {\bibinfo  {journal} {SIAM Journal on Computing}\ }\textbf
  {\bibinfo {volume} {26}},\ \bibinfo {pages} {1474} (\bibinfo {year}
  {1997})}\BibitemShut {NoStop}%
\bibitem [{\citenamefont {Mukhanov}(2011)}]{Mukhanov2011}%
  \BibitemOpen
  \bibfield  {author} {\bibinfo {author} {\bibfnamefont {O.~A.}\ \bibnamefont
  {Mukhanov}},\ }\bibfield  {title} {\bibinfo {title} {Energy-efficient single
  flux quantum technology},\ }\href {https://doi.org/10.1109/TASC.2010.2096792}
  {\bibfield  {journal} {\bibinfo  {journal} {IEEE Transactions on Applied
  Superconductivity}\ }\textbf {\bibinfo {volume} {21}},\ \bibinfo {pages}
  {760} (\bibinfo {year} {2011})}\BibitemShut {NoStop}%
\bibitem [{\citenamefont {Nernst}(1906)}]{Nernst1906}%
  \BibitemOpen
  \bibfield  {author} {\bibinfo {author} {\bibfnamefont {W.}~\bibnamefont
  {Nernst}},\ }\bibfield  {title} {\bibinfo {title} {{U}eber die {B}erechnung
  chemischer {G}leichgewichte aus thermischen {M}essungen},\ }\href
  {http://eudml.org/doc/58630} {\bibfield  {journal} {\bibinfo  {journal}
  {Nachrichten von der {G}esellschaft der {W}issenschaften zu {G}öttingen,
  {M}athematisch-{P}hysikalische {K}lasse}\ }\textbf {\bibinfo {volume}
  {1906}},\ \bibinfo {pages} {1} (\bibinfo {year} {1906})}\BibitemShut
  {NoStop}%
\bibitem [{\citenamefont {Taranto}\ \emph {et~al.}(2023)\citenamefont
  {Taranto}, \citenamefont {Bakhshinezhad}, \citenamefont {Bluhm},
  \citenamefont {Silva}, \citenamefont {Friis}, \citenamefont {Lock},
  \citenamefont {Vitagliano}, \citenamefont {Binder}, \citenamefont {Debarba},
  \citenamefont {Schwarzhans}, \citenamefont {Clivaz},\ and\ \citenamefont
  {Huber}}]{Taranto2021}%
  \BibitemOpen
  \bibfield  {author} {\bibinfo {author} {\bibfnamefont {P.}~\bibnamefont
  {Taranto}}, \bibinfo {author} {\bibfnamefont {F.}~\bibnamefont
  {Bakhshinezhad}}, \bibinfo {author} {\bibfnamefont {A.}~\bibnamefont
  {Bluhm}}, \bibinfo {author} {\bibfnamefont {R.}~\bibnamefont {Silva}},
  \bibinfo {author} {\bibfnamefont {N.}~\bibnamefont {Friis}}, \bibinfo
  {author} {\bibfnamefont {M.~P.~E.}\ \bibnamefont {Lock}}, \bibinfo {author}
  {\bibfnamefont {G.}~\bibnamefont {Vitagliano}}, \bibinfo {author}
  {\bibfnamefont {F.~C.}\ \bibnamefont {Binder}}, \bibinfo {author}
  {\bibfnamefont {T.}~\bibnamefont {Debarba}}, \bibinfo {author} {\bibfnamefont
  {E.}~\bibnamefont {Schwarzhans}}, \bibinfo {author} {\bibfnamefont
  {F.}~\bibnamefont {Clivaz}},\ and\ \bibinfo {author} {\bibfnamefont
  {M.}~\bibnamefont {Huber}},\ }\bibfield  {title} {\bibinfo {title} {Landauer
  versus nernst: What is the true cost of cooling a quantum system?},\ }\href
  {https://doi.org/10.1103/PRXQuantum.4.010332} {\bibfield  {journal} {\bibinfo
   {journal} {PRX Quantum}\ }\textbf {\bibinfo {volume} {4}},\ \bibinfo {pages}
  {010332} (\bibinfo {year} {2023})}\BibitemShut {NoStop}%
\bibitem [{\citenamefont {Cerezo}\ \emph {et~al.}(2022)\citenamefont {Cerezo},
  \citenamefont {Verdon}, \citenamefont {Huang}, \citenamefont {Cincio},\ and\
  \citenamefont {Coles}}]{Cerezo2022}%
  \BibitemOpen
  \bibfield  {author} {\bibinfo {author} {\bibfnamefont {M.}~\bibnamefont
  {Cerezo}}, \bibinfo {author} {\bibfnamefont {G.}~\bibnamefont {Verdon}},
  \bibinfo {author} {\bibfnamefont {H.-Y.}\ \bibnamefont {Huang}}, \bibinfo
  {author} {\bibfnamefont {L.}~\bibnamefont {Cincio}},\ and\ \bibinfo {author}
  {\bibfnamefont {P.~J.}\ \bibnamefont {Coles}},\ }\bibfield  {title} {\bibinfo
  {title} {Challenges and opportunities in quantum machine learning},\ }\href
  {https://www.nature.com/articles/s43588-022-00311-3} {\bibfield  {journal}
  {\bibinfo  {journal} {Nature Computational Science}\ }\textbf {\bibinfo
  {volume} {2}},\ \bibinfo {pages} {567} (\bibinfo {year} {2022})}\BibitemShut
  {NoStop}%
\bibitem [{\citenamefont {Huang}\ \emph {et~al.}(2021)\citenamefont {Huang},
  \citenamefont {Kueng},\ and\ \citenamefont
  {Preskill}}]{PhysRevLett.126.190505}%
  \BibitemOpen
  \bibfield  {author} {\bibinfo {author} {\bibfnamefont {H.-Y.}\ \bibnamefont
  {Huang}}, \bibinfo {author} {\bibfnamefont {R.}~\bibnamefont {Kueng}},\ and\
  \bibinfo {author} {\bibfnamefont {J.}~\bibnamefont {Preskill}},\ }\bibfield
  {title} {\bibinfo {title} {Information-theoretic bounds on quantum advantage
  in machine learning},\ }\href
  {https://doi.org/10.1103/PhysRevLett.126.190505} {\bibfield  {journal}
  {\bibinfo  {journal} {Phys. Rev. Lett.}\ }\textbf {\bibinfo {volume} {126}},\
  \bibinfo {pages} {190505} (\bibinfo {year} {2021})}\BibitemShut {NoStop}%
\bibitem [{\citenamefont {Aharonov}\ \emph {et~al.}(2022)\citenamefont
  {Aharonov}, \citenamefont {Cotler},\ and\ \citenamefont
  {Qi}}]{aharonov2022quantum}%
  \BibitemOpen
  \bibfield  {author} {\bibinfo {author} {\bibfnamefont {D.}~\bibnamefont
  {Aharonov}}, \bibinfo {author} {\bibfnamefont {J.}~\bibnamefont {Cotler}},\
  and\ \bibinfo {author} {\bibfnamefont {X.-L.}\ \bibnamefont {Qi}},\
  }\bibfield  {title} {\bibinfo {title} {Quantum algorithmic measurement},\
  }\href {https://www.nature.com/articles/s41467-021-27922-0} {\bibfield
  {journal} {\bibinfo  {journal} {Nature communications}\ }\textbf {\bibinfo
  {volume} {13}},\ \bibinfo {pages} {887} (\bibinfo {year} {2022})}\BibitemShut
  {NoStop}%
\bibitem [{\citenamefont {Chen}\ \emph {et~al.}(2022)\citenamefont {Chen},
  \citenamefont {Cotler}, \citenamefont {Huang},\ and\ \citenamefont
  {Li}}]{chen2022exponential}%
  \BibitemOpen
  \bibfield  {author} {\bibinfo {author} {\bibfnamefont {S.}~\bibnamefont
  {Chen}}, \bibinfo {author} {\bibfnamefont {J.}~\bibnamefont {Cotler}},
  \bibinfo {author} {\bibfnamefont {H.-Y.}\ \bibnamefont {Huang}},\ and\
  \bibinfo {author} {\bibfnamefont {J.}~\bibnamefont {Li}},\ }\bibfield
  {title} {\bibinfo {title} {Exponential separations between learning with and
  without quantum memory},\ }in\ \href
  {https://doi.org/10.1109/FOCS52979.2021.00063} {\emph {\bibinfo {booktitle}
  {2021 IEEE 62nd Annual Symposium on Foundations of Computer Science
  (FOCS)}}}\ (\bibinfo {year} {2022})\ pp.\ \bibinfo {pages}
  {574--585}\BibitemShut {NoStop}%
\bibitem [{\citenamefont {Huang}\ \emph {et~al.}(2022)\citenamefont {Huang},
  \citenamefont {Broughton}, \citenamefont {Cotler}, \citenamefont {Chen},
  \citenamefont {Li}, \citenamefont {Mohseni}, \citenamefont {Neven},
  \citenamefont {Babbush}, \citenamefont {Kueng}, \citenamefont {Preskill},\
  and\ \citenamefont {McClean}}]{huang2022quantum}%
  \BibitemOpen
  \bibfield  {author} {\bibinfo {author} {\bibfnamefont {H.-Y.}\ \bibnamefont
  {Huang}}, \bibinfo {author} {\bibfnamefont {M.}~\bibnamefont {Broughton}},
  \bibinfo {author} {\bibfnamefont {J.}~\bibnamefont {Cotler}}, \bibinfo
  {author} {\bibfnamefont {S.}~\bibnamefont {Chen}}, \bibinfo {author}
  {\bibfnamefont {J.}~\bibnamefont {Li}}, \bibinfo {author} {\bibfnamefont
  {M.}~\bibnamefont {Mohseni}}, \bibinfo {author} {\bibfnamefont
  {H.}~\bibnamefont {Neven}}, \bibinfo {author} {\bibfnamefont
  {R.}~\bibnamefont {Babbush}}, \bibinfo {author} {\bibfnamefont
  {R.}~\bibnamefont {Kueng}}, \bibinfo {author} {\bibfnamefont
  {J.}~\bibnamefont {Preskill}},\ and\ \bibinfo {author} {\bibfnamefont
  {J.~R.}\ \bibnamefont {McClean}},\ }\bibfield  {title} {\bibinfo {title}
  {Quantum advantage in learning from experiments},\ }\href
  {https://doi.org/10.1126/science.abn7293} {\bibfield  {journal} {\bibinfo
  {journal} {Science}\ }\textbf {\bibinfo {volume} {376}},\ \bibinfo {pages}
  {1182} (\bibinfo {year} {2022})}\BibitemShut {NoStop}%
\bibitem [{\citenamefont {Huang}\ \emph {et~al.}(2023)\citenamefont {Huang},
  \citenamefont {Chen},\ and\ \citenamefont {Preskill}}]{huang2023learning}%
  \BibitemOpen
  \bibfield  {author} {\bibinfo {author} {\bibfnamefont {H.-Y.}\ \bibnamefont
  {Huang}}, \bibinfo {author} {\bibfnamefont {S.}~\bibnamefont {Chen}},\ and\
  \bibinfo {author} {\bibfnamefont {J.}~\bibnamefont {Preskill}},\ }\bibfield
  {title} {\bibinfo {title} {Learning to predict arbitrary quantum processes},\
  }\href {https://doi.org/10.1103/PRXQuantum.4.040337} {\bibfield  {journal}
  {\bibinfo  {journal} {PRX Quantum}\ }\textbf {\bibinfo {volume} {4}},\
  \bibinfo {pages} {040337} (\bibinfo {year} {2023})}\BibitemShut {NoStop}%
\bibitem [{\citenamefont {Reeb}\ and\ \citenamefont {Wolf}(2014)}]{Reeb2014}%
  \BibitemOpen
  \bibfield  {author} {\bibinfo {author} {\bibfnamefont {D.}~\bibnamefont
  {Reeb}}\ and\ \bibinfo {author} {\bibfnamefont {M.~M.}\ \bibnamefont
  {Wolf}},\ }\bibfield  {title} {\bibinfo {title} {{A}n improved {L}andauer
  principle with finite-size corrections},\ }\href
  {https://doi.org/10.1088/1367-2630/16/10/103011} {\bibfield  {journal}
  {\bibinfo  {journal} {New Journal of Physics}\ }\textbf {\bibinfo {volume}
  {16}},\ \bibinfo {pages} {103011} (\bibinfo {year} {2014})}\BibitemShut
  {NoStop}%
\bibitem [{\citenamefont {Proesmans}\ \emph {et~al.}(2020)\citenamefont
  {Proesmans}, \citenamefont {Ehrich},\ and\ \citenamefont
  {Bechhoefer}}]{Proesmans2020}%
  \BibitemOpen
  \bibfield  {author} {\bibinfo {author} {\bibfnamefont {K.}~\bibnamefont
  {Proesmans}}, \bibinfo {author} {\bibfnamefont {J.}~\bibnamefont {Ehrich}},\
  and\ \bibinfo {author} {\bibfnamefont {J.}~\bibnamefont {Bechhoefer}},\
  }\bibfield  {title} {\bibinfo {title} {Finite-time landauer principle},\
  }\href {https://doi.org/10.1103/PhysRevLett.125.100602} {\bibfield  {journal}
  {\bibinfo  {journal} {Phys. Rev. Lett.}\ }\textbf {\bibinfo {volume} {125}},\
  \bibinfo {pages} {100602} (\bibinfo {year} {2020})}\BibitemShut {NoStop}%
\bibitem [{\citenamefont {Van~Vu}\ and\ \citenamefont
  {Saito}(2022)}]{VanVu2022}%
  \BibitemOpen
  \bibfield  {author} {\bibinfo {author} {\bibfnamefont {T.}~\bibnamefont
  {Van~Vu}}\ and\ \bibinfo {author} {\bibfnamefont {K.}~\bibnamefont {Saito}},\
  }\bibfield  {title} {\bibinfo {title} {Finite-time quantum landauer principle
  and quantum coherence},\ }\href
  {https://doi.org/10.1103/PhysRevLett.128.010602} {\bibfield  {journal}
  {\bibinfo  {journal} {Phys. Rev. Lett.}\ }\textbf {\bibinfo {volume} {128}},\
  \bibinfo {pages} {010602} (\bibinfo {year} {2022})}\BibitemShut {NoStop}%
\bibitem [{\citenamefont {Rolandi}\ and\ \citenamefont
  {Perarnau-Llobet}(2023)}]{Rolandi2022}%
  \BibitemOpen
  \bibfield  {author} {\bibinfo {author} {\bibfnamefont {A.}~\bibnamefont
  {Rolandi}}\ and\ \bibinfo {author} {\bibfnamefont {M.}~\bibnamefont
  {Perarnau-Llobet}},\ }\bibfield  {title} {\bibinfo {title} {Finite-time
  {L}andauer principle beyond weak coupling},\ }\href
  {https://doi.org/10.22331/q-2023-11-03-1161} {\bibfield  {journal} {\bibinfo
  {journal} {{Quantum}}\ }\textbf {\bibinfo {volume} {7}},\ \bibinfo {pages}
  {1161} (\bibinfo {year} {2023})}\BibitemShut {NoStop}%
\bibitem [{\citenamefont {Bravyi}\ \emph {et~al.}(2018)\citenamefont {Bravyi},
  \citenamefont {Gosset},\ and\ \citenamefont
  {König}}]{doi:10.1126/science.aar3106}%
  \BibitemOpen
  \bibfield  {author} {\bibinfo {author} {\bibfnamefont {S.}~\bibnamefont
  {Bravyi}}, \bibinfo {author} {\bibfnamefont {D.}~\bibnamefont {Gosset}},\
  and\ \bibinfo {author} {\bibfnamefont {R.}~\bibnamefont {König}},\
  }\bibfield  {title} {\bibinfo {title} {Quantum advantage with shallow
  circuits},\ }\href {https://doi.org/10.1126/science.aar3106} {\bibfield
  {journal} {\bibinfo  {journal} {Science}\ }\textbf {\bibinfo {volume}
  {362}},\ \bibinfo {pages} {308} (\bibinfo {year} {2018})}\BibitemShut
  {NoStop}%
\bibitem [{\citenamefont {Ozawa}(2002)}]{Ozawa2002}%
  \BibitemOpen
  \bibfield  {author} {\bibinfo {author} {\bibfnamefont {M.}~\bibnamefont
  {Ozawa}},\ }\bibfield  {title} {\bibinfo {title} {{Conservative Quantum
  Computing}},\ }\href {https://doi.org/10.1103/PhysRevLett.89.057902}
  {\bibfield  {journal} {\bibinfo  {journal} {Phys. Rev. Lett.}\ }\textbf
  {\bibinfo {volume} {89}},\ \bibinfo {pages} {057902} (\bibinfo {year}
  {2002})}\BibitemShut {NoStop}%
\bibitem [{\citenamefont {Gea-Banacloche}\ and\ \citenamefont
  {Ozawa}(2006)}]{GeaBanacloche2006}%
  \BibitemOpen
  \bibfield  {author} {\bibinfo {author} {\bibfnamefont {J.}~\bibnamefont
  {Gea-Banacloche}}\ and\ \bibinfo {author} {\bibfnamefont {M.}~\bibnamefont
  {Ozawa}},\ }\bibfield  {title} {\bibinfo {title} {Minimum-energy pulses for
  quantum logic cannot be shared},\ }\href
  {https://doi.org/10.1103/PhysRevA.74.060301} {\bibfield  {journal} {\bibinfo
  {journal} {Phys. Rev. A}\ }\textbf {\bibinfo {volume} {74}},\ \bibinfo
  {pages} {060301(R)} (\bibinfo {year} {2006})}\BibitemShut {NoStop}%
\bibitem [{\citenamefont {Ikonen}\ \emph {et~al.}(2017)\citenamefont {Ikonen},
  \citenamefont {Salmilehto},\ and\ \citenamefont {Möttönen}}]{Ikonen2017}%
  \BibitemOpen
  \bibfield  {author} {\bibinfo {author} {\bibfnamefont {J.}~\bibnamefont
  {Ikonen}}, \bibinfo {author} {\bibfnamefont {J.}~\bibnamefont {Salmilehto}},\
  and\ \bibinfo {author} {\bibfnamefont {M.}~\bibnamefont {Möttönen}},\
  }\bibfield  {title} {\bibinfo {title} {Energy-efficient quantum computing},\
  }\href {https://doi.org/10.1038/s41534-017-0015-5} {\bibfield  {journal}
  {\bibinfo  {journal} {npj Quantum Information}\ }\textbf {\bibinfo {volume}
  {3}},\ \bibinfo {pages} {17} (\bibinfo {year} {2017})}\BibitemShut {NoStop}%
\bibitem [{\citenamefont {Stevens}\ \emph {et~al.}(2022)\citenamefont
  {Stevens}, \citenamefont {Szombati}, \citenamefont {Maffei}, \citenamefont
  {Elouard}, \citenamefont {Assouly}, \citenamefont {Cottet}, \citenamefont
  {Dassonneville}, \citenamefont {Ficheux}, \citenamefont {Zeppetzauer},
  \citenamefont {Bienfait}, \citenamefont {Jordan}, \citenamefont
  {Auff\`eves},\ and\ \citenamefont {Huard}}]{Stevens2022}%
  \BibitemOpen
  \bibfield  {author} {\bibinfo {author} {\bibfnamefont {J.}~\bibnamefont
  {Stevens}}, \bibinfo {author} {\bibfnamefont {D.}~\bibnamefont {Szombati}},
  \bibinfo {author} {\bibfnamefont {M.}~\bibnamefont {Maffei}}, \bibinfo
  {author} {\bibfnamefont {C.}~\bibnamefont {Elouard}}, \bibinfo {author}
  {\bibfnamefont {R.}~\bibnamefont {Assouly}}, \bibinfo {author} {\bibfnamefont
  {N.}~\bibnamefont {Cottet}}, \bibinfo {author} {\bibfnamefont
  {R.}~\bibnamefont {Dassonneville}}, \bibinfo {author} {\bibfnamefont
  {Q.}~\bibnamefont {Ficheux}}, \bibinfo {author} {\bibfnamefont
  {S.}~\bibnamefont {Zeppetzauer}}, \bibinfo {author} {\bibfnamefont
  {A.}~\bibnamefont {Bienfait}}, \bibinfo {author} {\bibfnamefont {A.~N.}\
  \bibnamefont {Jordan}}, \bibinfo {author} {\bibfnamefont {A.}~\bibnamefont
  {Auff\`eves}},\ and\ \bibinfo {author} {\bibfnamefont {B.}~\bibnamefont
  {Huard}},\ }\bibfield  {title} {\bibinfo {title} {{Energetics of a Single
  Qubit Gate}},\ }\href {https://doi.org/10.1103/PhysRevLett.129.110601}
  {\bibfield  {journal} {\bibinfo  {journal} {Phys. Rev. Lett.}\ }\textbf
  {\bibinfo {volume} {129}},\ \bibinfo {pages} {110601} (\bibinfo {year}
  {2022})}\BibitemShut {NoStop}%
\bibitem [{\citenamefont {Chiribella}\ \emph {et~al.}(2021)\citenamefont
  {Chiribella}, \citenamefont {Yang},\ and\ \citenamefont
  {Renner}}]{Chiribella2021}%
  \BibitemOpen
  \bibfield  {author} {\bibinfo {author} {\bibfnamefont {G.}~\bibnamefont
  {Chiribella}}, \bibinfo {author} {\bibfnamefont {Y.}~\bibnamefont {Yang}},\
  and\ \bibinfo {author} {\bibfnamefont {R.}~\bibnamefont {Renner}},\
  }\bibfield  {title} {\bibinfo {title} {{F}undamental {E}nergy {R}equirement
  of {R}eversible {Q}uantum {O}perations},\ }\href
  {https://doi.org/10.1103/PhysRevX.11.021014} {\bibfield  {journal} {\bibinfo
  {journal} {Phys. Rev. X}\ }\textbf {\bibinfo {volume} {11}},\ \bibinfo
  {pages} {021014} (\bibinfo {year} {2021})}\BibitemShut {NoStop}%
\bibitem [{\citenamefont {Yang}\ \emph {et~al.}(2022)\citenamefont {Yang},
  \citenamefont {Renner},\ and\ \citenamefont {Chiribella}}]{Yang2022}%
  \BibitemOpen
  \bibfield  {author} {\bibinfo {author} {\bibfnamefont {Y.}~\bibnamefont
  {Yang}}, \bibinfo {author} {\bibfnamefont {R.}~\bibnamefont {Renner}},\ and\
  \bibinfo {author} {\bibfnamefont {G.}~\bibnamefont {Chiribella}},\ }\bibfield
   {title} {\bibinfo {title} {Energy requirement for implementing unitary gates
  on energy-unbounded systems},\ }\href
  {https://doi.org/10.1088/1751-8121/ac717e} {\bibfield  {journal} {\bibinfo
  {journal} {Journal of Physics A: Mathematical and Theoretical}\ }\textbf
  {\bibinfo {volume} {55}},\ \bibinfo {pages} {494003} (\bibinfo {year}
  {2022})}\BibitemShut {NoStop}%
\bibitem [{\citenamefont {Tajima}\ \emph {et~al.}(2018)\citenamefont {Tajima},
  \citenamefont {Shiraishi},\ and\ \citenamefont
  {Saito}}]{PhysRevLett.121.110403}%
  \BibitemOpen
  \bibfield  {author} {\bibinfo {author} {\bibfnamefont {H.}~\bibnamefont
  {Tajima}}, \bibinfo {author} {\bibfnamefont {N.}~\bibnamefont {Shiraishi}},\
  and\ \bibinfo {author} {\bibfnamefont {K.}~\bibnamefont {Saito}},\ }\bibfield
   {title} {\bibinfo {title} {Uncertainty relations in implementation of
  unitary operations},\ }\href {https://doi.org/10.1103/PhysRevLett.121.110403}
  {\bibfield  {journal} {\bibinfo  {journal} {Phys. Rev. Lett.}\ }\textbf
  {\bibinfo {volume} {121}},\ \bibinfo {pages} {110403} (\bibinfo {year}
  {2018})}\BibitemShut {NoStop}%
\bibitem [{\citenamefont {Tajima}\ \emph {et~al.}(2020)\citenamefont {Tajima},
  \citenamefont {Shiraishi},\ and\ \citenamefont
  {Saito}}]{PhysRevResearch.2.043374}%
  \BibitemOpen
  \bibfield  {author} {\bibinfo {author} {\bibfnamefont {H.}~\bibnamefont
  {Tajima}}, \bibinfo {author} {\bibfnamefont {N.}~\bibnamefont {Shiraishi}},\
  and\ \bibinfo {author} {\bibfnamefont {K.}~\bibnamefont {Saito}},\ }\bibfield
   {title} {\bibinfo {title} {Coherence cost for violating conservation laws},\
  }\href {https://doi.org/10.1103/PhysRevResearch.2.043374} {\bibfield
  {journal} {\bibinfo  {journal} {Phys. Rev. Res.}\ }\textbf {\bibinfo {volume}
  {2}},\ \bibinfo {pages} {043374} (\bibinfo {year} {2020})}\BibitemShut
  {NoStop}%
\bibitem [{\citenamefont {Guryanova}\ \emph {et~al.}(2020)\citenamefont
  {Guryanova}, \citenamefont {Friis},\ and\ \citenamefont
  {Huber}}]{Guryanova2020}%
  \BibitemOpen
  \bibfield  {author} {\bibinfo {author} {\bibfnamefont {Y.}~\bibnamefont
  {Guryanova}}, \bibinfo {author} {\bibfnamefont {N.}~\bibnamefont {Friis}},\
  and\ \bibinfo {author} {\bibfnamefont {M.}~\bibnamefont {Huber}},\ }\bibfield
   {title} {\bibinfo {title} {{I}deal {P}rojective {M}easurements {H}ave
  {I}nfinite {R}esource {C}osts},\ }\href
  {https://doi.org/10.22331/q-2020-01-13-222} {\bibfield  {journal} {\bibinfo
  {journal} {Quantum}\ }\textbf {\bibinfo {volume} {4}},\ \bibinfo {pages}
  {222} (\bibinfo {year} {2020})}\BibitemShut {NoStop}%
\bibitem [{\citenamefont {Bennett}(1973)}]{bennett1973logical}%
  \BibitemOpen
  \bibfield  {author} {\bibinfo {author} {\bibfnamefont {C.~H.}\ \bibnamefont
  {Bennett}},\ }\bibfield  {title} {\bibinfo {title} {Logical reversibility of
  computation},\ }\href {https://ieeexplore.ieee.org/document/5391327}
  {\bibfield  {journal} {\bibinfo  {journal} {IBM journal of Research and
  Development}\ }\textbf {\bibinfo {volume} {17}},\ \bibinfo {pages} {525}
  (\bibinfo {year} {1973})}\BibitemShut {NoStop}%
\bibitem [{\citenamefont {Bennett}(1982)}]{Bennett1982}%
  \BibitemOpen
  \bibfield  {author} {\bibinfo {author} {\bibfnamefont {C.~H.}\ \bibnamefont
  {Bennett}},\ }\bibfield  {title} {\bibinfo {title} {The thermodynamics of
  computation{\textemdash}a review},\ }\href
  {https://doi.org/10.1007/bf02084158} {\bibfield  {journal} {\bibinfo
  {journal} {International Journal of Theoretical Physics}\ }\textbf {\bibinfo
  {volume} {21}},\ \bibinfo {pages} {905} (\bibinfo {year} {1982})}\BibitemShut
  {NoStop}%
\bibitem [{\citenamefont {Baumeler}\ and\ \citenamefont
  {Wolf}(2019)}]{Baumeler2019}%
  \BibitemOpen
  \bibfield  {author} {\bibinfo {author} {\bibfnamefont {A.}~\bibnamefont
  {Baumeler}}\ and\ \bibinfo {author} {\bibfnamefont {S.}~\bibnamefont
  {Wolf}},\ }\bibfield  {title} {\bibinfo {title} {Free energy of a general
  computation},\ }\href {https://doi.org/10.1103/PhysRevE.100.052115}
  {\bibfield  {journal} {\bibinfo  {journal} {Phys. Rev. E}\ }\textbf {\bibinfo
  {volume} {100}},\ \bibinfo {pages} {052115} (\bibinfo {year}
  {2019})}\BibitemShut {NoStop}%
\bibitem [{\citenamefont {Dong}\ \emph {et~al.}(2021)\citenamefont {Dong},
  \citenamefont {Quintino}, \citenamefont {Soeda},\ and\ \citenamefont
  {Murao}}]{PhysRevLett.126.150504}%
  \BibitemOpen
  \bibfield  {author} {\bibinfo {author} {\bibfnamefont {Q.}~\bibnamefont
  {Dong}}, \bibinfo {author} {\bibfnamefont {M.~T.}\ \bibnamefont {Quintino}},
  \bibinfo {author} {\bibfnamefont {A.}~\bibnamefont {Soeda}},\ and\ \bibinfo
  {author} {\bibfnamefont {M.}~\bibnamefont {Murao}},\ }\bibfield  {title}
  {\bibinfo {title} {Success-or-draw: A strategy allowing repeat-until-success
  in quantum computation},\ }\href
  {https://doi.org/10.1103/PhysRevLett.126.150504} {\bibfield  {journal}
  {\bibinfo  {journal} {Phys. Rev. Lett.}\ }\textbf {\bibinfo {volume} {126}},\
  \bibinfo {pages} {150504} (\bibinfo {year} {2021})}\BibitemShut {NoStop}%
\bibitem [{\citenamefont {Yoshida}\ \emph
  {et~al.}(2023{\natexlab{a}})\citenamefont {Yoshida}, \citenamefont {Soeda},\
  and\ \citenamefont {Murao}}]{Yoshida2023universal}%
  \BibitemOpen
  \bibfield  {author} {\bibinfo {author} {\bibfnamefont {S.}~\bibnamefont
  {Yoshida}}, \bibinfo {author} {\bibfnamefont {A.}~\bibnamefont {Soeda}},\
  and\ \bibinfo {author} {\bibfnamefont {M.}~\bibnamefont {Murao}},\ }\bibfield
   {title} {\bibinfo {title} {Universal construction of decoders from encoding
  black boxes},\ }\href {https://doi.org/10.22331/q-2023-03-20-957} {\bibfield
  {journal} {\bibinfo  {journal} {{Quantum}}\ }\textbf {\bibinfo {volume}
  {7}},\ \bibinfo {pages} {957} (\bibinfo {year}
  {2023}{\natexlab{a}})}\BibitemShut {NoStop}%
\bibitem [{\citenamefont {Yoshida}\ \emph
  {et~al.}(2023{\natexlab{b}})\citenamefont {Yoshida}, \citenamefont {Soeda},\
  and\ \citenamefont {Murao}}]{yoshida2023reversing}%
  \BibitemOpen
  \bibfield  {author} {\bibinfo {author} {\bibfnamefont {S.}~\bibnamefont
  {Yoshida}}, \bibinfo {author} {\bibfnamefont {A.}~\bibnamefont {Soeda}},\
  and\ \bibinfo {author} {\bibfnamefont {M.}~\bibnamefont {Murao}},\ }\bibfield
   {title} {\bibinfo {title} {Reversing unknown qubit-unitary operation,
  deterministically and exactly},\ }\href
  {https://doi.org/10.1103/PhysRevLett.131.120602} {\bibfield  {journal}
  {\bibinfo  {journal} {Phys. Rev. Lett.}\ }\textbf {\bibinfo {volume} {131}},\
  \bibinfo {pages} {120602} (\bibinfo {year} {2023}{\natexlab{b}})}\BibitemShut
  {NoStop}%
\bibitem [{\citenamefont {Chen}\ \emph {et~al.}(2024)\citenamefont {Chen},
  \citenamefont {Mo}, \citenamefont {Liu}, \citenamefont {Zhang},\ and\
  \citenamefont {Wang}}]{Chen2024}%
  \BibitemOpen
  \bibfield  {author} {\bibinfo {author} {\bibfnamefont {Y.-A.}\ \bibnamefont
  {Chen}}, \bibinfo {author} {\bibfnamefont {Y.}~\bibnamefont {Mo}}, \bibinfo
  {author} {\bibfnamefont {Y.}~\bibnamefont {Liu}}, \bibinfo {author}
  {\bibfnamefont {L.}~\bibnamefont {Zhang}},\ and\ \bibinfo {author}
  {\bibfnamefont {X.}~\bibnamefont {Wang}},\ }\bibfield  {title} {\bibinfo
  {title} {Quantum advantage in reversing unknown unitary evolutions},\
  }\Eprint {https://arxiv.org/abs/2403.04704} {arXiv:2403.04704 [quant-ph]}
  (\bibinfo {year} {2024})\BibitemShut {NoStop}%
\bibitem [{\citenamefont {Kitaev}(1997{\natexlab{a}})}]{Kitaev1997}%
  \BibitemOpen
  \bibfield  {author} {\bibinfo {author} {\bibfnamefont {A.~Y.}\ \bibnamefont
  {Kitaev}},\ }\bibfield  {title} {\bibinfo {title} {Quantum computations:
  algorithms and error correction},\ }\href
  {https://doi.org/10.1070/RM1997v052n06ABEH002155} {\bibfield  {journal}
  {\bibinfo  {journal} {Russian Mathematical Surveys}\ }\textbf {\bibinfo
  {volume} {52}},\ \bibinfo {pages} {1191} (\bibinfo {year}
  {1997}{\natexlab{a}})}\BibitemShut {NoStop}%
\bibitem [{\citenamefont {Brand\~ao}\ \emph {et~al.}(2013)\citenamefont
  {Brand\~ao}, \citenamefont {Horodecki}, \citenamefont {Oppenheim},
  \citenamefont {Renes},\ and\ \citenamefont {Spekkens}}]{Brandao2013}%
  \BibitemOpen
  \bibfield  {author} {\bibinfo {author} {\bibfnamefont {F.~G. S.~L.}\
  \bibnamefont {Brand\~ao}}, \bibinfo {author} {\bibfnamefont {M.}~\bibnamefont
  {Horodecki}}, \bibinfo {author} {\bibfnamefont {J.}~\bibnamefont
  {Oppenheim}}, \bibinfo {author} {\bibfnamefont {J.~M.}\ \bibnamefont
  {Renes}},\ and\ \bibinfo {author} {\bibfnamefont {R.~W.}\ \bibnamefont
  {Spekkens}},\ }\bibfield  {title} {\bibinfo {title} {{R}esource {T}heory of
  {Q}uantum {S}tates {O}ut of {T}hermal {E}quilibrium},\ }\href
  {https://doi.org/10.1103/PhysRevLett.111.250404} {\bibfield  {journal}
  {\bibinfo  {journal} {Phys. Rev. Lett.}\ }\textbf {\bibinfo {volume} {111}},\
  \bibinfo {pages} {250404} (\bibinfo {year} {2013})}\BibitemShut {NoStop}%
\bibitem [{\citenamefont {Lostaglio}(2019)}]{Lostaglio2019}%
  \BibitemOpen
  \bibfield  {author} {\bibinfo {author} {\bibfnamefont {M.}~\bibnamefont
  {Lostaglio}},\ }\bibfield  {title} {\bibinfo {title} {An introductory review
  of the resource theory approach to thermodynamics},\ }\href
  {https://doi.org/10.1088/1361-6633/ab46e5} {\bibfield  {journal} {\bibinfo
  {journal} {Reports on Progress in Physics}\ }\textbf {\bibinfo {volume}
  {82}},\ \bibinfo {pages} {114001} (\bibinfo {year} {2019})}\BibitemShut
  {NoStop}%
\bibitem [{\citenamefont {Liu}\ \emph {et~al.}(2019)\citenamefont {Liu},
  \citenamefont {Bu},\ and\ \citenamefont {Takagi}}]{Liu2019}%
  \BibitemOpen
  \bibfield  {author} {\bibinfo {author} {\bibfnamefont {Z.-W.}\ \bibnamefont
  {Liu}}, \bibinfo {author} {\bibfnamefont {K.}~\bibnamefont {Bu}},\ and\
  \bibinfo {author} {\bibfnamefont {R.}~\bibnamefont {Takagi}},\ }\bibfield
  {title} {\bibinfo {title} {{O}ne-{S}hot {O}perational {Q}uantum {R}esource
  {T}heory},\ }\href {https://doi.org/10.1103/PhysRevLett.123.020401}
  {\bibfield  {journal} {\bibinfo  {journal} {Phys. Rev. Lett.}\ }\textbf
  {\bibinfo {volume} {123}},\ \bibinfo {pages} {020401} (\bibinfo {year}
  {2019})}\BibitemShut {NoStop}%
\bibitem [{\citenamefont {Misra}\ \emph {et~al.}(2016)\citenamefont {Misra},
  \citenamefont {Singh}, \citenamefont {Bhattacharya},\ and\ \citenamefont
  {Pati}}]{Misra2016}%
  \BibitemOpen
  \bibfield  {author} {\bibinfo {author} {\bibfnamefont {A.}~\bibnamefont
  {Misra}}, \bibinfo {author} {\bibfnamefont {U.}~\bibnamefont {Singh}},
  \bibinfo {author} {\bibfnamefont {S.}~\bibnamefont {Bhattacharya}},\ and\
  \bibinfo {author} {\bibfnamefont {A.~K.}\ \bibnamefont {Pati}},\ }\bibfield
  {title} {\bibinfo {title} {Energy cost of creating quantum coherence},\
  }\href {https://doi.org/10.1103/PhysRevA.93.052335} {\bibfield  {journal}
  {\bibinfo  {journal} {Phys. Rev. A}\ }\textbf {\bibinfo {volume} {93}},\
  \bibinfo {pages} {052335} (\bibinfo {year} {2016})}\BibitemShut {NoStop}%
\bibitem [{\citenamefont {Bakhshinezhad}\ \emph {et~al.}(2019)\citenamefont
  {Bakhshinezhad}, \citenamefont {Clivaz}, \citenamefont {Vitagliano},
  \citenamefont {Erker}, \citenamefont {Rezakhani}, \citenamefont {Huber},\
  and\ \citenamefont {Friis}}]{Bakhshinezhad2019}%
  \BibitemOpen
  \bibfield  {author} {\bibinfo {author} {\bibfnamefont {F.}~\bibnamefont
  {Bakhshinezhad}}, \bibinfo {author} {\bibfnamefont {F.}~\bibnamefont
  {Clivaz}}, \bibinfo {author} {\bibfnamefont {G.}~\bibnamefont {Vitagliano}},
  \bibinfo {author} {\bibfnamefont {P.}~\bibnamefont {Erker}}, \bibinfo
  {author} {\bibfnamefont {A.}~\bibnamefont {Rezakhani}}, \bibinfo {author}
  {\bibfnamefont {M.}~\bibnamefont {Huber}},\ and\ \bibinfo {author}
  {\bibfnamefont {N.}~\bibnamefont {Friis}},\ }\bibfield  {title} {\bibinfo
  {title} {Thermodynamically optimal creation of correlations},\ }\href
  {https://doi.org/10.1088/1751-8121/ab3932} {\bibfield  {journal} {\bibinfo
  {journal} {Journal of Physics A: Mathematical and Theoretical}\ }\textbf
  {\bibinfo {volume} {52}},\ \bibinfo {pages} {465303} (\bibinfo {year}
  {2019})}\BibitemShut {NoStop}%
\bibitem [{\citenamefont {Aharonov}\ and\ \citenamefont
  {Ben-Or}(1997{\natexlab{a}})}]{10.1145/258533.258579}%
  \BibitemOpen
  \bibfield  {author} {\bibinfo {author} {\bibfnamefont {D.}~\bibnamefont
  {Aharonov}}\ and\ \bibinfo {author} {\bibfnamefont {M.}~\bibnamefont
  {Ben-Or}},\ }\bibfield  {title} {\bibinfo {title} {Fault-tolerant quantum
  computation with constant error},\ }in\ \href
  {https://doi.org/10.1145/258533.258579} {\emph {\bibinfo {booktitle}
  {Proceedings of the Twenty-Ninth Annual ACM Symposium on Theory of
  Computing}}},\ \bibinfo {series and number} {STOC '97}\ (\bibinfo
  {publisher} {Association for Computing Machinery},\ \bibinfo {address} {New
  York, NY, USA},\ \bibinfo {year} {1997})\ p.\ \bibinfo {pages}
  {176}\BibitemShut {NoStop}%
\bibitem [{\citenamefont {Aharonov}\ and\ \citenamefont
  {Ben-Or}(2008{\natexlab{a}})}]{10.1137/S0097539799359385}%
  \BibitemOpen
  \bibfield  {author} {\bibinfo {author} {\bibfnamefont {D.}~\bibnamefont
  {Aharonov}}\ and\ \bibinfo {author} {\bibfnamefont {M.}~\bibnamefont
  {Ben-Or}},\ }\bibfield  {title} {\bibinfo {title} {Fault-tolerant quantum
  computation with constant error rate},\ }\href
  {https://doi.org/10.1137/S0097539799359385} {\bibfield  {journal} {\bibinfo
  {journal} {SIAM J. Comput.}\ }\textbf {\bibinfo {volume} {38}},\ \bibinfo
  {pages} {1207} (\bibinfo {year} {2008}{\natexlab{a}})}\BibitemShut {NoStop}%
\bibitem [{\citenamefont {Shor}(1996)}]{548464}%
  \BibitemOpen
  \bibfield  {author} {\bibinfo {author} {\bibfnamefont {P.~W.}\ \bibnamefont
  {Shor}},\ }\bibfield  {title} {\bibinfo {title} {Fault-tolerant quantum
  computation},\ }in\ \href {https://dl.acm.org/doi/10.5555/874062.875509}
  {\emph {\bibinfo {booktitle} {Proceedings of the 37th Annual Symposium on
  Foundations of Computer Science}}},\ \bibinfo {series and number} {FOCS '96}\
  (\bibinfo  {publisher} {IEEE Computer Society},\ \bibinfo {address} {USA},\
  \bibinfo {year} {1996})\ p.~\bibinfo {pages} {56}\BibitemShut {NoStop}%
\bibitem [{\citenamefont {Kitaev}(1997{\natexlab{b}})}]{Kitaev_1997}%
  \BibitemOpen
  \bibfield  {author} {\bibinfo {author} {\bibfnamefont {A.~Y.}\ \bibnamefont
  {Kitaev}},\ }\bibfield  {title} {\bibinfo {title} {Quantum computations:
  algorithms and error correction},\ }\href
  {https://doi.org/10.1070/rm1997v052n06abeh002155} {\bibfield  {journal}
  {\bibinfo  {journal} {Russian Mathematical Surveys}\ }\textbf {\bibinfo
  {volume} {52}},\ \bibinfo {pages} {1191} (\bibinfo {year}
  {1997}{\natexlab{b}})}\BibitemShut {NoStop}%
\bibitem [{\citenamefont {Knill}\ \emph {et~al.}(1998)\citenamefont {Knill},
  \citenamefont {Laflamme},\ and\ \citenamefont
  {Zurek}}]{doi:10.1098/rspa.1998.0166}%
  \BibitemOpen
  \bibfield  {author} {\bibinfo {author} {\bibfnamefont {E.}~\bibnamefont
  {Knill}}, \bibinfo {author} {\bibfnamefont {R.}~\bibnamefont {Laflamme}},\
  and\ \bibinfo {author} {\bibfnamefont {W.~H.}\ \bibnamefont {Zurek}},\
  }\bibfield  {title} {\bibinfo {title} {Resilient quantum computation: error
  models and thresholds},\ }\href {https://doi.org/10.1098/rspa.1998.0166}
  {\bibfield  {journal} {\bibinfo  {journal} {Proceedings of the Royal Society
  of London. Series A: Mathematical, Physical and Engineering Sciences}\
  }\textbf {\bibinfo {volume} {454}},\ \bibinfo {pages} {365} (\bibinfo {year}
  {1998})}\BibitemShut {NoStop}%
\bibitem [{\citenamefont {Aliferis}\ \emph {et~al.}(2006)\citenamefont
  {Aliferis}, \citenamefont {Gottesman},\ and\ \citenamefont
  {Preskill}}]{10.5555/2011665.2011666}%
  \BibitemOpen
  \bibfield  {author} {\bibinfo {author} {\bibfnamefont {P.}~\bibnamefont
  {Aliferis}}, \bibinfo {author} {\bibfnamefont {D.}~\bibnamefont
  {Gottesman}},\ and\ \bibinfo {author} {\bibfnamefont {J.}~\bibnamefont
  {Preskill}},\ }\bibfield  {title} {\bibinfo {title} {Quantum accuracy
  threshold for concatenated distance-3 codes},\ }\href
  {https://doi.org/10.26421/QIC6.2-1} {\bibfield  {journal} {\bibinfo
  {journal} {Quantum Info. Comput.}\ }\textbf {\bibinfo {volume} {6}},\
  \bibinfo {pages} {97} (\bibinfo {year} {2006})}\BibitemShut {NoStop}%
\bibitem [{\citenamefont {Reichardt}(2006)}]{10.1007/11786986_6}%
  \BibitemOpen
  \bibfield  {author} {\bibinfo {author} {\bibfnamefont {B.~W.}\ \bibnamefont
  {Reichardt}},\ }\bibfield  {title} {\bibinfo {title} {Fault-tolerance
  threshold for a distance-three quantum code},\ }in\ \href
  {https://doi.org/10.1007/11786986_6} {\emph {\bibinfo {booktitle}
  {Proceedings of the 33rd International Conference on Automata, Languages and
  Programming - Volume Part I}}},\ \bibinfo {series and number} {ICALP'06}\
  (\bibinfo  {publisher} {Springer-Verlag},\ \bibinfo {address} {Berlin,
  Heidelberg},\ \bibinfo {year} {2006})\ p.\ \bibinfo {pages}
  {50–61}\BibitemShut {NoStop}%
\bibitem [{\citenamefont {Terhal}\ and\ \citenamefont
  {Burkard}(2005)}]{PhysRevA.71.012336}%
  \BibitemOpen
  \bibfield  {author} {\bibinfo {author} {\bibfnamefont {B.~M.}\ \bibnamefont
  {Terhal}}\ and\ \bibinfo {author} {\bibfnamefont {G.}~\bibnamefont
  {Burkard}},\ }\bibfield  {title} {\bibinfo {title} {Fault-tolerant quantum
  computation for local non-markovian noise},\ }\href
  {https://doi.org/10.1103/PhysRevA.71.012336} {\bibfield  {journal} {\bibinfo
  {journal} {Phys. Rev. A}\ }\textbf {\bibinfo {volume} {71}},\ \bibinfo
  {pages} {012336} (\bibinfo {year} {2005})}\BibitemShut {NoStop}%
\bibitem [{\citenamefont {Dennis}\ \emph {et~al.}(2002)\citenamefont {Dennis},
  \citenamefont {Kitaev}, \citenamefont {Landahl},\ and\ \citenamefont
  {Preskill}}]{doi:10.1063/1.1499754}%
  \BibitemOpen
  \bibfield  {author} {\bibinfo {author} {\bibfnamefont {E.}~\bibnamefont
  {Dennis}}, \bibinfo {author} {\bibfnamefont {A.}~\bibnamefont {Kitaev}},
  \bibinfo {author} {\bibfnamefont {A.}~\bibnamefont {Landahl}},\ and\ \bibinfo
  {author} {\bibfnamefont {J.}~\bibnamefont {Preskill}},\ }\bibfield  {title}
  {\bibinfo {title} {Topological quantum memory},\ }\href
  {https://doi.org/10.1063/1.1499754} {\bibfield  {journal} {\bibinfo
  {journal} {Journal of Mathematical Physics}\ }\textbf {\bibinfo {volume}
  {43}},\ \bibinfo {pages} {4452} (\bibinfo {year} {2002})}\BibitemShut
  {NoStop}%
\bibitem [{\citenamefont {Fowler}(2012)}]{PhysRevLett.109.180502}%
  \BibitemOpen
  \bibfield  {author} {\bibinfo {author} {\bibfnamefont {A.~G.}\ \bibnamefont
  {Fowler}},\ }\bibfield  {title} {\bibinfo {title} {Proof of finite surface
  code threshold for matching},\ }\href
  {https://doi.org/10.1103/PhysRevLett.109.180502} {\bibfield  {journal}
  {\bibinfo  {journal} {Phys. Rev. Lett.}\ }\textbf {\bibinfo {volume} {109}},\
  \bibinfo {pages} {180502} (\bibinfo {year} {2012})}\BibitemShut {NoStop}%
\bibitem [{\citenamefont {Gottesman}(2014)}]{gottesman2014faulttolerant}%
  \BibitemOpen
  \bibfield  {author} {\bibinfo {author} {\bibfnamefont {D.}~\bibnamefont
  {Gottesman}},\ }\bibfield  {title} {\bibinfo {title} {Fault-tolerant quantum
  computation with constant overhead},\ }\href
  {https://doi.org/10.26421/QIC14.15-16-5} {\bibfield  {journal} {\bibinfo
  {journal} {Quantum Info. Comput.}\ }\textbf {\bibinfo {volume} {14}},\
  \bibinfo {pages} {1338} (\bibinfo {year} {2014})}\BibitemShut {NoStop}%
\bibitem [{\citenamefont {Yamasaki}\ and\ \citenamefont
  {Koashi}(2024)}]{Yamasaki2022}%
  \BibitemOpen
  \bibfield  {author} {\bibinfo {author} {\bibfnamefont {H.}~\bibnamefont
  {Yamasaki}}\ and\ \bibinfo {author} {\bibfnamefont {M.}~\bibnamefont
  {Koashi}},\ }\bibfield  {title} {\bibinfo {title} {Time-efficient
  constant-space-overhead fault-tolerant quantum computation},\ }\href
  {https://doi.org/10.1038/s41567-023-02325-8} {\bibfield  {journal} {\bibinfo
  {journal} {Nature Physics}\ }\textbf {\bibinfo {volume} {20}},\ \bibinfo
  {pages} {247} (\bibinfo {year} {2024})}\BibitemShut {NoStop}%
\bibitem [{\citenamefont {Park}\ \emph {et~al.}(2016)\citenamefont {Park},
  \citenamefont {Rodriguez-Briones}, \citenamefont {Feng}, \citenamefont
  {Rahimi}, \citenamefont {Baugh},\ and\ \citenamefont {Laflamme}}]{Park2016}%
  \BibitemOpen
  \bibfield  {author} {\bibinfo {author} {\bibfnamefont {D.~K.}\ \bibnamefont
  {Park}}, \bibinfo {author} {\bibfnamefont {N.~A.}\ \bibnamefont
  {Rodriguez-Briones}}, \bibinfo {author} {\bibfnamefont {G.}~\bibnamefont
  {Feng}}, \bibinfo {author} {\bibfnamefont {R.}~\bibnamefont {Rahimi}},
  \bibinfo {author} {\bibfnamefont {J.}~\bibnamefont {Baugh}},\ and\ \bibinfo
  {author} {\bibfnamefont {R.}~\bibnamefont {Laflamme}},\ }\bibfield  {title}
  {\bibinfo {title} {{H}eat {B}ath {A}lgorithmic {C}ooling with {S}pins:
  {R}eview and {P}rospects},\ }in\ \href
  {https://doi.org/10.1007/978-1-4939-3658-8_8} {\emph {\bibinfo {booktitle}
  {Electron Spin Resonance (ESR) Based Quantum Computing}}},\ \bibinfo {editor}
  {edited by\ \bibinfo {editor} {\bibfnamefont {T.}~\bibnamefont {Takui}},
  \bibinfo {editor} {\bibfnamefont {L.}~\bibnamefont {Berliner}},\ and\
  \bibinfo {editor} {\bibfnamefont {G.}~\bibnamefont {Hanson}}}\ (\bibinfo
  {publisher} {Springer New York},\ \bibinfo {address} {New York, NY},\
  \bibinfo {year} {2016})\ pp.\ \bibinfo {pages} {227--255}\BibitemShut
  {NoStop}%
\bibitem [{\citenamefont {Alhambra}\ \emph {et~al.}(2019)\citenamefont
  {Alhambra}, \citenamefont {Lostaglio},\ and\ \citenamefont
  {Perry}}]{Alhambra2019}%
  \BibitemOpen
  \bibfield  {author} {\bibinfo {author} {\bibfnamefont {{\'{A}}.~M.}\
  \bibnamefont {Alhambra}}, \bibinfo {author} {\bibfnamefont {M.}~\bibnamefont
  {Lostaglio}},\ and\ \bibinfo {author} {\bibfnamefont {C.}~\bibnamefont
  {Perry}},\ }\bibfield  {title} {\bibinfo {title} {{H}eat-{B}ath {A}lgorithmic
  {C}ooling with optimal thermalization strategies},\ }\href
  {https://doi.org/10.22331/q-2019-09-23-188} {\bibfield  {journal} {\bibinfo
  {journal} {Quantum}\ }\textbf {\bibinfo {volume} {3}},\ \bibinfo {pages}
  {188} (\bibinfo {year} {2019})}\BibitemShut {NoStop}%
\bibitem [{\citenamefont {Serafini}\ \emph {et~al.}(2020)\citenamefont
  {Serafini}, \citenamefont {Lostaglio}, \citenamefont {Longden}, \citenamefont
  {Shackerley-Bennett}, \citenamefont {Hsieh},\ and\ \citenamefont
  {Adesso}}]{Serafini2020}%
  \BibitemOpen
  \bibfield  {author} {\bibinfo {author} {\bibfnamefont {A.}~\bibnamefont
  {Serafini}}, \bibinfo {author} {\bibfnamefont {M.}~\bibnamefont {Lostaglio}},
  \bibinfo {author} {\bibfnamefont {S.}~\bibnamefont {Longden}}, \bibinfo
  {author} {\bibfnamefont {U.}~\bibnamefont {Shackerley-Bennett}}, \bibinfo
  {author} {\bibfnamefont {C.-Y.}\ \bibnamefont {Hsieh}},\ and\ \bibinfo
  {author} {\bibfnamefont {G.}~\bibnamefont {Adesso}},\ }\bibfield  {title}
  {\bibinfo {title} {{G}aussian {T}hermal {O}perations and {T}he {L}imits of
  {A}lgorithmic {C}ooling},\ }\href
  {https://doi.org/10.1103/PhysRevLett.124.010602} {\bibfield  {journal}
  {\bibinfo  {journal} {Phys. Rev. Lett.}\ }\textbf {\bibinfo {volume} {124}},\
  \bibinfo {pages} {010602} (\bibinfo {year} {2020})}\BibitemShut {NoStop}%
\bibitem [{\citenamefont {Soldati}\ \emph {et~al.}(2022)\citenamefont
  {Soldati}, \citenamefont {Dasari}, \citenamefont {Wrachtrup},\ and\
  \citenamefont {Lutz}}]{Soldati2021}%
  \BibitemOpen
  \bibfield  {author} {\bibinfo {author} {\bibfnamefont {R.~R.}\ \bibnamefont
  {Soldati}}, \bibinfo {author} {\bibfnamefont {D.~B.~R.}\ \bibnamefont
  {Dasari}}, \bibinfo {author} {\bibfnamefont {J.}~\bibnamefont {Wrachtrup}},\
  and\ \bibinfo {author} {\bibfnamefont {E.}~\bibnamefont {Lutz}},\ }\bibfield
  {title} {\bibinfo {title} {Thermodynamics of a minimal algorithmic cooling
  refrigerator},\ }\href {https://doi.org/10.1103/PhysRevLett.129.030601}
  {\bibfield  {journal} {\bibinfo  {journal} {Phys. Rev. Lett.}\ }\textbf
  {\bibinfo {volume} {129}},\ \bibinfo {pages} {030601} (\bibinfo {year}
  {2022})}\BibitemShut {NoStop}%
\bibitem [{\citenamefont {Taranto}\ \emph {et~al.}(2024)\citenamefont
  {Taranto}, \citenamefont {Lipka-Bartosik}, \citenamefont
  {Rodríguez-Briones}, \citenamefont {Perarnau-Llobet}, \citenamefont {Friis},
  \citenamefont {Huber},\ and\ \citenamefont {Bakhshinezhad}}]{Taranto2024}%
  \BibitemOpen
  \bibfield  {author} {\bibinfo {author} {\bibfnamefont {P.}~\bibnamefont
  {Taranto}}, \bibinfo {author} {\bibfnamefont {P.}~\bibnamefont
  {Lipka-Bartosik}}, \bibinfo {author} {\bibfnamefont {N.~A.}\ \bibnamefont
  {Rodríguez-Briones}}, \bibinfo {author} {\bibfnamefont {M.}~\bibnamefont
  {Perarnau-Llobet}}, \bibinfo {author} {\bibfnamefont {N.}~\bibnamefont
  {Friis}}, \bibinfo {author} {\bibfnamefont {M.}~\bibnamefont {Huber}},\ and\
  \bibinfo {author} {\bibfnamefont {P.}~\bibnamefont {Bakhshinezhad}},\
  }\bibfield  {title} {\bibinfo {title} {Efficiently cooling quantum systems
  with finite resources: Insights from thermodynamic geometry},\ }\Eprint
  {https://arxiv.org/abs/2404.06649} {arXiv:2404.06649 [quant-ph]}  (\bibinfo
  {year} {2024})\BibitemShut {NoStop}%
\bibitem [{\citenamefont {Wilming}\ and\ \citenamefont
  {Gallego}(2017)}]{Wilming2017}%
  \BibitemOpen
  \bibfield  {author} {\bibinfo {author} {\bibfnamefont {H.}~\bibnamefont
  {Wilming}}\ and\ \bibinfo {author} {\bibfnamefont {R.}~\bibnamefont
  {Gallego}},\ }\bibfield  {title} {\bibinfo {title} {{T}hird {L}aw of
  {T}hermodynamics as a {S}ingle {I}nequality},\ }\href
  {https://doi.org/10.1103/PhysRevX.7.041033} {\bibfield  {journal} {\bibinfo
  {journal} {Phys. Rev. X}\ }\textbf {\bibinfo {volume} {7}},\ \bibinfo {pages}
  {041033} (\bibinfo {year} {2017})}\BibitemShut {NoStop}%
\bibitem [{\citenamefont {Masanes}\ and\ \citenamefont
  {Oppenheim}(2017)}]{Masanes2017}%
  \BibitemOpen
  \bibfield  {author} {\bibinfo {author} {\bibfnamefont {L.}~\bibnamefont
  {Masanes}}\ and\ \bibinfo {author} {\bibfnamefont {J.}~\bibnamefont
  {Oppenheim}},\ }\bibfield  {title} {\bibinfo {title} {A general derivation
  and quantification of the third law of thermodynamics},\ }\href
  {https://doi.org/10.1038/ncomms14538} {\bibfield  {journal} {\bibinfo
  {journal} {Nature Communications}\ }\textbf {\bibinfo {volume} {8}},\
  \bibinfo {pages} {14538} (\bibinfo {year} {2017})}\BibitemShut {NoStop}%
\bibitem [{\citenamefont {Freitas}\ \emph {et~al.}(2018)\citenamefont
  {Freitas}, \citenamefont {Gallego}, \citenamefont {Masanes},\ and\
  \citenamefont {Paz}}]{Freitas2018}%
  \BibitemOpen
  \bibfield  {author} {\bibinfo {author} {\bibfnamefont {N.}~\bibnamefont
  {Freitas}}, \bibinfo {author} {\bibfnamefont {R.}~\bibnamefont {Gallego}},
  \bibinfo {author} {\bibfnamefont {L.}~\bibnamefont {Masanes}},\ and\ \bibinfo
  {author} {\bibfnamefont {J.~P.}\ \bibnamefont {Paz}},\ }\bibfield  {title}
  {\bibinfo {title} {{C}ooling to {A}bsolute {Z}ero: {T}he {U}nattainability
  {P}rinciple},\ }\href {https://doi.org/10.1007/978-3-319-99046-0_25}
  {\bibfield  {journal} {\bibinfo  {journal} {Thermodynamics in the Quantum
  Regime}\ ,\ \bibinfo {pages} {597–622}} (\bibinfo {year}
  {2018})}\BibitemShut {NoStop}%
\bibitem [{\citenamefont {Buffoni}\ and\ \citenamefont
  {Campisi}(2023)}]{Buffoni2023}%
  \BibitemOpen
  \bibfield  {author} {\bibinfo {author} {\bibfnamefont {L.}~\bibnamefont
  {Buffoni}}\ and\ \bibinfo {author} {\bibfnamefont {M.}~\bibnamefont
  {Campisi}},\ }\bibfield  {title} {\bibinfo {title} {Cooperative quantum
  information erasure},\ }\href {https://doi.org/10.22331/q-2023-03-23-961}
  {\bibfield  {journal} {\bibinfo  {journal} {{Quantum}}\ }\textbf {\bibinfo
  {volume} {7}},\ \bibinfo {pages} {961} (\bibinfo {year} {2023})}\BibitemShut
  {NoStop}%
\bibitem [{\citenamefont {Rolandi}\ \emph {et~al.}(2023)\citenamefont
  {Rolandi}, \citenamefont {Abiuso},\ and\ \citenamefont
  {Perarnau-Llobet}}]{Rolandi2023a}%
  \BibitemOpen
  \bibfield  {author} {\bibinfo {author} {\bibfnamefont {A.}~\bibnamefont
  {Rolandi}}, \bibinfo {author} {\bibfnamefont {P.}~\bibnamefont {Abiuso}},\
  and\ \bibinfo {author} {\bibfnamefont {M.}~\bibnamefont {Perarnau-Llobet}},\
  }\bibfield  {title} {\bibinfo {title} {Collective advantages in finite-time
  thermodynamics},\ }\href {https://doi.org/10.1103/PhysRevLett.131.210401}
  {\bibfield  {journal} {\bibinfo  {journal} {Phys. Rev. Lett.}\ }\textbf
  {\bibinfo {volume} {131}},\ \bibinfo {pages} {210401} (\bibinfo {year}
  {2023})}\BibitemShut {NoStop}%
\bibitem [{\citenamefont {Gottesman}(2010)}]{G}%
  \BibitemOpen
  \bibfield  {author} {\bibinfo {author} {\bibfnamefont {D.}~\bibnamefont
  {Gottesman}},\ }\bibfield  {title} {\bibinfo {title} {An introduction to
  quantum error correction and fault-tolerant quantum computation},\ }in\ \href
  {https://doi.org/10.1090/psapm/068} {\emph {\bibinfo {booktitle} {Quantum
  information science and its contributions to mathematics}}},\ \bibinfo
  {series} {Proceedings of Symposia in Applied Mathematics}, Vol.~\bibinfo
  {volume} {68}\ (\bibinfo  {publisher} {American Mathematical Society},\
  \bibinfo {address} {Providence, Rhode Island},\ \bibinfo {year} {2010})\ pp.\
  \bibinfo {pages} {13--58}\BibitemShut {NoStop}%
\bibitem [{\citenamefont {Jiang}\ \emph {et~al.}(2023)\citenamefont {Jiang},
  \citenamefont {Scott}, \citenamefont {Friesen},\ and\ \citenamefont
  {Saffman}}]{Jiang2022}%
  \BibitemOpen
  \bibfield  {author} {\bibinfo {author} {\bibfnamefont {X.}~\bibnamefont
  {Jiang}}, \bibinfo {author} {\bibfnamefont {J.}~\bibnamefont {Scott}},
  \bibinfo {author} {\bibfnamefont {M.}~\bibnamefont {Friesen}},\ and\ \bibinfo
  {author} {\bibfnamefont {M.}~\bibnamefont {Saffman}},\ }\bibfield  {title}
  {\bibinfo {title} {Sensitivity of quantum gate fidelity to laser phase and
  intensity noise},\ }\href {https://doi.org/10.1103/PhysRevA.107.042611}
  {\bibfield  {journal} {\bibinfo  {journal} {Phys. Rev. A}\ }\textbf {\bibinfo
  {volume} {107}},\ \bibinfo {pages} {042611} (\bibinfo {year}
  {2023})}\BibitemShut {NoStop}%
\bibitem [{\citenamefont {Ball}\ \emph {et~al.}(2016)\citenamefont {Ball},
  \citenamefont {Oliver},\ and\ \citenamefont {Biercuk}}]{Ball2016}%
  \BibitemOpen
  \bibfield  {author} {\bibinfo {author} {\bibfnamefont {H.}~\bibnamefont
  {Ball}}, \bibinfo {author} {\bibfnamefont {W.~D.}\ \bibnamefont {Oliver}},\
  and\ \bibinfo {author} {\bibfnamefont {M.~J.}\ \bibnamefont {Biercuk}},\
  }\bibfield  {title} {\bibinfo {title} {The role of master clock stability in
  quantum information processing},\ }\href
  {https://doi.org/10.1038/npjqi.2016.33} {\bibfield  {journal} {\bibinfo
  {journal} {npj Quantum Information}\ }\textbf {\bibinfo {volume} {2}},\
  \bibinfo {pages} {16033} (\bibinfo {year} {2016})}\BibitemShut {NoStop}%
\bibitem [{\citenamefont {Xuereb}\ \emph {et~al.}(2023)\citenamefont {Xuereb},
  \citenamefont {Erker}, \citenamefont {Meier}, \citenamefont {Mitchison},\
  and\ \citenamefont {Huber}}]{Xuereb2023}%
  \BibitemOpen
  \bibfield  {author} {\bibinfo {author} {\bibfnamefont {J.}~\bibnamefont
  {Xuereb}}, \bibinfo {author} {\bibfnamefont {P.}~\bibnamefont {Erker}},
  \bibinfo {author} {\bibfnamefont {F.}~\bibnamefont {Meier}}, \bibinfo
  {author} {\bibfnamefont {M.~T.}\ \bibnamefont {Mitchison}},\ and\ \bibinfo
  {author} {\bibfnamefont {M.}~\bibnamefont {Huber}},\ }\bibfield  {title}
  {\bibinfo {title} {The impact of imperfect timekeeping on quantum control},\
  }\Eprint {https://arxiv.org/abs/2301.10767} {arXiv:2301.10767 [quant-ph]}
  (\bibinfo {year} {2023})\BibitemShut {NoStop}%
\bibitem [{\citenamefont {Tamiya}\ \emph {et~al.}(2024)\citenamefont {Tamiya},
  \citenamefont {Koashi},\ and\ \citenamefont
  {Yamasaki}}]{tamiya2024polylogtimeconstantspaceoverheadfaulttolerantquantum}%
  \BibitemOpen
  \bibfield  {author} {\bibinfo {author} {\bibfnamefont {S.}~\bibnamefont
  {Tamiya}}, \bibinfo {author} {\bibfnamefont {M.}~\bibnamefont {Koashi}},\
  and\ \bibinfo {author} {\bibfnamefont {H.}~\bibnamefont {Yamasaki}},\ }\href
  {https://arxiv.org/abs/2411.03683} {\bibinfo {title} {Polylog-time- and
  constant-space-overhead fault-tolerant quantum computation with quantum
  low-density parity-check codes}} (\bibinfo {year} {2024}),\ \Eprint
  {https://arxiv.org/abs/2411.03683} {arXiv:2411.03683 [quant-ph]} \BibitemShut
  {NoStop}%
\bibitem [{\citenamefont {Aharonov}\ \emph {et~al.}(1996)\citenamefont
  {Aharonov}, \citenamefont {Ben-Or}, \citenamefont {Impagliazzo},\ and\
  \citenamefont {Nisan}}]{aharonov1996limitations}%
  \BibitemOpen
  \bibfield  {author} {\bibinfo {author} {\bibfnamefont {D.}~\bibnamefont
  {Aharonov}}, \bibinfo {author} {\bibfnamefont {M.}~\bibnamefont {Ben-Or}},
  \bibinfo {author} {\bibfnamefont {R.}~\bibnamefont {Impagliazzo}},\ and\
  \bibinfo {author} {\bibfnamefont {N.}~\bibnamefont {Nisan}},\ }\href@noop {}
  {\bibinfo {title} {Limitations of noisy reversible computation}} (\bibinfo
  {year} {1996}),\ \Eprint {https://arxiv.org/abs/quant-ph/9611028}
  {arXiv:quant-ph/9611028 [quant-ph]} \BibitemShut {NoStop}%
\bibitem [{\citenamefont {Deffner}\ \emph {et~al.}(2016)\citenamefont
  {Deffner}, \citenamefont {Paz},\ and\ \citenamefont {Zurek}}]{Deffner2016}%
  \BibitemOpen
  \bibfield  {author} {\bibinfo {author} {\bibfnamefont {S.}~\bibnamefont
  {Deffner}}, \bibinfo {author} {\bibfnamefont {J.~P.}\ \bibnamefont {Paz}},\
  and\ \bibinfo {author} {\bibfnamefont {W.~H.}\ \bibnamefont {Zurek}},\
  }\bibfield  {title} {\bibinfo {title} {Quantum work and the thermodynamic
  cost of quantum measurements},\ }\href
  {https://doi.org/10.1103/PhysRevE.94.010103} {\bibfield  {journal} {\bibinfo
  {journal} {Phys. Rev. E}\ }\textbf {\bibinfo {volume} {94}},\ \bibinfo
  {pages} {010103(R)} (\bibinfo {year} {2016})}\BibitemShut {NoStop}%
\bibitem [{\citenamefont {Debarba}\ \emph {et~al.}(2019)\citenamefont
  {Debarba}, \citenamefont {Manzano}, \citenamefont {Guryanova}, \citenamefont
  {Huber},\ and\ \citenamefont {Friis}}]{Debarba2019}%
  \BibitemOpen
  \bibfield  {author} {\bibinfo {author} {\bibfnamefont {T.}~\bibnamefont
  {Debarba}}, \bibinfo {author} {\bibfnamefont {G.}~\bibnamefont {Manzano}},
  \bibinfo {author} {\bibfnamefont {Y.}~\bibnamefont {Guryanova}}, \bibinfo
  {author} {\bibfnamefont {M.}~\bibnamefont {Huber}},\ and\ \bibinfo {author}
  {\bibfnamefont {N.}~\bibnamefont {Friis}},\ }\bibfield  {title} {\bibinfo
  {title} {Work estimation and work fluctuations in the presence of non-ideal
  measurements},\ }\href {https://doi.org/10.1088/1367-2630/ab4d9d} {\bibfield
  {journal} {\bibinfo  {journal} {New Journal of Physics}\ }\textbf {\bibinfo
  {volume} {21}},\ \bibinfo {pages} {113002} (\bibinfo {year}
  {2019})}\BibitemShut {NoStop}%
\bibitem [{\citenamefont {Aharonov}\ and\ \citenamefont
  {Ben-Or}(1997{\natexlab{b}})}]{Aharonov1997}%
  \BibitemOpen
  \bibfield  {author} {\bibinfo {author} {\bibfnamefont {D.}~\bibnamefont
  {Aharonov}}\ and\ \bibinfo {author} {\bibfnamefont {M.}~\bibnamefont
  {Ben-Or}},\ }\bibfield  {title} {\bibinfo {title} {{Fault-Tolerant Quantum
  Computation with Constant Error}},\ }in\ \href
  {https://doi.org/10.1145/258533.258579} {\emph {\bibinfo {booktitle}
  {Proceedings of the Twenty-Ninth Annual ACM Symposium on Theory of
  Computing}}},\ \bibinfo {series and number} {STOC '97}\ (\bibinfo
  {publisher} {Association for Computing Machinery},\ \bibinfo {address} {New
  York, NY, USA},\ \bibinfo {year} {1997})\ p.\ \bibinfo {pages}
  {176–188}\BibitemShut {NoStop}%
\bibitem [{\citenamefont {Aharonov}\ and\ \citenamefont
  {Ben-Or}(2008{\natexlab{b}})}]{Aharonov2008}%
  \BibitemOpen
  \bibfield  {author} {\bibinfo {author} {\bibfnamefont {D.}~\bibnamefont
  {Aharonov}}\ and\ \bibinfo {author} {\bibfnamefont {M.}~\bibnamefont
  {Ben-Or}},\ }\bibfield  {title} {\bibinfo {title} {{F}ault-{T}olerant
  {Q}uantum {C}omputation with {C}onstant {E}rror {R}ate},\ }\href
  {https://doi.org/10.1137/S0097539799359385} {\bibfield  {journal} {\bibinfo
  {journal} {SIAM J. Comput.}\ }\textbf {\bibinfo {volume} {38}},\ \bibinfo
  {pages} {1207} (\bibinfo {year} {2008}{\natexlab{b}})}\BibitemShut {NoStop}%
\bibitem [{\citenamefont {Rio}\ \emph {et~al.}(2011)\citenamefont {Rio},
  \citenamefont {{\AA}berg}, \citenamefont {Renner}, \citenamefont {Dahlsten},\
  and\ \citenamefont {Vedral}}]{rio2011thermodynamic}%
  \BibitemOpen
  \bibfield  {author} {\bibinfo {author} {\bibfnamefont {L.~d.}\ \bibnamefont
  {Rio}}, \bibinfo {author} {\bibfnamefont {J.}~\bibnamefont {{\AA}berg}},
  \bibinfo {author} {\bibfnamefont {R.}~\bibnamefont {Renner}}, \bibinfo
  {author} {\bibfnamefont {O.}~\bibnamefont {Dahlsten}},\ and\ \bibinfo
  {author} {\bibfnamefont {V.}~\bibnamefont {Vedral}},\ }\bibfield  {title}
  {\bibinfo {title} {The thermodynamic meaning of negative entropy},\ }\href
  {https://www.nature.com/articles/nature10123} {\bibfield  {journal} {\bibinfo
   {journal} {Nature}\ }\textbf {\bibinfo {volume} {474}},\ \bibinfo {pages}
  {61} (\bibinfo {year} {2011})}\BibitemShut {NoStop}%
\bibitem [{\citenamefont {Shor}(1997)}]{Shor1997}%
  \BibitemOpen
  \bibfield  {author} {\bibinfo {author} {\bibfnamefont {P.~W.}\ \bibnamefont
  {Shor}},\ }\bibfield  {title} {\bibinfo {title} {{P}olynomial-{T}ime
  {A}lgorithms for {P}rime {F}actorization and {D}iscrete {L}ogarithms on a
  {Q}uantum {C}omputer},\ }\href {https://doi.org/10.1137/S0097539795293172}
  {\bibfield  {journal} {\bibinfo  {journal} {SIAM Journal on Computing}\
  }\textbf {\bibinfo {volume} {26}},\ \bibinfo {pages} {1484} (\bibinfo {year}
  {1997})}\BibitemShut {NoStop}%
\bibitem [{\citenamefont {Arora}\ and\ \citenamefont
  {Barak}(2009)}]{Arora2009}%
  \BibitemOpen
  \bibfield  {author} {\bibinfo {author} {\bibfnamefont {S.}~\bibnamefont
  {Arora}}\ and\ \bibinfo {author} {\bibfnamefont {B.}~\bibnamefont {Barak}},\
  }\href {https://doi.org/10.1017/CBO9780511804090} {\emph {\bibinfo {title}
  {{Computational Complexity: A Modern Approach}}}}\ (\bibinfo  {publisher}
  {Cambridge University Press},\ \bibinfo {year} {2009})\BibitemShut {NoStop}%
\bibitem [{\citenamefont {Boudot}\ \emph {et~al.}(2022)\citenamefont {Boudot},
  \citenamefont {Gaudry}, \citenamefont {Guillevic}, \citenamefont {Heninger},
  \citenamefont {Thomé},\ and\ \citenamefont {Zimmermann}}]{Boudot2022}%
  \BibitemOpen
  \bibfield  {author} {\bibinfo {author} {\bibfnamefont {F.}~\bibnamefont
  {Boudot}}, \bibinfo {author} {\bibfnamefont {P.}~\bibnamefont {Gaudry}},
  \bibinfo {author} {\bibfnamefont {A.}~\bibnamefont {Guillevic}}, \bibinfo
  {author} {\bibfnamefont {N.}~\bibnamefont {Heninger}}, \bibinfo {author}
  {\bibfnamefont {E.}~\bibnamefont {Thomé}},\ and\ \bibinfo {author}
  {\bibfnamefont {P.}~\bibnamefont {Zimmermann}},\ }\href
  {https://doi.org/10.1109/MSEC.2022.3141918} {\bibfield  {journal} {\bibinfo
  {journal} {IEEE Security \& Privacy}\ }\textbf {\bibinfo {volume} {20}},\
  \bibinfo {pages} {80} (\bibinfo {year} {2022})}\BibitemShut {NoStop}%
\bibitem [{\citenamefont {Cleve}(2022)}]{lecture_note_cleve}%
  \BibitemOpen
  \bibfield  {author} {\bibinfo {author} {\bibfnamefont {R.}~\bibnamefont
  {Cleve}},\ }\href
  {https://cleve.iqc.uwaterloo.ca/resources/QIC-710-F22/Qic710QuantumAlgorithmsPart1.pdf}
  {\bibinfo {title} {Lecture notes: Introduction to quantum information
  processing, part 2, quantum algorithms (i)}} (\bibinfo {year}
  {2022})\BibitemShut {NoStop}%
\bibitem [{\citenamefont {de~Wolf}(2019)}]{Wolf2019}%
  \BibitemOpen
  \bibfield  {author} {\bibinfo {author} {\bibfnamefont {R.}~\bibnamefont
  {de~Wolf}},\ }\bibfield  {title} {\bibinfo {title} {{Q}uantum {C}omputing:
  {L}ecture {N}otes},\ }\Eprint {https://arxiv.org/abs/1907.09415}
  {arXiv:1907.09415 [quant-ph]}  (\bibinfo {year} {2019})\BibitemShut {NoStop}%
\bibitem [{\citenamefont {Koiran}\ \emph {et~al.}(2005)\citenamefont {Koiran},
  \citenamefont {Nesme},\ and\ \citenamefont {Portier}}]{Koiran2005}%
  \BibitemOpen
  \bibfield  {author} {\bibinfo {author} {\bibfnamefont {P.}~\bibnamefont
  {Koiran}}, \bibinfo {author} {\bibfnamefont {V.}~\bibnamefont {Nesme}},\ and\
  \bibinfo {author} {\bibfnamefont {N.}~\bibnamefont {Portier}},\ }\bibfield
  {title} {\bibinfo {title} {A {Q}uantum {L}ower {B}ound for the {Q}uery
  {C}omplexity of {S}imon's {P}roblem},\ }in\ \href
  {https://doi.org/10.1007/11523468_104} {\emph {\bibinfo {booktitle}
  {Automata, Languages and Programming}}},\ \bibinfo {editor} {edited by\
  \bibinfo {editor} {\bibfnamefont {L.}~\bibnamefont {Caires}}, \bibinfo
  {editor} {\bibfnamefont {G.~F.}\ \bibnamefont {Italiano}}, \bibinfo {editor}
  {\bibfnamefont {L.}~\bibnamefont {Monteiro}}, \bibinfo {editor}
  {\bibfnamefont {C.}~\bibnamefont {Palamidessi}},\ and\ \bibinfo {editor}
  {\bibfnamefont {M.}~\bibnamefont {Yung}}}\ (\bibinfo  {publisher} {Springer
  Berlin Heidelberg},\ \bibinfo {address} {Berlin, Heidelberg},\ \bibinfo
  {year} {2005})\ pp.\ \bibinfo {pages} {1287--1298}\BibitemShut {NoStop}%
\bibitem [{\citenamefont {Robbins}(1955)}]{10.2307/2308012}%
  \BibitemOpen
  \bibfield  {author} {\bibinfo {author} {\bibfnamefont {H.}~\bibnamefont
  {Robbins}},\ }\bibfield  {title} {\bibinfo {title} {A remark on stirling's
  formula},\ }\href {http://www.jstor.org/stable/2308012} {\bibfield  {journal}
  {\bibinfo  {journal} {The American Mathematical Monthly}\ }\textbf {\bibinfo
  {volume} {62}},\ \bibinfo {pages} {26} (\bibinfo {year} {1955})}\BibitemShut
  {NoStop}%
\bibitem [{\citenamefont {Shannon}(1949)}]{6771698}%
  \BibitemOpen
  \bibfield  {author} {\bibinfo {author} {\bibfnamefont {C.~E.}\ \bibnamefont
  {Shannon}},\ }\bibfield  {title} {\bibinfo {title} {The synthesis of
  two-terminal switching circuits},\ }\href
  {https://doi.org/10.1002/j.1538-7305.1949.tb03624.x} {\bibfield  {journal}
  {\bibinfo  {journal} {The Bell System Technical Journal}\ }\textbf {\bibinfo
  {volume} {28}},\ \bibinfo {pages} {59} (\bibinfo {year} {1949})}\BibitemShut
  {NoStop}%
\bibitem [{\citenamefont {Goldwasser}\ and\ \citenamefont
  {Bellare}(2008)}]{goldwasser1996lecture}%
  \BibitemOpen
  \bibfield  {author} {\bibinfo {author} {\bibfnamefont {S.}~\bibnamefont
  {Goldwasser}}\ and\ \bibinfo {author} {\bibfnamefont {M.}~\bibnamefont
  {Bellare}},\ }\bibfield  {title} {\bibinfo {title} {Lecture notes on
  cryptography},\ }\href {https://cseweb.ucsd.edu/~mihir/papers/gb.pdf}
  {\bibfield  {journal} {\bibinfo  {journal} {Summer course “Cryptography and
  computer security” at MIT}\ } (\bibinfo {year} {2008})}\BibitemShut
  {NoStop}%
\bibitem [{\citenamefont {Dworkin}(2001)}]{Dworkin2023}%
  \BibitemOpen
  \bibfield  {author} {\bibinfo {author} {\bibfnamefont {M.~J.}\ \bibnamefont
  {Dworkin}},\ }\href
  {https://doi.org/https://doi.org/10.6028/NIST.FIPS.197-upd1} {\bibinfo
  {title} {Advanced encryption standard (aes)}},\ \bibinfo {howpublished}
  {Federal Information Processing Standards Publication 197} (\bibinfo {year}
  {2001})\BibitemShut {NoStop}%
\bibitem [{\citenamefont {Daemen}\ and\ \citenamefont
  {Rijmen}(2002)}]{Daemen:2002:DRA}%
  \BibitemOpen
  \bibfield  {author} {\bibinfo {author} {\bibfnamefont {J.}~\bibnamefont
  {Daemen}}\ and\ \bibinfo {author} {\bibfnamefont {V.}~\bibnamefont
  {Rijmen}},\ }\href {https://link.springer.com/book/10.1007/978-3-662-04722-4}
  {\emph {\bibinfo {title} {The design of {Rijndael}: {AES} --- the {Advanced
  Encryption Standard}}}}\ (\bibinfo  {publisher} {Spring{\-}er-Ver{\-}lag},\
  \bibinfo {year} {2002})\ p.\ \bibinfo {pages} {238}\BibitemShut {NoStop}%
\bibitem [{\citenamefont {Bonnetain}\ \emph {et~al.}(2019)\citenamefont
  {Bonnetain}, \citenamefont {Naya-Plasencia},\ and\ \citenamefont
  {Schrottenloher}}]{Bonnetain2019}%
  \BibitemOpen
  \bibfield  {author} {\bibinfo {author} {\bibfnamefont {X.}~\bibnamefont
  {Bonnetain}}, \bibinfo {author} {\bibfnamefont {M.}~\bibnamefont
  {Naya-Plasencia}},\ and\ \bibinfo {author} {\bibfnamefont {A.}~\bibnamefont
  {Schrottenloher}},\ }\bibfield  {title} {\bibinfo {title} {Quantum security
  analysis of {AES}},\ }\href {https://doi.org/10.46586/tosc.v2019.i2.55-93}
  {\bibfield  {journal} {\bibinfo  {journal} {{IACR} Transactions on Symmetric
  Cryptology}\ ,\ \bibinfo {pages} {55}} (\bibinfo {year} {2019})}\BibitemShut
  {NoStop}%
\bibitem [{\citenamefont {Jaques}\ \emph {et~al.}(2020)\citenamefont {Jaques},
  \citenamefont {Naehrig}, \citenamefont {Roetteler},\ and\ \citenamefont
  {Virdia}}]{Jaques2020}%
  \BibitemOpen
  \bibfield  {author} {\bibinfo {author} {\bibfnamefont {S.}~\bibnamefont
  {Jaques}}, \bibinfo {author} {\bibfnamefont {M.}~\bibnamefont {Naehrig}},
  \bibinfo {author} {\bibfnamefont {M.}~\bibnamefont {Roetteler}},\ and\
  \bibinfo {author} {\bibfnamefont {F.}~\bibnamefont {Virdia}},\ }\bibfield
  {title} {\bibinfo {title} {Implementing grover oracles for quantum key search
  on {AES} and {LowMC}},\ }in\ \href
  {https://doi.org/10.1007/978-3-030-45724-2_10} {\emph {\bibinfo {booktitle}
  {Advances in Cryptology {\textendash} {EUROCRYPT} 2020}}}\ (\bibinfo
  {publisher} {Springer International Publishing},\ \bibinfo {year} {2020})\
  pp.\ \bibinfo {pages} {280--310}\BibitemShut {NoStop}%
\bibitem [{\citenamefont {Jang}\ \emph {et~al.}(2022)\citenamefont {Jang},
  \citenamefont {Baksi}, \citenamefont {Kim}, \citenamefont {Song},
  \citenamefont {Seo},\ and\ \citenamefont {Chattopadhyay}}]{Jang2022}%
  \BibitemOpen
  \bibfield  {author} {\bibinfo {author} {\bibfnamefont {K.}~\bibnamefont
  {Jang}}, \bibinfo {author} {\bibfnamefont {A.}~\bibnamefont {Baksi}},
  \bibinfo {author} {\bibfnamefont {H.}~\bibnamefont {Kim}}, \bibinfo {author}
  {\bibfnamefont {G.}~\bibnamefont {Song}}, \bibinfo {author} {\bibfnamefont
  {H.}~\bibnamefont {Seo}},\ and\ \bibinfo {author} {\bibfnamefont
  {A.}~\bibnamefont {Chattopadhyay}},\ }\href
  {https://eprint.iacr.org/2022/683} {\bibinfo {title} {Quantum analysis of
  aes}},\ \bibinfo {howpublished} {Cryptology ePrint Archive, Paper 2022/683}
  (\bibinfo {year} {2022}),\ \bibinfo {note}
  {\url{https://eprint.iacr.org/2022/683}}\BibitemShut {NoStop}%
\bibitem [{\citenamefont {Cai}\ and\ \citenamefont {Qiu}(2018)}]{CAI201883}%
  \BibitemOpen
  \bibfield  {author} {\bibinfo {author} {\bibfnamefont {G.}~\bibnamefont
  {Cai}}\ and\ \bibinfo {author} {\bibfnamefont {D.}~\bibnamefont {Qiu}},\
  }\bibfield  {title} {\bibinfo {title} {Optimal separation in exact query
  complexities for simon's problem},\ }\href
  {https://doi.org/https://doi.org/10.1016/j.jcss.2018.05.001} {\bibfield
  {journal} {\bibinfo  {journal} {Journal of Computer and System Sciences}\
  }\textbf {\bibinfo {volume} {97}},\ \bibinfo {pages} {83} (\bibinfo {year}
  {2018})}\BibitemShut {NoStop}%
\bibitem [{\citenamefont {Wikipedia}(2004)}]{wiki}%
  \BibitemOpen
  \bibfield  {author} {\bibinfo {author} {\bibnamefont {Wikipedia}},\ }\href
  {https://en.wikipedia.org/wiki/Orders_of_magnitude_(energy)} {\bibinfo
  {title} {Orders of magnitude (energy)}} (\bibinfo {year} {2004}),\ \bibinfo
  {note} {[Online; accessed 10-June-2024]}\BibitemShut {NoStop}%
\bibitem [{\citenamefont {Ettinger}\ \emph {et~al.}(2004)\citenamefont
  {Ettinger}, \citenamefont {Høyer},\ and\ \citenamefont
  {Knill}}]{Ettinger2004}%
  \BibitemOpen
  \bibfield  {author} {\bibinfo {author} {\bibfnamefont {M.}~\bibnamefont
  {Ettinger}}, \bibinfo {author} {\bibfnamefont {P.}~\bibnamefont {Høyer}},\
  and\ \bibinfo {author} {\bibfnamefont {E.}~\bibnamefont {Knill}},\ }\bibfield
   {title} {\bibinfo {title} {The quantum query complexity of the hidden
  subgroup problem is polynomial},\ }\href
  {https://doi.org/https://doi.org/10.1016/j.ipl.2004.01.024} {\bibfield
  {journal} {\bibinfo  {journal} {Information Processing Letters}\ }\textbf
  {\bibinfo {volume} {91}},\ \bibinfo {pages} {43} (\bibinfo {year}
  {2004})}\BibitemShut {NoStop}%
\bibitem [{\citenamefont {Bernstein}\ and\ \citenamefont
  {Vazirani}(1997)}]{doi:10.1137/S0097539796300921}%
  \BibitemOpen
  \bibfield  {author} {\bibinfo {author} {\bibfnamefont {E.}~\bibnamefont
  {Bernstein}}\ and\ \bibinfo {author} {\bibfnamefont {U.}~\bibnamefont
  {Vazirani}},\ }\bibfield  {title} {\bibinfo {title} {Quantum complexity
  theory},\ }\href {https://doi.org/10.1137/S0097539796300921} {\bibfield
  {journal} {\bibinfo  {journal} {SIAM Journal on Computing}\ }\textbf
  {\bibinfo {volume} {26}},\ \bibinfo {pages} {1411} (\bibinfo {year}
  {1997})}\BibitemShut {NoStop}%
\bibitem [{\citenamefont {Yamakawa}\ and\ \citenamefont
  {Zhandry}(2022)}]{9996892}%
  \BibitemOpen
  \bibfield  {author} {\bibinfo {author} {\bibfnamefont {T.}~\bibnamefont
  {Yamakawa}}\ and\ \bibinfo {author} {\bibfnamefont {M.}~\bibnamefont
  {Zhandry}},\ }\bibfield  {title} {\bibinfo {title} {Verifiable quantum
  advantage without structure},\ }in\ \href
  {https://doi.org/10.1109/FOCS54457.2022.00014} {\emph {\bibinfo {booktitle}
  {2022 IEEE 63rd Annual Symposium on Foundations of Computer Science
  (FOCS)}}}\ (\bibinfo  {publisher} {IEEE Computer Society},\ \bibinfo
  {address} {Los Alamitos, CA, USA},\ \bibinfo {year} {2022})\ pp.\ \bibinfo
  {pages} {69--74}\BibitemShut {NoStop}%
\bibitem [{\citenamefont {Aaronson}\ and\ \citenamefont
  {Arkhipov}(2011)}]{10.1145/1993636.1993682}%
  \BibitemOpen
  \bibfield  {author} {\bibinfo {author} {\bibfnamefont {S.}~\bibnamefont
  {Aaronson}}\ and\ \bibinfo {author} {\bibfnamefont {A.}~\bibnamefont
  {Arkhipov}},\ }\bibfield  {title} {\bibinfo {title} {The computational
  complexity of linear optics},\ }in\ \href
  {https://doi.org/10.1145/1993636.1993682} {\emph {\bibinfo {booktitle}
  {Proceedings of the Forty-Third Annual ACM Symposium on Theory of
  Computing}}},\ \bibinfo {series and number} {STOC '11}\ (\bibinfo
  {publisher} {Association for Computing Machinery},\ \bibinfo {address} {New
  York, NY, USA},\ \bibinfo {year} {2011})\ p.\ \bibinfo {pages}
  {333–342}\BibitemShut {NoStop}%
\bibitem [{\citenamefont {Aaronson}\ and\ \citenamefont
  {Chen}(2017)}]{10.5555/3135595.3135617}%
  \BibitemOpen
  \bibfield  {author} {\bibinfo {author} {\bibfnamefont {S.}~\bibnamefont
  {Aaronson}}\ and\ \bibinfo {author} {\bibfnamefont {L.}~\bibnamefont
  {Chen}},\ }\bibfield  {title} {\bibinfo {title} {Complexity-theoretic
  foundations of quantum supremacy experiments},\ }in\ \href
  {https://dl.acm.org/doi/10.5555/3135595.3135617} {\emph {\bibinfo {booktitle}
  {Proceedings of the 32nd Computational Complexity Conference}}},\ \bibinfo
  {series and number} {CCC '17}\ (\bibinfo  {publisher} {Schloss
  Dagstuhl--Leibniz-Zentrum fuer Informatik},\ \bibinfo {address} {Dagstuhl,
  DEU},\ \bibinfo {year} {2017})\BibitemShut {NoStop}%
\bibitem [{\citenamefont {Boixo}\ \emph {et~al.}(2018)\citenamefont {Boixo},
  \citenamefont {Isakov}, \citenamefont {Smelyanskiy}, \citenamefont {Babbush},
  \citenamefont {Ding}, \citenamefont {Jiang}, \citenamefont {Bremner},
  \citenamefont {Martinis},\ and\ \citenamefont
  {Neven}}]{boixo2018characterizing}%
  \BibitemOpen
  \bibfield  {author} {\bibinfo {author} {\bibfnamefont {S.}~\bibnamefont
  {Boixo}}, \bibinfo {author} {\bibfnamefont {S.~V.}\ \bibnamefont {Isakov}},
  \bibinfo {author} {\bibfnamefont {V.~N.}\ \bibnamefont {Smelyanskiy}},
  \bibinfo {author} {\bibfnamefont {R.}~\bibnamefont {Babbush}}, \bibinfo
  {author} {\bibfnamefont {N.}~\bibnamefont {Ding}}, \bibinfo {author}
  {\bibfnamefont {Z.}~\bibnamefont {Jiang}}, \bibinfo {author} {\bibfnamefont
  {M.~J.}\ \bibnamefont {Bremner}}, \bibinfo {author} {\bibfnamefont {J.~M.}\
  \bibnamefont {Martinis}},\ and\ \bibinfo {author} {\bibfnamefont
  {H.}~\bibnamefont {Neven}},\ }\bibfield  {title} {\bibinfo {title}
  {Characterizing quantum supremacy in near-term devices},\ }\href
  {https://www.nature.com/articles/s41567-018-0124-x} {\bibfield  {journal}
  {\bibinfo  {journal} {Nature Physics}\ }\textbf {\bibinfo {volume} {14}},\
  \bibinfo {pages} {595} (\bibinfo {year} {2018})}\BibitemShut {NoStop}%
\bibitem [{\citenamefont {Bouland}\ \emph {et~al.}(2019)\citenamefont
  {Bouland}, \citenamefont {Fefferman}, \citenamefont {Nirkhe},\ and\
  \citenamefont {Vazirani}}]{bouland2019complexity}%
  \BibitemOpen
  \bibfield  {author} {\bibinfo {author} {\bibfnamefont {A.}~\bibnamefont
  {Bouland}}, \bibinfo {author} {\bibfnamefont {B.}~\bibnamefont {Fefferman}},
  \bibinfo {author} {\bibfnamefont {C.}~\bibnamefont {Nirkhe}},\ and\ \bibinfo
  {author} {\bibfnamefont {U.}~\bibnamefont {Vazirani}},\ }\bibfield  {title}
  {\bibinfo {title} {On the complexity and verification of quantum random
  circuit sampling},\ }\href
  {https://www.nature.com/articles/s41567-018-0318-2} {\bibfield  {journal}
  {\bibinfo  {journal} {Nature Physics}\ }\textbf {\bibinfo {volume} {15}},\
  \bibinfo {pages} {159} (\bibinfo {year} {2019})}\BibitemShut {NoStop}%
\bibitem [{\citenamefont {Shepherd}\ and\ \citenamefont
  {Bremner}(2009)}]{doi:10.1098/rspa.2008.0443}%
  \BibitemOpen
  \bibfield  {author} {\bibinfo {author} {\bibfnamefont {D.}~\bibnamefont
  {Shepherd}}\ and\ \bibinfo {author} {\bibfnamefont {M.~J.}\ \bibnamefont
  {Bremner}},\ }\bibfield  {title} {\bibinfo {title} {Temporally unstructured
  quantum computation},\ }\href {https://doi.org/10.1098/rspa.2008.0443}
  {\bibfield  {journal} {\bibinfo  {journal} {Proceedings of the Royal Society
  A: Mathematical, Physical and Engineering Sciences}\ }\textbf {\bibinfo
  {volume} {465}},\ \bibinfo {pages} {1413} (\bibinfo {year}
  {2009})}\BibitemShut {NoStop}%
\bibitem [{\citenamefont {Bremner}\ \emph {et~al.}(2011)\citenamefont
  {Bremner}, \citenamefont {Jozsa},\ and\ \citenamefont
  {Shepherd}}]{doi:10.1098/rspa.2010.0301}%
  \BibitemOpen
  \bibfield  {author} {\bibinfo {author} {\bibfnamefont {M.~J.}\ \bibnamefont
  {Bremner}}, \bibinfo {author} {\bibfnamefont {R.}~\bibnamefont {Jozsa}},\
  and\ \bibinfo {author} {\bibfnamefont {D.~J.}\ \bibnamefont {Shepherd}},\
  }\bibfield  {title} {\bibinfo {title} {Classical simulation of commuting
  quantum computations implies collapse of the polynomial hierarchy},\ }\href
  {https://doi.org/10.1098/rspa.2010.0301} {\bibfield  {journal} {\bibinfo
  {journal} {Proceedings of the Royal Society A: Mathematical, Physical and
  Engineering Sciences}\ }\textbf {\bibinfo {volume} {467}},\ \bibinfo {pages}
  {459} (\bibinfo {year} {2011})}\BibitemShut {NoStop}%
\bibitem [{\citenamefont {Bremner}\ \emph {et~al.}(2016)\citenamefont
  {Bremner}, \citenamefont {Montanaro},\ and\ \citenamefont
  {Shepherd}}]{PhysRevLett.117.080501}%
  \BibitemOpen
  \bibfield  {author} {\bibinfo {author} {\bibfnamefont {M.~J.}\ \bibnamefont
  {Bremner}}, \bibinfo {author} {\bibfnamefont {A.}~\bibnamefont {Montanaro}},\
  and\ \bibinfo {author} {\bibfnamefont {D.~J.}\ \bibnamefont {Shepherd}},\
  }\bibfield  {title} {\bibinfo {title} {Average-case complexity versus
  approximate simulation of commuting quantum computations},\ }\href
  {https://doi.org/10.1103/PhysRevLett.117.080501} {\bibfield  {journal}
  {\bibinfo  {journal} {Phys. Rev. Lett.}\ }\textbf {\bibinfo {volume} {117}},\
  \bibinfo {pages} {080501} (\bibinfo {year} {2016})}\BibitemShut {NoStop}%
\bibitem [{\citenamefont {{Google Quantum AI}}(2019)}]{arute2019quantum}%
  \BibitemOpen
  \bibfield  {author} {\bibinfo {author} {\bibnamefont {{Google Quantum AI}}},\
  }\bibfield  {title} {\bibinfo {title} {Quantum supremacy using a programmable
  superconducting processor},\ }\href
  {https://www.nature.com/articles/s41586-019-1666-5#citeas} {\bibfield
  {journal} {\bibinfo  {journal} {Nature}\ }\textbf {\bibinfo {volume} {574}},\
  \bibinfo {pages} {505} (\bibinfo {year} {2019})}\BibitemShut {NoStop}%
\bibitem [{\citenamefont {Zhong}\ \emph {et~al.}(2020)\citenamefont {Zhong},
  \citenamefont {Wang}, \citenamefont {Deng}, \citenamefont {Chen},
  \citenamefont {Peng}, \citenamefont {Luo}, \citenamefont {Qin}, \citenamefont
  {Wu}, \citenamefont {Ding}, \citenamefont {Hu}, \citenamefont {Hu},
  \citenamefont {Yang}, \citenamefont {Zhang}, \citenamefont {Li},
  \citenamefont {Li}, \citenamefont {Jiang}, \citenamefont {Gan}, \citenamefont
  {Yang}, \citenamefont {You}, \citenamefont {Wang}, \citenamefont {Li},
  \citenamefont {Liu}, \citenamefont {Lu},\ and\ \citenamefont
  {Pan}}]{doi:10.1126/science.abe8770}%
  \BibitemOpen
  \bibfield  {author} {\bibinfo {author} {\bibfnamefont {H.-S.}\ \bibnamefont
  {Zhong}}, \bibinfo {author} {\bibfnamefont {H.}~\bibnamefont {Wang}},
  \bibinfo {author} {\bibfnamefont {Y.-H.}\ \bibnamefont {Deng}}, \bibinfo
  {author} {\bibfnamefont {M.-C.}\ \bibnamefont {Chen}}, \bibinfo {author}
  {\bibfnamefont {L.-C.}\ \bibnamefont {Peng}}, \bibinfo {author}
  {\bibfnamefont {Y.-H.}\ \bibnamefont {Luo}}, \bibinfo {author} {\bibfnamefont
  {J.}~\bibnamefont {Qin}}, \bibinfo {author} {\bibfnamefont {D.}~\bibnamefont
  {Wu}}, \bibinfo {author} {\bibfnamefont {X.}~\bibnamefont {Ding}}, \bibinfo
  {author} {\bibfnamefont {Y.}~\bibnamefont {Hu}}, \bibinfo {author}
  {\bibfnamefont {P.}~\bibnamefont {Hu}}, \bibinfo {author} {\bibfnamefont
  {X.-Y.}\ \bibnamefont {Yang}}, \bibinfo {author} {\bibfnamefont {W.-J.}\
  \bibnamefont {Zhang}}, \bibinfo {author} {\bibfnamefont {H.}~\bibnamefont
  {Li}}, \bibinfo {author} {\bibfnamefont {Y.}~\bibnamefont {Li}}, \bibinfo
  {author} {\bibfnamefont {X.}~\bibnamefont {Jiang}}, \bibinfo {author}
  {\bibfnamefont {L.}~\bibnamefont {Gan}}, \bibinfo {author} {\bibfnamefont
  {G.}~\bibnamefont {Yang}}, \bibinfo {author} {\bibfnamefont {L.}~\bibnamefont
  {You}}, \bibinfo {author} {\bibfnamefont {Z.}~\bibnamefont {Wang}}, \bibinfo
  {author} {\bibfnamefont {L.}~\bibnamefont {Li}}, \bibinfo {author}
  {\bibfnamefont {N.-L.}\ \bibnamefont {Liu}}, \bibinfo {author} {\bibfnamefont
  {C.-Y.}\ \bibnamefont {Lu}},\ and\ \bibinfo {author} {\bibfnamefont {J.-W.}\
  \bibnamefont {Pan}},\ }\bibfield  {title} {\bibinfo {title} {Quantum
  computational advantage using photons},\ }\href
  {https://doi.org/10.1126/science.abe8770} {\bibfield  {journal} {\bibinfo
  {journal} {Science}\ }\textbf {\bibinfo {volume} {370}},\ \bibinfo {pages}
  {1460} (\bibinfo {year} {2020})}\BibitemShut {NoStop}%
\bibitem [{\citenamefont {Bluvstein}\ \emph {et~al.}(2024)\citenamefont
  {Bluvstein}, \citenamefont {Evered}, \citenamefont {Geim}, \citenamefont
  {Li}, \citenamefont {Zhou}, \citenamefont {Manovitz}, \citenamefont {Ebadi},
  \citenamefont {Cain}, \citenamefont {Kalinowski}, \citenamefont {Hangleiter},
  \citenamefont {Bonilla~Ataides}, \citenamefont {Maskara}, \citenamefont
  {Cong}, \citenamefont {Gao}, \citenamefont {Sales~Rodriguez}, \citenamefont
  {Karolyshyn}, \citenamefont {Semeghini}, \citenamefont {Gullans},
  \citenamefont {Greiner}, \citenamefont {Vuletić},\ and\ \citenamefont
  {Lukin}}]{bluvstein2024logical}%
  \BibitemOpen
  \bibfield  {author} {\bibinfo {author} {\bibfnamefont {D.}~\bibnamefont
  {Bluvstein}}, \bibinfo {author} {\bibfnamefont {S.~J.}\ \bibnamefont
  {Evered}}, \bibinfo {author} {\bibfnamefont {A.~A.}\ \bibnamefont {Geim}},
  \bibinfo {author} {\bibfnamefont {S.~H.}\ \bibnamefont {Li}}, \bibinfo
  {author} {\bibfnamefont {H.}~\bibnamefont {Zhou}}, \bibinfo {author}
  {\bibfnamefont {T.}~\bibnamefont {Manovitz}}, \bibinfo {author}
  {\bibfnamefont {S.}~\bibnamefont {Ebadi}}, \bibinfo {author} {\bibfnamefont
  {M.}~\bibnamefont {Cain}}, \bibinfo {author} {\bibfnamefont {M.}~\bibnamefont
  {Kalinowski}}, \bibinfo {author} {\bibfnamefont {D.}~\bibnamefont
  {Hangleiter}}, \bibinfo {author} {\bibfnamefont {J.~P.}\ \bibnamefont
  {Bonilla~Ataides}}, \bibinfo {author} {\bibfnamefont {N.}~\bibnamefont
  {Maskara}}, \bibinfo {author} {\bibfnamefont {I.}~\bibnamefont {Cong}},
  \bibinfo {author} {\bibfnamefont {X.}~\bibnamefont {Gao}}, \bibinfo {author}
  {\bibfnamefont {P.}~\bibnamefont {Sales~Rodriguez}}, \bibinfo {author}
  {\bibfnamefont {T.}~\bibnamefont {Karolyshyn}}, \bibinfo {author}
  {\bibfnamefont {G.}~\bibnamefont {Semeghini}}, \bibinfo {author}
  {\bibfnamefont {M.~J.}\ \bibnamefont {Gullans}}, \bibinfo {author}
  {\bibfnamefont {M.}~\bibnamefont {Greiner}}, \bibinfo {author} {\bibfnamefont
  {V.}~\bibnamefont {Vuletić}},\ and\ \bibinfo {author} {\bibfnamefont
  {M.~D.}\ \bibnamefont {Lukin}},\ }\bibfield  {title} {\bibinfo {title}
  {Logical quantum processor based on reconfigurable atom arrays},\ }\href
  {https://www.nature.com/articles/s41586-023-06927-3} {\bibfield  {journal}
  {\bibinfo  {journal} {Nature}\ }\textbf {\bibinfo {volume} {626}},\ \bibinfo
  {pages} {58} (\bibinfo {year} {2024})}\BibitemShut {NoStop}%
\bibitem [{\citenamefont {Kay}(2018)}]{Kay2018}%
  \BibitemOpen
  \bibfield  {author} {\bibinfo {author} {\bibfnamefont {A.}~\bibnamefont
  {Kay}},\ }\bibfield  {title} {\bibinfo {title} {{Tutorial on the Quantikz
  Package}},\ }\Eprint {https://arxiv.org/abs/1809.03842} {arXiv:1809.03842
  [quant-ph]}  (\bibinfo {year} {2018})\BibitemShut {NoStop}%
\bibitem [{\citenamefont {Zhang}(2007)}]{Zhang2007}%
  \BibitemOpen
  \bibfield  {author} {\bibinfo {author} {\bibfnamefont {Z.}~\bibnamefont
  {Zhang}},\ }\bibfield  {title} {\bibinfo {title} {Estimating mutual
  information via kolmogorov distance},\ }\href
  {https://doi.org/10.1109/tit.2007.903122} {\bibfield  {journal} {\bibinfo
  {journal} {{IEEE} Transactions on Information Theory}\ }\textbf {\bibinfo
  {volume} {53}},\ \bibinfo {pages} {3280} (\bibinfo {year}
  {2007})}\BibitemShut {NoStop}%
\bibitem [{\citenamefont {Audenaert}(2007)}]{Audenaert2007}%
  \BibitemOpen
  \bibfield  {author} {\bibinfo {author} {\bibfnamefont {K.~M.~R.}\
  \bibnamefont {Audenaert}},\ }\bibfield  {title} {\bibinfo {title} {A sharp
  continuity estimate for the von neumann entropy},\ }\href
  {https://doi.org/10.1088/1751-8113/40/28/S18} {\bibfield  {journal} {\bibinfo
   {journal} {Journal of Physics A: Mathematical and Theoretical}\ }\textbf
  {\bibinfo {volume} {40}},\ \bibinfo {pages} {8127} (\bibinfo {year}
  {2007})}\BibitemShut {NoStop}%
\bibitem [{\citenamefont {Petz}(2008)}]{Petz2008}%
  \BibitemOpen
  \bibfield  {author} {\bibinfo {author} {\bibfnamefont {D.}~\bibnamefont
  {Petz}},\ }\href {https://doi.org/10.1007/978-3-540-74636-2} {\emph {\bibinfo
  {title} {Quantum {I}nformation {T}heory and {Q}uantum {S}tatistics}}},\
  \bibinfo {edition} {1st}\ ed.\ (\bibinfo  {publisher} {Springer Berlin
  Heidelberg},\ \bibinfo {year} {2008})\BibitemShut {NoStop}%
\bibitem [{\citenamefont {Winter}(2016)}]{Winter2016}%
  \BibitemOpen
  \bibfield  {author} {\bibinfo {author} {\bibfnamefont {A.}~\bibnamefont
  {Winter}},\ }\bibfield  {title} {\bibinfo {title} {Tight uniform continuity
  bounds for quantum entropies: Conditional entropy, relative entropy distance
  and energy constraints},\ }\href {https://doi.org/10.1007/s00220-016-2609-8}
  {\bibfield  {journal} {\bibinfo  {journal} {Communications in Mathematical
  Physics}\ }\textbf {\bibinfo {volume} {347}},\ \bibinfo {pages} {291}
  (\bibinfo {year} {2016})}\BibitemShut {NoStop}%
\bibitem [{\citenamefont {Chow}(1999)}]{Chow1999}%
  \BibitemOpen
  \bibfield  {author} {\bibinfo {author} {\bibfnamefont {T.~Y.}\ \bibnamefont
  {Chow}},\ }\bibfield  {title} {\bibinfo {title} {What is a {C}losed-{F}orm
  {N}umber?},\ }\href {https://doi.org/10.1080/00029890.1999.12005066}
  {\bibfield  {journal} {\bibinfo  {journal} {The American Mathematical
  Monthly}\ }\textbf {\bibinfo {volume} {106}},\ \bibinfo {pages} {440}
  (\bibinfo {year} {1999})}\BibitemShut {NoStop}%
\bibitem [{\citenamefont {Valluri}\ \emph {et~al.}(2000)\citenamefont
  {Valluri}, \citenamefont {Jeffrey},\ and\ \citenamefont
  {Corless}}]{Valluri2000}%
  \BibitemOpen
  \bibfield  {author} {\bibinfo {author} {\bibfnamefont {S.~R.}\ \bibnamefont
  {Valluri}}, \bibinfo {author} {\bibfnamefont {D.~J.}\ \bibnamefont
  {Jeffrey}},\ and\ \bibinfo {author} {\bibfnamefont {R.~M.}\ \bibnamefont
  {Corless}},\ }\bibfield  {title} {\bibinfo {title} {Some applications of the
  {L}ambert ${W}$ function to physics},\ }\href
  {https://doi.org/10.1139/p00-065} {\bibfield  {journal} {\bibinfo  {journal}
  {Canadian Journal of Physics}\ }\textbf {\bibinfo {volume} {78}},\ \bibinfo
  {pages} {823} (\bibinfo {year} {2000})}\BibitemShut {NoStop}%
\bibitem [{\citenamefont {Weyl}(1912)}]{Weyl1912}%
  \BibitemOpen
  \bibfield  {author} {\bibinfo {author} {\bibfnamefont {H.}~\bibnamefont
  {Weyl}},\ }\bibfield  {title} {\bibinfo {title} {Das asymptotische
  {V}erteilungsgesetz der {E}igenwerte linearer partieller
  {D}ifferentialgleichungen},\ }\href {https://doi.org/10.1007/BF01456804}
  {\bibfield  {journal} {\bibinfo  {journal} {Mathematische Annalen}\ }\textbf
  {\bibinfo {volume} {71}},\ \bibinfo {pages} {441} (\bibinfo {year}
  {1912})}\BibitemShut {NoStop}%
\bibitem [{\citenamefont {Johnson}(1989)}]{Johnson1989}%
  \BibitemOpen
  \bibfield  {author} {\bibinfo {author} {\bibfnamefont {C.~R.}\ \bibnamefont
  {Johnson}},\ }\bibfield  {title} {\bibinfo {title} {Precise intervals for
  specific eigenvalues of a product of a positive definite and a {H}ermitian
  matrix},\ }in\ \href {https://doi.org/10.1016/0024-3795(89)90555-7} {\emph
  {\bibinfo {booktitle} {Linear Algebra and its Applications}}},\ Vol.\
  \bibinfo {volume} {117},\ \bibinfo {editor} {edited by\ \bibinfo {editor}
  {\bibfnamefont {R.~A.}\ \bibnamefont {Brualdi}}\ and\ \bibinfo {editor}
  {\bibfnamefont {H.}~\bibnamefont {Schneider}}}\ (\bibinfo {year} {1989})\
  pp.\ \bibinfo {pages} {159--164}\BibitemShut {NoStop}%
\bibitem [{\citenamefont {Fulton}(2000)}]{Fulton2000}%
  \BibitemOpen
  \bibfield  {author} {\bibinfo {author} {\bibfnamefont {W.}~\bibnamefont
  {Fulton}},\ }\bibfield  {title} {\bibinfo {title} {Eigenvalues, invariant
  factors, highest weights, and {S}chubert calculus},\ }\href
  {https://doi.org/10.1090/s0273-0979-00-00865-x} {\bibfield  {journal}
  {\bibinfo  {journal} {Bulletin of the American Mathematical Society}\
  }\textbf {\bibinfo {volume} {37}},\ \bibinfo {pages} {209} (\bibinfo {year}
  {2000})}\BibitemShut {NoStop}%
\bibitem [{\citenamefont {\AA{}berg}(2014)}]{Aaberg2014}%
  \BibitemOpen
  \bibfield  {author} {\bibinfo {author} {\bibfnamefont {J.}~\bibnamefont
  {\AA{}berg}},\ }\bibfield  {title} {\bibinfo {title} {Catalytic coherence},\
  }\href {https://doi.org/10.1103/PhysRevLett.113.150402} {\bibfield  {journal}
  {\bibinfo  {journal} {Phys. Rev. Lett.}\ }\textbf {\bibinfo {volume} {113}},\
  \bibinfo {pages} {150402} (\bibinfo {year} {2014})}\BibitemShut {NoStop}%
\bibitem [{\citenamefont {Skrzypczyk}\ \emph {et~al.}(2013)\citenamefont
  {Skrzypczyk}, \citenamefont {Short},\ and\ \citenamefont
  {Popescu}}]{Skrzypczyk2013}%
  \BibitemOpen
  \bibfield  {author} {\bibinfo {author} {\bibfnamefont {P.}~\bibnamefont
  {Skrzypczyk}}, \bibinfo {author} {\bibfnamefont {A.~J.}\ \bibnamefont
  {Short}},\ and\ \bibinfo {author} {\bibfnamefont {S.}~\bibnamefont
  {Popescu}},\ }\bibfield  {title} {\bibinfo {title} {Extracting work from
  quantum systems},\ }\Eprint {https://arxiv.org/abs/1302.2811}
  {arXiv:1302.2811 [quant-ph]}  (\bibinfo {year} {2013})\BibitemShut {NoStop}%
\bibitem [{\citenamefont {Skrzypczyk}\ \emph {et~al.}(2014)\citenamefont
  {Skrzypczyk}, \citenamefont {Short},\ and\ \citenamefont
  {Popescu}}]{Skrzypczyk2014}%
  \BibitemOpen
  \bibfield  {author} {\bibinfo {author} {\bibfnamefont {P.}~\bibnamefont
  {Skrzypczyk}}, \bibinfo {author} {\bibfnamefont {A.~J.}\ \bibnamefont
  {Short}},\ and\ \bibinfo {author} {\bibfnamefont {S.}~\bibnamefont
  {Popescu}},\ }\bibfield  {title} {\bibinfo {title} {Work extraction and
  thermodynamics for individual quantum systems},\ }\href
  {https://doi.org/10.1038/ncomms5185} {\bibfield  {journal} {\bibinfo
  {journal} {Nature Communications}\ }\textbf {\bibinfo {volume} {5}},\
  \bibinfo {pages} {4185} (\bibinfo {year} {2014})}\BibitemShut {NoStop}%
\bibitem [{\citenamefont {Navascu\'es}\ and\ \citenamefont
  {Popescu}(2014)}]{Navascues2014}%
  \BibitemOpen
  \bibfield  {author} {\bibinfo {author} {\bibfnamefont {M.}~\bibnamefont
  {Navascu\'es}}\ and\ \bibinfo {author} {\bibfnamefont {S.}~\bibnamefont
  {Popescu}},\ }\bibfield  {title} {\bibinfo {title} {How energy conservation
  limits our measurements},\ }\href
  {https://doi.org/10.1103/PhysRevLett.112.140502} {\bibfield  {journal}
  {\bibinfo  {journal} {Phys. Rev. Lett.}\ }\textbf {\bibinfo {volume} {112}},\
  \bibinfo {pages} {140502} (\bibinfo {year} {2014})}\BibitemShut {NoStop}%
\bibitem [{\citenamefont {Horodecki}\ \emph {et~al.}(1999)\citenamefont
  {Horodecki}, \citenamefont {Horodecki},\ and\ \citenamefont
  {Horodecki}}]{Horodecki1999}%
  \BibitemOpen
  \bibfield  {author} {\bibinfo {author} {\bibfnamefont {M.}~\bibnamefont
  {Horodecki}}, \bibinfo {author} {\bibfnamefont {P.}~\bibnamefont
  {Horodecki}},\ and\ \bibinfo {author} {\bibfnamefont {R.}~\bibnamefont
  {Horodecki}},\ }\bibfield  {title} {\bibinfo {title} {General teleportation
  channel, singlet fraction, and quasidistillation},\ }\href
  {https://doi.org/10.1103/PhysRevA.60.1888} {\bibfield  {journal} {\bibinfo
  {journal} {Phys. Rev. A}\ }\textbf {\bibinfo {volume} {60}},\ \bibinfo
  {pages} {1888} (\bibinfo {year} {1999})}\BibitemShut {NoStop}%
\bibitem [{\citenamefont {Nielsen}(2002)}]{Nielsen2002}%
  \BibitemOpen
  \bibfield  {author} {\bibinfo {author} {\bibfnamefont {M.~A.}\ \bibnamefont
  {Nielsen}},\ }\bibfield  {title} {\bibinfo {title} {A simple formula for the
  average gate fidelity of a quantum dynamical operation},\ }\href
  {https://doi.org/10.1016/s0375-9601(02)01272-0} {\bibfield  {journal}
  {\bibinfo  {journal} {Physics Letters A}\ }\textbf {\bibinfo {volume}
  {303}},\ \bibinfo {pages} {249} (\bibinfo {year} {2002})}\BibitemShut
  {NoStop}%
\bibitem [{\citenamefont {Magesan}\ \emph {et~al.}(2011)\citenamefont
  {Magesan}, \citenamefont {Blume-Kohout},\ and\ \citenamefont
  {Emerson}}]{Magesan2011}%
  \BibitemOpen
  \bibfield  {author} {\bibinfo {author} {\bibfnamefont {E.}~\bibnamefont
  {Magesan}}, \bibinfo {author} {\bibfnamefont {R.}~\bibnamefont
  {Blume-Kohout}},\ and\ \bibinfo {author} {\bibfnamefont {J.}~\bibnamefont
  {Emerson}},\ }\bibfield  {title} {\bibinfo {title} {Gate fidelity
  fluctuations and quantum process invariants},\ }\href
  {https://doi.org/10.1103/PhysRevA.84.012309} {\bibfield  {journal} {\bibinfo
  {journal} {Phys. Rev. A}\ }\textbf {\bibinfo {volume} {84}},\ \bibinfo
  {pages} {012309} (\bibinfo {year} {2011})}\BibitemShut {NoStop}%
\bibitem [{\citenamefont {Tajima}\ and\ \citenamefont
  {Saito}(2022)}]{tajima2022universal_limitation}%
  \BibitemOpen
  \bibfield  {author} {\bibinfo {author} {\bibfnamefont {H.}~\bibnamefont
  {Tajima}}\ and\ \bibinfo {author} {\bibfnamefont {K.}~\bibnamefont {Saito}},\
  }\href@noop {} {\bibinfo {title} {Universal limitation of quantum information
  recovery: symmetry versus coherence}} (\bibinfo {year} {2022}),\ \Eprint
  {https://arxiv.org/abs/2103.01876} {arXiv:2103.01876 [quant-ph]} \BibitemShut
  {NoStop}%
\bibitem [{\citenamefont {Tajima}\ \emph {et~al.}(2022)\citenamefont {Tajima},
  \citenamefont {Takagi},\ and\ \citenamefont
  {Kuramochi}}]{tajima2022universal_tradeoff}%
  \BibitemOpen
  \bibfield  {author} {\bibinfo {author} {\bibfnamefont {H.}~\bibnamefont
  {Tajima}}, \bibinfo {author} {\bibfnamefont {R.}~\bibnamefont {Takagi}},\
  and\ \bibinfo {author} {\bibfnamefont {Y.}~\bibnamefont {Kuramochi}},\
  }\href@noop {} {\bibinfo {title} {Universal trade-off structure between
  symmetry, irreversibility, and quantum coherence in quantum processes}}
  (\bibinfo {year} {2022}),\ \Eprint {https://arxiv.org/abs/2206.11086}
  {arXiv:2206.11086 [quant-ph]} \BibitemShut {NoStop}%
\bibitem [{\citenamefont {Tajima}(2023)}]{tajima}%
  \BibitemOpen
  \bibfield  {author} {\bibinfo {author} {\bibfnamefont {H.}~\bibnamefont
  {Tajima}},\ }\href@noop {} {}\bibinfo {howpublished} {private communication}
  (\bibinfo {year} {2023})\BibitemShut {NoStop}%
\end{thebibliography}%

\end{document}